\theoremstyle{plain}
\newtheorem{theorem}{Theorem}[section]
\newtheorem{proposition}[theorem]{Proposition}
\newtheorem{lemma}[theorem]{Lemma}
\newtheorem{corollary}[theorem]{Corollary}
\newtheorem{remark}[theorem]{Remark}
\newtheorem{definition}[theorem]{Definition}
\newtheorem{example}[theorem]{Example}
\newtheorem{assumption}[theorem]{Assumption}
\theoremstyle{nonumberplain}
\newenvironment{proof}[1][]
{\ifthenelse{\equal{#1}{}}{\smallskip\noindent\textsl{Proof. }}{\smallskip
\noindent\textsl{Proof #1. }}}{\hfill$\Box$}
\def\Bc{{\cal B}}
\def\Gc{{\cal G}}
\def \l{\lambda}
\def \p{\pi}
\def \o{\omega}
\def \O{\Omega}
\def \e{\varepsilon}
\begin{document}
\thispagestyle{empty}
\title{No-arbitrage with multiple-priors in discrete time}

\author {
{Romain} {Blanchard}, E.mail~:  romblanch@hotmail.com\\
\and
{Laurence} {Carassus}, E.mail~: laurence.carassus@devinci.fr \\
L\' eonard de Vinci P\^ole Universitaire, Research Center, \\
92916 Paris La D\'efense, France and 
 \\
LMR, FRE 2011 Universit\'e Reims Champagne-Ardenne.\\
}


\maketitle

\begin{abstract}
In a discrete time and multiple-priors setting, we propose a new characterisation of the condition of quasi-sure no-arbitrage  which has become a standard assumption. This  characterisation shows that it is indeed a well-chosen condition being equivalent to several previously used alternative notions of no-arbitrage and allowing the proof of 
important results in mathematical finance.
We also revisit the so-called  geometric and quantitative no-arbitrage conditions and   explicit two important examples where all these  concepts are illustrated.  
\end{abstract}
\textbf{Key words}:  {No-arbitrage, Knightian uncertainty; multiple-priors; non-dominated model} \\
\textbf{AMS 2000 subject classification}: Primary 91B70, 91B30, 28B20.\\

\section{Introduction}
\label{intro}
The concept of no-arbitrage  is fundamental in the modern theory of mathematical finance. Roughly speaking, it means that one cannot hope to make a profit without taking some risk. In a classical uni-prior setting,  
the Fundamental Theorem of Asset Pricing (FTAP in short)  makes the link between  an appropriate notion of no-arbitrage  and the existence of  equivalent risk-neutral probability measures.  
This result is essential   for pricing issues, namely for the superreplication price  which is  for a given claim the minimum selling price needed  to superreplicate  it by trading in the market. 
The FTAP was initially formalised in \citep{Hakr79}, \citep{HaPL81} and \citep{Kr81} while  \citep{dmw}  established it  in a general discrete-time setting and  \citep{DeS94} in  continuous time models.  The literature on the subject is huge and we refer to \citep{DelSch05} for a general overview.\\
However, the reliance on a single probability measure has long been questioned in the economic literature  and is often referred to as Knightian uncertainty, in reference to \citep{Kni}. In a financial context, it is  called model-risk and also has  a long history. The  financial crisis together with the  evolution  of the structure and behaviour of financial markets, have made these issues even  more acute for both academics and practitioners. In particular, this has motivated further research to find good notions of no-arbitrage allowing to  extend the FTAP and the superreplication price characterisation 
while accounting for model uncertainty. A typical example of such endeavor, directly motivated by concrete situations, is to find no-arbitrage prices for some exotic derivative products (such as barrier options, lookback options, double digit options,...)    using as input the prices of actively traded european  options,   without making any  assumptions on the dynamic of the underlying.  This is the so-called model-independent approach, pioneered in \citep{Ho982}.  We refer to \citep{Ho11} for a detailed presentation including the related  Skorokhod embedding problem. Importantly,   \citep{DaHo07}   have shown  that  the expected dichotomy between the existence of a suitable martingale measure and the existence of a model-independent arbitrage might not hold.  \citep{ABPW13} have also  established a FTAP   in a model-independent framework 
under a fairly weak notion of no-arbitrage\footnote{An arbitrage is a strategy with a strictly positive terminal payoff in all  states of the world.},  but assuming  the existence of a traded option with a super-linearly growing payoff-function.   \\ 
An alternative way of  modeling uncertainty   is to replace the single probability measure of the classical setting with a set of priors representing all the possible models: This is the so-called quasi-sure or multiple-priors approach. As the set can vary between a singleton  and all the probability measures on a given space, this formulation encompasses  a wide range of  settings, including the classical one.  As the set of priors is not assumed to  be dominated, this has raised challenging  mathematical  questions  and has lead to the development of innovative  tools such as quasi-sure stochastic analysis, non-linear expectations and G-Brownian motions. On these topics,  we refer among  others to \citep{pg07, pg11}, \citep{DM06}, \citep{DHP11}, \citep{NuHa13},   \citep{SoToZa11} and \citep{SoToZa11b}.\\
Following this approach,  \citep{BN} have introduced in a discrete-time setting with  time  horizon $T$,   a no-arbitrage condition called the $NA(\mathcal{Q}^{T})$ condition (where $\mathcal{Q}^{T}$ represents all the possible models). It states that  if the terminal value of a   trading strategy  is non-negative $\mathcal{Q}^{T}$-quasi-surely, then it  always equals  $0$ $\mathcal{Q}^{T}$-quasi-surely (see Definition \ref{NAQTARB}).  This is a natural  extension of the classical uni-prior where almost sure equality and inequality are replaced with their quasi-sure pendant. \citep{BN} established a generalisation of the  FTAP together with a Superhedging Theorem.  This framework has also been used to study a large range of related problems (FTAP with transaction cost, american options, worst-case optimal investment, ...) and we refer  among others  to \citep{BN2},  \citep{Bay15}, \citep{BC16}  and \citep{Bart216}.\\
Finally, the so-called pathwise approach is an other fruitful modeling approach:  In this setting, uncertainty is introduced by describing  a  subset of  relevant events or scenarii  without  references to   any probability measure and without  specifying their relative weight. In a discrete-time setting, \citep{ClassS}, \citep{Bur16b} introduce  a set of scenarii $\mathcal{S}$   representing the agent beliefs and an Arbitrage de la Classe $\mathcal{S}$ is a trading strategy  leading to a terminal value that is always non-negative for all the events in $\mathcal{S}$ and positive for a least one event in $\mathcal{S}$. A corresponding FTAP is then obtained.  Note that by choosing different sets $\mathcal{S}$, different definitions of no-arbitrage  can be considered and in particular  the model independent approach previously mentioned can  be recovered by choosing  the whole space for $\mathcal{S}$. Importantly,  \citep{OW18} have recently unified  the quasi-sure and the pathwise approaches  showing that under  technical assumptions both approaches are actually equivalent (see Metatheorem 1.1, see also Remark \ref{OW}).\\

In this paper we follow the multiple-priors approach of \citep{BN}.  Despite its success, one might still wonder if the $NA(\mathcal{Q}^{T})$ condition is the ``right'' one.   Indeed, at first sight at least, under this condition it is not even clear if there exists a model $P \in \mathcal{Q}^{T}$  
satisfying the uni-prior no-arbitrage condition $NA(P)$. Theorem \ref{PPstar} will prove that this is in fact possible.  But as  Lemmata \ref{exex} and \ref{lembin} show, $\mathcal{Q}^{T}$ might still contain some models that are not arbitrage free. 
This means that an  agent may not be able to delta-hedge  a simple vanilla option  using different levels of volatility in a arbitrage free way.   
So instead of $NA(\mathcal{Q}^{T})$ one may assume that every model is arbitrage free i.e. that the  $NA(P)$ condition holds true for every model $P \in \mathcal{Q}^{T}$. We call this $sNA(\mathcal{Q}^{T})$ for strong no-arbitrage, see  Definition \ref{sNA}. This alternative condition has appeared 
in recent results on robust utility maximisation of unbounded functions, see for instance \citep{BC16} and \citep{RaMe18}.  
Our main result provides a characterisation of the $NA(\mathcal{Q}^{T})$  condition that gives some kind of definitive answer to these questions and  confirms that the $NA(\mathcal{Q}^{T})$  condition is indeed the ``right" condition in the quasi-sure setting.  More precisely, Theorem \ref{TheoS} shows that the $NA(\mathcal{Q}^{T})$ condition is equivalent to the existence of 
a subclass of priors $\mathcal{P}^{T} \subset \mathcal{Q}^{T}$ such that  $\mathcal{P}^{T}$ and $ \mathcal{Q}^{T}$ have the same polar sets (roughly speaking the same relevant events)  
 and  such that the $sNA(\mathcal{P}^{T})$ hold true.  
In addition to enable a better economic comprehension of $NA(\mathcal{Q}^{T})$, 
Theorem \ref{TheoS}  also provides several interesting results. 
First, it allows for a short  proof of  
a refinement of the FTAP of \citep{BN} using the classical
Dalang-Morton-Willinger Theorem (see Corollary \ref{coroerhan}  and 
\citep[Theorem 2.1]{Bay17}). 
Then, Theorem \ref{TheoS}  provides tractable theorems for the existence of solutions in the problem of  robust utility
maximisation. Indeed it allows to prove the equivalence  between  $NA(\mathcal{Q}^{T})$  and two other conditions previously used in the litterature for solving this problem.  
The first one is the no-arbitrage  condition introduced in  \cite{BCK19} which states that for every prior $Q \in \mathcal{Q}^{T}$  there exist a prior $P \in \mathcal{Q}^{T}$ such that $Q \ll P$  and $NA(P)$ holds true (see Corollary \ref{corodaniel}).   
The second one is the condition used by \citep{RaMe18} which requires the 
existence of a model  $P^* \in \mathcal{Q}^{T}$   satisfying    $NA(P^*)$ and such that for this model the affine space generated by the conditional support always equals $\mathbb{R}^{d}$ (see  Theorem \ref{PPstar}, Remark \ref{andrea} and also \citep{Bay17}  in a one period setup).  
Finally, Theorem \ref{TheoS} allows to show that one may replace the set $\mathcal{Q}^{T}$ by the set $\mathcal{P}^{T}$ in the problem of maximisation of robust  expected utility without changing the value function (see Lemma \ref{lemutil} and Corollary  \ref{coroutil}).

We then introduce local characterisations  of  the $NA(\mathcal{Q}^{T})$  condition called   
  the geometric  and the quantitative  conditions (see  Definition \ref{NAGarb}, \ref{NAQarb} and Theorem \ref{ARBequival}). The geometric condition goes back in the uni-prior setup to \citep[Theorem 3 g)]{JS98} and provides some  geometric intuition.  Theorem \ref{ARBequival} generalises the preceding result to the quasi-sure setting. The geometric condition is an important tool in the multiple-priors literature. It has been used in different setups by  \citep{OW18} and by \citep{Burz18}. It is also efficient to prove concretely that the $NA(\mathcal{Q}^{T})$ condition holds true. 
The quantitative no-arbitrage goes back to   \citep[Proposition 3.3]{RS05} and is used to solve optimisation problems using the dynamic programming principle. For example,  it provides explicit bounds on the optimal strategies in the problem of maximisation of expected utility, see Remark \ref{rembornes}. 
Again Theorem \ref{ARBequival} generalises \citep[Proposition 3.3]{RS05} to the quasi-sure setting. Together with Propositions \ref{alphatmesARB} and  \ref{finally}, this fills a gap opened in \citep[Proposition 2.3]{BC16}, proving difficult measurability results and opening the possibility to solve, in the setting of \citep{BN}, the problem of multi-prior optimal investment for unbounded utility function defined on the whole real-line (see Remark \ref{alphabeta}). \\
Finally, Proposition \ref{CasD} explicits the relation between the different notions of no-arbitrage in the dominated case 
while Proposition \ref{DOM} is used to build examples of   sets  of probability measures $\mathcal{Q}^{T}$ which are not dominated. 

The proofs follow the same idea:  We first study  a one-period problem with deterministic initial data  where we rely on separation theorem and elementary geometric consideration  in finite dimension. Then we extend the results to the multi-period setting relying on  advanced  measurable selections arguments. The proof of Proposition \ref{DOM} relies  also on  relatively  recent topological results. 

Finally,  these theoretical results are complemented by two concrete and useful examples. The first one proposes a multiple-priors binomial model and the second  one  a  generic way of  introducing uncertainty for the discretised dynamics of a  diffusion process. In both cases, we show that the $NA(\mathcal{Q}^{T})$ conditions holds true and  provide explicit expressions for the parameters introduced in the geometric and quantitative versions of the $NA(\mathcal{Q}^{T})$ condition and for  the set $\mathcal{P}^{T}$.  

The paper is structured as follows: Section \ref{Unc} presents  the framework and notations  needed in the paper.  Different definitions of  conditional support  which are at the heart of our study are  introduced and important measurability results established.  Section \ref{NAQNA} contains the different definitions of no-arbitrage  together with  our main result. In Section \ref{ExAp} we propose two detailed examples  illustrating the previous results and also how to build  set of probability measures which are not dominated. 
Finally,    Section \ref{Apen} collects the missing proofs.

\section{The Model}
\label{Unc}
This section presents our multiple-priors  framework and gives introductory definitions. 
\subsection{Uncertainty modeling}
The construction of the global probability space is based on a product of the local (between time $t$ and $t+1$) ones using measurable selection under Assumption \ref{QanalyticARB} below. This is tailor made for the dynamic programming approach.  

We fix  a time horizon $T\in \mathbb{N}$ and  introduce  a sequence $\left(\Omega_t\right)_{1 \leq t \leq T}$  of Polish spaces. Each $\O_{t+1}$ contains all possible scenarii between time $t$ and $t+1$. For some $1 \leq t \leq T$, we set $\Omega^{t}:=\O_{1} \times \dots \times \O_{t}$  (with the convention that $\Omega^{0}$ is reduced to a singleton),  $\mathcal{B}(\O^{t})$  its Borel sigma-algebra and  $\mathfrak{P}(\O^{t})$ the set of all probability measures on $(\O^{t},\mathcal{B} (\O^{t}))$.  An element of $\Omega^{t}$ will be denoted by $\o^{t}=(\omega_{1},\dots, \omega_{t})=(\o^{t-1},\o_{t})$ for $(\o_{1},\dots,\o_{t}) \in \Omega_{1}\times\dots\times\Omega_{t}$. 
 We also introduce the universal sigma-algebra $\mathcal{B}_{c}(\O^{t})$ which is the intersection of all possible completions of $\mathcal{B}(\O^{t})$.\\
Let $S:=\left\{S_{t},\ 0\leq t\leq T\right\}$ be a 
$\mathbb{R}^d$-valued process
where for all $0\leq t\leq T$, $S_{t}=\left(S^i_t\right)_{1 \leq i \leq d}$ represents the   price of $d$ risky securities at time $t$. We assume that there is a riskless asset whose  price is constant and equals  $1$. We also make  the following assumptions already stated in \citep{BN} to which we refer for further details and motivations on the framework.
\begin{assumption}
\label{SassARB}
The process $S$ is $\left(\mathcal{B}(\Omega^{t})\right)_{0 \leq t \leq T}$-adapted.
 \end{assumption}
Trading 
strategies are represented by  $\left(\mathcal{B}_{c}(\Omega^{t-1})\right)_{1 \leq t \leq T}$-measurable and $d$-dimensional  processes $\phi:=\{ \phi_{t}, 1 \leq t \leq T\}$ where for all $1 \leq t \leq T$, $\phi_{t}=\left(\phi^{i}_{t}\right)_{1 \leq i \leq d}$ represents the
investor's holdings in  each of the $d$ assets at time $t$.  
The set of all such trading
strategies is denoted by $\Phi$.
 The notation $\Delta
S_t:=S_t-S_{t-1}$ will often be used.
If $x,y\in\mathbb{R}^d$ then
the concatenation $xy$ stands for their scalar product. The symbol $|\cdot|$ denotes the Euclidean norm
on $\mathbb{R}^d$ (or on $\mathbb{R})$. Trading is assumed to be self-financing and  the value at time $t$ of a portfolio $\phi$ starting from
initial capital $x\in\mathbb{R}$ is given by
$$
V^{x,\phi}_t=x+\sum_{s=1}^t  \phi_s \Delta S_s.
$$

We construct the set $\mathcal{Q}^{T}$ of all possible priors in the market.  For all $0\leq t\leq T-1$, let $\mathcal{Q}_{t+1}: \Omega^t \twoheadrightarrow \mathfrak{P}(\O_{t+1})$\footnote{The notation $\twoheadrightarrow$ stands for set-valued mapping.}   where $\mathcal{Q}_{t+1}(\o^{t})$  can be seen as the set of all possible  priors for the $t$-th period given the state $\o^{t}$ until time $t$.
\begin{assumption}
\label{QanalyticARB}
For all $0\leq t\leq T-1$,  $\mathcal{Q}_{t+1}$ is a non-empty and convex valued random set such that
\begin{align*}
\mbox{graph}(\mathcal{Q}_{t+1})=\left\{(\omega^{t},P) \in \Omega^{t}\times \mathfrak{P}(\Omega_{t+1}),\; P \in \mathcal{Q}_{t+1}(\omega^{t})\right\} 
\end{align*}
is an analytic set.
\end{assumption}
Let $X$ be a  Polish space. 
An analytic set of $X$  is the continuous image of some Polish space, see  \citep[Theorem 12.24 p447]{Hitch}. We denote by $\mathcal{A}(X)$ the set of analytic sets of $X$ and recall some key properties that  will  often be used without further reference in the rest of the paper. The projection of an analytic set is an analytic set  see  (\citep[Proposition 7.39 p165]{BS}), a countable union or intersection of analytic sets is an analytic set (see \citep[Corollary 7.35.2 p160]{BS}),  the Cartesian product of analytic sets is an analytic set (see \citep[Proposition 7.38 p165]{BS}), the image or pre-image of an analytic set is an analytic set (see \citep[Proposition 7.40 p165]{BS}) and (see  \citep[Proposition 7.36 p161, Corollary 7.42.1 p169]{BS})
 \begin{align}
 \label{analyticsetARB}
 \mathcal{B}(X) \subset \mathcal{A}(X)  \subset \mathcal{B}_{c}(X).
  \end{align} However the complement of an analytic set does not need to be an  analytic set. \\
  We will also use without further references a particular case of the Projection Theorem (see  \citep[Theorem 3.23 p75]{CV77}) and of the Auman's Theorem  (see  \citep[Corollary 1]{bv}) which we recall  for sake of completeness.
Let $(X,\mathcal{T})$ be a measurable space and $Y$ be some Polish space. 
If $G \in \mathcal{T} \otimes \mathcal{B}(Y)$, then the projection of $G$ on $X$ 
$ \mbox{Proj}_{X}(G)$ belongs to $ \mathcal{T}_{c}(X),$ the completion of $\mathcal{T}$ with respect to any probability measures on $(X,\mathcal{T})$.   Let $\Gamma: \, X 
\twoheadrightarrow Y$ be such that $\mbox{graph}(\Gamma) \in \mathcal{T} \otimes \mathcal{B}(Y).$ Then there exist a $\mathcal{T}_{c}(X)-\mathcal{B}(Y)$ measurable selector $\sigma: \, X \to Y$ such that $\sigma(x) \in \Gamma(x)$ for all $x \in \{\Gamma \neq \emptyset\}$. 

From the Jankov-von Neumann Theorem (see  \citep[Proposition 7.49 p182]{BS}) and Assumption \ref{QanalyticARB},  there exists some $\mathcal{B}_{c}(\O^{t})$-measurable $q_{t+1}: \Omega^{t} \to \mathfrak{P}(\O_{t+1})$ such that for all $\o^{t} \in \O^{t}$, $q_{t+1}(\cdot,\o^{t}) \in \mathcal{Q}_{t+1}(\o^{t})$ (recall that for all $\o^{t} \in \O^{t}$, $\mathcal{Q}_{t+1}(\o^{t}) \neq \emptyset$). 
For all $1\leq t \leq T$ let  $\mathcal{Q}^{t} \subset \mathfrak{P}\left(\Omega^{t}\right)$ be defined by
\begin{align}
\label{QstarARB}
 \mathcal{Q}^{t}:=\bigl\{ Q_{1} \otimes q_{2} \otimes \dots \otimes q_{t},&\;  Q_{1} \in \mathcal{Q}_{1},\;  q_{s+1} \in \mathcal{S}K_{s+1}, \\
\nonumber & q_{s+1}(\cdot,\o^{s}) \in \mathcal{Q}_{s+1}(\o^{s}),\; \forall \,  \o^{s} \in \O^{s}, \; \forall \,  1\leq s\leq t-1\; \bigr\},
\end{align}
where $Q^{t}:=Q_{1} \otimes q_{2} \otimes \dots \otimes q_{t}$ denotes the $t$-fold application of Fubini's Theorem (see  \citep[Proposition 7.45 p175]{BS}) which defines a measure on $\mathfrak{P}\left(\Omega^{t}\right)$ and $\mathcal{S}K_{t+1}$ is the set of universally-measurable stochastic kernel on $\O_{t+1}$ given $\O^{t}$ (see \citep[Definition 7.12  p134, Lemma 7.28 p174]{BS}).\\

Apart from Assumption \ref{QanalyticARB}, no specific assumptions on the set of priors are made:  $\mathcal{Q}^{T}$  is neither assumed to be  dominated by a given  probability measure nor to be weakly compact.
This setting allows for various general definitions of the  sets $\mathcal{Q}^T$.  Section \ref{ExAp} presents some concrete examples of non-dominated settings. We  refer also to \citep{Bart16}  for other examples.

\subsection{Multiple-priors conditional supports}

\label{Multi}

The following definitions  are at the heart of our study.
 \begin{definition}
\label{DefDARB} Let $P \in \mathfrak{P}\left(\Omega^{T}\right)$ with the fixed disintegration $P:=Q_1 \otimes q_{2} \otimes \cdots \otimes q_{T}$ where $q_{t} \in \mathcal{SK}_{t}$ for all $1 \leq t \leq T$. 
For all  $0 \leq t \leq T-1$,  the random sets ${E}^{t+1}: \; \Omega^{t} \times \mathfrak{P}(\O_{t+1}) \twoheadrightarrow \mathbb{R}^{d}$, ${D}^{t+1}, 
\ {D}_{P}^{t+1} \; : \Omega^{t} \twoheadrightarrow \mathbb{R}^{d}$  are defined  for $\o^{t} \in \O^{t}$, $p \in \mathfrak{P}(\O_{t+1})$ by
\begin{align}
\label{DefE}
{E}^{t+1}(\o^{t},p)&:=\bigcap  \left\{ A \subset \mathbb{R}^{d},\; \mbox{closed}, \; p\left(\Delta S_{t+1}(\o^{t},.) \in A\right) =1\right\},\\
\label{dedD}
{D}^{t+1}(\o^{t})&:=\bigcap  \left\{ A \subset \mathbb{R}^{d},\; \mbox{closed}, \; p\left(\Delta S_{t+1}(\o^{t},.) \in A\right)=1, \; \forall \,p \in \mathcal{Q}_{t+1}(\o^{t}) \right\},\\
\label{DefPDARB1}
{D}_{P}^{t+1}(\o^{t})&:=\bigcap  \left\{ A \subset \mathbb{R}^{d},\; \mbox{closed}, \; q_{t+1}\left(\Delta S_{t+1}(\o^{t},.) \in A,\o^{t}\right) =1\right\}.
\end{align}

\end{definition}
\begin{remark}
\label{supppp}
As  $\mathbb{R}^{d}$ is second countable, 
$p\left(\Delta S_{t+1}(\o^{t},\cdot) \in E^{t+1}(\o^{t},p)\right)=1,$ see  \citep[Theorem 12.14]{Hitch} and  $p\left(\Delta S_{t+1}(\o^{t},\cdot) \in D^{t+1}(\o^{t}\right))=1$ for all $p \in \mathcal{Q}_{t+1}(\o^{t}),$ see  \citep[Lemma 4.2]{BN}.
\end{remark}
\begin{remark}
\label{DomainInc}
It is easy to verify  that for all $\o^{t} \in \O^{t}$, $p \in \mathcal{Q}_{t+1}(\o^{t})$
\begin{align}
\label{EE}
E^{t+1}(\o^{t},p) &\subset D^{t+1}(\o^{t}). 
\end{align}
Recall that any probability $P \in \mathfrak{P}(\O^{T})$ can be decomposed using Borel-measurable stochastic kernel, see for instance \citep[Corollary 7.27.2 p139]{BS}. 
Then for some fixed disintegration of $P\in\mathcal{Q}^{T}$,  $P:=Q_1 \otimes q_{2} \otimes \cdots \otimes q_{T}$,  all $0 \leq t \leq T-1$ and all $\o^{t} \in \O^{t}$ 
\begin{align}
\label{DvsE}
{D}_{P}^{t+1}(\o^{t})={E}^{t+1}(\o^{t},q_{t+1}(\cdot,\o^{t}))  \subset {D}^{t+1}(\o^{t})
\end{align}
as $q_{t}(\cdot,\o^{t}) \in \mathcal{Q}_{t+1}(\o^{t})$ for all $\o^{t} \in \O^{t}$ (see \eqref{QstarARB}).
\end{remark}

The following lemma 
establishes some important measurability properties of the random sets previously introduced and uses   the following notations. For  some $R \subset \mathbb{R}^{d}$, let \begin{align*}
\mbox{Aff}(R)&:= \bigcap\{A \subset \mathbb{R}^{d},\; \mbox{affine subspace}, \; R \subset A\}, \\  {{\mbox{Conv}}}(R)&:= \bigcap\{C\subset \mathbb{R}^{d},\; \mbox{ convex},\; R \subset C\},\;  \quad
  {\overline{\mbox{Conv}}}(R):= \bigcap\{C\subset \mathbb{R}^{d},\; \mbox{closed convex},\; R \subset C\}.
  \end{align*} 
Recall that  $\mbox{Conv}(R)= \left\{\sum_{i=1}^{n} \l_{i} p_{i},\; n \geq 1,\; p_{i} \in R,\; \sum_{i=1}^{n} \l_{i} =1, \l_{i} \geq 0\right\}$ see \citep[Theorem 2.3 p12]{cvx} and that  ${\overline{\mbox{Conv}}}(R)=\overline{{\mbox{Conv}}(R)}$. \\
 For a random set $R: \Omega \twoheadrightarrow \mathbb{R}^{d}$, 
 ${\overline{\mbox{Conv}}}\left(R\right)$ and $\mbox{Aff}\left(R\right)$ are the  random  sets defined for all $\o \in \O$ by
$ \overline{{\mbox{Conv}}}\left(R \right)(\o):=\overline{{\mbox{Conv}}}\left(R(\o)\right)  \;  \mbox{and} \; \mbox{Aff}\left(R\right)(\o):=\mbox{Aff}\left(R(\o)\right).$

\begin{lemma}
\label{Dmeasurability}
Let Assumptions \ref{SassARB} and \ref{QanalyticARB} hold true and let $0 \leq t \leq T-1$ be fixed. 
Let $P\in\mathcal{Q}^{T}$ with a fixed disintegration  $P:=Q_1 \otimes q_{2} \otimes \cdots \otimes q_{T}$. 
\begin{itemize}
\item The random sets ${E}^{t+1}$, $\overline{\mbox{Conv}}\left({E}^{t+1}\right)$, $\mbox{Aff}\left(E^{t+1}\right)$ are non-empty, closed valued and $\mathcal{B}(\O^{t}) \otimes \mathcal{B}(\mathfrak{P}\left(\O_{t+1})\right)$-measurable\footnote{See \citep[Definition 14.1]{rw}.} with  $\mbox{graphs}$   in $\mathcal{B}(\O^t) \otimes\mathcal{B}\left( \mathfrak{P}(\O_{t+1})\right) \otimes \mathcal{B}(\mathbb{R}^{d})$. 
\item The random sets  ${D}^{t+1}$, 
${D}_{P}^{t+1}$, $\overline{\mbox{Conv}}\left(D^{t+1}\right)$, 
$\overline{\mbox{Conv}}\left({D}_{P}^{t+1}\right)$, $\mbox{Aff}\left(D^{t+1}\right)$ 
and $\mbox{Aff}\left(D_P^{t+1}\right)$ are  non-empty, closed valued 
and $\mathcal{B}_{c}(\O^{t})$-measurable. Furthermore their $\mbox{graphs}$ 
belong to $\mathcal{B}_{c}(\O^t) \otimes \mathcal{B}(\mathbb{R}^{d})$.
\end{itemize}
\end{lemma}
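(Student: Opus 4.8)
The plan is to establish everything first for the one–step random set $E^{t+1}$, from which the remaining sets are derived, and then to transport measurability to the closed convex and affine hulls by a Castaing representation. The starting point is the elementary support characterisation: for $\mu\in\mathfrak{P}(\mathbb{R}^{d})$ and $O\subset\mathbb{R}^{d}$ open, $\mbox{supp}(\mu)\cap O\neq\emptyset$ iff $\mu(O)>0$, since the complement of the support is the largest open $\mu$-null set. Applying this to the image of $\Delta S_{t+1}(\o^{t},\cdot)$ under $p$ and fixing a countable base $(O_{n})$ of $\mathbb{R}^{d}$ gives
$$\{(\o^{t},p):E^{t+1}(\o^{t},p)\cap O_{n}\neq\emptyset\}=\{(\o^{t},p):p(\Delta S_{t+1}(\o^{t},\cdot)\in O_{n})>0\}.$$
The set $G_{n}:=\{(\o^{t},\o_{t+1}):\Delta S_{t+1}(\o^{t},\o_{t+1})\in O_{n}\}$ is Borel by Assumption \ref{SassARB}, and $(\o^{t},p)\mapsto p((G_{n})_{\o^{t}})$ is jointly Borel: the class of Borel $G$ for which this holds is a $\lambda$-system containing the rectangles $A\times B$ (where the map is $\mathbf 1_{A}(\o^{t})\,p(B)$, Borel because $p\mapsto p(B)$ is), hence all of $\mathcal{B}(\O^{t})\otimes\mathcal{B}(\O_{t+1})$. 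Thus $E^{t+1}$ is a measurable closed-valued random set; it is non-empty because the support of a probability measure on the second-countable space $\mathbb{R}^{d}$ is non-empty, and closed by construction.

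For the graph and the two hulls I would invoke the fundamental measurability theorem for closed-valued random sets into $\mathbb{R}^{d}$ (see \citep{rw}): a measurable such set has a Castaing representation $(\s_{m})$ by measurable selectors and its graph lies in the product $\sigma$-algebra. Given a Castaing representation of $E^{t+1}$, the countable family of all finite rational convex (resp. affine) combinations of the $\s_{m}$ consists of measurable selectors whose pointwise closure is $\overline{\mbox{Conv}}(E^{t+1})$ (resp. $\mbox{Aff}(E^{t+1})$); here I use the density of rational combinations together with $\overline{\mbox{Conv}}(\overline A)=\overline{\mbox{Conv}}(A)$ and $\mbox{Aff}(\overline A)=\mbox{Aff}(A)$ in $\mathbb{R}^{d}$. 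This yields Castaing representations of both hulls, hence their measurability and graph-measurability; affine subspaces being closed, $\mbox{Aff}(E^{t+1})$ is closed-valued.

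The set $D_{P}^{t+1}$ is handled by composition: by \eqref{DvsE}, $D_{P}^{t+1}(\o^{t})=E^{t+1}(\o^{t},q_{t+1}(\cdot,\o^{t}))$, and since $q_{t+1}$ is $\mathcal{B}_{c}(\O^{t})$-measurable into $\mathfrak{P}(\O_{t+1})$, the map $\o^{t}\mapsto(\o^{t},q_{t+1}(\cdot,\o^{t}))$ is $\mathcal{B}_{c}(\O^{t})$-$\big(\mathcal{B}(\O^{t})\otimes\mathcal{B}(\mathfrak{P}(\O_{t+1}))\big)$-measurable; preimaging the criterion sets of the first paragraph puts $\{\o^{t}:D_{P}^{t+1}(\o^{t})\cap O_{n}\neq\emptyset\}$ in $\mathcal{B}_{c}(\O^{t})$, and the hull arguments carry over verbatim. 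For $D^{t+1}$ I first record from \eqref{dedD} the identity $D^{t+1}(\o^{t})=\overline{\bigcup_{p\in\mathcal{Q}_{t+1}(\o^{t})}E^{t+1}(\o^{t},p)}$, since a closed set has full mass under every $p\in\mathcal{Q}_{t+1}(\o^{t})$ iff it contains each $E^{t+1}(\o^{t},p)$. As $O_{n}$ is open, intersecting the closure with $O_{n}$ is the same as intersecting the union, so
$$\{\o^{t}:D^{t+1}(\o^{t})\cap O_{n}\neq\emptyset\}=\mbox{Proj}_{\O^{t}}\Big(\mbox{graph}(\mathcal{Q}_{t+1})\cap\{(\o^{t},p):p(\Delta S_{t+1}(\o^{t},\cdot)\in O_{n})>0\}\Big).$$

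The set on the right is the intersection of the analytic set $\mbox{graph}(\mathcal{Q}_{t+1})$ (Assumption \ref{QanalyticARB}) with the Borel set of the first paragraph, hence analytic; its projection is analytic and by \eqref{analyticsetARB} belongs to $\mathcal{B}_{c}(\O^{t})$, giving measurability of $D^{t+1}$. Non-emptiness then follows from \eqref{EE}, closedness from \eqref{dedD}, and the hulls and graphs as before. I expect the passage for $D^{t+1}$ to be the main obstacle: the defining intersection over the uncountable family $\mathcal{Q}_{t+1}(\o^{t})$ cannot be reduced to a countable one directly and is tamed only by rewriting $\{D^{t+1}\cap O_{n}\neq\emptyset\}$ as a projection and exploiting the analyticity of $\mbox{graph}(\mathcal{Q}_{t+1})$ and the stability of analytic sets — at the cost of landing in the universal completion $\mathcal{B}_{c}(\O^{t})$ rather than in $\mathcal{B}(\O^{t})$.
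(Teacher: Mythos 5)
Your proof is correct, and its skeleton coincides with the paper's: the same support criterion $\left\{(\o^{t},p):\,{E}^{t+1}(\o^{t},p)\cap O\neq\emptyset\right\}=\left\{(\o^{t},p):\,p\left(\Delta S_{t+1}(\o^{t},\cdot)\in O\right)>0\right\}$ drives everything, analytic-set stability plus universal completion absorbs the uncountable quantifier over $\mathcal{Q}_{t+1}(\o^{t})$, and the hull/graph statements come from the standard random-set machinery. You deviate at three local steps, all in the direction of self-containment. For ${D}_{P}^{t+1}$, the paper writes $\{{D}_{P}^{t+1}\cap O\neq\emptyset\}$ as a projection of $\{(\o^{t},q):\,q_{t+1}(\cdot,\o^{t})=q,\;{E}^{t+1}(\o^{t},q)\cap O\neq\emptyset\}$ and invokes the Projection Theorem, whereas you compose: starting from \eqref{DvsE}, you preimage the Borel criterion sets under the map $\o^{t}\mapsto(\o^{t},q_{t+1}(\cdot,\o^{t}))$, which is legitimate since by \citep[Lemma 7.28 p174]{BS} a universally measurable kernel is exactly a $\mathcal{B}_{c}(\O^{t})$-measurable map into $\mathfrak{P}(\O_{t+1})$; this reaches the same conclusion without any projection and is arguably cleaner. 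For ${D}^{t+1}$, the paper outsources measurability entirely to \citep[Lemma 2.2]{BC16}; you instead supply the argument, via the identity ${D}^{t+1}(\o^{t})=\overline{\bigcup_{p\in\mathcal{Q}_{t+1}(\o^{t})}{E}^{t+1}(\o^{t},p)}$ (valid because a closed set is $p$-full precisely when it contains the support) together with the fact that an open set meets a closure iff it meets the set, so that $\{{D}^{t+1}\cap O_{n}\neq\emptyset\}$ becomes the projection of $\mbox{graph}(\mathcal{Q}_{t+1})$ intersected with a Borel set — this makes explicit where Assumption \ref{QanalyticARB} and the analyticity toolbox are actually consumed, and where one necessarily lands in $\mathcal{B}_{c}(\O^{t})$ rather than $\mathcal{B}(\O^{t})$. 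Finally, you re-derive the joint measurability of $(\o^{t},p)\mapsto p\left(\Delta S_{t+1}(\o^{t},\cdot)\in O\right)$ by a $\lambda$-system argument (the paper cites \citep[Proposition 7.29 p144]{BS}) and the hull/graph measurability by Castaing representations with rational convex and affine combinations (the paper cites \citep[Proposition 14.2, Exercise 14.12, Theorem 14.8]{rw}); both expansions are correct — note that the Kuratowski--Ryll-Nardzewski selection theorem needs no completeness of the underlying $\sigma$-algebra, so the Castaing argument works both on $\mathcal{B}(\O^{t})\otimes\mathcal{B}\left(\mathfrak{P}(\O_{t+1})\right)$ and on $\mathcal{B}_{c}(\O^{t})$, and in $\mathbb{R}^{d}$ the affine hull of a set equals the (closed) affine hull of its closure, as you use. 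What the paper's citations buy is brevity; what your version buys is a proof checkable without the external lemmas, a simpler mechanism for ${D}_{P}^{t+1}$, and an explicit structural description of ${D}^{t+1}$ that the paper only uses implicitly.
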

\begin{proof}
The measurability of ${D}^{t+1}$ follows from \citep[Lemma 2.2]{BC16}.   Fix some open set $O \subset \mathbb{R}^{d}$.  Assumption \ref{SassARB}  and \citep[Proposition 7.29 p144]{BS} imply that $(\o^{t},p) \to  p\left(\Delta S_{t+1}(\o^{t},.) \in O\right)$ is $\mathcal{B}(\O^{t}) \otimes \mathcal{B}(\mathfrak{P}(\O_{t+1}))$-measurable. The measurability of $ {E}^{t+1}$  
and ${D}_{P}^{t+1}$ follows from
\begin{small}
 \begin{align*}
 \left\{(\o^{t},p)
,\; {E}^{t+1}(\o^{t},p)  \cap O \neq \emptyset \right\} &=\left\{(\o^{t},p), \; p\left(\Delta S_{t+1}(\o^{t},.) \in O\right)>0\right\}
  \in \mathcal{B}(\O^{t}) \otimes \mathcal{B}(\mathfrak{P}(\O_{t+1})),\\ 
\{\o^{t},\; {D}_{P}^{t+1}(\o^{t}) \cap O \neq \emptyset \} &=\left\{\o^{t}, \; \exists \, q \in \mathfrak{P}(\O_{t+1}),\;  q_{t+1}(\cdot,\o^{t})=q,\; {E}^{t+1}(\o^{t},q) \cap O \neq \emptyset\right\}\\
&= \mbox{Proj}_{\O^{t}} \left\{(\o^{t},q),\; q_{t+1}(\cdot,\o^{t})=q,\;{E}^{t+1}(\o^{t},q) \cap O \neq \emptyset\right\}
 \in \mathcal{B}_{c}(\O^{t}),
\end{align*}
\end{small}
\noindent where we have used Assumption \ref{QanalyticARB} and the Projection Theorem as $(\o^{t},q) \to  q_{t+1}(\cdot,\o^{t})-q$  is $\mathcal{B}_{c}(\O^{t}) \otimes \mathfrak{P}(\O_{t+1})$-measurable.\\
Then,  \citep[Proposition 14.2, Exercise 14.12]{rw} implies that   $\overline{\mbox{Conv}}\left({E}^{t+1}\right)$,  $\mbox{Aff}\left(E^{t+1}\right)$ are  $\mathcal{B}(\O^{t}) \otimes \mathcal{B}(\mathfrak{P}(\O_{t+1}))$-measurable and 
that  $\overline{\mbox{Conv}}\left(D^{t+1}\right)$, 
$\overline{\mbox{Conv}}\left({D}_{P}^{t+1}\right)$, $\mbox{Aff}\left(D^{t+1}\right)$ 
and $\mbox{Aff}\left(D_P^{t+1}\right)$  are  $\mathcal{B}_{c}(\O^{t})$-measurable.\\ 
Finally,   \citep[Theorem 14.8]{rw} implies that the $\mbox{graphs}$ of $E^{t+1}$, $\overline{\mbox{Conv}}\left({E}^{t+1}\right)$ and $\mbox{Aff}\left(E^{t+1}\right)$    belong to $\mathcal{B}(\O^t) \otimes\mathcal{B}\left( \mathfrak{P}(\O_{t+1})\right) \otimes \mathcal{B}(\mathbb{R}^{d})$ while  the $\mbox{graphs}$ of ${D}^{t+1}$, ${D}_{P}^{t+1}$, $\overline{\mbox{Conv}}\left(D^{t+1}\right)$, 
$\overline{\mbox{Conv}}\left({D}_{P}^{t+1}\right)$, $\mbox{Aff}\left(D^{t+1}\right)$, 
and $\mbox{Aff}\left(D_P^{t+1}\right)$
 belong to $\mathcal{B}_{c}(\O^{t}) \otimes \mathcal{B}(\mathbb{R}^{d}).$
\end{proof}\\
\section{No-arbitrage characterisations}
\label{NAQNA}
\subsection{Global no-arbitrage condition 
and main result}
\label{alternativeNA}
In  the uni-prior case, for any $P \in \mathcal{P}^{T},$ the no-arbitrage $NA(P)$ condition holds true if $V_{T}^{0,\phi} \geq 0$ $P$-a.s. for some $\phi \in \Phi$ implies that $V_{T}^{0,\phi}=0$ $P$-a.s. In the multiple-priors setting,
 the  no-arbitrage condition  $NA(\mathcal{Q}^{T})$, also referred as quasi-sure no-arbitrage, was  introduced in  \citep{BN}. Our main message will be that it is indeed a good assumption. Besides being a natural  extension of the classical uni-prior arbitrage condition, we will show that it is equivalent to several conditions previously used in the literature. 
\begin{definition}
\label{NAQTARB}
The $NA(\mathcal{Q}^{T})$ condition holds true if
$V_{T}^{0,\phi} \geq 0 \; \mathcal{Q}^{T}\mbox{-q.s. for some $\phi  \in \Phi$}$ implies that $ V_{T}^{0,\phi}  = 0 \;\mathcal{Q}^{T}\mbox{-q.s. }$
\end{definition}
Recall that for a given $\mathcal{P} \subset  \mathfrak{P}(\O^T)$, 
a set $N \subset \O^T$ is called a $\mathcal{P}$-polar  if for all $P \in \mathcal{P}$, there exists some $A_{P} \in \mathcal{B}_c(\O^T)$ such that $P(A_{P})=0$ and $N \subset A_{P}$. A property holds true $\mathcal{P}$-quasi-surely (q.s.), if it is true outside a $\mathcal{P}$-polar set. Finally  a set is of $\mathcal{P}$-full measure  if its complement is a $\mathcal{P}$-polar set.

\citep{BN} proves that 
Definition \ref{NAQTARB} allows a FTAP  generalisation. The  $NA(\mathcal{Q}^{T})$ is equivalent to the following: For all $Q \in \mathcal{Q}^{T}$, there exists some $P \in \mathcal{R}^{T}$ such that $ Q \ll P$ where 
\begin{align}
\label{Mart}
\mathcal{R}^{T}:=\{P \in \mathfrak{P}(\O^{T}),\; \exists \, Q^{'} \in \mathcal{Q}^{T}, P \ll Q^{'} \; \mbox{and $P$ is a martingale measure}\}.
\end{align}
The next result is straightforward.
\begin{lemma}
\label{PRpolar}
Let $\mathcal{P}$ and $\mathcal{M}$ be two sets of probability measures on $\mathfrak{P}(\O^{T})$ such that $\mathcal{P}$ and $\mathcal{M}$ have the same polar sets. Then  the $NA(\mathcal{P})$ and the $NA(\mathcal{M})$ conditions  are equivalent.
\end{lemma}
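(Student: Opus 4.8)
The plan is to exploit the fact that the $NA$ condition is phrased entirely in terms of quasi-sure statements, and that whether a property holds quasi-surely depends only on the underlying collection of polar sets. First I would observe that, for a fixed $\phi \in \Phi$, the sets $\{V_T^{0,\phi} < 0\}$ and $\{V_T^{0,\phi} \neq 0\}$ are well-defined subsets of $\Omega^T$; indeed they lie in $\mathcal{B}_c(\Omega^T)$, since $S$ is $(\mathcal{B}(\Omega^t))$-adapted and $\phi$ is $(\mathcal{B}_c(\Omega^{t-1}))$-measurable, so that $V_T^{0,\phi}$ is $\mathcal{B}_c(\Omega^T)$-measurable. By the very definition recalled after Definition \ref{NAQTARB}, the statement ``$V_T^{0,\phi} \geq 0$ $\mathcal{P}$-q.s.'' is equivalent to ``$\{V_T^{0,\phi} < 0\}$ is a $\mathcal{P}$-polar set'', and likewise ``$V_T^{0,\phi} = 0$ $\mathcal{P}$-q.s.'' is equivalent to ``$\{V_T^{0,\phi} \neq 0\}$ is a $\mathcal{P}$-polar set''.

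Next I would invoke the hypothesis that $\mathcal{P}$ and $\mathcal{M}$ share the same polar sets. This immediately yields that, for every $\phi \in \Phi$, the set $\{V_T^{0,\phi} < 0\}$ is $\mathcal{P}$-polar if and only if it is $\mathcal{M}$-polar, and similarly for $\{V_T^{0,\phi} \neq 0\}$. In other words, both the hypothesis and the conclusion appearing in the implication defining the no-arbitrage condition are insensitive to whether they are read with respect to $\mathcal{P}$ or with respect to $\mathcal{M}$.

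Finally I would assemble these equivalences into the two implications. Assuming $NA(\mathcal{P})$, take any $\phi \in \Phi$ with $V_T^{0,\phi} \geq 0$ $\mathcal{M}$-q.s.; by the first two steps this means $\{V_T^{0,\phi} < 0\}$ is $\mathcal{M}$-polar, hence $\mathcal{P}$-polar, so $V_T^{0,\phi} \geq 0$ $\mathcal{P}$-q.s. Then $NA(\mathcal{P})$ gives $V_T^{0,\phi} = 0$ $\mathcal{P}$-q.s., i.e. $\{V_T^{0,\phi} \neq 0\}$ is $\mathcal{P}$-polar, hence $\mathcal{M}$-polar, i.e. $V_T^{0,\phi} = 0$ $\mathcal{M}$-q.s.; this establishes $NA(\mathcal{M})$. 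The converse implication is identical after exchanging the roles of $\mathcal{P}$ and $\mathcal{M}$. Since the argument is purely set-theoretic, I do not expect any real obstacle; the only point deserving a word of care is the measurability remark guaranteeing that the relevant events are legitimately compared, but this is automatic, and in any case the definition of a polar set only requires domination by a measurable null set rather than measurability of the event itself.
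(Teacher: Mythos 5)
Your proof is correct and is precisely the straightforward argument the paper has in mind: Lemma \ref{PRpolar} is stated without proof (the authors call it straightforward), and the intended reasoning is exactly yours, namely that the $NA$ condition is formulated purely through quasi-sure statements, which depend only on the common collection of polar sets. Your closing remark is also accurate: since the paper's definition of a polar set only requires domination by a $\mathcal{B}_c(\O^{T})$-measurable null set, the measurability of $\{V_T^{0,\phi}<0\}$ and $\{V_T^{0,\phi}\neq 0\}$ (which does hold, as you note) is not even needed.
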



Nevertheless, it is not true that under the $NA(\mathcal{Q}^{T})$ condition,  the $NA(P)$ condition holds true for all $P \in \mathcal{Q}^{T},$ see Lemma \ref{exex} below. 
This condition is called the   ``strong no-arbitrage" or  $sNA(\mathcal{Q}^{T})$. 
\begin{definition}
\label{sNA}
The  $sNA(\mathcal{Q}^{T})$ condition holds true if  the $NA(P)$ holds true for all $P \in \mathcal{Q}^{T}$.
\end{definition}
\begin{remark} The $sNA(\mathcal{Q}^{T})$ is a strong condition. But it is   related to  practical situations in finance:  If it does not hold true, there exists  a model $P \in \mathcal{Q}^{T}$ and a strategy  $\phi  \in \Phi$   such that $V_{T}^{0,\phi} \geq 0 \; {P}\mbox{-a.s.}$ and $ P(V_{T}^{0,\phi}  > 0)>0$ and  an agent  having sold some derivative product may not be able to use different arbitrage free models to  manage the resulting position (think for instance of different volatility level to delta-hedge  a simple vanilla option).  \\
The $sNA(\mathcal{Q}^{T})$ condition is also useful to obtain tractable theorems on multiple-priors expected utility maximisation for unbounded function, see \citep[Theorem 3.6]{BC16} and \citep[Theorem 3.9]{RaMe18}. \\
Finally,   this definition seems also  relevant in a continuous time setting for studying the no-arbitrage characterisation, see \citep[Definition 2.1, Theorem 3.4]{Bia16}. 
\end{remark}

In the spirit of the model-dependent arbitrage introduced in \citep{DaHo07} (see also  Remark \ref{andrea}) we  introduce  the  notion of ``weak no-arbitrage". \begin{definition}
\label{wNA}
The  $wNA(\mathcal{Q}^{T})$ condition holds true if there exists some $P \in \mathcal{Q}^{T}$ such that the $NA(P)$ holds true.
\end{definition}
\begin{remark}
The contraposition of the $wNA(\mathcal{Q}^{T})$ condition is that for all models $P \in \mathcal{Q}^{T}$, there exists a strategy $\phi_{P}$ such that $V_{T}^{0,\phi_{P}} \geq 0 \; {P}\mbox{-a.s.}$ and $ P(V_{T}^{0,\phi_{P}}  > 0)>0$. A concrete example of  a such model-dependent arbitrage is given in  \citep{DaHo07}.  
\end{remark}
We  illustrate now the obvious relations between the three no-arbitrage conditions introduced 
(see also Figure 2). The more subtle one will be addressed in Theorems \ref{TheoS} and  \ref{PPstar}. This last theorem  shows that the $NA(\mathcal{Q}^{T})$ condition implies the $wNA(\mathcal{Q}^{T})$ one.   
\begin{lemma}
\label{exex}
\begin{enumerate}
\item  Assume that   $\mathcal{Q}^{T}=\{P\}$ for some $P \in \mathfrak{P}(\O^{T})$. Then the  $NA(\mathcal{Q}^{T}),$ $sNA(\mathcal{Q}^{T}),$ $wNA(\mathcal{Q}^{T})$ and $NA(P)$ conditions are equivalent.
\item Assume that there exists a dominating  probability measure $\widehat{P} \in \mathcal{Q}^{T}$. Then  the $NA(\mathcal{Q}^{T})$  and  $NA(\widehat{P})$ conditions are equivalent.
\item The $sNA(\mathcal{Q}^{T})$ condition  implies the $wNA(\mathcal{Q}^{T})$  but the converse does not hold true. 
\item The $sNA(\mathcal{Q}^{T})$ condition  implies  the $NA(\mathcal{Q}^{T})$ 
but  the converse does not true. 
\item The $wNA(\mathcal{Q}^{T})$ condition does not imply  the $NA(\mathcal{Q}^{T})$ condition.  
\end{enumerate}
\end{lemma}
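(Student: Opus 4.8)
The plan is to treat items 1 and 2 as reductions of the quasi-sure notions to the single-prior one through polar sets, to read off the forward implications in items 3 and 4 directly from the definitions, and to refute the three non-implications with explicit finite one-period models. For item 1, when $\mathcal{Q}^{T}=\{P\}$ a set $N$ is $\mathcal{Q}^{T}$-polar precisely when it is contained in a $P$-null set, so ``$\mathcal{Q}^{T}$-q.s.'' coincides with ``$P$-a.s.'' (using that $V_{T}^{0,\phi}$ is $\mathcal{B}_{c}(\O^{T})$-measurable). Hence $NA(\mathcal{Q}^{T})$ is literally $NA(P)$, while $sNA(\mathcal{Q}^{T})$ and $wNA(\mathcal{Q}^{T})$ both quantify over the singleton $\{P\}$ and therefore also reduce to $NA(P)$, giving the fourfold equivalence. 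For item 2, I would first check that $\{\widehat{P}\}$ and $\mathcal{Q}^{T}$ share the same polar sets: if $N$ is $\mathcal{Q}^{T}$-polar then $\widehat{P}(N)=0$ because $\widehat{P}\in\mathcal{Q}^{T}$; conversely if $N\subset A$ with $\widehat{P}(A)=0$ then $Q(A)=0$ for every $Q\in\mathcal{Q}^{T}$ by domination $Q\ll\widehat{P}$, so $N$ is $\mathcal{Q}^{T}$-polar. Lemma \ref{PRpolar} then delivers the equivalence of $NA(\mathcal{Q}^{T})$ and $NA(\widehat{P})$.

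The forward implications in items 3 and 4 are immediate. Since $\mathcal{Q}^{T}\neq\emptyset$, the condition $sNA(\mathcal{Q}^{T})$, which asks $NA(P)$ for \emph{all} $P\in\mathcal{Q}^{T}$, trivially yields $NA(P)$ for \emph{some} $P$, i.e.\ $wNA(\mathcal{Q}^{T})$. For $sNA(\mathcal{Q}^{T})\Rightarrow NA(\mathcal{Q}^{T})$, suppose $V_{T}^{0,\phi}\geq 0$ $\mathcal{Q}^{T}$-q.s.; then $P(V_{T}^{0,\phi}<0)=0$ for every $P\in\mathcal{Q}^{T}$, so $V_{T}^{0,\phi}\geq 0$ $P$-a.s., and $NA(P)$ forces $V_{T}^{0,\phi}=0$ $P$-a.s. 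Thus the $\mathcal{B}_{c}(\O^{T})$-measurable set $\{V_{T}^{0,\phi}\neq 0\}$ is $P$-null for every $P\in\mathcal{Q}^{T}$, hence $\mathcal{Q}^{T}$-polar, which is exactly $V_{T}^{0,\phi}=0$ $\mathcal{Q}^{T}$-q.s.

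For the non-implications I would use $T=1$, $d=1$, with $\Delta S_{1}(\omega_{1})=\omega_{1}$ and a \emph{convex} $\mathcal{Q}_{1}$ equal to the full simplex on a two-point space, so that Assumption \ref{QanalyticARB} holds trivially. Taking $\O_{1}=\{-1,1\}$ and $\mathcal{Q}_{1}=\mathfrak{P}(\{-1,1\})$: the measure $\tfrac12\delta_{-1}+\tfrac12\delta_{1}$ gives $NA(P)$, so $wNA(\mathcal{Q}^{T})$ holds, while $\delta_{1}\in\mathcal{Q}_{1}$ admits the arbitrage $\phi=1$, so $sNA(\mathcal{Q}^{T})$ fails; moreover neither $\{-1\}$ nor $\{1\}$ is polar, so $\phi\,\omega_{1}\geq 0$ q.s.\ forces $\phi=0$, whence $NA(\mathcal{Q}^{T})$ also holds. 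This single model refutes the converses in both items 3 and 4. For item 5 I would take $\O_{1}=\{0,1\}$ and $\mathcal{Q}_{1}=\mathfrak{P}(\{0,1\})$: the degenerate model $\delta_{0}$ makes $V_{1}^{0,\phi}\equiv 0$, so $wNA(\mathcal{Q}^{T})$ holds, whereas $\phi=1$ gives $V_{1}^{0,1}=\omega_{1}\geq 0$ everywhere with $\{\omega_{1}>0\}=\{1\}$ non-polar (charged by $\delta_{1}$), so $NA(\mathcal{Q}^{T})$ fails.

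The forward implications carry no difficulty; the delicate point will be the verification of the counterexamples, namely correctly identifying the polar sets of the chosen finite families of priors and confirming that the candidate strategies are, or fail to be, quasi-sure arbitrages. Some care is also needed to respect the standing hypotheses of the framework (Polish $\O_{1}$, convex-valued $\mathcal{Q}_{1}$ with analytic graph, adaptedness of $S$), which is why I insist on taking $\mathcal{Q}_{1}$ to be a whole simplex rather than a finite set of priors; richer, non-dominated illustrations are deferred to the binomial model of Lemma \ref{lembin}.
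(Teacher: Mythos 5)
Your proof is correct, and its skeleton coincides with the paper's: items 1 and 2 via identification of polar sets and Lemma \ref{PRpolar} (the paper dismisses item 1 as clear and cites Lemma \ref{PRpolar} for item 2, while you spell out the polar-set equality), and the forward implications of items 3 and 4 read off the definitions (the paper argues $sNA(\mathcal{Q}^{T})\Rightarrow NA(\mathcal{Q}^{T})$ contrapositively, you argue it directly; it is the same computation, and your observation that $\{V_{T}^{0,\phi}\neq 0\}$ is universally measurable is exactly the point that makes the polar-set conclusion legitimate). Where you genuinely diverge is in the counterexamples. The paper works on abstract Polish spaces with convex-combination families $\mathcal{Q}=\{\lambda P_{1}+(1-\lambda)P_{2},\ 0<\lambda\leq 1\}$, defers the converse of item 3 to Example \ref{EXEX} and Lemma \ref{lembin}, and for item 5 builds a two-asset model, chosen deliberately so that $\mbox{Aff}(D)=\mathbb{R}^{2}$ while $\mbox{Aff}\left(D_{P_{2}}\right)=\{0\}\times\mathbb{R}$, which is then reused in Example \ref{EXEX} to motivate the affine-support condition of Theorem \ref{PPstar}. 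You instead take minimal two-point sample spaces with $\mathcal{Q}_{1}$ the full simplex: this makes convexity and Assumption \ref{QanalyticARB} trivial, lets the polar sets be enumerated by inspection (only the empty set is polar once both Dirac measures belong to $\mathcal{Q}_{1}$), allows a single model to refute the converses of both items 3 and 4 simultaneously, and settles item 5 with $d=1$ and the degenerate prior $\delta_{0}$ rather than a second asset. Your full-simplex choice also buys a small robustness the paper's examples lack as written: in the paper's parametrization $0<\lambda\leq 1$ the endpoint $P_{2}$ --- the prior whose $NA$ fails in the item 4 example and holds in the item 5 example --- is not literally an element of $\mathcal{Q}$, so the range there must be read as $0\leq \lambda<1$ for the stated conclusions to hold; in your examples the relevant Dirac measures belong to $\mathcal{Q}_{1}$ outright. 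What the paper's heavier item 5 example buys in exchange is its later reuse as an illustration of why $\mbox{Aff}\left(D_{P^{*}}^{t+1}\right)=\mbox{Aff}\left(D^{t+1}\right)$ is needed in Theorem \ref{PPstar}, a purpose your minimal example serves less vividly (though it too has $\mbox{Aff}\left(D_{\delta_{0}}\right)=\{0\}\neq\mathbb{R}=\mbox{Aff}(D)$).
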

\begin{proof}
The first item is clear. The second one follows from  Lemma \ref{PRpolar}. The first part of item 3 is trivial and it easy to construct simple counter-example for the second part (see   Example \ref{EXEX} below). We now prove  item 4. 
If the $NA(\mathcal{Q}^{T})$  condition fails, there exists some $\phi  \in \Phi$ and $P \in \mathcal{Q}^{T}$ such that $V_{T}^{0,\phi} \geq 0 \; {\mathcal{Q}^{T}}\mbox{-q.s.}$ and $P(V_{T}^{0,\phi}  > 0)>0:$ The $sNA(\mathcal{Q}^{T})$ condition also fails.  Now consider a one-period model with one risky asset  $S_0=0$, $S_1: \O \to \mathbb{R}$ (for some Polish space $\O$). Let  $P_{1}$ such that $P_{1} \left( \pm \Delta S_1 >0 \right)>0$ and   $P_{2}$ such that $P_2(\Delta S_1 \geq 0)=1$ and $P_{2}(\Delta S_1>0)>0$ and set $\mathcal{Q}=\{\l P_{1}+ (1-\l) P_{2}, \; 0< \l \leq 1\}$. Then $NA(P_{2})$ fails while $NA(\mathcal{Q})$ holds true. Note that Lemma \ref{lembin} provides another counter-example. 
Finally for item 5,  consider  a  one period model with  two risky assets $S^{1}_0=S^{2}_0=0$ and $S^{1,2}_{1}:\O \to \mathbb{R}$. Let $P_{1}$ be such that $P_{1}(\Delta S_{1}^{1} \geq 0)=1$,  $P_{1}(\Delta S_{1}^{1} > 0)>0$ and  $P_{2}$ such that $P_{2}(\Delta S_{1}^{1} = 0)=1$, $P_{2}(\pm \Delta S_{1}^{2} >0)>0$ and set $\mathcal{Q}=\{\l P_{1}+ (1-\l) P_{2}, \; 0< \l \leq 1\}$. Then  the $NA(P_{2})$  and  thus the $wNA(\mathcal{Q})$ conditions are  clearly verified. But the $NA(\mathcal{Q})$ condition does not hold true.  Indeed, let $h=(1,0)$.
     Then $h \Delta S_{1} \geq 0$ $\mathcal{Q}$-q.s. but $P_{1}(h \Delta S_{1} > 0)>0$.  Note that $\mbox{Aff}(D)=\mathbb{R}^{2}$ and $\mbox{Aff}\left(D_{P_{2}}\right)=\{0\} \times \mathbb{R}$.
\end{proof}\\

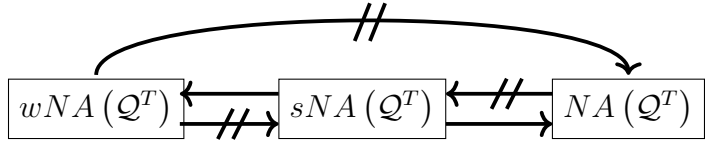
\begin{figure}
\begin{center}
 \begin{tikzpicture}
  \node[draw] (A1) at (0,0) {$wNA\left(\mathcal{Q}^{T}\right)$};
  \node[draw] (A2) at (3.5,0) {$sNA\left(\mathcal{Q}^{T}\right)$};
  \node[draw] (A3) at (7,0) {$NA\left(\mathcal{Q}^{T}\right)$};
\draw[->,black,line width=.5mm](2.4,0.2)--(1.1,0.2);  
\draw[<-,black,line width=.5mm](2.4,-0.2)--(1.1,-0.2);  
\draw[black,line width=.5mm](1.6,-0.4)--(1.8,-0.0);  
\draw[black,line width=.5mm](1.8,-0.4)--(2,-0.0);  

\draw[->,black,line width=.5mm] (0,.45)  to [out=90,in=90, looseness=.35] (7,0.45);
\draw[black,line width=.5mm](3.4,.9)--(3.6,1.4);  
\draw[black,line width=.5mm](3.6,.9)--(3.8,1.4);  
  \draw[<-,black,line width=.5mm](4.6,0.2)--(6.0,0.2);  
  \draw[->,black,line width=.5mm](4.6,-0.2)--(6.0,-0.2);  
\draw[black,line width=.5mm](5.2,0.0)--(5.4,0.4);  
\draw[black,line width=.5mm](5.4,0.0)--(5.6,0.4);  
    \end{tikzpicture}
\end{center}
 \caption{Relations between the  no-arbitrage definitions, see Lemma\ref{exex}.}
 \end{figure}



The following theorem  is our main result. 
\begin{theorem}
\label{TheoS}
Assume that Assumptions  \ref{SassARB} and \ref{QanalyticARB} hold true. The following conditions are equivalent.  \begin{itemize}
\item The $NA(\mathcal{Q}^{T})$ condition holds true.
\item There exists some $\mathcal{P}^{T} \subset \mathcal{Q}^{T}$  such that $\mathcal{P}^{T}$ and  $\mathcal{Q}^{T}$ have  the  same polar-sets  and such that  the $sNA(\mathcal{P}^{T})$ condition holds true.
\end{itemize}
\end{theorem}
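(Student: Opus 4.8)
The implication from the second condition to the first is the routine one, so I would dispatch it first. If $sNA(\mathcal{P}^{T})$ holds but $NA(\mathcal{P}^{T})$ failed, there would be $\phi\in\Phi$ with $V_{T}^{0,\phi}\geq 0$ $\mathcal{P}^{T}$-q.s. and $P(V_{T}^{0,\phi}>0)>0$ for some $P\in\mathcal{P}^{T}$; since a $\mathcal{P}^{T}$-q.s. inequality holds $P$-a.s., this contradicts $NA(P)$. Hence $NA(\mathcal{P}^{T})$ holds, and because $\mathcal{P}^{T}$ and $\mathcal{Q}^{T}$ share their polar sets, Lemma \ref{PRpolar} gives $NA(\mathcal{Q}^{T})$. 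All the substance is therefore in the reverse implication, for which the plan is to upgrade, prior by prior, each $Q\in\mathcal{Q}^{T}$ to a dominating element of $\mathcal{Q}^{T}$ that is arbitrage-free in the uni-prior sense.

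I would begin with the one-period, deterministic-data problem at a fixed $t$ and $\o^{t}$. Using a separation argument together with a measurable selection of a normal vector, one first records that $NA(\mathcal{Q}^{T})$ forces, outside a polar set of $\o^{t}$, the geometric condition $0\in\mathrm{ri}\,\overline{\mathrm{Conv}}\bigl(D^{t+1}(\o^{t})\bigr)$ on the union support \eqref{dedD}; otherwise a measurable selection of a separating $h(\o^{t})$ would assemble a one-step $\mathcal{Q}^{T}$-arbitrage. Granting this, and given the component $q_{t+1}(\cdot,\o^{t})$ of a fixed $Q\in\mathcal{Q}^{T}$, I would enlarge its support as follows. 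Let $k=\dim\mathrm{Aff}\bigl(D^{t+1}(\o^{t})\bigr)$ and pick $y_{0},\dots,y_{k}\in D^{t+1}(\o^{t})$ affinely spanning $\mathrm{Aff}(D^{t+1})$ with $0\in\mathrm{ri}\,\mathrm{Conv}\{y_{j}\}$; since $D^{t+1}$ is the union support, each neighbourhood of $y_{j}$ is charged by some $p_{j}\in\mathcal{Q}_{t+1}(\o^{t})$. The convex combination $p^{*}:=(k+2)^{-1}\bigl(q_{t+1}(\cdot,\o^{t})+\sum_{j}p_{j}\bigr)\in\mathcal{Q}_{t+1}(\o^{t})$ then satisfies $q_{t+1}(\cdot,\o^{t})\ll p^{*}$, and $D_{p^{*}}^{t+1}(\o^{t})$ contains a point within a prescribed distance of each $y_{j}$, so by elementary finite-dimensional convexity (relative interiority persists under sufficiently small perturbations of a spanning simplex) one still has $0\in\mathrm{ri}\,\overline{\mathrm{Conv}}\bigl(D^{t+1}_{p^{*}}(\o^{t})\bigr)$, i.e. the one-period $NA$ holds for $p^{*}$.

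The heart of the argument, and the step I expect to be the main obstacle, is to perform this pointwise construction $\o^{t}\mapsto p^{*}$ measurably so that the resulting $P^{*}:=P_{1}\otimes p_{2}^{*}\otimes\cdots\otimes p_{T}^{*}$ genuinely lies in $\mathcal{Q}^{T}$ via \eqref{QstarARB}. I would stratify $\Omega^{t}$ by the value of $k(\o^{t})=\dim\mathrm{Aff}(D^{t+1})$, which is $\mathcal{B}_{c}(\O^{t})$-measurable by Lemma \ref{Dmeasurability}, and on each stratum select $\mathcal{B}_{c}$-measurable spanning points $y_{j}(\o^{t})$ and then $\mathcal{B}_{c}$-measurable kernels $p_{j}(\cdot,\o^{t})\in\mathcal{Q}_{t+1}(\o^{t})$ charging the required neighbourhoods; the latter selections use the joint measurability of $(\o^{t},p)\mapsto p(\Delta S_{t+1}(\o^{t},\cdot)\in O)$ established in the proof of Lemma \ref{Dmeasurability}, the analyticity of $\mathrm{graph}(\mathcal{Q}_{t+1})$ (Assumption \ref{QanalyticARB}), and the Projection, Aumann and Jankov--von Neumann theorems recalled in Section \ref{Unc}. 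Keeping the neighbourhood radii and the finite convex weights measurable while simultaneously preserving $q_{t+1}\ll p_{t+1}^{*}$ and the relative-interior property is the delicate point, yielding a universally measurable stochastic kernel $p_{t+1}^{*}\in\mathcal{SK}_{t+1}$ with $p_{t+1}^{*}(\cdot,\o^{t})\in\mathcal{Q}_{t+1}(\o^{t})$, hence $P^{*}\in\mathcal{Q}^{T}$.

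Finally I would set $\mathcal{P}^{T}$ to be the family of all priors $P^{*}$ produced this way as $Q$ ranges over $\mathcal{Q}^{T}$. Each $P^{*}$ satisfies, for $P^{*}$-a.e. $\o^{t}$ and every $t$, the local condition $0\in\mathrm{ri}\,\overline{\mathrm{Conv}}\bigl(D_{P^{*}}^{t+1}(\o^{t})\bigr)$; the classical uni-prior geometric characterisation of no-arbitrage (the Dalang--Morton--Willinger/Jacod--Shiryaev form) then upgrades this to $NA(P^{*})$, so $sNA(\mathcal{P}^{T})$ holds. For the polar sets, the inclusion $\mathcal{P}^{T}\subset\mathcal{Q}^{T}$ makes every $\mathcal{Q}^{T}$-polar set $\mathcal{P}^{T}$-polar; conversely, the kernelwise domination $q_{t+1}\ll p_{t+1}^{*}$ gives $Q\ll P^{*}$ by a standard disintegration argument, so a $\mathcal{P}^{T}$-polar set, being $P^{*}$-null, is $Q$-null for every $Q\in\mathcal{Q}^{T}$ and hence $\mathcal{Q}^{T}$-polar. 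Thus $\mathcal{P}^{T}$ and $\mathcal{Q}^{T}$ have the same polar sets, which completes the implication and the proof.
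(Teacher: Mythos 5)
Your reverse implication coincides with the paper's (it is Lemma \ref{exex} item 4 combined with Lemma \ref{PRpolar}). For the direct implication you take a genuinely different route. The paper first manufactures a \emph{single} universal prior $P^{*}\in\mathcal{Q}^{T}$ (Theorem \ref{PPstar}) whose kernels satisfy the two conditions $\mbox{Aff}\left(D_{P^{*}}^{t+1}\right)(\o^{t})=\mbox{Aff}\left(D^{t+1}\right)(\o^{t})$ and $0\in\mbox{Ri}\left({\mbox{Conv}}(D_{P^{*}}^{t+1})\right)(\o^{t})$ on the $\mathcal{Q}^{t}$-full set $\O^{t}_{NA}$, and then mixes the kernels of $P^{*}$ into \emph{every} element of $\mathcal{Q}^{T}$ through the recursion \eqref{PstarARB}; the polar-set equality is then proved by an induction producing, for each $Q^{t}$, a dominating convex combination $\l_{1}^{t}P^{*t}+\l_{2}^{t}Q^{t}+\sum_{i}\l_{i}^{t}R_{i}^{t}\in\mathcal{P}^{t}$. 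You instead upgrade each $Q\in\mathcal{Q}^{T}$ separately to a dominating arbitrage-free $P^{*}_{Q}\in\mathcal{Q}^{T}$ and take $\mathcal{P}^{T}:=\{P^{*}_{Q},\,Q\in\mathcal{Q}^{T}\}$. If the construction works, that family does satisfy both bullets ($Q\ll P^{*}_{Q}$ kills $\mathcal{P}^{T}$-polar sets under every $Q$, and Proposition \ref{singleP} gives $NA(P^{*}_{Q})$), so you are in effect proving the third bullet of Corollary \ref{corodaniel} first and reading the theorem off it --- a legitimate inversion of the paper's logic, though your $\mathcal{P}^{T}$ lacks the explicit convex-mixture structure of \eqref{PstarARB} that the paper reuses later (e.g.\ in Lemma \ref{lemutil}).

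There are, however, two concrete problems. First, your pointwise construction is wrong as stated: you cannot in general pick $k+1$ points $y_{0},\dots,y_{k}\in D^{t+1}(\o^{t})$ affinely spanning $\mbox{Aff}\left(D^{t+1}\right)(\o^{t})$ with $0\in\mbox{Ri}\left({\mbox{Conv}}\{y_{j}\}\right)$. Take $D^{t+1}(\o^{t})=\{(\pm 1,0),(0,\pm 1)\}\subset\mathbb{R}^{2}$, so $k=2$ and $0\in\mbox{Ri}\left({\mbox{Conv}}(D^{t+1})\right)(\o^{t})$: every choice of three of the four points leaves $0$ on the relative boundary of their convex hull. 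You need finitely many points, with the count bounded only by a Steinitz-type result (up to $2k$), and correspondingly more mixing components; this is fixable but the construction must be restated. Second, and more seriously, the step you yourself flag as ``the delicate point'' --- performing the selection of spanning points, radii, charging kernels $p_{j}$ and weights simultaneously and $\mathcal{B}_{c}(\O^{t})$-measurably while preserving absolute continuity and the relative-interior property --- is precisely the hard content of the theorem, and you assert rather than prove it. The paper's proof of Theorem \ref{PPstar} deliberately avoids this multi-object selection: it selects the \emph{whole kernel at once} from the random set $\mathcal{E}_{t+1}(\o^{t})$ of priors $p\in\mathcal{Q}_{t+1}(\o^{t})$ with $0\in\mbox{Ri}\left({\mbox{Conv}}(E^{t+1})\right)(\o^{t},p)$ and $\mbox{Aff}\left(E^{t+1}\right)(\o^{t},p)=\mbox{Aff}\left(D^{t+1}\right)(\o^{t})$, whose non-emptiness on $\O^{t}_{NA}$ is the one-period Proposition \ref{bay} (resting on \citep[Lemma 2.2]{Bay17} and the convexity of $\mathcal{Q}_{t+1}(\o^{t})$, which your mixture also uses) and whose graph is shown to be a nucleus of a Suslin scheme (via Lemma \ref{easy}), so that \citep[Lemma 4.11]{BN} yields $p^{*}_{t+1}\in\mathcal{SK}_{t+1}$. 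Until you supply an argument of comparable rigour for your stratified selection, the proposal has a genuine gap at exactly this step. A minor slip in the same direction: your displayed $P^{*}:=P_{1}\otimes p_{2}^{*}\otimes\cdots\otimes p_{T}^{*}$ leaves the time-one marginal unmixed, whereas the support-enlarging mixture is needed at $t=0$ as well.
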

Let $P^*$ as in Theorem \ref {PPstar} below with the fix  disintegration $P^*:=P_{1}^{*}\otimes p_{2}^{*} \otimes \cdots \otimes p_{T}^{*}$. The set  
$\mathcal{P}^{T}$ is defined recursively as follows: 
For all $1 \leq t \leq T-1$ 
  \begin{align}
  \begin{split}
  \mathcal{P}^{1}&:=\left\{\l P_{1}^*+ (1-\l)P,\; 0<\l\leq 1,\; P \in \mathcal{Q}^{1}  \right\},
\end{split}\\
\label{PstarARB}
\begin{split}
 \mathcal{P}^{t+1}&:=\Bigl\{P \otimes \left( \l p^*_{t+1}+ (1-\l) q_{t+1}\right),\;  0<\l\leq 1, 
\\
 & \quad \quad \quad \quad \quad P \in \mathcal{P}^{t},\;q_{t+1}(\cdot,\o^{t}) \in \mathcal{Q}_{t+1}(\o^{t}) \; \mbox{for all $\o^{t} \in \O^{t}$}  \Bigr\}.
\end{split}
\end{align}
\begin{proof}
See Section \ref{setwoARB}. 
\end{proof}
\begin{remark}
\label{marco}
\citep[Theorem 4]{Burz18} delivers a similar message but  in a completely different setup which does not rely on  a set of priors and under  the no open-arbitrage  assumption. The set $\mathcal{P}^{T}$ is replaced by the set of probability measures with full support. 
\end{remark}
\begin{remark}
\label{static}
In previous studies on robust pricing and hedging, it is often assumed that there exists some additional assets  available only for static trading (buy and hold), see for instance \citep[Theorem 5.1]{BN}. This  raises the mathematical difficulties as, roughly speaking, its breaks the dynamic consistency between time zero and future times and might prevent from obtaining a dynamic programming principle. A typical illustration of the issue arising is the so-called duality gap for American options, where  the superhedging price for an American option may be strictly larger than the supremum of its expected (discounted) payoff over all stopping times and all (relevant) martingale measures (see  for instance \citep{Bay15}, \citep{HobNeu16}, \citep{Bay17}).\\
 In our setting all assets are dynamically traded and some of them may be  derivatives products. Obviously the level of uncertainty regarding the behaviours of each assets might depend on its nature and this will be reflected in the set of prior $\mathcal{Q}^{T}$.  This  follows the spirit of the original approach developed in  \citep{Ho982} where the prices of actively traded options is taken as input. Furthermore, from a pure practical point of view, 
 we think that additional financial assets which provide useful informations for pricing should be traded at least on a daily basis. 
Hence, introducing trading constraints or transactions cost could be a better way to reflect the potential difference of liquidity between assets and derivatives. From a theoretical perspective,  \citep{AKS18} shows that   any setup as in \citep{BN} can be lifted to a setup with dynamic trading in all assets  in a way which does not introduce arbitrage (see \citep[Lemma 3.1]{AKS18}).  
 The idea is to assume that  the options are traded dynamically and to choose a  set of priors $\mathcal{Q}^{T}$ which does not impose any  assumptions about their dynamics other than these resulting from no arbitrage in the initial  setup. An admissible pricing measure in the original setup can be used to define dynamic  options prices via conditional expectations and can thus be lifted to a martingale measure in the extended setup.
\end{remark}

We now propose three applications of Theorem \ref{TheoS} which show how usefull it is. 

The first application establishes the equivalence  between the $NA(\mathcal{Q}^{T})$ condition and  the no-arbitrage condition introduced by \citep{BCK19} which studies the problem of robust maximisation of expected utility using medial limits.
  \begin{corollary}
\label{corodaniel}
Assume that Assumptions  \ref{SassARB} and \ref{QanalyticARB} hold true. The following conditions are equivalent  
\begin{itemize}
\item The $NA(\mathcal{Q}^{T})$ condition holds true.
\item For all $Q \in \mathcal{Q}^{T}$, there exists some $P \in \mathcal{P}^{T}$ such that $Q \ll P$ and such that $NA(P)$ holds true.
\item For all $Q \in \mathcal{Q}^{T}$, there exists some $P \in \mathcal{Q}^{T}$ such that $Q \ll P$ and such that $NA(P)$ holds true.
\end{itemize}
\end{corollary}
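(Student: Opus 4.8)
The plan is to prove Corollary \ref{corodaniel} by establishing a cycle of implications among the three conditions, leaning on Theorem \ref{TheoS} for the hard direction and on essentially tautological or classical facts for the remaining ones. The three conditions are: (i) $NA(\mathcal{Q}^{T})$; (ii) for all $Q \in \mathcal{Q}^{T}$ there is $P \in \mathcal{P}^{T}$ with $Q \ll P$ and $NA(P)$; (iii) the same but with $P \in \mathcal{Q}^{T}$. Since $\mathcal{P}^{T} \subset \mathcal{Q}^{T}$, the implication (ii) $\Rightarrow$ (iii) is immediate: the witness $P \in \mathcal{P}^{T}$ already lies in $\mathcal{Q}^{T}$. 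So it suffices to prove (i) $\Rightarrow$ (ii) and (iii) $\Rightarrow$ (i), which closes the loop.

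For (i) $\Rightarrow$ (ii), I would invoke Theorem \ref{TheoS}: $NA(\mathcal{Q}^{T})$ yields a set $\mathcal{P}^{T} \subset \mathcal{Q}^{T}$ having the same polar sets as $\mathcal{Q}^{T}$ and satisfying $sNA(\mathcal{P}^{T})$, i.e. $NA(P)$ holds for every $P \in \mathcal{P}^{T}$. The remaining task is purely measure-theoretic: given an arbitrary $Q \in \mathcal{Q}^{T}$, I must exhibit some $P \in \mathcal{P}^{T}$ with $Q \ll P$. The natural candidate is built from the explicit recursive description of $\mathcal{P}^{T}$ in \eqref{PstarARB}. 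Writing $Q = Q_1 \otimes q_2 \otimes \cdots \otimes q_T$ with kernels $q_{t+1}(\cdot,\omega^{t}) \in \mathcal{Q}_{t+1}(\omega^{t})$, and taking $P^*$ from Theorem \ref{PPstar} with disintegration $P^* = P_1^* \otimes p_2^* \otimes \cdots \otimes p_T^*$, I would set $P := (\tfrac12 P_1^* + \tfrac12 Q_1) \otimes (\tfrac12 p_2^* + \tfrac12 q_2) \otimes \cdots \otimes (\tfrac12 p_T^* + \tfrac12 q_T)$, which belongs to $\mathcal{P}^{T}$ by construction (take $\lambda = 1/2$ at each step). Then $Q \ll P$ follows level by level: at each period the kernel $q_{t+1}$ is absolutely continuous with respect to $\tfrac12 p_{t+1}^* + \tfrac12 q_{t+1}$, and since $Q,P$ share the same filtration structure one propagates the domination through the disintegration (using e.g. that a product/Fubini measure is absolutely continuous when every conditional factor is). Finally $NA(P)$ holds because $P \in \mathcal{P}^{T}$ and $sNA(\mathcal{P}^{T})$ is in force.

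For (iii) $\Rightarrow$ (i), I would argue by contraposition. Suppose $NA(\mathcal{Q}^{T})$ fails; then there exist $\phi \in \Phi$ and some $Q_0 \in \mathcal{Q}^{T}$ with $V_T^{0,\phi} \geq 0$ $\mathcal{Q}^{T}$-q.s. and $Q_0(V_T^{0,\phi} > 0) > 0$. Apply (iii) to this $Q_0$: there is $P \in \mathcal{Q}^{T}$ with $Q_0 \ll P$ and $NA(P)$. Because $V_T^{0,\phi} \geq 0$ holds $\mathcal{Q}^{T}$-q.s. it holds in particular $P$-a.s. (as $P \in \mathcal{Q}^{T}$ and the q.s. property excludes $P$-null exceptional sets), while $Q_0 \ll P$ forces $P(V_T^{0,\phi} > 0) > 0$. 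This is precisely an $NA(P)$ violation, contradicting the conclusion of (iii). Hence (iii) implies $NA(\mathcal{Q}^{T})$.

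The main obstacle I anticipate is the domination claim $Q \ll P$ in step (i) $\Rightarrow$ (ii): although intuitively the convex mixture at each conditioning stage makes $P$ dominate $Q$, making this rigorous across $T$ periods requires care with the disintegration and with the universal measurability of the kernels, so that the pointwise absolute continuity of conditional laws genuinely lifts to absolute continuity of the full Fubini products. Everything else reduces either to the definitions or to a direct appeal to Theorem \ref{TheoS} and the set inclusion $\mathcal{P}^{T} \subset \mathcal{Q}^{T}$.
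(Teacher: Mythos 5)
Your proposal is correct and follows essentially the same route as the paper: the same cycle of implications, the same half-half mixture $P=(\tfrac12 P_1^*+\tfrac12 Q_1)\otimes\cdots\otimes(\tfrac12 p_T^*+\tfrac12 q_T)\in\mathcal{P}^{T}$ built from Theorem \ref{PPstar} and validated via $sNA(\mathcal{P}^{T})$ from Theorem \ref{TheoS}, and the same (ii) $\Rightarrow$ (iii) inclusion argument. Your only deviations are cosmetic: you prove (iii) $\Rightarrow$ (i) by contraposition where the paper argues directly, and the domination $Q\ll P$ you worry about is immediate since each conditional density is bounded by $2$, giving $Q(A)\leq 2^{T}P(A)$ by iterated integration (the paper dismisses it as obvious, and uses exactly this bound later in Lemma \ref{lemutil}).
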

\begin{proof}
Assume that the $NA(\mathcal{Q}^{T})$ condition holds true and choose some $Q \in \mathcal{Q}^{T}$ with the fixed disintegration  $Q:=Q_1 \otimes q_{2} \otimes \cdots \otimes q_{T}.$ Let 
$$P:=\left(\frac12 P_1^* +\frac12 Q_1\right)\otimes \left(\frac12 p_2^* +\frac12 q_2\right)\otimes  \ldots \otimes  \left(\frac12 p_T^* +\frac12 q_T\right),$$
where $P^*$ is given in Theorem \ref {PPstar} with the fixed disintegration $P^*:=P_{1}^{*}\otimes p_{2}^{*} \otimes \cdots \otimes p_{T}^{*}.$ 
Then \eqref{PstarARB} implies that $P\in \mathcal{P}^{T}$ and obviously $Q \ll P.$  Now, Theorem \ref{TheoS} implies that the  $NA(P)$ condition holds true and the second assertion is proved. As $\mathcal{P}^{T} \subset \mathcal{Q}^{T},$ 
the second assertion implies the third one. 
Assume now that the third assertion holds true and let $\phi \in \Phi$ such that $V_{T}^{0,\phi} \geq 0$ $\mathcal{Q}^{T}$-q.s. Fix some $Q \in \mathcal{Q}^{T}$. Then there exists $P \in \mathcal{Q}^{T}$ such that $Q \ll P$ and such that $NA(P)$ holds true. Thus $V_{T}^{0,\phi} = 0$ $P$-a.s and also $Q$-a.s.  As this is true for all $Q \in \mathcal{Q}^{T}$, we get that  $V_{T}^{0,\phi} = 0$ $\mathcal{Q}^{T}$-q.s. 
\end{proof}\\

The  second application  allows to prove the robust FTAP from  the classical one. Our proof uses  the one-period arguments of \citep[Theorem 2.1]{Bay17} adapted to the multi-period setting. 
Let 
\begin{align}
\label{Mart2}
{\mathcal{K}}^{T}:=\{P \in \mathfrak{P}(\O^{T}),\; \exists \, Q^{'} \in \mathcal{P}^{T}, P \sim Q^{'} \; \mbox{and $P$ is a martingale measure}\}.
\end{align}
\begin{corollary}
\label{coroerhan}
Assume that Assumptions  \ref{SassARB} and \ref{QanalyticARB} hold true. The following conditions are equivalent  \begin{itemize}
\item The $NA(\mathcal{Q}^{T})$ condition holds true.
\item For all $Q \in \mathcal{Q}^{T}$, there exists some $P \in{\mathcal{K}}^{T}$   such that $Q \ll P.$
\item For all $Q \in \mathcal{Q}^{T}$, there exists some $P \in \mathcal{R}^{T}$ (see \eqref{Mart}) such that $Q \ll P.$ 
\end{itemize}
\end{corollary}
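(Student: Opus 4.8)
The plan is to prove the cyclic chain of implications (first $\Rightarrow$ second $\Rightarrow$ third $\Rightarrow$ first). The passage from the first to the second bullet carries the real content and is where the classical Dalang--Morton--Willinger theorem enters, the second $\Rightarrow$ third step is a trivial inclusion of the relevant classes of martingale measures, and the third $\Rightarrow$ first step is the elementary direction of the Fundamental Theorem.

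For the first $\Rightarrow$ second implication I would start from $NA(\mathcal{Q}^{T})$ and invoke Theorem \ref{TheoS} to obtain a subclass $\mathcal{P}^{T} \subset \mathcal{Q}^{T}$ having the same polar sets as $\mathcal{Q}^{T}$ and satisfying $sNA(\mathcal{P}^{T})$. Fixing $Q \in \mathcal{Q}^{T}$ with disintegration $Q = Q_{1}\otimes q_{2} \otimes \cdots \otimes q_{T}$, I would build, exactly as in the proof of Corollary \ref{corodaniel}, the measure
$$P := \left(\tfrac12 P_{1}^{*} + \tfrac12 Q_{1}\right)\otimes \cdots \otimes \left(\tfrac12 p_{T}^{*} + \tfrac12 q_{T}\right),$$
with $P^{*}$ as in Theorem \ref{PPstar}. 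By \eqref{PstarARB} (taking $\l = 1/2$ at every stage) one has $P \in \mathcal{P}^{T}$, and clearly $Q \ll P$. Since $sNA(\mathcal{P}^{T})$ holds, $NA(P)$ holds in the uni-prior sense, so the classical Dalang--Morton--Willinger theorem applied to the filtered space $(\O^{T}, \mathcal{B}_{c}(\O^{T}), (\mathcal{B}_{c}(\O^{t}))_{0\le t\le T}, P)$ yields a martingale measure $P' \sim P$. Because $P \in \mathcal{P}^{T}$, definition \eqref{Mart2} gives $P' \in \mathcal{K}^{T}$, and $Q \ll P \sim P'$ delivers $Q \ll P'$.

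The second $\Rightarrow$ third implication is immediate from the inclusion $\mathcal{K}^{T} \subset \mathcal{R}^{T}$: any $P' \in \mathcal{K}^{T}$ is equivalent to some $Q' \in \mathcal{P}^{T} \subset \mathcal{Q}^{T}$, hence $P' \ll Q'$ with $Q' \in \mathcal{Q}^{T}$ and $P'$ a martingale measure, so $P' \in \mathcal{R}^{T}$ by \eqref{Mart}. For the third $\Rightarrow$ first implication I would take $\phi \in \Phi$ with $V_{T}^{0,\phi} \geq 0$ $\mathcal{Q}^{T}$-q.s. and fix an arbitrary $Q \in \mathcal{Q}^{T}$; choosing $P \in \mathcal{R}^{T}$ with $Q \ll P$ and writing $P \ll Q'$ for the witnessing $Q' \in \mathcal{Q}^{T}$, I would transfer the quasi-sure inequality to a $P$-almost-sure one, since the set $\{V_{T}^{0,\phi}<0\}$ lies in a $\mathcal{Q}^{T}$-polar set that is $Q'$-null, hence $P$-null by $P \ll Q'$. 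As $P$ is itself a martingale measure, $NA(P)$ holds by the elementary direction of the FTAP, so $V_{T}^{0,\phi} = 0$ $P$-a.s.\ and, by $Q \ll P$, also $Q$-a.s.; since $Q$ is arbitrary this yields $V_{T}^{0,\phi} = 0$ $\mathcal{Q}^{T}$-q.s.

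I expect the main obstacle to be the careful application of the classical Dalang--Morton--Willinger theorem within the universally measurable framework: one must verify that $NA(P)$ for $\mathcal{B}_{c}(\O^{t-1})$-predictable strategies is precisely the hypothesis of the classical theorem on the filtered space above, and that the resulting equivalent martingale measure is again a universally measurable probability on $\O^{T}$ with $P$-integrable $S$ (so that the martingale-transform argument, bounded below, legitimately forces the terminal value to vanish). This is exactly where the one-period arguments of \citep[Theorem 2.1]{Bay17}, carried inductively through the disintegration, are needed; the remaining steps are routine manipulations of absolute continuity and polar sets.
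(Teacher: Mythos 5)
Your proof is correct and follows essentially the same route as the paper: the cyclic chain of implications, with the classical Dalang--Morton--Willinger theorem applied to a measure in $\mathcal{P}^{T}$ dominating $Q$, the inclusion $\mathcal{K}^{T}\subset\mathcal{R}^{T}$ for the second step, and the transfer of the quasi-sure inequality to a $P$-a.s.\ one combined with the martingale property for the last step. The only cosmetic difference is that you inline the $\tfrac12$-mixture construction of the dominating measure, whereas the paper simply cites Corollary \ref{corodaniel}, whose proof is exactly that construction.
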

Note that this is a refinement of the version of \citep{BN} as we  have more information about the measure $P.$ \\
\begin{proof}
Assume that the $NA(\mathcal{Q}^{T})$ condition holds true. 
Corollary \ref{corodaniel} implies   that for  all $Q \in \mathcal{Q}^{T}$ there  exists some $Q' \in \mathcal{P}^{T}$ such that $Q \ll Q'$ and such that $NA(Q')$ holds true. Now the classical FTAP (see \citep{dmw}) establishes  the existence of some $P \sim Q'$ such that $P$ is a martingale measure.  Thus $P \in {\mathcal{K}}^{T}$. 
As  $Q \ll P$,  the second assertion holds true. As ${\mathcal{K}}^{T} \subset  \mathcal{R}^{T}$, the second assertion implies the third one. Assume now that the third assumption holds true and 
let $\phi \in \Phi$ such that $V_{T}^{0,\phi} \geq 0$ $\mathcal{Q}^{T}$-q.s. Fix some $Q \in \mathcal{Q}^{T}$. Then there exists $P  \in \mathfrak{P}(\O^{T})$ and $Q^{'} \in \mathcal{Q}^{T}$ such that $Q \ll P$, $P \ll Q^{'}$ and $P$ is a martingale measure.    As  $V_{T}^{0,\phi} \geq 0$ $Q'$-a.s and thus $P$-a.s. and 
$E_{P} (V_{T}^{0,\phi})= 0$, we get that 
 $V_{T}^{0,\phi} = 0$ $P$-a.s and also $Q$-a.s.  As this is true for all $Q \in \mathcal{Q}^{T}$, we obtain that  $V_{T}^{0,\phi} = 0$ $\mathcal{Q}^{T}$-q.s. 
\end{proof}\\

Lastly, Theorem \ref{TheoS} allows to obtain a tractable theorem on maximisation of expected utility under the $NA(\mathcal{Q}^{T})$ condition avoiding the difficult \citep[Assumption  2.1]{BC16}.  Note  that the no-arbitrage condition is indeed related to the utility maximisation problem in the  uni-prior case  (see  for instance \citep{Rog94}).  In the robust case, it is not clear whether a similar approach could work. This is the subject of further research.\\
A {random utility} $U$ is a function  defined on $\Omega^T \times (0,\infty)$ taking values in $\mathbb{R}\cup \{-\infty\}$ such that for every $x\in \mathbb{R}$,  $U \left(\cdot,x\right)$  is $\mathcal{B}(\Omega^{T})$-measurable and for every $\o^{T} \in {\Omega}^{T}$, $U(\o^{T}, \cdot)$ is proper\footnote{There exists $x \in(0,+\infty)$ such that  $U(\o^{T}, x)>-\infty$ and $U(\o^{T}, x)<+\infty$ for all $x \in(0,+\infty)$.}, 
non-decreasing and  concave  on $(0,+\infty)$. We extend $U$ by (right) continuity in $0$ and set  $U(\cdot,x)=-\infty$ if $x<0$. \\
Fix some $x\geq 0$.
For $P \in \mathfrak{P}(\O^{T})$ fixed, we denote by  $\Phi(x,U,P)$ the set of all strategies $\phi \in \Phi$ such that $V_{T}^{x,\phi}(\cdot)\geq 0$ $P$-a.s. and  such that either $E_{P}U^{+}(\cdot,V_{T}^{x,\phi}(\cdot))<\infty$ or $E_{P}U^{-}(\cdot,V_{T}^{x,\phi}(\cdot))<\infty$. 
Then 
$
\Phi(x,U,\mathcal{Q}^{T}):= \bigcap_{P \in \mathcal{Q}^{T}} \Phi(x,U,P). 
$ 
The set $\Phi(x,U,\mathcal{P}^{T})$ is defined similarly changing $\mathcal{Q}^{T}$ by $\mathcal{P}^{T}$  where $\mathcal{P}^{T}$ is defined in \eqref{PstarARB}. 
The {multiple-priors portfolio problem} with initial wealth $x \geq 0$  is
	\begin{align}\label{eq:OP}
		u(x) :=  \sup_{\phi \in \Phi(x,U,\mathcal{Q}^{T})} \inf_{P \in \mathcal{Q}^{T}} E_{P} U(\cdot,V^{x,\phi}_{T}(\cdot)).
	\end{align}
We also define
\begin{align}
		u^{\mathcal{P}}(x) :=  \sup_{\phi \in \Phi(x,U,\mathcal{P}^{T})} \inf_{P \in \mathcal{P}^{T}} E_{P} U(\cdot,V^{x,\phi}_{T}(\cdot)).
\end{align}
Let for all $1 \leq t \leq T$
 $$\mathcal{W}_{t}:= \bigcap_{r>0} \left \{ X: \Omega^{t} \to \mathbb{R}\cup \{\pm \infty\}, \; \mbox{$\mathcal{B}(\O^{t})$-measurable},\; \sup_{P \in \mathcal{Q}^{t}}E_{P} |X|^{r} <\infty \right\}.$$
\begin{assumption}
\label{assW}
We have that  $U^{+}(\cdot,1), U^{-}(\cdot,\frac{1}{4}) \in \mathcal{W}_{T}$  and   $\Delta S_{t}, {1}/{\alpha^{P}_{t}} \in \mathcal{W}_{t}$ for all $1 \leq t \leq T$ and $P \in \mathcal{P}^{t}$  (see Remark \ref{singleP2} for the definition of $\alpha_{t}^{P}$).
\end{assumption}
The first lemma shows the equality between both value functions.  
\begin{lemma}
\label{lemutil}
Assume that the $NA(\mathcal{Q}^{T})$ condition and  Assumptions \ref{SassARB} and  \ref{QanalyticARB} hold true. Furthermore, assume that 
$U$ is either  bounded from above or that Assumption \ref{assW} holds true.  
Then $u(x)=u^{\mathcal{P}}(x)$ for all $x \geq 0.$
\end{lemma}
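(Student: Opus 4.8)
The plan is to prove the two inequalities separately, the inequality $u(x)\le u^{\mathcal P}(x)$ being essentially free and the reverse being the substance. Since $\mathcal P^T\subset\mathcal Q^T$ we have $\Phi(x,U,\mathcal Q^T)=\bigcap_{Q\in\mathcal Q^T}\Phi(x,U,Q)\subset\bigcap_{P\in\mathcal P^T}\Phi(x,U,P)=\Phi(x,U,\mathcal P^T)$ and, for fixed $\phi$, $\inf_{P\in\mathcal P^T}E_PU(\cdot,V^{x,\phi}_T)\ge\inf_{Q\in\mathcal Q^T}E_QU(\cdot,V^{x,\phi}_T)$. Enlarging the feasible set in the $\sup$ and shrinking the index set in the inner $\inf$ both push the value up, so $u^{\mathcal P}(x)\ge u(x)$ with no extra assumption. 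For the converse I would reduce $u^{\mathcal P}(x)\le u(x)$ to two claims: (I) $\Phi(x,U,\mathcal Q^T)=\Phi(x,U,\mathcal P^T)$; and (II) for every $\phi$ in this common set, $\inf_{Q\in\mathcal Q^T}E_QU(\cdot,V^{x,\phi}_T)=\inf_{P\in\mathcal P^T}E_PU(\cdot,V^{x,\phi}_T)$. Granting (I) and (II), each $\phi\in\Phi(x,U,\mathcal P^T)$ is feasible for $u$ and has the same inner value, whence $\inf_{P\in\mathcal P^T}E_PU(\cdot,V^{x,\phi}_T)\le u(x)$; taking the supremum over $\phi$ gives $u^{\mathcal P}(x)\le u(x)$.

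For (I) the inclusion ``$\subset$'' is the one just noted. For ``$\supset$'', take $\phi\in\Phi(x,U,\mathcal P^T)$: then $V^{x,\phi}_T\ge0$ $\mathcal P^T$-q.s., and since Theorem \ref{TheoS} furnishes $\mathcal P^T$ with the same polar sets as $\mathcal Q^T$, also $V^{x,\phi}_T\ge0$ $\mathcal Q^T$-q.s. It remains to verify the integrability clause for every $Q\in\mathcal Q^T$. If $U$ is bounded from above then $U^+(\cdot,V^{x,\phi}_T)$ is bounded and $E_QU^+(\cdot,V^{x,\phi}_T)<\infty$ trivially. Under Assumption \ref{assW} I would instead establish $U^+(\cdot,V^{x,\phi}_T)\in\mathcal W_T$, hence integrable under every $Q\in\mathcal Q^T$; this is where the real work sits, treated below.

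For (II), fix $Q=Q_1\otimes q_2\otimes\cdots\otimes q_T\in\mathcal Q^T$ and, as in the proof of Corollary \ref{corodaniel}, set $P_\lambda:=(\lambda P_1^*+(1-\lambda)Q_1)\otimes\cdots\otimes(\lambda p_T^*+(1-\lambda)q_T)$, which lies in $\mathcal P^T$ by \eqref{PstarARB} for every $\lambda\in(0,1]$. A telescoping estimate, replacing the kernels one period at a time and using $\|(\lambda p_t^*+(1-\lambda)q_t)-q_t\|_{TV}\le2\lambda$, yields $\|P_\lambda-Q\|_{TV}\le2T\lambda\to0$ as $\lambda\downarrow0$. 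Writing $U=U^+-U^-$, total-variation convergence together with monotone truncation gives $\liminf_{\lambda}E_{P_\lambda}U^-(\cdot,V^{x,\phi}_T)\ge E_QU^-(\cdot,V^{x,\phi}_T)$, while for the positive part $E_{P_\lambda}U^+\to E_QU^+$ (immediate when $U$ is bounded above; under Assumption \ref{assW} because $U^+(\cdot,V^{x,\phi}_T)\in\mathcal W_T$ is uniformly $L^2$-bounded along $(P_\lambda)$, which yields uniform integrability). Hence $\limsup_\lambda E_{P_\lambda}U(\cdot,V^{x,\phi}_T)\le E_QU(\cdot,V^{x,\phi}_T)$, so $\inf_{P\in\mathcal P^T}E_PU(\cdot,V^{x,\phi}_T)\le E_QU(\cdot,V^{x,\phi}_T)$; taking the infimum over $Q$ and combining with the trivial reverse inequality proves (II).

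The main obstacle is the claim $U^+(\cdot,V^{x,\phi}_T)\in\mathcal W_T$ under Assumption \ref{assW}. Here I would first reduce, as is standard, to strategies valued in the affine hull of the conditional support, and then invoke the quantitative no-arbitrage bounds of Theorem \ref{ARBequival} (through the constants $\alpha^P_t$) to control $|\phi_t|$ by quantities built from $1/\alpha^P_t$ and the running wealth; since $\Delta S_t,1/\alpha^P_t\in\mathcal W_t$ and the family $\mathcal W_t$ is a product- and sum-stable algebra with $\mathcal W_s\subset\mathcal W_T$ for $s\le T$, this gives $V^{x,\phi}_T\in\mathcal W_T$. The concavity estimate $U^+(\omega,v)\le U^+(\omega,1)(1+v)$ for $v\ge0$, together with $U^+(\cdot,1)\in\mathcal W_T$, then delivers $U^+(\cdot,V^{x,\phi}_T)\in\mathcal W_T$, closing both (I) and the positive-part half of (II). Carrying out this last step rigorously, and in particular the passage from admissibility under all of $\mathcal P^T$ to the strategy bounds via $\alpha^P_t$, is the delicate part and relies on Propositions \ref{alphatmesARB} and \ref{finally}.
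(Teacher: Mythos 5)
Your overall architecture (equality of admissible sets, then equality of the inner infima via the mixtures $P_\lambda$ built from \eqref{PstarARB}) matches the paper's, and your total-variation estimate $\|P_\lambda-Q\|_{TV}\le 2T\lambda$ together with the truncation argument for $U^-$ is a legitimate variant of the paper's expansion of $E_{P^n}U$ into $2^T$ terms. But the proposal has two genuine gaps, both located in the claim that everything reduces to $U^{+}(\cdot,V_T^{x,\phi})\in\mathcal{W}_T$. First, the concavity estimate $U^{+}(\o,v)\le U^{+}(\o,1)(1+v)$ is false: take $U(\o,v)=\ln(v/2)$, so that $U^{+}(\o,1)=0$ while $U^{+}(\o,4)=\ln 2>0$. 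Any correct bound must involve the negative part of $U$ at a point below $1$ --- this is exactly \citep[Proposition 3.24]{BC16}, which gives $U^{+}(\cdot,v)\le 4v\left(U^{+}(\cdot,1)+U^{-}(\cdot,\frac14)\right)$ for $v\ge 1$, and it is precisely why Assumption \ref{assW} requires $U^{-}(\cdot,\frac14)\in\mathcal{W}_T$, a hypothesis your argument never uses. This slip is repairable, since $\mathcal{W}_T$ is stable under products.

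The second gap is more serious: the uniform moment bound $V_T^{x,\phi}\in\mathcal{W}_T$, i.e. $\sup_{P\in\mathcal{Q}^T}E_P|V_T^{x,\phi}|^r<\infty$, is asserted but not proved, and it is strictly stronger than anything the paper establishes. The wealth bound \eqref{rich} from \citep[Theorem 4.17]{BCR18} holds only $R$-almost surely for each \emph{fixed} $R\in\mathcal{P}^T$, with an $R$-dependent constant $\alpha^{R}_t$, and Assumption \ref{assW} controls $1/\alpha^{P}_t$ only for $P\in\mathcal{P}^t$. To get a bound valid simultaneously under all of $\mathcal{Q}^T$ you would need a \emph{quasi-sure} version of \eqref{rich} with a single prior-independent constant (say the $\beta_t$ of Proposition \ref{finally}), which would require redoing the backward-induction argument of \citep{BCR18} in the quasi-sure setting --- the very work you defer with ``this is where the real work sits.'' The paper avoids this entirely by an elementary domination trick: for $Q=P_1\otimes q_2\otimes\cdots\otimes q_T\in\mathcal{Q}^T$, the mixture $R:=\left(\frac12 P_1^*+\frac12 P_1\right)\otimes\cdots\otimes\left(\frac12 p_T^*+\frac12 q_T\right)$ lies in $\mathcal{P}^T$ and satisfies $2^{-T}E_Q U^{\pm}\le E_R U^{\pm}$, so every integrability statement under an arbitrary $Q$ reduces to one under a measure of $\mathcal{P}^T$, where $NA(R)$ holds (Theorem \ref{TheoS}) and \eqref{rich} applies with its own $\alpha^{R}$; similarly, in the limit $n\to\infty$ the paper only needs finiteness of $E_R U^{+}$ for the finitely many measures $R\in\mathcal{P}^T$ appearing in the expansion of $E_{P^n}U$, never a supremum over $\mathcal{Q}^T$. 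Unless you prove the quasi-sure wealth bound, you should replace your uniform-integrability step by this per-measure reduction, at which point your proof essentially collapses back onto the paper's.
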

\begin{proof}
Fix $x\geq 0$. Theorem \ref{TheoS} will be  in force. 
Let $P^*$ be given by Theorem \ref {PPstar} with the fixed disintegration $P^*:=P_{1}^{*}\otimes p_{2}^{*} \otimes \cdots \otimes p_{T}^{*}.$ 
First we show that $\Phi(x,U,\mathcal{Q}^{T})=\Phi(x,U,\mathcal{P}^{T})$. 
The first inclusion follows from 
$\mathcal{P}^{T} \subset \mathcal{Q}^{T}.$ As $\mathcal{P}^{T}$ and $ \mathcal{Q}^{T}$ have the same polar sets, 
$V_{T}^{x,\phi}(\cdot)\geq 0$ $\mathcal{Q}^{T}$-q.s. and $V_{T}^{x,\phi}(\cdot)\geq 0$ $\mathcal{P}^{T}$-q.s. are equivalent. 
So to prove the reverse inequality it is enough to show that 
for  $\phi \in \Phi(x,U,\mathcal{P}^{T})$ 
$E_{Q}U^{+}(\cdot,V_{T}^{x,\phi}(\cdot))<\infty$ or $E_{Q}U^{-}(\cdot,V_{T}^{x,\phi}(\cdot))<\infty$ for any $Q \in \mathcal{Q}^{T}$.  It is obviously true
if $U$ is bounded from above. Assume now that Assumption \ref{assW} holds true.  Let $Q \in \mathcal{Q}^{T}$ with  the fixed disintegration $Q:=P_1\otimes q_2\otimes  \ldots \otimes q_T$ and choose
$$R:=\left(\frac12 P_1^* +\frac12 P_1\right)\otimes \left(\frac12 p_2^* +\frac12 q_2\right)\otimes  \ldots \otimes  \left(\frac12 p_T^* +\frac12 q_T\right).$$
Then $R\in \mathcal{P}^{T},$ see \eqref{PstarARB}.  Assume that $E_{R}U^{+}(\cdot,V_{T}^{x,\phi}(\cdot))<\infty$ (the same argument applies to the negative part). Then 
\begin{align}
\frac1{2^T}E_QU^{+}(\cdot,V_{T}^{x,\phi}(\cdot)) & \leq E_RU^{+}(\cdot,V_{T}^{x,\phi}(\cdot)) <\infty. 
\end{align}
Thus 
\begin{align}
		u(x) =  \sup_{\phi \in \Phi(x,U,\mathcal{P}^{T})} \inf_{P \in \mathcal{Q}^{T}} E_{P} U(\cdot,V^{x,\phi}_{T}(\cdot)).
	\end{align}
Next we show that for all $x\geq 0$ and $\phi \in  \Phi(x,U,\mathcal{P}^{T})$ 
\begin{align}
\label{white}
		u(x,\phi) :=  \inf_{P \in \mathcal{Q}^{T}} E_{P} U(\cdot,V^{x,\phi}_{T}(\cdot))= \inf_{P \in \mathcal{P}^{T}} E_{P} U(\cdot,V^{x,\phi}_{T}(\cdot)) =:u^{\mathcal{P}}(x,\phi). 
	\end{align}
As $\mathcal{P}^{T} \subset \mathcal{Q}^{T},$ $u^{\mathcal{P}}(x,\phi)  \geq u(x,\phi)$. 	
Let $Q \in \mathcal{Q}^{T}$  with  the fixed disintegration $Q:=P_1\otimes q_2\otimes  \ldots \otimes q_T.$ Let 
$$P^n:=\left(\frac1n P_1^* +\left( 1- \frac1n \right) P_1\right)\otimes \left(\frac1n p_2^* +\left( 1- \frac1n \right)q_2\right)\otimes  \ldots \otimes  \left(\frac1n p_T^* +\left( 1- \frac1n \right) q_T\right).$$
Then \eqref{PstarARB} implies that $P^n \in \mathcal{P}^{T},$ 
\begin{align}
\label{black}
u^{\mathcal{P}}(x,\phi)  \leq 
 E_{P^n} U(\cdot,V^{x,\phi}_{T}(\cdot))
 \end{align}
and 
the only term in $E_{P^n}U(\cdot,V_{T}^{x,\phi}(\cdot))$  that is not multiplied  by $1/n$ is $(1-1/n)^{T}E_Q U(\cdot,V_{T}^{x,\phi}(\cdot)).$
Moreover, \eqref{PstarARB} implies that all the others probability measures appearing in $E_{P^n}U(\cdot,V_{T}^{x,\phi}(\cdot))$ belongs to $\mathcal{P}^{T}.$  
Fix $R \in \mathcal{P}^{T}$  as one of this measures and note that $\phi \in \phi(x,U,R)$. 
Theorem \ref{TheoS} implies that the $sNA(\mathcal{P}^{T})$   and also the $NA(R)$ conditions  hold true. 
We first prove that 
$E_{R}U^{+}(\cdot, V_{T}^{x,\phi}(\cdot))< \infty.$ If $U$ is bounded from above this is immediate. Assume that Assumption \ref{assW} holds true. Then  \citep[Theorem 4.17]{BCR18} shows that for ${R}$-almost all $\o^{T} \in \O^{T}$, 
\begin{align}
\label{rich}
|V_{T}^{x,\phi}(\o^{T})| \leq   \prod_{s=1}^{T}\left(x+ \frac{|\Delta S_{s}(\o^{s})|}{\alpha^{R}_{s-1}(\o^{s-1})}\right)=: \frac\l 2 \in \mathcal{W}_{T}
\end{align}
as  $\Delta S_{s}, \; \frac{1}{\alpha_{s}^{R}} \in \mathcal{W}_{s}$ for all $s \geq 1$. 
Suppose that $x\geq 1$ else by  monotonicity of $U^{+}$, one may replace $x$ by 1. Then \citep[Proposition 3.24]{BC16} (as $\l \geq 1$) implies that 
\begin{align*}
E_{R}U^{+}(\cdot, V_{T}^{x,\phi}(\cdot)) &\leq  4 E_{R} \left(\prod_{s=1}^{T}\left(x+ \frac{|\Delta S_{s}(\cdot)|}{\alpha^{R}_{s-1}(\cdot)}\right)\left( U^{+}(\cdot,1) +U^{-}(\cdot,\frac{1}{4})\right)\right) < \infty,
\end{align*}
as $U^{+}(\cdot,1)$, $U^{-}(\cdot,\frac{1}{4})$ $\in$ $\mathcal{W}_{T}.$\\
Now if  $E_{R}U^{-}(\cdot, V_{T}^{x,\phi}(\cdot)) =-\infty$, as $R \in \mathcal{P}^{T},$ we get that 
 $u(x,\phi) \leq u^{\mathcal{P}}(x,\phi)  =-\infty.$ Thus $u^{\mathcal{P}}(x,\phi)=u(x,\phi)$. Else 
letting $n$ go to infinity in \eqref{black} we obtain that  
$u^{\mathcal{P}}(x,\phi) \leq 
 E_{Q} U(\cdot,V^{x,\phi}_{T}(\cdot))$ and  taking the infimum over all $Q \in \mathcal{Q}^{T}$,   $u^{\mathcal{P}}(x,\phi) \leq u(x,\phi)$: \eqref{white} is proved. \\
Finally taking in \eqref{white} the supremum over all  $\phi \in \Phi(x,U,\mathcal{P}^{T})$, we get that $u(x)=u^{\mathcal{P}}(x).$ 
\end{proof}\\

To state the corollary on the existence of an optimal solution for \eqref{eq:OP}, we need two additional assumptions. 
\begin{assumption}
\label{Sass2}
There exists some $0 \leq s <\infty$ such that $-s \leq S^{i}_{t}(\o^{t}) <+\infty $ for all $1 \leq i \leq d$, $\o^{t} \in \O^{t}$ and  $0 \leq t \leq T$.
 \end{assumption}
\begin{assumption}\label{Uminus}
For all $r \in \mathbb{Q}$, $r>0,$ 
$\sup_{P \in \mathcal{Q}^{T}} E_{P} U^{-}(\cdot,r) <+\infty.$
\end{assumption}
\begin{corollary}
\label{coroutil}
Assume that the $NA(\mathcal{Q}^{T})$ condition and  Assumptions \ref{SassARB},  \ref{QanalyticARB}, \ref{Sass2} and  \ref{Uminus} 
hold true. Furthermore, assume that 
$U$ is either  bounded from above or that Assumption \ref{assW} holds true.     Let $x\geq 0$. Then,
there exists some optimal strategy $\phi^* \in \Phi(x,U, \mathcal{Q}^{T})$ such that
$$u(x) =  \inf_{P \in \mathcal{Q}^{T}} E_{P} U(\cdot,V^{x,\phi^*}_{T}(\cdot))<\infty.$$
\end{corollary}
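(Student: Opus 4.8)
The plan is to reduce the robust problem posed over $\mathcal{Q}^{T}$ to the auxiliary problem posed over the smaller family $\mathcal{P}^{T}$, where the strong no-arbitrage condition is available, to solve the latter by the existing tractable theory for unbounded utilities, and then to transport the optimiser back. Since $NA(\mathcal{Q}^{T})$ is assumed, Theorem \ref{TheoS} furnishes the set $\mathcal{P}^{T}\subset\mathcal{Q}^{T}$ given explicitly by \eqref{PstarARB}, which has the same polar sets as $\mathcal{Q}^{T}$ and for which $sNA(\mathcal{P}^{T})$ holds, i.e. $NA(P)$ is valid for every $P\in\mathcal{P}^{T}$. By Lemma \ref{lemutil} the two value functions coincide, $u(x)=u^{\mathcal{P}}(x)$, and from the proof of that lemma one also has $\Phi(x,U,\mathcal{Q}^{T})=\Phi(x,U,\mathcal{P}^{T})$ together with the pointwise identity \eqref{white}, namely $u(x,\phi)=u^{\mathcal{P}}(x,\phi)$ for every $\phi\in\Phi(x,U,\mathcal{P}^{T})$. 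Hence it suffices to produce an optimiser for $u^{\mathcal{P}}(x)$.

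First I would check that $\mathcal{P}^{T}$ fits the structural framework of Section \ref{Unc}. The recursion \eqref{PstarARB} is generated locally by the set-valued maps $\o^{t}\mapsto\mathcal{P}_{t+1}(\o^{t}):=\{\lambda\, p^{*}_{t+1}(\cdot,\o^{t})+(1-\lambda)\, p:\ 0<\lambda\le 1,\ p\in\mathcal{Q}_{t+1}(\o^{t})\}$, which are non-empty and convex-valued; using that $p^{*}_{t+1}$ is universally measurable and that $\mbox{graph}(\mathcal{Q}_{t+1})$ is analytic, their graphs are again analytic, so Assumption \ref{QanalyticARB} holds with $\mathcal{P}^{T}$ in place of $\mathcal{Q}^{T}$. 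Assumptions \ref{SassARB}, \ref{Sass2} and \ref{Uminus} concern only $S$ and the uniform integrability of $U^{-}$; since $\mathcal{P}^{T}\subset\mathcal{Q}^{T}$ they are inherited at once (for \ref{Uminus}, the supremum over $\mathcal{P}^{T}$ is dominated by the one over $\mathcal{Q}^{T}$), while Assumption \ref{assW} is posed directly along the $\mathcal{P}^{t}$.

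With these verifications in hand, $sNA(\mathcal{P}^{T})$ is exactly the strong no-arbitrage hypothesis under which the existence theory for unbounded utilities applies, and I would invoke \citep[Theorem 3.6]{BC16} (in the bounded-above case the argument is the standard, simpler one) to obtain a strategy $\phi^{*}\in\Phi(x,U,\mathcal{P}^{T})$ attaining $u^{\mathcal{P}}(x)=\inf_{P\in\mathcal{P}^{T}}E_{P}U(\cdot,V^{x,\phi^{*}}_{T}(\cdot))$, with $u^{\mathcal{P}}(x)<\infty$; the finiteness is the integrability estimate already used in Lemma \ref{lemutil}, namely the bound \eqref{rich} combined with \citep[Proposition 3.24]{BC16}. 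Transporting back then closes the argument: $\phi^{*}\in\Phi(x,U,\mathcal{P}^{T})=\Phi(x,U,\mathcal{Q}^{T})$, and by \eqref{white} together with $u(x)=u^{\mathcal{P}}(x)$ we obtain $u(x,\phi^{*})=u^{\mathcal{P}}(x,\phi^{*})=u^{\mathcal{P}}(x)=u(x)<\infty$, so that $\phi^{*}$ is the desired optimiser over $\mathcal{Q}^{T}$.

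The hard part is the existence step over $\mathcal{P}^{T}$: it rests on a backward dynamic-programming and measurable-selection argument whose delicate point is the measurability of the value functions and of the selected strategies, which in turn requires the quantitative no-arbitrage bounds (the reciprocal controls $1/\alpha^{P}_{t}$ entering Assumption \ref{assW}). What makes this tractable here is precisely that $sNA(\mathcal{P}^{T})$, delivered by Theorem \ref{TheoS}, replaces the hard-to-verify \citep[Assumption 2.1]{BC16}, so that the dynamic-programming machinery of \citep{BC16} can be run essentially verbatim on $\mathcal{P}^{T}$. The remaining care is bookkeeping: confirming that the analytic-graph, convexity and integrability properties genuinely transfer to the recursively defined $\mathcal{P}^{T}$, which is exactly where the explicit form \eqref{PstarARB} is used.
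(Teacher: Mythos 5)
Your proposal is correct and follows essentially the same route as the paper's own (very terse) proof: invoke Theorem \ref{TheoS} to get $sNA(\mathcal{P}^{T})$, apply \citep[Theorem 3.6]{BC16} to obtain an optimiser for $u^{\mathcal{P}}(x)$, and conclude via Lemma \ref{lemutil} (whose proof indeed supplies $\Phi(x,U,\mathcal{Q}^{T})=\Phi(x,U,\mathcal{P}^{T})$ and the pointwise identity \eqref{white} needed to transport the optimiser back). Your explicit verification that $\mathcal{P}^{T}$ inherits the analytic-graph, convexity and integrability structure is a sound piece of bookkeeping that the paper leaves implicit, not a different argument.
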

\begin{proof}
Fix some $x \geq 0.$ Theorem \ref{TheoS} implies that $sNA(\mathcal{P}^{T})$ holds true. 
 So \citep[Theorem 3.6]{BC16}  gives the existence of an optimal strategy for $u^{\mathcal{P}}(x)$.  Lemma \ref{lemutil} allows to conclude since  $u(x)=u^{\mathcal{P}}(x)$. 
\end{proof}

\subsection{Local no-arbitrage conditions and further 
results} 
\label{local}
We now turn to local conditions which are at the heart of the proofs due to the structure of the model. 
We  recall the first  part of \citep[Theorem 4.5]{BN} which establishes the essential link between the global version   $NA(\mathcal{Q}^{T})$ and its local version.

\begin{theorem}
\label{bnlocal}
Assume that Assumptions  \ref{SassARB} and \ref{QanalyticARB} hold true. Then the following  statements are equivalent.\\
1. The $NA(\mathcal{Q}^{T})$ condition hold true.\\
2. For all $0 \leq t \leq T-1$, there exists a $\mathcal{Q}^{t}$-full measure set $\O^{t}_{NA} \in \mathcal{B}_{c}(\O^{t})$ such that for all $\o^{t} \in \O^{t}_{NA}$, $h\Delta S_{t+1}(\o^{t},\cdot) \geq 0 \; \mathcal{Q}_{t+1}(\o^{t}) \mbox{-q.s.}$ for some $h \in \mathbb{R}^{d}$ implies that $h\Delta S_{t+1}(\o^{t},\cdot) = 0 \;\mathcal{Q}_{t+1}(\o^t)\mbox{-q.s.}$
\end{theorem}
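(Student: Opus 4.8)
The plan is to reduce the multi-period equivalence to a one-period statement and to exploit the product/kernel structure of $\mathcal{Q}^{T}$ in \eqref{QstarARB} together with the measurability of the conditional supports from Lemma \ref{Dmeasurability}. The bridge between the global and local formulations is a pair of Fubini-type translations that I would record first, since the rest is bookkeeping around them: by \eqref{QstarARB} and the definition of polar sets, a property holds $\mathcal{Q}^{T}$-q.s. if and only if, for $\mathcal{Q}^{t}$-q.s.\ $\omega^{t}$, the corresponding fibrewise property holds $\mathcal{Q}_{t+1}(\omega^{t})$-q.s.; and a set is $\mathcal{Q}^{t+1}$-non-polar if and only if, on a $\mathcal{Q}^{t}$-non-polar set of $\omega^{t}$, its $\omega^{t}$-section is $\mathcal{Q}_{t+1}(\omega^{t})$-non-polar. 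Throughout I use Remark \ref{supppp} to rephrase one-period sign conditions in terms of supports: $h\Delta S_{t+1}(\omega^{t},\cdot)\geq 0$ $\mathcal{Q}_{t+1}(\omega^{t})$-q.s.\ is equivalent to $D^{t+1}(\omega^{t})\subset\{y:\ h\cdot y\geq 0\}$, and strict positivity on a non-polar fibre is equivalent to $D^{t+1}(\omega^{t})\not\subset\{y:\ h\cdot y=0\}$.

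For the direction $2\Rightarrow 1$ (local implies global) I assume statement $2$ and take $\phi\in\Phi$ with $V_{T}^{0,\phi}\geq 0$ $\mathcal{Q}^{T}$-q.s., arguing $V_{T}^{0,\phi}=0$ q.s.\ by contradiction. Assuming $\{V_{T}^{0,\phi}>0\}$ non-polar, I set $r:=\max\{t:\ \{V_{t}^{0,\phi}<0\}\text{ is }\mathcal{Q}^{t}\text{-non-polar}\}$, with $r:=0$ if no such index exists. If $r\geq 1$, maximality gives $V_{r+1}^{0,\phi}\geq 0$ q.s., so on $A:=\{V_{r}^{0,\phi}<0\}\in\mathcal{B}_{c}(\Omega^{r})$ the increment $\phi_{r+1}\Delta S_{r+1}=V_{r+1}^{0,\phi}-V_{r}^{0,\phi}$ is strictly positive for q.s.\ $\omega_{r+1}$; intersecting $A$ with the full-measure set $\Omega^{r}_{NA}$, the $\mathcal{B}_{c}(\Omega^{r})$-measurable $h:=\phi_{r+1}\mathbf{1}_{A}$ then satisfies $h\Delta S_{r+1}\geq 0$ q.s.\ but is non-zero on a non-polar set, contradicting statement $2$ at $t=r$. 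If $r=0$, all $V_{s}^{0,\phi}\geq 0$ q.s.; taking the smallest $t$ with $\{V_{t}^{0,\phi}>0\}$ non-polar forces $V_{t-1}^{0,\phi}=0$ q.s., whence $\phi_{t}\Delta S_{t}=V_{t}^{0,\phi}\geq 0$ q.s.\ and positive on a non-polar set, again a local arbitrage. The only care needed here is the measurability of the events (immediate since $V_{t}^{0,\phi}$ is $\mathcal{B}_{c}(\Omega^{t})$-measurable) and the admissibility of the localised strategy.

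For the direction $1\Rightarrow 2$ (global implies local) I argue by contraposition: assume statement $2$ fails at some $t$, i.e.\ the set $N$ of $\omega^{t}$ admitting a local arbitrage $h$ (with $D^{t+1}(\omega^{t})\subset\{h\cdot y\geq 0\}$ and $D^{t+1}(\omega^{t})\not\subset\{h\cdot y=0\}$) is $\mathcal{Q}^{t}$-non-polar. First I would check $N\in\mathcal{B}_{c}(\Omega^{t})$: since $D^{t+1}$ has graph in $\mathcal{B}_{c}(\Omega^{t})\otimes\mathcal{B}(\mathbb{R}^{d})$ by Lemma \ref{Dmeasurability}, the set $\{(\omega^{t},h):\ h\text{ is a local arbitrage at }\omega^{t}\}$ has universally measurable (indeed analytic) graph, and $N$ is its projection, universally measurable by the Projection Theorem recalled above. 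The Jankov--von Neumann / Aumann selection then yields a $\mathcal{B}_{c}(\Omega^{t})$-measurable selector $\omega^{t}\mapsto h(\omega^{t})$ with $h(\omega^{t})$ a local arbitrage on $N$. Gluing $\phi_{t+1}:=h\,\mathbf{1}_{N}$ and $\phi_{u}:=0$ for $u\neq t+1$ produces $\phi\in\Phi$ with $V_{T}^{0,\phi}=\mathbf{1}_{N}\,h\,\Delta S_{t+1}\geq 0$ $\mathcal{Q}^{T}$-q.s.\ and strictly positive on a non-polar set, violating Definition \ref{NAQTARB}.

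The \emph{main obstacle} is this last direction, specifically the measurability of $N$ and the measurable selection of $h$: one must phrase one-period no-arbitrage purely through the support $D^{t+1}$, verify that the resulting condition cuts out a set with analytic graph so that the Projection Theorem and a universally measurable selector apply, and track carefully the distinction between $\mathcal{B}(\Omega^{t})$, $\mathcal{A}(\Omega^{t})$ and $\mathcal{B}_{c}(\Omega^{t})$ at each step, which is exactly where Lemma \ref{Dmeasurability} and the analytic-set calculus enter. By contrast, the polar-set Fubini translations of the first paragraph are elementary but must be stated precisely, as they are invoked silently in both directions.
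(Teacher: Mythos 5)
First, note that the paper contains no proof of this theorem: it is recalled verbatim from \citep[Theorem 4.5]{BN}, so the comparison is with the Bouchard--Nutz argument, whose architecture your proposal correctly reconstructs (an extremal-time reduction for $2\Rightarrow 1$, and a measurable selection of a one-step arbitrage for $1\Rightarrow 2$). Your case analysis on $r$ in the local-to-global direction is the standard one and is sound, and your plan for $1\Rightarrow 2$ is workable modulo one slip: the graph of the local-arbitrage correspondence is \emph{not} analytic --- by Lemma \ref{Dmeasurability} the graph of $D^{t+1}$ lies only in $\mathcal{B}_{c}(\O^{t})\otimes\mathcal{B}(\mathbb{R}^{d})$, so Jankov--von Neumann does not apply directly; instead one takes a Castaing representation of $D^{t+1}$ by $\mathcal{B}_{c}(\O^{t})$-measurable selectors $(\sigma_{n})_{n}$, writes your two support conditions as $\bigcap_{n}\{h\sigma_{n}\geq 0\}\cap\bigcup_{m}\{h\sigma_{m}\neq 0\}$, and then uses the Projection Theorem and Aumann selection for $\mathcal{T}\otimes\mathcal{B}(Y)$ graphs as recalled in Section \ref{Unc} (with $\mathcal{T}=\mathcal{B}_{c}(\O^{t})$) to get $N\in\mathcal{B}_{c}(\O^{t})$ and the selector $h$. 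That is a repair, not a rewrite.

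The genuine gap is your treatment of the two ``Fubini translations''. You state them as the bridge, invoke them in both directions, and label them elementary; but their forward halves (global polarity implies fibrewise polarity $\mathcal{Q}^{t}$-q.s., equivalently: non-polar sections over a non-polar base set produce a non-polar set) are precisely the hard content of the Bouchard--Nutz proof and are \emph{not} elementary. To use them --- e.g.\ to pass from $V_{r+1}^{0,\phi}\geq 0$ $\mathcal{Q}^{r+1}$-q.s.\ to the fibrewise inequality at q.s.\ every $\o^{r}$ in your case $r\geq 1$, and to conclude that $\{V_{T}^{0,\phi}>0\}$ is non-polar after gluing $\phi_{t+1}=h\,1_{N}$ --- one must measurably select kernels $p(\cdot,\o^{t})\in\mathcal{Q}_{t+1}(\o^{t})$ charging the relevant sections, say $p\left(h(\o^{t})\Delta S_{t+1}(\o^{t},\cdot)>0\right)>0$ on $N$, and paste them into a measure of $\mathcal{Q}^{T}$ via \eqref{QstarARB}. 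This is a second, separate selection step: it is where Assumption \ref{QanalyticARB} (analyticity of $\mbox{graph}(\mathcal{Q}_{t+1})$) is actually consumed, and since your $h$ is only $\mathcal{B}_{c}(\O^{t})$-measurable it additionally requires either the $\mathfrak{A}\left(\mathcal{B}_{c}\otimes\mathcal{B}\right)$ machinery of Lemma \ref{easy} with the selection result used in the proof of Theorem \ref{PPstar}, or the device of fixing the prior witnessing non-polarity and replacing $h$ by a Borel version almost surely, exactly as the paper does for $p_{t+1}^{*}$ in the proof of Proposition \ref{finally}. You correctly identified the measurability of $N$ and the selection of $h$ as an obstacle, but the kernel-selection lemma behind your translations is the true crux; asserting it silently leaves the heart of the theorem unproved, even though, with that lemma supplied, your argument closes and coincides with the original proof.
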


We present two other local  definitions of  no-arbitrage and establish their equivalence with  the $NA(\mathcal{Q}^{T})$ conditions in Theorem  \ref{ARBequival}  which is   an analogous of Theorem \ref{bnlocal}. \\
The first definition proposes a geometric view of the no-arbitrage.  Theorem \ref{ARBequival} extends the uni-prior result of  \citep[Theorem 3g)]{JS98}, see also \citep[Proposition 2.1.6]{KaS}.  Note that the geometric no-arbitrage has appeared in different multiple-priors contexts, see  \citep[Proposition 6.4]{OW18} and \citep[Corollary 21]{Burz18}.  A similar idea was already exploited in \citep[Lemma 3.3]{BN}. Theorem \ref{ARBequival} will also allow us to prove  Proposition \ref{alphatmesARB} and  Theorem \ref{PPstar}.\\
Recall that for a convex set $C \subset \mathbb{R}^{d}$, the relative interior of $C$ (see \citep[Section 6]{cvx}) is $\mbox{Ri}(C)=\{y \in C, \, \exists \, \varepsilon >0,\;  \mbox{Aff} (C) \cap B(y,\varepsilon) \subset C\}$ 
where $B(y,\varepsilon)$ is the open ball in $\mathbb{R}^{d}$ centered in $y$ with radius $\varepsilon$.
Moreover for a convex-valued random set $R,$   $\mbox{Ri}\left(R\right)$ is the random set defined by $\mbox{Ri}\left(R\right)(\o):=\mbox{Ri}\left(R(\o)\right)$ for $\o \in \O$.
 \begin{definition}
 \label{NAGarb} The geometric no-arbitrage condition holds true if  for all $0\leq t\leq T-1$, there exists some $\mathcal{Q}^{t}$-full measure set $\Omega^{t}_{gNA} \in \mathcal{B}_{c}(\O^{t})$   such that for all $\omega^{t} \in \Omega^{t}_{gNA}$,  $0 \in \mbox{Ri}\left({\mbox{Conv}}(D^{t+1})\right)(\o^{t})$. In this case for all $\o^{t} \in \Omega^{t}_{gNA}$, there exists  $\varepsilon_{t}(\omega^{t})>0$ such that
 \begin{align}
\label{vamakiARB}
 B(0,\varepsilon_{t}(\o^{t})) \cap \mbox{Aff}\left(D^{t+1}\right)(\o^{t}) \subset {{\mbox{Conv}}}\left(D^{t+1}\right)(\o^{t}).
 \end{align}
 \end{definition}The  geometric (local) no-arbitrage condition is indeed practical: Together with Theorem \ref{ARBequival} it allows to check whether the (global) $\mbox{NA}(\mathcal{Q}^T)$ condition  holds true or not.  
As 
$\mathcal{Q}^T$ and for all $1\leq t \leq T,$ $\Delta S_{t+1}$ are given one gets $ \mbox{Ri}\left({\mbox{Conv}}(D^{t+1})\right)(\cdot)$ and it is easy to check whether $0$ is in it or not (see Section \ref{ExAp} for examples of such a reasoning). 

Secondly,  in the spirit of  \citep[Proposition 3.3]{RS05}   (see also \citep[Proposition 2.3]{BC16}), we introduce the so-called  quantitative  no-arbitrage condition. 
 \begin{definition}
 \label{NAQarb} The quantitative no-arbitrage condition holds true if  for all $0\leq t\leq T-1$, there exists some $\mathcal{Q}^{t}$-full measure set $\Omega^{t}_{qNA} \in \mathcal{B}_{c}(\O^{t})$   such that for all $\omega^{t} \in \Omega^{t}_{qNA}$,  
there exists $\beta_{t}(\omega^{t}),\kappa_{t}(\omega^{t}) \in (0,1)$ such that for all  $h \in  \mbox{Aff}\left({D}^{t+1}\right)(\omega^{t})$ , $h \neq 0$  there exists $p_{h} \in \mathcal{Q}_{t+1}(\o^{t})$ satisfying
\begin{align}
\label{valakiARB}
p_{h}\left({h}\Delta S_{t+1}(\omega^{t},\cdot) < -\beta_{t}(\omega^{t}){|h|}\right) \geq \kappa_{t}(\omega^{t}).
\end{align}
\end{definition}
In the case where there is only one risky asset and one period,  \eqref{valakiARB} is interpreted as follows~: There exists a prior  $p^{+}$ for which the price of the risky asset increases enough and an other one $p^{-}$ for which it decreases i.e. $p^{\mp} \left( \pm \Delta S(\cdot)<-\beta\right) \geq \kappa$ where $\beta, \kappa \in (0,1)$. The number $\kappa$ serves as a measure of the gain/loss probability  and the number $\beta$ of their size. 
\begin{remark}
\label{remcomm}
Definition \ref{NAQarb} is the direct adaptation to the multiple-priors set-up of \citep[Proposition 3.3]{RS05}: The probability measure depends of the strategy. For an agent buying or selling some quantity of risky assets, there is always a prior in which she is exposed to a potential loss.  Proposition \ref{finally} will show that one can in fact choose a comment prior for all strategies in Definition \ref{NAQarb}. 
\end{remark}
\begin{remark}
\label{rembornes}
Theorem \ref{TheoS} and Proposition  \ref{finally} are precious for solving the problem of maximisation of expected utility. For example when the utility function $U$ is defined on $(0,\infty)$ they provide natural bounds for the one step strategies or for $U(V_T^{x,\Phi})$, see \eqref{rich} and \citep[Lemma 3.11 and (44)]{BC16}.   This is used to prove the existence of the optimal strategy but it could also be used to compute it numerically. We propose in Section \ref{ExAp} explicit values for $\beta_{t}$ and $\kappa_{t}$. 
\end{remark}
\begin{remark}
\label{gammabeta}
In \eqref{valakiARB}, $\beta_{t}(\omega^{t})$ provides information on $D^{t+1}(\o^{t})$ while $\kappa_{t}(\o^{t})$ provides information on $\mathcal{Q}_{t+1}(\o^{t})$.
Moreover, Definition \ref{NAQarb}  can equivalently be formulated as follow:  For all $0\leq t\leq T-1$, there exists some $\mathcal{Q}^{t}$-full measure set $\Omega^{t}_{qNA} \in \mathcal{B}_{c}(\O^{t})$   such that for all $\omega^{t} \in \Omega^{t}_{qNA}$,  
there exists  $\alpha_{t}(\o^{t}) \in (0,1)$  such that for all  $h \in  \mbox{Aff}\left({D}^{t+1}\right)(\omega^{t})$ , $h \neq 0$  there exists $p_{h} \in \mathcal{Q}_{t+1}(\o^{t})$ satisfying
\begin{align}
\label{valakitwo}
p_{h}\left({h}\Delta S_{t+1}(\omega^{t},\cdot) < -\alpha_{t}(\omega^{t}){|h|}\right) \geq \alpha_{t}(\omega^{t}).
\end{align}
Indeed,  \eqref{valakitwo} implies  \eqref{valakiARB} and  assuming \eqref{valakiARB}, \eqref{valakitwo} is true with $\alpha_{t}(\o^{t})=\min(\kappa_{t}(\omega^{t}),\beta_{t}(\o^{t})) \in (0,1)$.
\end{remark}
 \begin{theorem}
\label{ARBequival}
Assume that Assumptions  \ref{SassARB} and \ref{QanalyticARB} hold true. Then the $NA(\mathcal{Q}^{T})$  condition  (see Definition \ref{NAQTARB}),  the geometric no-arbitrage (see Definition  \ref{NAGarb}) and  the quantitative no-arbitrage (see Definition  \ref{NAQarb})   are equivalent  and one can choose $\O^{t}_{NA}= \O^{t}_{qNA}=\O^{t}_{gNA}$ for all $0 \leq t \leq T-1$. Furthermore, one can choose $\beta_{t}={\varepsilon_{t}}/{2}$ in \eqref{valakiARB} (for $\varepsilon_{t}$  introduced in \eqref{vamakiARB}).
\end{theorem}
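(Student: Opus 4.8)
The plan is to reduce the whole statement to a one-period, pointwise claim at a fixed $\o^{t}$ and to glue the pieces together through Theorem \ref{bnlocal}. Each of the geometric and quantitative conditions only asks for the existence of a $\mathcal{Q}^{t}$-full measure set on which a pointwise property of $D^{t+1}(\o^{t})$ holds, and Theorem \ref{bnlocal} already identifies $NA(\mathcal{Q}^{T})$ with the existence of a $\mathcal{Q}^{t}$-full measure set $\O^{t}_{NA}\in\mathcal{B}_{c}(\O^{t})$ carrying the local no-arbitrage property; hence it is enough to show that, for each fixed $\o^{t}$, the three pointwise properties are equivalent. Writing $Y:=\Delta S_{t+1}(\o^{t},\cdot)$, $\mathcal{Q}:=\mathcal{Q}_{t+1}(\o^{t})$ and $D:=D^{t+1}(\o^{t})$, the one fact I use throughout is the quasi-sure support characterisation coming from \eqref{dedD} and Remark \ref{supppp}: $p(Y\in D)=1$ for every $p\in\mathcal{Q}$, and $y_{0}\in D$ if and only if every open neighbourhood $O$ of $y_{0}$ satisfies $p(Y\in O)>0$ for some $p\in\mathcal{Q}$. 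In particular $hY\ge 0$ $\mathcal{Q}$-q.s. is equivalent to $h\cdot y\ge 0$ for all $y\in D$, and $hY\ne 0$ $\mathcal{Q}$-q.s. is equivalent to $h\cdot y\ne 0$ for some $y\in D$. Once the pointwise equivalence is in hand, I take $\O^{t}_{gNA}=\O^{t}_{qNA}=\O^{t}_{NA}$ for the direction starting from $NA(\mathcal{Q}^{T})$ (no extra measurability is needed, as this set is supplied by Theorem \ref{bnlocal}), and for the converse directions I simply feed the full-measure sets provided by Definitions \ref{NAGarb} and \ref{NAQarb} back into Theorem \ref{bnlocal}.

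I would prove the pointwise equivalences cyclically. For local $\Rightarrow$ geometric I argue by contraposition: suppose $0\notin\mbox{Ri}(\mbox{Conv}(D))$. If $0\notin\mbox{Aff}(D)$, the point $h^{*}$ of $\mbox{Aff}(D)$ nearest the origin is nonzero and satisfies $h^{*}\cdot y=|h^{*}|^{2}>0$ for every $y\in\mbox{Aff}(D)\supset D$, so $h^{*}$ is a strict arbitrage. If $0\in\mbox{Aff}(D)$, then $\mbox{Aff}(D)$ is a linear subspace and a proper separation of $0$ from $\mbox{Conv}(D)$ inside $\mbox{Aff}(D)$ yields $h\in\mbox{Aff}(D)\setminus\{0\}$ with $h\cdot y\ge 0$ for all $y\in D$ and $h\cdot y''>0$ for some $y''\in D$, again an arbitrage. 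For quantitative $\Rightarrow$ local, I first note that \eqref{valakiARB} forces $0\in\mbox{Aff}(D)$ (otherwise the same $h^{*}$ gives $h^{*}Y=|h^{*}|^{2}>0$ $\mathcal{Q}$-q.s., contradicting \eqref{valakiARB}); then, given $hY\ge 0$ $\mathcal{Q}$-q.s., I split $h=h_{1}+h_{2}$ with $h_{1}\in\mbox{Aff}(D)$ and $h_{2}$ orthogonal to it, use $h_{2}Y=0$ $\mathcal{Q}$-q.s. to get $h_{1}\cdot y\ge 0$ on $D$, and apply \eqref{valakiARB} to $h_{1}\in\mbox{Aff}(D)$ to force $h_{1}=0$, whence $hY=0$ $\mathcal{Q}$-q.s.

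The substantive step is geometric $\Rightarrow$ quantitative, which also pins down $\beta_{t}=\varepsilon_{t}/2$. Shrinking $\varepsilon_{t}$ if necessary so that $\varepsilon_{t}<2$, fix a unit vector $h\in\mbox{Aff}(D)$; by \eqref{vamakiARB} the point $-\tfrac{3}{4}\varepsilon_{t}h$ lies in $B(0,\varepsilon_{t})\cap\mbox{Aff}(D)\subset\mbox{Conv}(D)$, and writing it as a finite convex combination of points of $D$ and keeping an extreme term produces $y^{*}\in D$ with $h\cdot y^{*}\le-\tfrac{3}{4}\varepsilon_{t}$. The open half-space $\{y:\,h\cdot y<-\tfrac{5}{8}\varepsilon_{t}\}$ then contains $y^{*}\in D$, so some $p\in\mathcal{Q}$ charges it. The delicate point is to make the resulting lower bound uniform in $h$: given $h_{0}$ I truncate to $\{|Y|\le R\}$, so that $\kappa_{h_{0}}:=p_{0}(h_{0}Y<-\tfrac{5}{8}\varepsilon_{t},\,|Y|\le R)>0$ for $R$ large; on this event $h\cdot Y<-\tfrac{5}{8}\varepsilon_{t}+|h-h_{0}|R\le-\tfrac{\varepsilon_{t}}{2}$ as soon as $|h-h_{0}|\le\varepsilon_{t}/(8R)$, so the single prior $p_{0}$ serves every $h$ in a neighbourhood of $h_{0}$ with the same bound $\kappa_{h_{0}}$. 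Compactness of the unit sphere of the finite-dimensional space $\mbox{Aff}(D)$ extracts a finite subcover and a single $\kappa_{t}\in(0,1)$, and homogeneity in $|h|$ then extends \eqref{valakiARB} to all nonzero $h\in\mbox{Aff}(D)$ with $\beta_{t}=\varepsilon_{t}/2$. I expect this uniformity over the sphere to be the main obstacle: it is exactly where the unboundedness of $Y$ bites, and it is circumvented by the truncation-plus-compactness argument above; the separation and orthogonal-decomposition steps are routine finite-dimensional convexity.
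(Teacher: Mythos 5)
Your proposal is correct, and its overall architecture coincides with the paper's: the paper also reduces via Theorem \ref{bnlocal} to a pointwise one-period equivalence (packaged as Proposition \ref{Arbeqone}, applied with $Y=\Delta S_{t+1}(\o^{t},\cdot)$, $\mathcal{Q}=\mathcal{Q}_{t+1}(\o^{t})$, $D=D^{t+1}(\o^{t})$), and obtains $\O^{t}_{NA}=\O^{t}_{gNA}=\O^{t}_{qNA}$ exactly as you do. Where you genuinely diverge is in the hard implication producing uniform $(\beta,\kappa)$. The paper argues by contradiction: it sets $A_{n}:=\{h\in\mbox{Aff}(D),\,|h|=1,\,P(hY<-\varepsilon/2)<1/n\ \forall P\in\mathcal{Q}\}$, lets $n_{0}=\inf\{n:A_{n}=\emptyset\}$, and if $n_{0}=\infty$ extracts a subsequential limit $h^{*}$ on the unit sphere of $\mbox{Aff}(D)$ and applies Fatou's lemma to get $h^{*}Y\geq-\varepsilon/2$ $\mathcal{Q}$-q.s., contradicted by testing against $y=-\tfrac{2}{3}\varepsilon h^{*}\in B(0,\varepsilon)\cap\mbox{Aff}(D)\subset\mbox{Conv}(D)$; this yields $\beta=\varepsilon/2$, $\kappa=1/n_{0}$. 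You instead argue constructively: a Carath\'eodory-type convex combination gives a support point $y^{*}\in D$ with $h\cdot y^{*}\leq-\tfrac{3}{4}\varepsilon$, Lemma \ref{supportcar} gives a prior charging a small ball around it, and the truncation $\{|Y|\leq R\}$ plus a finite subcover of the compact unit sphere of $\mbox{Aff}(D)$ produce a single $\kappa_{t}$. Both arguments hinge on compactness of that sphere; the paper's Fatou route is shorter and handles the unboundedness of $Y$ without any truncation, while yours yields $\kappa_{t}$ as an explicit finite minimum — though, as the paper notes in a footnote to Proposition \ref{Arbeqone}, neither construction makes $\kappa_{t}$ measurable in $\o^{t}$ (that is deferred to Proposition \ref{finally}). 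Your remaining implications also permute the paper's cycle harmlessly: you prove quantitative $\Rightarrow$ local NA directly via the orthogonal decomposition $h=h_{1}+h_{2}$, whereas the paper routes quantitative $\Rightarrow$ geometric $\Rightarrow$ NA, with the same projection idea appearing in its Step 3; and your two-case separation (nearest-point argument when $0\notin\mbox{Aff}(D)$) just makes Remark \ref{RIo} explicit. Two cosmetic points to tidy: handle the degenerate case $\mbox{Aff}(D)=\{0\}$, where the quantitative condition is vacuous and $\beta_{t},\kappa_{t}$ may be chosen arbitrarily in $(0,1)$, and cap your $\kappa_{t}$ strictly below $1$ to match Definition \ref{NAQarb}; also your shorthand ``$hY\neq 0$ $\mathcal{Q}$-q.s.'' should read as the negation of ``$hY=0$ $\mathcal{Q}$-q.s.'', which is how you in fact use it.
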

\begin{proof}
See Section \ref{proofmesepsi}. 
\end{proof}\begin{remark}
\label{ovect1}
Under Assumptions  \ref{SassARB} and \ref{QanalyticARB} and any of the no-arbitrage condition,  $0 \in \mbox{Conv}\left(D^{t+1}\right)(\o^{t})$ and  $\mbox{Aff}\left(D^{t+1}\right)(\o^{t})$ is a vector space for all $\o^{t} \in \O^{t}_{NA}$.
\end{remark}
The next proposition  is \citep[Theorem 3]{JS98} but could also be obtained as a direct application of Theorem \ref{ARBequival} together with   \citep[Lemma 7.28 p174]{BS} and  \citep[Theorem 12.28]{Hitch} in the specific setting where $\mathcal{Q}^{T}=\{P_{1} \otimes p_{2} \otimes \dots \otimes p_{T}\}$.  Indeed, Theorem \ref{ARBequival} does not apply directly as $\mbox{graph}(p_{t})$ belongs a priory to $\mathcal{B}_{c}\left(\O^{t} \times \mathfrak{P}(\O_{t+1})\right)$ and not to $\mathcal{A}\left(\O^{t} \times  \mathfrak{P}(\O_{t+1})\right)$, and one needs to build  some  Borel-measurable version of $p_{t}$. Proposition \ref{singleP}   will be used in the sequel to prove that the $NA(P)$ condition holds true.
\begin{proposition}
\label{singleP}
Assume that  Assumption \ref{SassARB} holds true and let $P \in   \mathfrak{P}(\O^{T})$ with the fixed disintegration $P:=P_{1} \otimes p_{2} \otimes \dots \otimes p_{T}$ where  $p_{t} \in  \mathcal{S}K_{t}$ for all  $1 \leq t \leq T$. Then the $NA(P)$ condition holds true  if and only if  $0 \in \mbox{Ri}\left(\mbox{Conv}\left(D_{P}^{t+1}\right)\right)(\cdot)$ $P^{t}$-a.s. for all $0 \leq t \leq T-1$.
\end{proposition}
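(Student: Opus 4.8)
The plan is to deduce Proposition \ref{singleP} from Theorem \ref{ARBequival} applied to the singleton family of priors $\{P\}$. The only real obstacle, already flagged in the paragraph preceding the statement, is measurability: the disintegration $P=P_1\otimes p_2\otimes\cdots\otimes p_T$ is given through universally measurable kernels $p_t\in\mathcal{SK}_t$, so the naive choice $\mathcal{Q}_{t+1}(\o^t):=\{p_{t+1}(\cdot,\o^t)\}$ produces a graph that is only universally measurable and need not be analytic. Hence Assumption \ref{QanalyticARB} is not directly available and Theorem \ref{ARBequival} cannot be invoked as it stands.

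First I would produce Borel-measurable versions of the kernels. By \citep[Corollary 7.27.2 p139]{BS} the Borel probability measure $P$ admits a disintegration $P=P_1\otimes\widetilde p_2\otimes\cdots\otimes\widetilde p_T$ with Borel-measurable stochastic kernels $\widetilde p_{t+1}:\O^t\to\mathfrak{P}(\O_{t+1})$; since both disintegrations yield the same measure $P$, essential uniqueness of the (one-step) conditional distributions gives $\widetilde p_{t+1}(\cdot,\o^t)=p_{t+1}(\cdot,\o^t)$ for $P^t$-almost every $\o^t$, for each $0\leq t\leq T-1$. Then I would set $\widetilde{\mathcal{Q}}_{t+1}(\o^t):=\{\widetilde p_{t+1}(\cdot,\o^t)\}$. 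This set-valued map is non-empty, convex (singleton) valued, and its graph is the preimage of the closed diagonal of $\mathfrak{P}(\O_{t+1})\times\mathfrak{P}(\O_{t+1})$ under the Borel map $(\o^t,q)\mapsto(\widetilde p_{t+1}(\cdot,\o^t),q)$, hence Borel, hence analytic by \eqref{analyticsetARB}. Therefore Assumption \ref{QanalyticARB} holds for $\widetilde{\mathcal{Q}}$, while Assumption \ref{SassARB} holds by hypothesis.

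Next I would carry out the identifications that let Theorem \ref{ARBequival} speak about $P$. By construction each local set $\widetilde{\mathcal{Q}}_{t+1}(\o^t)$ is a singleton, so \eqref{QstarARB} forces $\widetilde{\mathcal{Q}}^t=\{P^t\}$ for every $t$; consequently ``$\widetilde{\mathcal{Q}}^t$-full measure'' statements are exactly ``$P^t$-a.s.'' statements, and by Lemma \ref{exex}(1) the condition $NA(\widetilde{\mathcal{Q}}^T)$ coincides with $NA(P)$. Moreover the support $\widetilde D^{t+1}$ built from $\widetilde{\mathcal{Q}}_{t+1}$ via \eqref{dedD} equals $D_P^{t+1}$ (built from $p_{t+1}$ via \eqref{DefPDARB1}) for $P^t$-almost every $\o^t$, because the two kernels agree $P^t$-a.s. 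Applying Theorem \ref{ARBequival} to $\widetilde{\mathcal{Q}}$ then gives the equivalence of $NA(\widetilde{\mathcal{Q}}^T)$ with the geometric condition, i.e.\ with $0\in\mbox{Ri}(\mbox{Conv}(\widetilde D^{t+1}))(\o^t)$ on a $\widetilde{\mathcal{Q}}^t$-full measure set for every $0\leq t\leq T-1$. Translating through the identifications above yields precisely $NA(P)\iff 0\in\mbox{Ri}(\mbox{Conv}(D_P^{t+1}))(\cdot)$ $P^t$-a.s.\ for all $0\leq t\leq T-1$.

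The main difficulty is concentrated in the first step: the passage from universally measurable to Borel kernels, together with the verification that the resulting singleton-valued map has an analytic graph so that Assumption \ref{QanalyticARB} is restored. Once that is secured, the remainder is bookkeeping — matching the singleton-case objects ($\widetilde{\mathcal{Q}}^t=\{P^t\}$, $\widetilde D^{t+1}=D_P^{t+1}$ $P^t$-a.s.) to those appearing in Theorem \ref{ARBequival}.
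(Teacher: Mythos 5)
Your proposal is correct and follows essentially the route the paper itself indicates for this proposition: the paper refers to \citep[Theorem 3]{JS98} but explicitly notes that the result can be obtained by applying Theorem \ref{ARBequival} in the singleton setting $\mathcal{Q}^{T}=\{P_{1}\otimes p_{2}\otimes\cdots\otimes p_{T}\}$ once Borel-measurable versions of the universally measurable kernels $p_{t}$ have been built, which is exactly the argument you flesh out. Your only deviation --- producing the Borel kernels from \citep[Corollary 7.27.2 p139]{BS} together with essential uniqueness of disintegrations, rather than invoking \citep[Lemma 7.28 p174]{BS} directly --- is immaterial.
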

\begin{remark}
\label{singleP2}
Similarly, under the assumption of Proposition \ref{singleP},   one can show that the $NA(P)$ condition holds true  if and only if  the quantitative no-arbitrage holds true for  $\mathcal{Q}^{T}=\{P\}$ which is exactly  \citep[Proposition 3.3]{RS05}. In this case, we denote $\alpha_t$ in \eqref{valakitwo} by $\alpha^P_t$.
\end{remark}

\noindent We now establish   some tricky measurability properties.
\begin{proposition}
\label{alphatmesARB}
Assume that Assumptions  \ref{SassARB} and \ref{QanalyticARB} hold true. Under one of the no-arbitrage conditions (see Definitions \ref{NAQTARB}, \ref{NAGarb} and \ref{NAQarb}) one can choose a $ \mathcal{B}_{c}(\O^{t})$-measurable version of $\varepsilon_{t}$ (in \eqref{vamakiARB}) and $\beta_t$ (in \eqref{valakiARB}). \end{proposition}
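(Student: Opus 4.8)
The plan is to exhibit an explicit candidate for $\varepsilon_t$ as a distance-to-complement function and to read off its $\mathcal{B}_c(\O^t)$-measurability from the Projection Theorem. Fix $0 \le t \le T-1$. By Theorem \ref{ARBequival} I may work under the geometric no-arbitrage condition, so that on the $\mathcal{Q}^t$-full measure set $\Omega^t_{gNA}$ one has $0 \in \mbox{Ri}(\mbox{Conv}(D^{t+1}))(\omega^t)$, while by Remark \ref{ovect1} the set $V(\omega^t):=\mbox{Aff}(D^{t+1})(\omega^t)$ is a vector space. Writing $C(\omega^t):=\overline{\mbox{Conv}}(D^{t+1})(\omega^t)$, Lemma \ref{Dmeasurability} tells me that $V$ and $C$ are closed-valued and $\mathcal{B}_c(\O^t)$-measurable with graphs in $\mathcal{B}_c(\O^t)\otimes\mathcal{B}(\mathbb{R}^d)$. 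The natural candidate is the relative-interior radius
\[
\varepsilon_t^*(\omega^t):=\inf\{|y|:\ y\in V(\omega^t)\setminus C(\omega^t)\}
\]
(with $\inf\emptyset=+\infty$), i.e. the distance from $0$ to the relative boundary of $C$ inside $V$.

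The key step, and the one I expect to be the main obstacle, is the measurability of $\varepsilon_t^*$: the set $V\setminus C$ (a closed vector space minus a closed convex set) is in general neither open nor closed, so I cannot invoke a closed-valued measurable-selection result directly. Instead I would argue at the level of sublevel sets. For every $c>0$,
\[
\{\varepsilon_t^*<c\}=\mbox{Proj}_{\O^t}\,G_c,\qquad G_c:=\{(\omega^t,y):\ y\in V(\omega^t),\ y\notin C(\omega^t),\ |y|<c\}.
\]
The set $G_c$ is the intersection of $\mbox{graph}(V)$, the complement of $\mbox{graph}(C)$, and $\O^t\times B(0,c)$; by Lemma \ref{Dmeasurability} each lies in $\mathcal{B}_c(\O^t)\otimes\mathcal{B}(\mathbb{R}^d)$, hence so does $G_c$. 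Since $\mathcal{B}_c(\O^t)$ is already universally complete, the Projection Theorem recalled in Section \ref{Unc} yields $\mbox{Proj}_{\O^t}\,G_c\in\mathcal{B}_c(\O^t)$. As this holds for all $c$, $\varepsilon_t^*$ is $\mathcal{B}_c(\O^t)$-measurable.

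It then remains to check that $\varepsilon_t^*$ does the job and to normalise it. On $\Omega^t_{gNA}$, since $0\in\mbox{Ri}(C)$ (using $\mbox{Ri}(\overline{\mbox{Conv}}(D^{t+1}))=\mbox{Ri}(\mbox{Conv}(D^{t+1}))$, a standard fact from \citep[Section 6]{cvx}) and $\mbox{Ri}(C)$ is relatively open in $V$, there is $\delta>0$ with $B(0,\delta)\cap V\subset C$, whence $\varepsilon_t^*\geq\delta>0$. Conversely, any $y$ in the open ball $B(0,\varepsilon_t^*)\cap V$ satisfies $|y|<\varepsilon_t^*$, so $y\in C$, and in fact $y\in\mbox{Ri}(C)=\mbox{Ri}(\mbox{Conv}(D^{t+1}))\subset\mbox{Conv}(D^{t+1})$; this is precisely \eqref{vamakiARB} for the radius $\varepsilon_t^*$. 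Finally I set $\varepsilon_t:=\min(\varepsilon_t^*,1)$, which remains $\mathcal{B}_c(\O^t)$-measurable, still satisfies \eqref{vamakiARB} (shrinking the radius only helps), is strictly positive on $\Omega^t_{gNA}$, and guarantees $\beta_t:=\varepsilon_t/2\in(0,1)$. By the last assertion of Theorem \ref{ARBequival} this choice $\beta_t=\varepsilon_t/2$ is admissible in \eqref{valakiARB}, and $\beta_t$ is $\mathcal{B}_c(\O^t)$-measurable as a scalar multiple of $\varepsilon_t$; off $\Omega^t_{gNA}$ the values are irrelevant and may be set to $1$.
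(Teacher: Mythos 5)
Your proof is correct, but it follows a genuinely different route from the paper's. The paper never constructs an explicit radius: it introduces the random set $\Gamma^{t+1}(\o^{t})$ of \emph{rational} admissible radii $\varepsilon$, writes $\mbox{graph}\,\Gamma^{t+1}=\bigcup_{\varepsilon\in\mathbb{Q},\,\varepsilon>0}A_{\varepsilon}\times\{\varepsilon\}$, proves $A_{\varepsilon}\in\mathcal{B}_{c}(\O^{t})$ via the Carath\'eodory-type map $h(x,\o^{t})=d\left(x,B(0,\varepsilon)\cap\mbox{Aff}\left(D^{t+1}\right)(\o^{t})\right)-d\left(x,\overline{\mbox{Conv}}\left(D^{t+1}\right)(\o^{t})\right)$ (measurable in $\o^{t}$, continuous in $x$) and a countable intersection over $q\in\mathbb{Q}^{d}$, and then extracts $\varepsilon_{t}$ by Aumann's selection theorem; the passage between $\mbox{Conv}$ and $\overline{\mbox{Conv}}$ is the paper's Lemma \ref{util}, which you essentially re-derive inline through $\mbox{Ri}\left(\overline{\mbox{Conv}}(D^{t+1})\right)=\mbox{Ri}\left(\mbox{Conv}(D^{t+1})\right)$. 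You instead exhibit the canonical \emph{maximal} radius $\varepsilon_{t}^{*}=\inf\{|y|:\ y\in V(\o^{t})\setminus C(\o^{t})\}$ and obtain its measurability from sublevel sets, $\{\varepsilon_{t}^{*}<c\}=\mbox{Proj}_{\O^{t}}G_{c}$ with $G_{c}\in\mathcal{B}_{c}(\O^{t})\otimes\mathcal{B}(\mathbb{R}^{d})$. This is legitimate: the product sigma-algebra, unlike the collection of analytic sets, is stable under complementation, so removing $\mbox{graph}(C)$ is harmless, and since $\mathcal{B}_{c}(\O^{t})$ is its own universal completion, the Projection Theorem lands back in $\mathcal{B}_{c}(\O^{t})$ — exactly the pattern the paper itself uses in the proof of Proposition \ref{finally}. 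What each approach buys: yours produces an explicit, canonical version of $\varepsilon_{t}$ and dispenses with Aumann's theorem (only the Projection Theorem is needed), while the paper's rational-radius scheme stays with countable unions and intersections and delegates the existential step to a standard selection result, at the cost of yielding only \emph{some} measurable radius. Both arguments rest on the graph-measurability statements of Lemma \ref{Dmeasurability} and on Theorem \ref{ARBequival} for $\beta_{t}=\varepsilon_{t}/2$ and for the identification of the relevant full-measure set.

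Two cosmetic points. First, when you reset $\varepsilon_{t}:=1$ off $\Omega^{t}_{gNA}$ you implicitly use $\Omega^{t}_{gNA}\in\mathcal{B}_{c}(\O^{t})$; this holds by Theorems \ref{bnlocal} and \ref{ARBequival} (the paper performs the same modification), but it should be said. Second, the assertion that $y\in B(0,\varepsilon_{t}^{*})\cap V(\o^{t})$ lies in $\mbox{Ri}(C)(\o^{t})$ deserves one line — since $|y|<\varepsilon_{t}^{*}$ there is $\eta>0$ with $B(y,\eta)\subset B(0,\varepsilon_{t}^{*})$, hence $B(y,\eta)\cap V(\o^{t})\subset C(\o^{t})$, and $\mbox{Aff}(C)(\o^{t})=V(\o^{t})$ — or you may simply invoke Lemma \ref{util} applied to the convex set $\mbox{Conv}\left(D^{t+1}\right)(\o^{t})$, whose closure is $C(\o^{t})$ and whose affine hull is $V(\o^{t})$.
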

\begin{proof}
See Section \ref{proofmesepsi}. 
\end{proof}
\begin{remark}
\label{alphabeta}
The measurability of $\kappa_{t}$ cannot be directly inferred from the one of $\varepsilon_{t}$ but will be obtained  in  Proposition \ref{finally} as a consequence of Theorem \ref{TheoS}.  The measurability of $\kappa_{t}$  is  useful to solve the problem of multi-priors optimal investment for unbounded utility function defined on the whole real-line since the bounds on the optimal strategies depends on $\kappa_{t}$ 
 see for instance \citep[(17)]{RS05} in a non-robust setting and  \citep[Proof of Lemma 3.3]{RaMe18} in the robust context. 
\end{remark}

The next theorem is crucial. It is  a  first step towards Theorem \ref{TheoS}:  It gives the existence of the measure $P^*$  which allows to  build recursively the set 
$\mathcal{P}^{T}$ (see \eqref{PstarARB}).  But   it is also of own interest since it gives the equivalence between $NA(\mathcal{Q}^{T})$ and  a stronger form of $wNA(\mathcal{Q}^{T})$.

\begin{theorem}
\label{PPstar}
Assume that Assumptions  \ref{SassARB} and \ref{QanalyticARB} hold true. The $NA(\mathcal{Q}^{T})$ condition holds true if and only if there exists
 some $P^* \in \mathcal{Q}^{T}$ such that    $\mbox{Aff}\left(D_{P^*}^{t+1}\right)(\o^{t})=\mbox{Aff} \left(D^{t+1}\right)(\o^{t})$ and $0 \in \mbox{Ri} \left({\mbox{Conv}}(D_{P^*}^{t+1})\right)(\o^{t})$  for all $0 \leq t \leq T-1$, $\o^t \in \O_{NA}^{t}$ \footnote{The set $\O^{t}_{NA}$ was introduced in  Theorem \ref{bnlocal}, see also \eqref{OTNA}.}. 
\end{theorem}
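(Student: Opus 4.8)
The plan is to treat the two implications separately and to route both through the local geometric characterisation of Theorem \ref{ARBequival}, together with the elementary identity $D^{t+1}(\o^{t})=\overline{\bigcup_{p\in\mathcal{Q}_{t+1}(\o^{t})}E^{t+1}(\o^{t},p)}$ coming from Definition \ref{DefDARB}. For the easy implication, assume such a $P^{*}$ exists and fix $0\le t\le T-1$ and $\o^{t}\in\O^{t}_{NA}$. By \eqref{DvsE} we have $D_{P^{*}}^{t+1}(\o^{t})\subset D^{t+1}(\o^{t})$, hence $\mbox{Conv}(D_{P^{*}}^{t+1})(\o^{t})\subset\mbox{Conv}(D^{t+1})(\o^{t})$, and by hypothesis these two convex sets have the same affine hull. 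Since $0\in\mbox{Ri}\left(\mbox{Conv}(D_{P^{*}}^{t+1})\right)(\o^{t})$, some relative ball around $0$ in $\mbox{Aff}(D^{t+1})(\o^{t})$ lies in $\mbox{Conv}(D_{P^{*}}^{t+1})(\o^{t})\subset\mbox{Conv}(D^{t+1})(\o^{t})$, so $0\in\mbox{Ri}\left(\mbox{Conv}(D^{t+1})\right)(\o^{t})$. Thus the geometric no-arbitrage condition of Definition \ref{NAGarb} holds with $\O^{t}_{gNA}=\O^{t}_{NA}$, and Theorem \ref{ARBequival} yields $NA(\mathcal{Q}^{T})$.

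For the converse, assume $NA(\mathcal{Q}^{T})$. By Theorem \ref{ARBequival} we have $0\in\mbox{Ri}\left(\mbox{Conv}(D^{t+1})\right)(\o^{t})$ for every $\o^{t}\in\O^{t}_{NA}$, and by Remark \ref{ovect1} the affine hull $V(\o^{t}):=\mbox{Aff}(D^{t+1})(\o^{t})$ is a linear subspace. The target is, for each $t$, a universally measurable kernel $\o^{t}\mapsto p^{*}_{t+1}(\cdot,\o^{t})\in\mathcal{Q}_{t+1}(\o^{t})$ whose support $E^{t+1}(\o^{t},p^{*}_{t+1}(\cdot,\o^{t}))$ already has affine hull $V(\o^{t})$ and contains $0$ in the relative interior of its convex hull. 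Pointwise this is finite-dimensional convex geometry: the membership $0\in\mbox{Ri}\left(\mbox{Conv}(D^{t+1})\right)(\o^{t})$ produces finitely many points $f_{1},\dots,f_{m}\in D^{t+1}(\o^{t})$ with $0\in\mbox{Ri}(\mbox{Conv}(f_{1},\dots,f_{m}))$ and $\mbox{Aff}(f_{1},\dots,f_{m})=V(\o^{t})$. Since $D^{t+1}(\o^{t})$ is the closure of $\bigcup_{p}E^{t+1}(\o^{t},p)$ and is contained in $V(\o^{t})$, each $f_{j}$ is approximated within $V(\o^{t})$ by some $f_{j}'\in E^{t+1}(\o^{t},p_{j})$ with $p_{j}\in\mathcal{Q}_{t+1}(\o^{t})$; as affine spanning and the relative-interior membership of $0$ are open conditions, a good enough approximation preserves both. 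Convexity of $\mathcal{Q}_{t+1}(\o^{t})$ then gives $p:=\frac1m\sum_{j}p_{j}\in\mathcal{Q}_{t+1}(\o^{t})$ with $E^{t+1}(\o^{t},p)=\overline{\bigcup_{j}E^{t+1}(\o^{t},p_{j})}\supset\{f_{1}',\dots,f_{m}'\}$, and since this support still lies in $V(\o^{t})$, it has affine hull $V(\o^{t})$ and $0$ in the relative interior of its convex hull.

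The hard part will be to carry out this construction measurably in $\o^{t}$. Consider
$$\Gamma_{t}:=\Bigl\{(\o^{t},p):\ p\in\mathcal{Q}_{t+1}(\o^{t}),\ \mbox{Aff}\bigl(E^{t+1}(\o^{t},p)\bigr)=\mbox{Aff}(D^{t+1})(\o^{t}),\ 0\in\mbox{Ri}\bigl(\mbox{Conv}(E^{t+1}(\o^{t},p))\bigr)\Bigr\}.$$
The pointwise step shows that the $\o^{t}$-section of $\Gamma_{t}$ is non-empty for every $\o^{t}\in\O^{t}_{NA}$, so the whole difficulty is the graph-measurability of $\Gamma_{t}$. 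The first constraint has analytic graph by Assumption \ref{QanalyticARB} and \eqref{analyticsetARB}; the second, using $E^{t+1}\subset D^{t+1}$, is the equality of the dimensions of the measurable random affine sets $\mbox{Aff}(E^{t+1})$ and $\mbox{Aff}(D^{t+1})$ of Lemma \ref{Dmeasurability}; the third, membership of $0$ in a relative interior, is the delicate one and must be recast measurably, for instance through the equivalent quantitative form \eqref{valakiARB} or a countable family of half-space tests on $\mbox{Conv}(E^{t+1})$, once more relying on Lemma \ref{Dmeasurability}. Granting $\Gamma_{t}\in\mathcal{B}_{c}(\O^{t})\otimes\mathcal{B}(\mathfrak{P}(\O_{t+1}))$, the measurable selection (Aumann) theorem recalled in Section \ref{Unc} provides a $\mathcal{B}_{c}(\O^{t})$-measurable selector $p^{*}_{t+1}(\cdot,\o^{t})\in\mathcal{Q}_{t+1}(\o^{t})$ on $\O^{t}_{NA}$; off $\O^{t}_{NA}$ we put $p^{*}_{t+1}:=q_{t+1}$, so that $p^{*}_{t+1}\in\mathcal{SK}_{t+1}$ with $p^{*}_{t+1}(\cdot,\o^{t})\in\mathcal{Q}_{t+1}(\o^{t})$ for every $\o^{t}$. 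Choosing any $P_{1}^{*}\in\mathcal{Q}_{1}$ and setting $P^{*}:=P_{1}^{*}\otimes p^{*}_{2}\otimes\cdots\otimes p^{*}_{T}$ gives $P^{*}\in\mathcal{Q}^{T}$ by \eqref{QstarARB}, and $D_{P^{*}}^{t+1}=E^{t+1}(\cdot,p^{*}_{t+1})$ then enjoys the required properties on $\O^{t}_{NA}$.
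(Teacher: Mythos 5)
Your reverse implication is the paper's argument essentially verbatim, and your pointwise construction for the direct implication is correct and at that spot genuinely more elementary than the paper: where the paper simply quotes Proposition \ref{bay} (i.e.\ \citep[Lemma 2.2]{Bay17}) for the non-emptiness of the sections of $\Gamma_{t}$, you re-derive it from the identity $D^{t+1}(\o^{t})=\overline{\bigcup_{p}E^{t+1}(\o^{t},p)}$ (which is valid, being essentially Lemma \ref{supportcar}), a finite-point reduction, the openness of affine spanning and of relative-interior membership under small perturbations inside $\mbox{Aff}\left(D^{t+1}\right)(\o^{t})$, and the convexity of $\mathcal{Q}_{t+1}(\o^{t})$ to mix the finitely many priors. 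Your $\Gamma_{t}$ is exactly the paper's $\mathcal{E}_{t+1}$, and the overall architecture (selector $p^{*}_{t+1}$, product measure, conclusion via \eqref{DvsE}) is the paper's.

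The genuine gap is precisely where you write ``Granting $\Gamma_{t}\in\mathcal{B}_{c}(\O^{t})\otimes\mathcal{B}(\mathfrak{P}(\O_{t+1}))$'': that premise is not only unproved, it is unattainable in general, and with it your selection step fails. Under Assumption \ref{QanalyticARB} the graph of $\mathcal{Q}_{t+1}$ is only analytic; an analytic set can have non-Borel sections, whereas every set in $\mathcal{B}_{c}(\O^{t})\otimes\mathcal{B}(\mathfrak{P}(\O_{t+1}))$ has Borel sections, so the first constraint alone can already leave that product sigma-algebra. After intersecting with the affine-hull constraint (itself only in $\mathcal{B}_{c}(\O^{t})\otimes\mathcal{B}(\mathfrak{P}(\O_{t+1}))$, since $\mbox{Aff}\left(D^{t+1}\right)$ is merely $\mathcal{B}_{c}(\O^{t})$-measurable), the graph lands in the Suslin class $\mathfrak{A}\left(\mathcal{B}_{c}(\O^{t})\otimes\mathcal{B}(\mathfrak{P}(\O_{t+1}))\right)$ --- this is the content of Lemma \ref{easy} --- where the Aumann theorem recalled in Section \ref{Unc} does not apply; the paper must instead invoke the Leese-type selection of \citep[Lemma 4.11]{BN}. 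Likewise the relative-interior constraint is not dispatched by ``countable half-space tests'': the paper needs \citep[Lemma 5.6]{Artstein} for the measurability of $\mbox{Ri}\left(\overline{\mbox{Conv}}\left(E^{t+1}\right)\right)$ together with \citep[Theorem 6.3 p46]{cvx} to pass between $\mbox{Conv}$ and $\overline{\mbox{Conv}}$, and your alternative route through \eqref{valakiARB} runs into the measurability obstruction flagged in the footnote to Step 4 of Proposition \ref{Arbeqone} (the constant $\kappa$ is an inf--sup with no obvious measurability; moreover the natural recast via projections only yields joint universal measurability, not membership in $\mathcal{B}_{c}\otimes\mathcal{B}$). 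Finally, a small but real slip: ``choosing any $P_{1}^{*}\in\mathcal{Q}_{1}$'' breaks the statement at $t=0$, which the theorem includes; since $\O^{0}$ is a singleton, $P_{1}^{*}$ must (and trivially can) be chosen by your one-period argument applied to $\mathcal{Q}_{1}$.
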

\begin{proof}
See Section \ref{setwoARB2}. 
\end{proof}
\begin{remark}
\label{erhan}
Theorem \ref{PPstar} was proved in a one period setting in \citep[Lemma 2.2]{Bay17}. 
\end{remark}
\begin{remark}
\label{passeul}
The probability measure $P^*$ of Theorem \ref{PPstar} is not unique. In fact, under $NA(\mathcal{Q}^{T})$, all 
 $P \in \mathcal{P}^{T}$ satisfy   $\mbox{Aff}\left(D_{P}^{t+1}\right)(\o^{t})=\mbox{Aff} \left(D^{t+1}\right)(\o^{t})$ and $0 \in \mbox{Ri} \left({\mbox{Conv}}(D_{P}^{t+1})\right)(\o^{t})$  for all $0 \leq t \leq T-1$, $\o^t \in \O_{NA}^{t},$ see proof of Theorem \ref{TheoS} step 2 $ iii).$
 \end{remark}
 
\begin{remark}
\label{PstarRem}
The main (and difficult) point  in Theorem \ref{PPstar} is that $P^* \in \mathcal{Q}^{T}$.  Thus any $\mathcal{Q}^{t}$-null set is also a $P^*$-null set and in particular $\O^{t}_{NA}$ is of $P^*$-full measure (see Theorem  \ref{bnlocal}). So   $0 \in \mbox{Ri} \left({\mbox{Conv}}(D_{P^*}^{t+1})\right)(\o^{t})$ for $\o^{t} \in \O^{t}_{NA}$ and  the $NA(P^*)$ condition holds true (see  Proposition \ref{singleP}).
We have actually more since  $\O^{t}_{NA}$ is of $\mathcal{Q}^{t}$-full measure.
We will provide in Section \ref{ExAp} explicit form of $P^*$. 
\end{remark}

\begin{remark}
\label{OW}
Theorem \ref{PPstar} is related and complements \citep[Theorem 3.1]{OW18}. Indeed, in  both cases the main issue is to find some $p_{t+1}^*(\cdot,\o^{t}) \in \mathcal{Q}_{t+1}(\o^{t})$ such that $0 \in \mbox{Ri} \left({\mbox{Conv}}(D_{P^*}^{t+1})\right)(\o^{t}) \subset  \mbox{Ri} \left({\mbox{Conv}}(D^{t+1})\right)(\o^{t})$ (recall  \eqref{DvsE}). This is used in  \citep{OW18} to make the link with the quasi-sure setting and in our case to establish Theorem \ref{TheoS}.
 \end{remark}
 
 \begin{remark}
\label{andrea}
\citep[Assumption 2.1]{RaMe18} asserts that there exists at least one arbitrage free model (in the uni-prior sense) and that for this model the affine space generated by the conditional support always equals $\mathbb{R}^{d}$. Those are the conditions verified by $P^*$ in Theorem \ref{PPstar} and thus \citep[ Theorem 3.7]{RaMe18} which shows the existence in the problem of maximisation of expected   utility for bounded function defined on the whole real line works under $NA(\mathcal{Q}^{T}).$  

 \end{remark}
 \begin{example}
 \label{EXEX} 
 The probability measure $P^* \in \mathcal{Q}^{T}$ of Theorem \ref{PPstar} provides a kind of stronger  $NA(P^*)$. 
The counter example of the last item in Lemma \ref{exex} illustrates why  the condition $\mbox{Aff} \left(D_{P^*}^{t+1}\right)(\cdot)=\mbox{Aff} \left(D^{t+1}\right)(\cdot)$ $\mathcal{Q}^{t}$-q.s. is needed in  Theorem \ref{PPstar}.  However this is not enough to obtain equivalence with the $NA(\mathcal{Q}^{T})$ condition and the following counterexample illustrates why $0 \in \mbox{Ri} \left({\mbox{Conv}}(D_{P^*}^{t+1})\right)(\cdot )$ $\mathcal{Q}^{t}$-q.s. is needed and why $0 \in \mbox{Ri} \left({\mbox{Conv}}(D_{P^*}^{t+1})\right)(\cdot )$ ${P}_{t}^*$-p.s. is not enough.\\
Let $T=2$, $d=1$, $\O_{1}:=\O_{2}:=\{-1,0,1\}$, $S_0:=2$, $S_{1}(\o_{1}):=2+\omega_{1}$, $S_{2}(\o_{1},\o_{2}):=2+\omega_{1}+\o_{2}$. Let $P_{na}:=\frac{1}{2}(\delta_{-1}+\delta_{1}) $,  $P_0:=\delta_{0}$ and  $P_1:=\delta_{1}$ be three probability measures on $\mathfrak{P}(\O_{1})$.  Set $\mathcal{Q}_{1}:=\mbox{Conv}\left(P_0,P_{na}\right)$ and  define  $\mathcal{Q}_{2}(\cdot)$ as follow:  $\mathcal{Q}_{2}(\pm 1)=\{P_{na}\}$ and   $\mathcal{Q}_{2}(0)=\{P_1\}$. This is illustrated in Figure  2.
   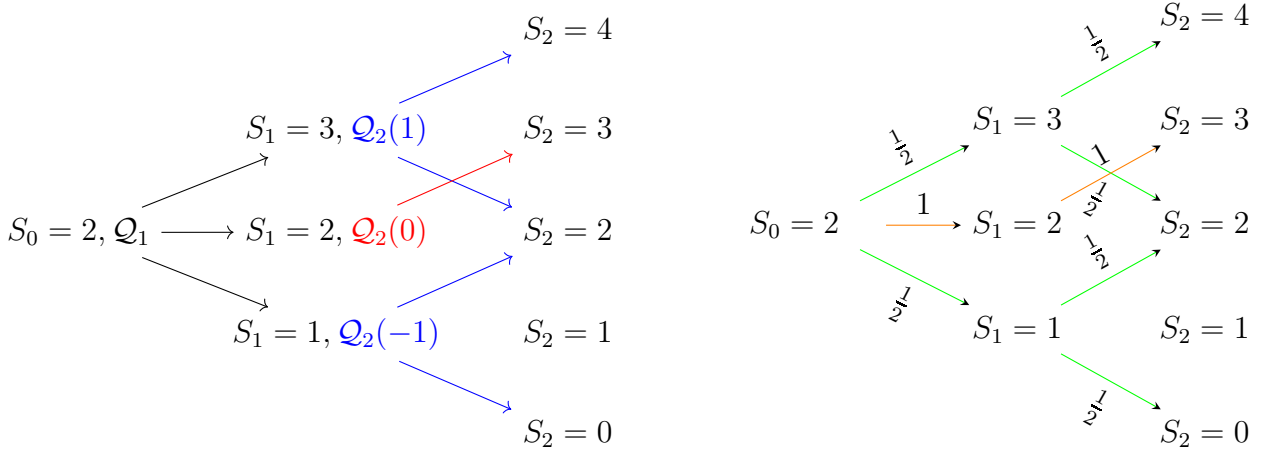
\begin{figure}
   \label{F1}
   \begin{center}
     \begin{tikzpicture}[scale=0.33]
    \matrix (tree) [
      matrix of nodes,
      minimum size=.4cm,
      column sep=.8cm,
      row sep=.65cm,
    ]
   {		
						&   					& $S_{2}=4$ \\
  						& $S_{1}=3,   \textcolor[rgb]{0.00,0.00,1.00}{\mathcal{Q}_{2}(1)}$ 			&  $S_{2}=3$ \\
      		$S_0=2,\mathcal{Q}_{1}$ 		& $S_{1}={2}, \textcolor[rgb]{1.00,0.00,0.00}{\mathcal{Q}_{2}(0)}$			& $S_{2}=2$ \\
  						& $S_{1}=1, \textcolor[rgb]{0.00,0.00,1.00}{\mathcal{Q}_{2}(-1)}$ 	& $S_{2}=1$ \\
     						&   					& $S_{2}=0$ \\
    };
    \draw[->,draw=black] (tree-3-1) -- (tree-2-2) ;
    \draw[->,draw=black]  (tree-3-1) -- (tree-3-2) ;
    \draw[->,draw=black]  (tree-3-1) -- (tree-4-2)  ;
    
   \draw[->,draw=blue]  (tree-2-2) -- (tree-1-3)  ;
  \draw[->,draw=blue] (tree-2-2) -- (tree-3-3)  ;
    
  \draw[->,draw=red]  (tree-3-2) -- (tree-2-3)  ;

  \draw[->,draw=blue]  (tree-4-2) -- (tree-3-3)  ;
  \draw[->,draw=blue]  (tree-4-2) -- (tree-5-3) ;

      \end{tikzpicture}
      \hspace{1cm}
           \begin{tikzpicture}[>=stealth,sloped]
    \matrix (tree) [%
      matrix of nodes,
      minimum size=.5cm,
      column sep=1cm,
      row sep=.75cm,
    ]
   {  
      &   					& $S_{2}=4$ \\
   				& $S_{1}=3$ 	&  $S_{2}=3$ \\
    $S_0=2\;\;\;\;$ 	& $S_{1}={2}$			& $S_{2}=2$ \\
      				& $S_{1}=1$ 	& $S_{2}=1$ \\
       				&   					& $S_{2}=0$ \\
    };
    \draw[->,draw=green] (tree-3-1) -- (tree-2-2) node [midway,above] {$\frac{1}{2}$};
    \draw[->,draw=orange]  (tree-3-1) -- (tree-3-2) node [midway,above] {$1$};
    \draw[->,draw=green]  (tree-3-1) -- (tree-4-2) node [midway,below] {$\frac{1}{2}$};
    
   \draw[->,draw=green]  (tree-2-2) -- (tree-1-3) node [midway,above] {$\frac{1}{2}$};

  \draw[->,draw=green] (tree-2-2) -- (tree-3-3) node [midway,below] {$\frac{1}{2}$};
 
  \draw[->, draw=orange] (tree-3-2) -- (tree-2-3) node [midway,above] {$1$};
    
  \draw[->,draw=green]  (tree-4-2) -- (tree-3-3) node [midway,above] {$ \frac{1}{2}$};

  \draw[->,draw=green]  (tree-4-2) -- (tree-5-3) node [midway,below] {$\frac{1}{2}$};

      \end{tikzpicture}
      \end{center}
  \caption{Left-hand side: The model. Right-hand side: In green $P^*$ and in orange $\overline{Q}$.}
  \end{figure}
It is clear that Assumptions  \ref{SassARB} and \ref{QanalyticARB} hold true. \\
Let $p_2(\cdot) \in \mathcal{Q}_{2}(\cdot)$ and set $P^*:=P_{na}\otimes p_{2} \in \mathcal{Q}^{2}$ 
  (see Figure 2).
It is immediate  that the $NA(P^{*})$ and thus the $wNA(\mathcal{Q}^2)$ conditions hold true. Furthermore $D_{P^*}^{2}(\pm 1)=D^{2}(\pm 1)=\{-1,1\}$ and $D_{P^*}^{2}(0)=D^{2}(0)=\{1\}$. Thus  for all $\o_{1}$, $\mbox{Aff}\left(D_{P^*}^{2}\right)(\o_{1})=\mbox{Aff}\left(D^{2}\right)(\o_{1})=\mathbb{R}$. As $0 \in \mbox{Ri} \left(\mbox{Conv} \left(D_{P^*}^{2}\right)\right)(\pm 1)$ and  $P_{1}^*\left(\{\pm 1\}\right)=1$,  $0 \in \mbox{Ri} \left(\mbox{Conv} \left(D_{P^*}^{2}\right)\right)(\cdot)$ $P_{1}^{*}$-a.s. Now let $\bar{Q}:=P_{0} \otimes p_{2} \in \mathcal{Q}^{2}$ (see Figure 2). Then  $\bar{Q}^{1}(\{0\})=1$ and  $0 \notin \mbox{Ri} \left(\mbox{Conv} \left(D_{P^*}^{2}\right)\right)(0)$ implies that $0 \in \mbox{Ri} \left(\mbox{Conv} \left(D_{P^*}^{2}\right)\right)(\cdot)$ $\bar{Q}^{1}$-p.s. and thus $0 \in \mbox{Ri} \left({\mbox{Conv}}(D_{P^*}^{2})\right)(\cdot )$ $\mathcal{Q}^{1}$-q.s. are not verified. \\ 
Let us check that the $NA(\mathcal{Q}^{2})$ condition does not hold true. Choose $\phi  \in \Phi$ such that  $\phi_{1}=0$ and $\phi_{2}(\o_{1})=1_{0}(\o_{1})$ and use again $\bar Q=P_{0} \otimes p_{2} \in \mathcal{Q}^2$. Then  $V_{2}^{0,\phi} \geq 0$ $\mathcal{Q}^2$-q.s. and $\bar Q \left(\{V_{2}^{0,\phi}>0\}\right)=\delta_{1}(\{\o_{2}>0\})=1$.\\ 
Now replace $\mathcal{Q}_{2}$ by $\widetilde{\mathcal{Q}}_{2}(\cdot):=\mbox{Conv}\left(P_{na},P_{1}\right)$ while keeping $\widetilde{\mathcal{Q}}_{1}=\mathcal{Q}_{1}$ as before and set  $\widetilde{P}^{*}:=P_{na}\otimes \widetilde{p}_{2}$, where  $\widetilde{p}_{2}(\cdot,\o_{1}):=P_{na}(\cdot)$ for all $\o_{1}$. Then $D_{\widetilde{P}^{*}}^{2}(\o_1)=\{-1,1\}$,  $\mbox{Aff}(D_{\widetilde{P}^*}^{2})(\o_{1})=\mbox{Aff}\left(D^{2}\right)(\o_{1})=\mathbb{R}$ and $0 \in \mbox{Ri} (\mbox{Conv} (D_{\widetilde{P}^{*}}^{2}))(\o_{1})$ for all $\o_{1}$.   One can directly check that the $NA(\widetilde{\mathcal{Q}}^{2})$ condition holds true. \\
Finally,   one may  build $\widetilde{\mathcal{P}}^{2}$ using  \eqref{PstarARB}  and $\widetilde{P}^{*}$.  It is clear that $\widetilde{\mathcal{P}}^{2}$  is strictly included in $\widetilde{\mathcal{Q}}^{2}$ since it   does not contain $\{P_0 \otimes q_{2}, q_{2}(\cdot,\o_{2}) \in \widetilde{\mathcal{Q}}_{2}(\o_2) \}$.
\end{example}
\noindent The following result provides an answer to the measurability issue raised in Remark \ref{alphabeta} and also provides a commun prior for all strategies.
\begin{proposition}
\label{finally}
Assume that Assumptions  \ref{SassARB} and \ref{QanalyticARB} as well as the $NA(\mathcal{Q}^{T})$ condition hold true.
Then for all $0\leq t\leq T-1$ 
there exists some $\mathcal{B}_{c}(\O^{t})$-measurable random variables $\beta_{t}(\cdot),\kappa_{t}(\cdot) \in (0,1)$ such that for all $\omega^{t} \in \Omega^{t}_{NA}$ and   $h \in  \mbox{Aff}\left({D}^{t+1}\right)(\omega^{t})$ , $h \neq 0$  
\begin{align}
\label{valakiARBtop}
p^*_{t+1}\left({h}\Delta S_{t+1}(\omega^{t},\cdot) < -\beta_{t}(\omega^{t}){|h|}, \omega^t\right) \geq \kappa_{t}(\omega^{t}),
\end{align}
where   $p_{t+1}^{*}(\cdot,\o^{t})$ is defined in  Theorem \ref{PPstar} with the fix disintegration $P^*:=P_{1}^* \otimes p_{2}^{*} \otimes \cdots \otimes p_{T}^{*}$. 
\end{proposition}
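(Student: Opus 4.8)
The plan is to use the single prior $P^*$ provided by Theorem \ref{PPstar} as the common prior, and to exploit the fact that under $P^*$ the affine hulls coincide, $\mbox{Aff}(D_{P^*}^{t+1})(\o^{t})=\mbox{Aff}(D^{t+1})(\o^{t})$, together with $0\in\mbox{Ri}(\mbox{Conv}(D_{P^*}^{t+1}))(\o^{t})$, on the $\mathcal{Q}^{t}$-full measure set $\O^{t}_{NA}$. Since $P^*\in\mathcal{Q}^{T}$, every $\mathcal{Q}^{t}$-null set is $P^{*,t}$-null, so $\O^{t}_{NA}$ is also of $P^{*,t}$-full measure and all this geometric information is available $P^{*,t}$-almost surely (recall Remark \ref{PstarRem} and \eqref{DvsE}). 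First I would fix $\b_{t}:=\min(\e_{t}^{*}/2,1/2)$, where $\e_{t}^{*}(\o^{t})>0$ is the geometric radius attached to the closed, convex-valued random set $\mbox{Conv}(D_{P^*}^{t+1})$, that is $B(0,\e_{t}^{*}(\o^{t}))\cap\mbox{Aff}(D_{P^*}^{t+1})(\o^{t})\subset\mbox{Conv}(D_{P^*}^{t+1})(\o^{t})$ as in \eqref{vamakiARB}. As $D_{P^*}^{t+1}$ is $\mathcal{B}_{c}(\O^{t})$-measurable, closed valued, with graph in $\mathcal{B}_{c}(\O^{t})\otimes\mathcal{B}(\mathbb{R}^{d})$ by Lemma \ref{Dmeasurability}, the very argument of Proposition \ref{alphatmesARB} provides a $\mathcal{B}_{c}(\O^{t})$-measurable version of $\e_{t}^{*}$, hence of $\b_{t}\in(0,1)$.

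Then I would define
\begin{align}
\k_{t}(\o^{t}):=\min\Bigl(\tfrac12,\;\inf\bigl\{ p^*_{t+1}\bigl(h\Delta S_{t+1}(\o^{t},\cdot)<-\b_{t}(\o^{t}),\o^{t}\bigr):\; h\in\mbox{Aff}(D^{t+1})(\o^{t}),\,|h|=1\bigr\}\Bigr).
\end{align}
Homogeneity of the defining event in $h$ shows that a bound valid for unit vectors yields \eqref{valakiARBtop} for every $h\neq0$, so it remains to check that $\k_{t}>0$ on $\O^{t}_{NA}$ and that $\k_{t}$ is $\mathcal{B}_{c}(\O^{t})$-measurable. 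For positivity, fix a unit $h\in\mbox{Aff}(D_{P^*}^{t+1})(\o^{t})$; since $\b_{t}\le\e_{t}^{*}/2$, the point $-(\b_{t}+\d)h$ still lies in $B(0,\e_{t}^{*}(\o^t))\cap\mbox{Aff}(D_{P^*}^{t+1})(\o^t)\subset\mbox{Conv}(D_{P^*}^{t+1})(\o^{t})$ for $0<\d<\b_{t}$, and writing this point as a finite convex combination of elements of $D_{P^*}^{t+1}(\o^{t})$ forces some $y\in D_{P^*}^{t+1}(\o^{t})$ to satisfy $hy<-\b_{t}(\o^t)$. Because $D_{P^*}^{t+1}(\o^{t})$ is exactly the $p^*_{t+1}(\cdot,\o^t)$-support, the open halfspace $\{y:hy<-\b_{t}(\o^t)\}$ then carries positive $p^*_{t+1}$-mass. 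Finally $h\mapsto p^*_{t+1}(hy<-\b_{t}(\o^t),\o^t)$ is lower semicontinuous (the halfspace is open and any of its points lies in the corresponding halfspace for all $h'$ close enough to $h$) and the unit sphere of $\mbox{Aff}(D^{t+1})(\o^{t})$ is compact, so the infimum is attained and strictly positive.

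The hard part will be the $\mathcal{B}_{c}(\O^{t})$-measurability of $\k_{t}$, and this is where $P^*\in\mathcal{Q}^{T}$ (the genuinely delicate conclusion of Theorem \ref{PPstar}) is essential: it guarantees that $p^*_{t+1}\in\mathcal{S}K_{t+1}$ is a bona fide universally-measurable stochastic kernel. Arguing as in the proof of Lemma \ref{Dmeasurability} (via \citep[Proposition 7.29 p144]{BS}), the map $(\o^{t},h)\mapsto p^*_{t+1}(h\Delta S_{t+1}(\o^{t},\cdot)<-\b_{t}(\o^{t}),\o^{t})$ is jointly $\mathcal{B}_{c}(\O^{t})\otimes\mathcal{B}(\mathbb{R}^{d})$-measurable (approximating the open halfspace from inside by closed halfspaces and using the already measurable $\b_{t}$). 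Since $\mbox{Aff}(D^{t+1})$ has $\mathcal{B}_{c}(\O^{t})$-measurable graph by Lemma \ref{Dmeasurability}, the random set $\{(\o^{t},h):h\in\mbox{Aff}(D^{t+1})(\o^{t}),\,|h|=1\}$ belongs to $\mathcal{B}_{c}(\O^{t})\otimes\mathcal{B}(\mathbb{R}^{d})$, and $\k_{t}$, being an infimum of a jointly measurable function over this random set, is $\mathcal{B}_{c}(\O^{t})$-measurable by the Projection Theorem. Together with the two previous paragraphs this establishes \eqref{valakiARBtop} with measurable $\b_{t},\k_{t}\in(0,1)$.
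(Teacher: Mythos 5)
Your proposal is correct, and while it rests on the same pillars as the paper's proof --- the measure $P^*$ of Theorem \ref{PPstar}, the equality $\mbox{Aff}\left(D_{P^*}^{t+1}\right)(\o^{t})=\mbox{Aff}\left(D^{t+1}\right)(\o^{t})$ on $\O^{t}_{NA}$, and the Projection Theorem --- the execution differs at both technical junctures. For the constants, the paper (following \citep[Proposition 3.7]{BCR18}) sets $\kappa_{t}=\beta_{t}=1/n_{t}$, where $n_{t}(\o^{t})$ is the first $n$ for which the set $A^{P^*}_{n}(\o^{t})$ of unit vectors $h\in\mbox{Aff}\left(D_{P^*}^{t+1}\right)(\o^{t})$ satisfying $p^{*}_{t+1}\left(h\Delta S_{t+1}(\o^{t},\cdot)<-1/n,\o^{t}\right)<1/n$ is empty, and proves $n_{t}<\infty$ by the subsequence-plus-Fatou contradiction of Proposition \ref{Arbeqone}; your attained infimum over the compact unit sphere of the affine hull, combined with lower semicontinuity (which is the same Fatou argument in disguise --- you should state Fatou explicitly there), carries the identical analytic content but yields a sharper, geometrically explicit pair: $\beta_{t}$ tied to the radius $\e_{t}^{*}/2$ of $\mbox{Conv}\left(D_{P^*}^{t+1}\right)$ and $\kappa_{t}$ the optimal constant for that $\beta_{t}$. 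For the measurability of $\kappa_{t}$, the paper is more cautious: since $p^{*}_{t+1}$ is only universally measurable, it invokes Lemma \ref{AAA}, replacing $p^{*}_{t+1}$ $P$-a.s. by a Borel kernel via \citep[Lemma 7.28 p173]{BS} for each fixed $P\in\mathfrak{P}(\O^{t})$, and only then applies \citep[Proposition 7.29 p144]{BS} together with the Projection Theorem. Your shortcut --- applying Proposition 7.29 with the probability measure as an explicit Borel variable and then composing with the $\mathcal{B}_{c}(\O^{t})$-measurable map $\o^{t}\mapsto\left(p^{*}_{t+1}(\cdot,\o^{t}),\beta_{t}(\o^{t})\right)$ --- is sound (it is precisely the pattern used in the paper's own proof of Lemma \ref{Dmeasurability} for ${D}_{P}^{t+1}$) and gives the needed $\mathcal{B}_{c}(\O^{t})\otimes\mathcal{B}(\mathbb{R}^{d})$ joint measurability directly, so the projection step closes without Lemma \ref{AAA}; what the paper's detour buys is a reusable tool (Lemma \ref{AAA} reappears in the proof of Lemma \ref{Selection}) at the price of an extra reduction. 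Two trivial repairs to your write-up: define $\beta_{t}=\kappa_{t}:=1/2$ (say) outside $\O^{t}_{NA}$ so that both random variables take values in $(0,1)$ everywhere, and note that when $\mbox{Aff}\left(D^{t+1}\right)(\o^{t})=\{0\}$ your infimum runs over the empty set, a case your cap at $1/2$ already handles and for which \eqref{valakiARBtop} is vacuous.
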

\begin{remark}
\label{kappa}
We have that $\beta_{t}(\o^t)=\kappa_{t}(\o^t)=1$ only if $D^{t+1}_{P^*}(\o^t)=\{0\}$. Indeed if  $\beta_{t}(\o^t)=\kappa_{t}(\o^t)=1$ and $D^{t+1}_{P^*}(\o^t)\neq\{0\}$, then for all $h \in  \mbox{Aff}\left({D}^{t+1}\right)(\omega^{t})$ with $|h|=1$ $p^*_{t+1}\left({h}\Delta S_{t+1}(\omega^{t},\cdot) < -1, \omega^t\right) =1.$ 
Fix such a $h$ and let  $F_h:=\{y \in \mathbb{R}^d, \; hy \leq -1\}.$ Then 
$p^*_{t+1}\left(\Delta S_{t+1}(\omega^{t},\cdot) \in F_{ \pm h},\o^t\right)=1$ and  $D^{t+1}_{P^*}(\o^t)=E^{t+1}(\o^t,p_{t+1}^{*}(\cdot,\o^{t}) )\subset F_{-h} \cap F_{+h}=\emptyset,$ see Remark \ref{DomainInc}.   Note that it is not easy to obtain this result for Theorem \ref{ARBequival} as the prior in \eqref{valakiARB} depends on $h$. 
\end{remark}
\begin{proof}
See Section \ref{setwoARBe}.
\end{proof} \\

Finally, if there exists a dominating  probability measure $\widehat{P} \in \mathcal{Q}^{T}$,  the following result holds true. 
\begin{proposition}
\label{CasD}
Assume that Assumptions   \ref{SassARB} and  \ref{QanalyticARB} hold true.  Assume furthermore that there exists some  dominating measure $\widehat{P} \in \mathcal{Q}^{T}$. Then the $NA(\widehat{P})$ and    the $NA(\mathcal{Q}^{T})$ conditions are equivalent. In this case, for all $0 \leq t \leq T-1,$
\begin{align}
\label{tropfull}
D_{\widehat{P}}^{t+1}(\cdot)=D^{t+1}(\cdot) \mbox{ and } 0 \in \mbox{Ri} \left({\mbox{Conv}}(D_{\widehat{P}}^{t+1})\right)(\cdot ) \;\; \mathcal{Q}^{t}\mbox{-q.s.}
\end{align}
 \end{proposition}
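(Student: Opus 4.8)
The plan is to establish the three assertions in turn: the equivalence $NA(\widehat{P})\Leftrightarrow NA(\mathcal{Q}^{T})$, the pathwise identity $D_{\widehat{P}}^{t+1}=D^{t+1}$ $\mathcal{Q}^{t}$-q.s., and finally the relative-interior property. First I would record the effect of a dominating $\widehat{P}\in\mathcal{Q}^{T}$: since every $Q\in\mathcal{Q}^{T}$ satisfies $Q\ll\widehat{P}$, a set is $\widehat{P}$-null if and only if it is $\mathcal{Q}^{T}$-polar (one direction uses $\widehat{P}\in\mathcal{Q}^{T}$, the other uses domination). Hence $\{\widehat{P}\}$ and $\mathcal{Q}^{T}$ have the same polar sets and Lemma \ref{PRpolar} gives the equivalence; this is exactly Lemma \ref{exex}(2). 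Restricting the disintegration of any $Q\in\mathcal{Q}^{s}$ (extended to $\mathcal{Q}^{T}$) to $\Omega^{s}$ shows in addition that $\widehat{P}^{s}$ dominates $\mathcal{Q}^{s}$ for every $s$, so the phrases ``$\widehat{P}^{s}$-a.s.'' and ``$\mathcal{Q}^{s}$-q.s.'' coincide. I will use this repeatedly.

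For $D_{\widehat{P}}^{t+1}=D^{t+1}$, one inclusion is free: \eqref{DvsE} gives $D_{\widehat{P}}^{t+1}(\omega^{t})=E^{t+1}(\omega^{t},\hat q_{t+1}(\cdot,\omega^{t}))\subset D^{t+1}(\omega^{t})$ for every $\omega^{t}$, where $\widehat{P}=\widehat{P}_{1}\otimes\hat q_{2}\otimes\cdots\otimes\hat q_{T}$ is the fixed disintegration. For the reverse inclusion $\mathcal{Q}^{t}$-q.s., I would fix a countable base $(O_{n})$ of open balls of $\mathbb{R}^{d}$ and write the bad set as $\bigcup_{n}N_{n}$, where $N_{n}:=\{D^{t+1}\cap O_{n}\neq\emptyset\}\cap\{D_{\widehat{P}}^{t+1}\cap O_{n}=\emptyset\}$; these lie in $\mathcal{B}_{c}(\Omega^{t})$ by Lemma \ref{Dmeasurability}. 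On $\{D^{t+1}\cap O_{n}\neq\emptyset\}$ the set-valued map $\omega^{t}\twoheadrightarrow\{p\in\mathcal{Q}_{t+1}(\omega^{t}):\ p(\Delta S_{t+1}(\omega^{t},\cdot)\in O_{n})>0\}$ is non-empty (because $D^{t+1}(\omega^{t})=\overline{\bigcup_{p\in\mathcal{Q}_{t+1}(\omega^{t})}E^{t+1}(\omega^{t},p)}$, so an open ball meeting $D^{t+1}$ must be charged by some prior) and has analytic graph, so the Jankov--von Neumann theorem yields a universally measurable selector $\hat p_{n}$. Pasting $\hat p_{n}$ on $N_{n}$ with a default kernel off $N_{n}$ produces a kernel $p^{(n)}_{t+1}$ valued in $\mathcal{Q}_{t+1}$, whence $Q^{t+1}:=\widehat{P}^{t}\otimes p^{(n)}_{t+1}\in\mathcal{Q}^{t+1}$ by \eqref{QstarARB}. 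Evaluating both $Q^{t+1}$ and $\widehat{P}^{t+1}$ on the product-measurable event $B_{n}:=\{(\omega^{t},\omega_{t+1}):\ \omega^{t}\in N_{n},\ \Delta S_{t+1}(\omega^{t},\omega_{t+1})\in O_{n}\}$ gives $\widehat{P}^{t+1}(B_{n})=\int_{N_{n}}\hat q_{t+1}(\Delta S_{t+1}(\omega^{t},\cdot)\in O_{n},\omega^{t})\,\widehat{P}^{t}(d\omega^{t})=0$ (the integrand vanishes on $N_{n}$ since $D_{\widehat{P}}^{t+1}\cap O_{n}=\emptyset$ there), whereas $Q^{t+1}(B_{n})=\int_{N_{n}}\hat p_{n}(\Delta S_{t+1}(\omega^{t},\cdot)\in O_{n})\,\widehat{P}^{t}(d\omega^{t})$ has a strictly positive integrand on $N_{n}$. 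Since $Q^{t+1}\ll\widehat{P}^{t+1}$, this forces $\widehat{P}^{t}(N_{n})=0$, so each $N_{n}$ is $\widehat{P}^{t}$-null, hence $\mathcal{Q}^{t}$-polar, and so is the union.

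For the relative-interior statement I would invoke the uni-prior characterisation directly. Part one gives $NA(\widehat{P})$, and $\widehat{P}$ admits the disintegration $\widehat{P}_{1}\otimes\hat q_{2}\otimes\cdots\otimes\hat q_{T}$, so Proposition \ref{singleP} yields $0\in\mbox{Ri}(\mbox{Conv}(D_{\widehat{P}}^{t+1}))(\cdot)$ $\widehat{P}^{t}$-a.s.\ for every $t$; by the domination remark of the first step this is precisely $\mathcal{Q}^{t}$-q.s. The main obstacle is the reverse inclusion in the second step, namely upgrading the statement ``the support of a fixed kernel lies in $D_{\widehat{P}}^{t+1}$ $\widehat{P}^{t}$-a.s.''\ to the pathwise inclusion $D^{t+1}\subset D_{\widehat{P}}^{t+1}$ $\mathcal{Q}^{t}$-q.s.; this genuinely requires the countable-base reduction together with a measurable selection in order to avoid quantifying over uncountably many priors simultaneously. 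The accompanying measurability bookkeeping, that $N_{n}\in\mathcal{B}_{c}(\Omega^{t})$ and that $B_{n}$ is product-measurable so that the disintegrations of $Q^{t+1}$ and $\widehat{P}^{t+1}$ along the $(t+1)$-th coordinate are legitimate, is exactly where Lemma \ref{Dmeasurability}, the analyticity of $\mbox{graph}(\mathcal{Q}_{t+1})$, and \eqref{QstarARB} must be marshalled with care.
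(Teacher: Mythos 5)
Your proof is correct, and for the heart of the proposition it takes a genuinely different route from the paper. The two flanking steps coincide with the paper's: the equivalence of $NA(\widehat{P})$ and $NA(\mathcal{Q}^{T})$ is exactly Lemma \ref{exex} (item 2), i.e.\ Lemma \ref{PRpolar} applied to $\{\widehat{P}\}$ and $\mathcal{Q}^{T}$, and the relative-interior assertion is obtained, as in the paper, from Proposition \ref{singleP} applied to $\widehat{P}$ under $NA(\widehat{P})$ and then upgraded from $\widehat{P}^{t}$-a.s.\ to $\mathcal{Q}^{t}$-q.s.\ by your (correct) remark that $\widehat{P}^{t}$ dominates $\mathcal{Q}^{t}$ — a remark that does require, as you note, extending elements of $\mathcal{Q}^{t}$ to $\mathcal{Q}^{T}$ via universally measurable selectors of the $\mathcal{Q}_{s+1}$ before restricting. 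The divergence is in the inclusion $D^{t+1}\subset D_{\widehat{P}}^{t+1}$ $\mathcal{Q}^{t}$-q.s.\ (the reverse inclusion being pathwise from \eqref{DvsE} in both treatments). The paper deduces it from Lemma \ref{Selection}, a strictly stronger statement — outside a $\mathcal{Q}^{t}$-polar set the kernel $\widehat{p}_{t+1}(\cdot,\o^{t})$ dominates the \emph{entire} set $\mathcal{Q}_{t+1}(\o^{t})$ — whose proof needs the Wijsman topology on the hyperspace of closed sets, inner regularity, and a Jankov--von Neumann selection of pairs $(q,F)$ of a measure and a closed set. You instead prove only what \eqref{tropfull} requires, namely equality of supports: you write the failure set as the countable union of the sets $N_{n}$ over a base of rational balls (these are in $\mathcal{B}_{c}(\O^{t})$ by Lemma \ref{Dmeasurability}, since $\mathcal{B}_{c}$ is stable under complementation), select for each $n$ one prior charging $O_{n}$ — your selection graph is the intersection of the analytic $\mbox{graph}(\mathcal{Q}_{t+1})$ with a Borel set, the Borel measurability of $(\o^{t},p)\mapsto p\left(\Delta S_{t+1}(\o^{t},\cdot)\in O_{n}\right)$ being the same fact invoked in the proof of Lemma \ref{Dmeasurability}, and non-emptiness on $\{D^{t+1}\cap O_{n}\neq \emptyset\}$ is precisely Lemma \ref{supportcar} — paste into a one-step kernel, and test $Q^{t+1}\ll \widehat{P}^{t+1}$ on the event $B_{n}$; since $N_{n}\in\mathcal{B}_{c}(\O^{t})$ implies $B_{n}\in\mathcal{B}_{c}(\O^{t+1})$, both Fubini evaluations are legitimate, and the contradiction forces $\widehat{P}^{t}(N_{n})=0$. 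This is more elementary and self-contained for Proposition \ref{CasD} — no hyperspace machinery at all — at the cost of being strictly weaker in output: support equality does not give conditional domination, and it is the paper's Lemma \ref{Selection} (not your local argument) that gets reused in Proposition \ref{DOM} to manufacture non-dominated sets $\mathcal{Q}^{T}$, which is presumably why the authors routed the proof through it.
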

\begin{remark}
One can choose $P^*=\widehat{P}$ in Proposition \ref{finally} changing $\Omega^{t}_{NA}$ by the full-measure set where \eqref{tropfull} holds true. Moreover, $\mathcal{P}^{T}$ (see \eqref{PstarARB}) in Theorem \ref{TheoS} can be constructed from $\widehat{P}$.
\end{remark}
\begin{proof}
See Section \ref{setwoARBef}. 
\end{proof}

\section{Examples}
\label{ExAp}

This section proposes  concrete examples of multiple-priors setting  illustrating our results.   We  also use these examples to present  how to build sets of probability measures which are not dominated.   This relies on the following result. 

\begin{proposition}
\label{DOM}
Assume that Assumption \ref{QanalyticARB} holds true and that   there exists  some  $\widetilde{P} \in \mathcal{Q}^{T}$, some $0 \leq t \leq T-1$ and  some  $\O^{t}_{N} \in \mathcal{B}_{c}(\O^{t})$ such that  $\widetilde{P}^{t}(\O^{t}_{N})>0$  and such that the set $\mathcal{Q}_{t+1}(\o^{t})$ is not dominated for all $\o^{t} \in \O^{t}_{N}$. Then $\mathcal{Q}^{T}$ is not dominated.
\end{proposition}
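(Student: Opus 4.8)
The plan is to argue by contraposition: assuming that $\mathcal{Q}^{T}$ is dominated, I will show that the local family $\mathcal{Q}_{t+1}(\o^{t})$ is in fact dominated for $\widetilde P^{t}$-almost every $\o^{t}$. Since the hypothesis gives that $\mathcal{Q}_{t+1}(\o^{t})$ is \emph{not} dominated for every $\o^{t}\in\O^{t}_{N}$ and $\widetilde P^{t}(\O^{t}_{N})>0$, this is the desired contradiction. So suppose $\mathcal{Q}^{T}$ is dominated; replacing the dominating $\sigma$-finite measure by an equivalent probability measure, I may fix $\mu\in\mathfrak{P}(\O^{T})$ with $Q\ll\mu$ for all $Q\in\mathcal{Q}^{T}$. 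Using a Borel disintegration of the $\O^{t+1}$-marginal of $\mu$ (see \citep[Corollary 7.27.2 p139]{BS}), I write $\mu^{t+1}=\mu^{t}\otimes \mu_{t+1}$ with a Borel kernel $\o^{t}\mapsto \mu_{t+1}(\cdot,\o^{t})\in\mathfrak{P}(\O_{t+1})$, and I fix once and for all a disintegration $\widetilde P=\widetilde P_{1}\otimes\widetilde p_{2}\otimes\cdots\otimes\widetilde p_{T}$ with $\widetilde p_{s+1}(\cdot,\o^{s})\in\mathcal{Q}_{s+1}(\o^{s})$.

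The first step is a one-selection lemma: for any $q\in\mathcal{S}K_{t+1}$ with $q(\cdot,\o^{t})\in\mathcal{Q}_{t+1}(\o^{t})$ for all $\o^{t}$, one has $q(\cdot,\o^{t})\ll\mu_{t+1}(\cdot,\o^{t})$ for $\widetilde P^{t}$-a.e. $\o^{t}$. To see this I build $Q:=\widetilde P_{1}\otimes\widetilde p_{2}\otimes\cdots\otimes\widetilde p_{t}\otimes q\otimes\widetilde p_{t+2}\otimes\cdots\otimes\widetilde p_{T}$, which lies in $\mathcal{Q}^{T}$ by \eqref{QstarARB} since it only swaps the $(t+1)$-th kernel of $\widetilde P$ for another selector of $\mathcal{Q}_{t+1}$. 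As $Q\ll\mu$, pushing forward to the $\O^{t+1}$-marginals gives $Q^{t+1}\ll\mu^{t+1}$; but $Q^{t+1}=\widetilde P^{t}\otimes q$, so a standard comparison of the disintegrations of two absolutely continuous measures over the common base $\O^{t}$ yields $q(\cdot,\o^{t})\ll\mu_{t+1}(\cdot,\o^{t})$ for $\widetilde P^{t}=Q^{t}$-a.e. $\o^{t}$.

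The heart of the proof is to upgrade this from a single selector to the whole family. Let $\mbox{Bad}:=\{\o^{t}:\mathcal{Q}_{t+1}(\o^{t})\text{ is not dominated by }\mu_{t+1}(\cdot,\o^{t})\}$. Since a single measure cannot dominate a non-dominated family, $\O^{t}_{N}\subset\mbox{Bad}$, hence $\widetilde P^{t}(\mbox{Bad})\geq\widetilde P^{t}(\O^{t}_{N})>0$. Consider the set $G:=\{(\o^{t},q)\in\O^{t}\times\mathfrak{P}(\O_{t+1}):q\in\mathcal{Q}_{t+1}(\o^{t}),\ q\not\ll\mu_{t+1}(\cdot,\o^{t})\}$, whose projection on $\O^{t}$ contains $\mbox{Bad}$. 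Here I invoke the announced topological input: the relation $\{(\nu_{1},\nu_{2}):\nu_{1}\not\ll\nu_{2}\}$ is analytic in $\mathfrak{P}(\O_{t+1})\times\mathfrak{P}(\O_{t+1})$; combined with Assumption \ref{QanalyticARB} (the graph of $\mathcal{Q}_{t+1}$ is analytic) and the Borel measurability of $\o^{t}\mapsto\mu_{t+1}(\cdot,\o^{t})$, this shows that $G$ is analytic. The Jankov--von Neumann theorem then furnishes a universally measurable selector of $G$ on its projection; patching it with $\widetilde p_{t+1}$ outside $\mbox{Proj}_{\O^{t}}(G)$ produces $q^{*}\in\mathcal{S}K_{t+1}$ with $q^{*}(\cdot,\o^{t})\in\mathcal{Q}_{t+1}(\o^{t})$ everywhere and $q^{*}(\cdot,\o^{t})\not\ll\mu_{t+1}(\cdot,\o^{t})$ on $\mbox{Bad}$. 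Applying the one-selection lemma to $q^{*}$ gives $q^{*}(\cdot,\o^{t})\ll\mu_{t+1}(\cdot,\o^{t})$ for $\widetilde P^{t}$-a.e. $\o^{t}$, which contradicts non-absolute-continuity on the set $\mbox{Bad}$ of positive $\widetilde P^{t}$-measure.

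I expect the main obstacle to be precisely this measurable-selection step, namely verifying that the non-domination relation has the right descriptive complexity, i.e. that $\{(\nu_{1},\nu_{2}):\nu_{1}\not\ll\nu_{2}\}$ is analytic, so that $G$ is analytic and the selection theorem applies. The remaining ingredients (the comparison of disintegrations, the stability of $\mathcal{Q}^{T}$ under replacing one kernel, and the passage to marginals) are routine given \eqref{QstarARB} and the measurable-selection apparatus already set up in Section \ref{Unc}.
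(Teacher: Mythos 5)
Your argument is correct, and it reaches the conclusion by a genuinely different final mechanism than the paper, although the two proofs share the same skeleton and the same descriptive-set-theoretic bottleneck. The paper proves a stronger statement, Lemma \ref{Selection}: for a dominating $\widehat{P}$ with disintegration $\widehat{p}_{t+1}$, the set $\O^{t}_{nd}$ where $\mathcal{Q}_{t+1}(\o^{t})$ is not dominated by $\widehat{p}_{t+1}(\cdot,\o^{t})$ is universally measurable \emph{and} $\mathcal{Q}^{t}$-polar (the measurability part, via Lemma \ref{AAA} and Borel modifications of the kernel, is needed because the lemma is reused in Proposition \ref{CasD}); Proposition \ref{DOM} then follows in two lines. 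Like you, the paper reduces everything to a Jankov--von Neumann selection over an analytic graph and plugs the selected kernel into $\mathcal{Q}^{T}$ via \eqref{QstarARB}, but it selects a \emph{pair} $\left(q^{\widehat{P}}_{t+1}(\cdot,\o^{t}),F^{\widehat{P}}_{t+1}(\o^{t})\right)$ --- a prior together with a closed witness set --- and derives the contradiction by explicitly constructing the universally measurable set $E=\{(\o^{t},\o_{t+1}),\; \o^{t}\in\O^{t}_{nd1},\;\o_{t+1}\in F^{\widehat{P}}_{t+1}(\o^{t})\}$ with $\widehat{P}^{t+1}(E)=0<\widehat{Q}^{t+1}(E)$. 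You select only the kernel $q^{*}$ and close the argument with your one-selection lemma (kernel-wise absolute continuity $\widetilde{P}^{t}$-a.e., obtained from $Q\ll\mu$ via the Radon--Nikodym density and a.e.\ uniqueness of disintegrations), which is a clean abstract substitute for the explicit null set $E$ and also spares you any measurability of your set of bad $\o^{t}$, since $\O^{t}_{N}$ is measurable by hypothesis. One caveat: the claim you flag as the crux --- that $\{(\nu_{1},\nu_{2}),\;\nu_{1}\not\ll\nu_{2}\}$ is analytic --- is asserted rather than proved, and its proof is exactly the paper's machinery: by inner regularity, $\nu_{1}\not\ll\nu_{2}$ if and only if there is a closed $F$ with $\nu_{2}(F)=0<\nu_{1}(F)$; the set of closed subsets is Polish under the Wijsman topology (see \citep{BE91}); $(F,x)\mapsto 1_{F}(x)$ is jointly Borel (Lemma \ref{LK}), so $(\nu,F)\mapsto\nu(F)$ is Borel by \citep[Proposition 7.29 p144]{BS}; and projecting out $F$ gives analyticity. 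So in substance the two arguments converge at the same point; what your route buys is a leaner deduction for this proposition alone, while the paper's formulation additionally delivers the measurability and polarity of $\O^{t}_{nd}$ that it needs elsewhere.
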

\begin{proof}
See Section \ref{PRDOM}. 
\end{proof}

\subsection{Robust Binomial model}
\label{BI}

Suppose that $T \geq 1$, $d=1$ and   $\Omega_{t}=\mathbb{R}$ (or $(0,\infty)$) for all $1 \leq t \leq T$. 
The risky asset $\left(S_{t}\right)_{0 \leq t \leq T}$ is such that   $S_0=1$
 and $S_{t+1}= S_{t} Y_{t+1}$ where $Y_{t+1}$ is a  real-valued and  $\mathcal{B}(\O_{t+1})$-measurable random variable  such that $Y_{t+1}(\O_{t+1})=(0,\infty)$ for all $0 \leq t \leq T-1$ (if $\Omega_{t}=(0,\infty)$ you can think of $Y_{t}=\o_{t}$). The positivity of $Y_{t}$  implies that $S_{t}(\o^{t})>0$ for all $\o^{t} \in \O^{t}$.  It is clear that Assumption \ref{SassARB} is verified. Then,  for $0\leq t \leq T-1$ let
\begin{align*}\mathcal{B}_{t+1}(\o^{t}):=\{\pi\delta_{u} +(1-\p) \delta_{d},\; \p_{t}(\o^{t}) \leq \p \leq \Pi_{t}(\o^{t}),&\;u_{t}(\o^{t}) \leq u \leq U_{t}(\o^{t}),
\;d_{t}(\o^{t}) \leq d \leq D_{t}(\o^{t}) \},\end{align*}
where $\p_{t},\Pi_{t}, u_{t}, U_{t}, d_{t},D_{t}$ are real-valued $\mathcal{B}(\O^{t})$-measurable random variables such that $0 \leq \p_{t}(\o^{t}) \leq \Pi_{t}(\o^{t}) \leq 1$, $u_{t}(\o^{t}) \leq U_{t}(\o^{t})$ and $ d_{t}(\o^{t}) \leq D_{t}(\o^{t}) $ for all $\o^{t} \in \O^{t}$.\footnote{This could be generalised  by setting $\mathcal{B}_{t+1}(\o^{t}):=\{\p\delta_{u} +(1-\p) \delta_{d},\; \p  \in \mathcal{S}_{t}(\o^{t}),\; u \in   \mathcal{U}_{t}(\o^{t}), \; d \in  \mathcal{D}_{t}(\o^{t}) \}$, where $\mathcal{S}_{t}$, $\mathcal{U}_{t}$, $\mathcal{D}_{t}$  are Borel-measurable random sets $\Omega^t \twoheadrightarrow \mathbb{R}.$}
\begin{assumption}
\label{Abin}
We have that  $\p_{t}(\o^{t})<1$, $\Pi_{t}(\o^{t})>0$ and $0<d_{t}(
\o^{t})<1<U_{t}(\o^{t})$   for all $0\leq t \leq T-1$  and  $\o^{t} \in \O^{t}$. 
\end{assumption}
For all $0\leq t \leq T-1$ and $\o^{t} \in \O^{t},$ let 
\begin{align}
 \label{convexify}
 \widetilde{\mathcal{Q}}_{t+1}(\o^{t})&:=\left\{q \in \mathfrak{P}(\O_{t+1}),\; q\left(
Y_{t+1} \in \cdot \right) \in \mathcal{B}_{t+1}(\o^{t})\right\} \mbox{ and } 
 \mathcal{Q}_{t+1}(\o^{t}):=\mbox{Conv}\left(\widetilde{\mathcal{Q}}_{t+1}(\o^{t})\right),
\end{align}
where  $q\left(Y_{t+1} \in \cdot \right)$ is the law of $Y_{t+1}$ under $q$.  In words, at each step, the risky asset can go up or down and  there is uncertainty not only on the probability of the jumps but also on their sizes.
\begin{remark} 
The usual binomial model (see \citep{CRR79})  corresponds  to $\p_{t}=\Pi_{t}=\pi$, $u_{t}=U_{t}=u$ and $d_{t}=D_{T}=d$ where $0<\pi<1$, $d<1<u$. 
\end{remark}
\begin{lemma}
\label{cesconv}
Under Assumption \ref{Abin}, Assumption \ref{QanalyticARB} holds true. 
\end{lemma}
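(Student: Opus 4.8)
The goal is to verify the three requirements in Assumption \ref{QanalyticARB}: for each fixed $0 \le t \le T-1$, that $\mathcal{Q}_{t+1}$ is non-empty valued, convex valued, and has analytic graph. Convexity is immediate since $\mathcal{Q}_{t+1}(\omega^{t}) = \mbox{Conv}\left(\widetilde{\mathcal{Q}}_{t+1}(\omega^{t})\right)$ is by definition a convex hull. For non-emptiness, fix $\omega^{t}$ and use Assumption \ref{Abin} to note that $d_{t}(\omega^{t}) \in (0,1)$ and $U_{t}(\omega^{t}) > 1$ both lie in $(0,\infty)$, hence in the range $Y_{t+1}(\Omega_{t+1}) = (0,\infty)$. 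Choosing preimages $\omega^{u}, \omega^{d} \in \Omega_{t+1}$ with $Y_{t+1}(\omega^{u}) = U_{t}(\omega^{t})$ and $Y_{t+1}(\omega^{d}) = d_{t}(\omega^{t})$, together with any $\pi \in [\pi_{t}(\omega^{t}), \Pi_{t}(\omega^{t})]$ (a non-empty subinterval of $[0,1]$ since $\pi_t \le \Pi_t$), the measure $q := \pi\, \delta_{\omega^{u}} + (1-\pi)\, \delta_{\omega^{d}}$ has law $\pi\, \delta_{U_{t}(\omega^{t})} + (1-\pi)\, \delta_{d_{t}(\omega^{t})} \in \mathcal{B}_{t+1}(\omega^{t})$, so $q \in \widetilde{\mathcal{Q}}_{t+1}(\omega^{t}) \subset \mathcal{Q}_{t+1}(\omega^{t})$.

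The core is the analyticity of the graph, which I would establish in two stages. First I would show that $\mbox{graph}(\widetilde{\mathcal{Q}}_{t+1})$ is analytic by realizing it as the projection of a Borel set. Consider
$$A := \bigl\{(\omega^{t}, q, \pi, u, d) \in \Omega^{t} \times \mathfrak{P}(\Omega_{t+1}) \times \mathbb{R}^{3} : \pi_{t}(\omega^{t}) \le \pi \le \Pi_{t}(\omega^{t}),\ u_{t}(\omega^{t}) \le u \le U_{t}(\omega^{t}),\ d_{t}(\omega^{t}) \le d \le D_{t}(\omega^{t}),\ q(Y_{t+1} \in \cdot) = \pi\, \delta_{u} + (1-\pi)\, \delta_{d}\bigr\}.$$
Since $\pi_{t}, \Pi_{t}, u_{t}, U_{t}, d_{t}, D_{t}$ are $\mathcal{B}(\Omega^{t})$-measurable, the range constraints cut out a Borel set. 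The equality of laws I would encode through cumulative distribution functions: two probability measures on $\mathbb{R}$ coincide iff they agree on $(-\infty, r]$ for every $r \in \mathbb{Q}$, so this condition is the countable intersection over $r \in \mathbb{Q}$ of $\{q(Y_{t+1} \le r) = \pi\, \mathbf{1}_{\{u \le r\}} + (1-\pi)\, \mathbf{1}_{\{d \le r\}}\}$. Each such set is Borel, because $q \mapsto q(Y_{t+1} \le r) = q\bigl(Y_{t+1}^{-1}((-\infty,r])\bigr)$ is Borel measurable on $\mathfrak{P}(\Omega_{t+1})$ (evaluation on the fixed Borel set $Y_{t+1}^{-1}((-\infty,r])$, cf. \citep[Proposition 7.29 p144]{BS}), while the right-hand side is Borel in $(\pi, u, d)$. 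Hence $A$ is Borel and $\mbox{graph}(\widetilde{\mathcal{Q}}_{t+1}) = \mbox{Proj}_{\Omega^{t} \times \mathfrak{P}(\Omega_{t+1})}(A)$ is analytic as the projection of a Borel set.

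Second, I would pass to the convex hull. Using $\mbox{Conv}(R) = \{\sum_{i=1}^{n} \lambda_{i} p_{i}\}$, I would write
$$\mbox{graph}(\mathcal{Q}_{t+1}) = \bigcup_{n \ge 1} \mbox{Proj}_{\Omega^{t} \times \mathfrak{P}(\Omega_{t+1})}(B_{n}), \qquad B_{n} := \bigl\{(\omega^{t}, P, \lambda, q_{1}, \ldots, q_{n}) : \lambda \in \Delta_{n-1},\ (\omega^{t}, q_{i}) \in \mbox{graph}(\widetilde{\mathcal{Q}}_{t+1})\ \forall i,\ P = \textstyle\sum_{i=1}^{n} \lambda_{i} q_{i}\bigr\},$$
where $\Delta_{n-1}$ denotes the simplex. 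Each condition $(\omega^{t}, q_{i}) \in \mbox{graph}(\widetilde{\mathcal{Q}}_{t+1})$ is the preimage of an analytic set under a continuous coordinate projection, hence analytic; the simplex condition is closed; and $\{P = \sum_{i} \lambda_{i} q_{i}\}$ is closed because the finite-mixture map $(\lambda, q_{1}, \ldots, q_{n}) \mapsto \sum_{i} \lambda_{i} q_{i}$ is continuous for the weak topology on $\mathfrak{P}(\Omega_{t+1})$. Thus each $B_{n}$ is analytic, each projection is analytic, and the countable union is analytic, which yields the claim.

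The step I expect to be most delicate is this convex-hull passage: since $\mathfrak{P}(\Omega_{t+1})$ is infinite-dimensional there is no Carath\'eodory bound on the number $n$ of mixands, so one genuinely needs the countable union over $n$, and one must verify that the mixture map is jointly measurable (indeed continuous) in the weak topology so that the relation $P = \sum_{i} \lambda_{i} q_{i}$ is Borel. The encoding of the two-point-law constraint as a Borel condition via rational-indexed distribution functions is the other point requiring care, but it is routine once the evaluation maps $q \mapsto q(Y_{t+1} \le r)$ are known to be Borel.
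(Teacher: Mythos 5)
Your proof is correct, and while the overall architecture (non-emptiness, convexity, analyticity of $\mbox{graph}(\widetilde{\mathcal{Q}}_{t+1})$, then passage to the convex hull) matches the paper's, the two technical steps are carried out by genuinely different means. For the graph of $\widetilde{\mathcal{Q}}_{t+1}$ the paper argues \emph{forward}: it observes that the set $E$ of admissible parameters $(u,d,\pi)$ has Borel graph, pushes it through the Borel map $F(\omega^{t},u,d,\pi)=(\omega^{t},\pi\delta_{u}+(1-\pi)\delta_{d})$ (citing \citep[Corollary 7.21.1 p130]{BS}) so that $\mbox{graph}(\mathcal{B}_{t+1})=F(\mbox{graph}(E))$ is analytic as the image of a Borel set, and then writes $\mbox{graph}(\widetilde{\mathcal{Q}}_{t+1})=\hat{\Phi}^{-1}\left(\mbox{graph}(\mathcal{B}_{t+1})\right)$ as the preimage under the Borel pushforward map $\hat{\Phi}(\omega^{t},q)=(\omega^{t},q(Y_{t+1}\in\cdot))$, using \citep[Propositions 7.29 p144 and 7.26 p134]{BS}. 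You instead encode the equality of laws pointwise via rational-indexed distribution functions and exhibit the graph as the projection of a single Borel set $A$; this trades the pushforward-kernel machinery for the (routine, $\pi$-system) fact that agreement of $q(Y_{t+1}\leq r)$ with $\pi 1_{\{u\leq r\}}+(1-\pi)1_{\{d\leq r\}}$ for all rational $r$ identifies the two-point law. Both are sound; the paper's route is shorter because it never manipulates test sets, yours is more self-contained. For the convex hull the paper simply defers to the proofs in \citep{Bart216}, whereas you spell the argument out: the countable union over $n$ of projections of the analytic sets $B_{n}$, with the two observations that genuinely matter there --- the mixture map is weakly continuous (so the constraint $P=\sum_{i}\lambda_{i}q_{i}$ is closed), and no Carath\'eodory-type bound on $n$ is available in the infinite-dimensional space $\mathfrak{P}(\Omega_{t+1})$, so the union over all $n$ is unavoidable. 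Your non-emptiness argument, choosing preimages of $U_{t}(\omega^{t})>1$ and $d_{t}(\omega^{t})\in(0,1)$ under $Y_{t+1}$ (both attainable since $Y_{t+1}(\Omega_{t+1})=(0,\infty)$, and this is precisely where Assumption \ref{Abin} enters), is an explicit version of the paper's one-line appeal to surjectivity.
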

\begin{proof}
First, $\mathcal{Q}_{t+1}$ is convex valued by definition. Since $Y_{t+1}(\O_{t+1})=(0,\infty)$,  $\widetilde{\mathcal{Q}}_{t+1}(\o^{t}) \neq \emptyset$, hence ${\mathcal{Q}}_{t+1}(\o^{t}) \neq \emptyset$  for all $\o^{t} \in \O^{t}$.   
 We show  successively that  $\mbox{graph}\left(\mathcal{B}_{t+1}\right)$, $\mbox{graph}\left(\widetilde{\mathcal{Q}}_{t+1}\right)$ and $\mbox{graph}\left({\mathcal{Q}}_{t+1}\right)$ are analytic sets.  For $\o^{t} \in \O^{t},$ let 
  \begin{align*}
  E(\o^{t})&:=\{(u,d,\p) \in \mathbb{R}^{3}, \; \p_{t}(\o^{t}) \leq \p \leq \Pi_{t}(\o^{t}),\;u_{t}(\o^{t}) \leq u \leq U_{t}(\o^{t}), \;d_{t}(\o^{t}) \leq d \leq D_{t}(\o^{t})\}, \\
 F(\o^{t},& u,d,\p):=(\o^{t},\p \delta_{u} +(1-\p) \delta_{d}) \quad \mbox{for $(\o^{t}, u,d,\p) \in \O^{t} \times \mathbb{R}^{3}$}.
\end{align*} 
Then $F$ is Borel-measurable (see \citep[Corollary 7.21.1 p130]{BS}),    $\mbox{graph}(E) \in \Bc_c(\O^t) \otimes \Bc(\mathbb{R}^3)$ as  $\p_{t}, \,\Pi_{t}, \,u_{t}, \,U_{t}, \,d_{t}$ and $D_{t}$ are   Borel-measurable. We conclude that 
$\mbox{graph}\left(\mathcal{B}_{t+1}\right)$$= F\left(\mbox{graph}(E)\right)$ is analytic. 
Let $\Phi :\mathfrak{P}(\O_{t+1}) \to \mathfrak{P}(\mathbb{R})$  be defined by $\Phi(q):= q\left(
Y_{t+1}  \in \cdot\right)$. 
Using \citep[Propositions 7.29 p144 and 7.26 p134]{BS}, $\Phi$ is a Borel-measurable  stochastic kernel on $\mathbb{R}$ given  $\mathfrak{P}(\O_{t+1})$. So $\hat{\Phi}(\o^{t},q):=(\o^{t}, \Phi(q))$ is also Borel-measurable and    $\mbox{graph}(\widetilde{\mathcal{Q}}_{t+1})= \hat{\Phi}^{-1} \left(\mbox{graph}\left(\mathcal{B}_{t+1}\right) \right)$ is analytic. Then one can show as in  \citep[Proofs for Section 2.3]{Bart216} that $\mbox{graph}\left({\mathcal{Q}}_{t+1}\right)$ is analytic since ${\mathcal{Q}}_{t+1}$ is the convex hull of $\widetilde{\mathcal{Q}}_{t+1}$. 

\end{proof}
\begin{lemma}
\label{lembin}
Under Assumption \ref{Abin}, the $NA(\mathcal{Q}^{T})$ condition holds true and the $sNA(\mathcal{Q}^{T})$ condition might fails.
\end{lemma}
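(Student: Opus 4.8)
The plan is to verify the hypotheses of the geometric no-arbitrage condition (Definition \ref{NAGarb}) and then invoke Theorem \ref{ARBequival} to conclude that $NA(\mathcal{Q}^{T})$ holds. Since $d=1$, the random set $D^{t+1}(\o^{t}) \subset \mathbb{R}$ and the relative interior condition $0 \in \mbox{Ri}\left(\mbox{Conv}(D^{t+1})\right)(\o^{t})$ reduces to a one-dimensional computation: it suffices to show that $\mbox{Conv}(D^{t+1})(\o^{t})$ contains a neighbourhood of $0$ in its affine hull, i.e. that $D^{t+1}(\o^{t})$ contains both a strictly positive and a strictly negative value of $\Delta S_{t+1}$ (so that $0$ is interior to the convex hull). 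First I would compute, for fixed $\o^{t} \in \O^{t}$, the support $D^{t+1}(\o^{t})$ of $\Delta S_{t+1}=S_{t}(Y_{t+1}-1)$ under the family $\mathcal{Q}_{t+1}(\o^{t})$. Because each element of $\widetilde{\mathcal{Q}}_{t+1}(\o^{t})$ puts mass on two points $u$ and $d$ ranging over the intervals $[u_{t},U_{t}]$ and $[d_{t},D_{t}]$, and because $\mathcal{Q}_{t+1}$ is the convex hull, the values $Y_{t+1}-1$ achievable $\mathcal{Q}_{t+1}(\o^{t})$-quasi-surely span the up-moves $S_{t}(u-1)$ for $u \in [u_{t},U_{t}]$ and the down-moves $S_{t}(d-1)$ for $d \in [d_{t},D_{t}]$.

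The key point is that Assumption \ref{Abin} guarantees $0<d_{t}(\o^{t})<1<U_{t}(\o^{t})$ and $S_{t}(\o^{t})>0$. Hence $S_{t}(\o^{t})(d_{t}(\o^{t})-1)<0$ and $S_{t}(\o^{t})(U_{t}(\o^{t})-1)>0$ both lie in $D^{t+1}(\o^{t})$, so $0$ lies strictly between two points of $D^{t+1}(\o^{t})$. Consequently $0 \in \mbox{Ri}\left(\mbox{Conv}(D^{t+1})\right)(\o^{t})$ for \emph{every} $\o^{t} \in \O^{t}$, so one may take $\O^{t}_{gNA}=\O^{t}$, which is trivially of $\mathcal{Q}^{t}$-full measure. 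By Theorem \ref{ARBequival} the $NA(\mathcal{Q}^{T})$ condition holds true. I expect the only subtlety here to be the bookkeeping needed to read off $D^{t+1}(\o^{t})$ correctly from the two-point-mass structure of $\mathcal{B}_{t+1}(\o^{t})$ and its convexification; once the support is identified, the sign argument is immediate.

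For the second assertion, that $sNA(\mathcal{Q}^{T})$ may fail, the plan is to exhibit a single prior $P \in \mathcal{Q}^{T}$ violating $NA(P)$, using Proposition \ref{singleP}: it is enough to produce a $P$ for which $0 \notin \mbox{Ri}\left(\mbox{Conv}(D_{P}^{t+1})\right)(\o^{t})$ on a set of positive $P^{t}$-measure. The natural choice exploits that Assumption \ref{Abin} permits, for instance, $\p_{t}=0$ (since only $\p_{t}<1$ is required) or $d_{t},D_{t},u_{t},U_{t}$ configured so that some admissible two-point law places all its mass on values of $Y_{t+1}$ lying strictly above $1$, giving $\Delta S_{t+1}>0$ $P$-almost surely. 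Concretely I would pick parameters with, say, $d_{t}(\o^{t})>0$ but select within $\widetilde{\mathcal{Q}}_{t+1}(\o^{t})$ a law that is degenerate enough (e.g. $\p=1$ with $u>1$, or a one-sided support) so that $D_{P}^{t+1}(\o^{t}) \subset (0,\infty)$ and hence $0 \notin \mbox{Ri}\left(\mbox{Conv}(D_{P}^{t+1})\right)(\o^{t})$. Such a $P$ admits a strictly positive-gain strategy, contradicting $NA(P)$, so $sNA(\mathcal{Q}^{T})$ fails. The mild obstacle is to confirm that the chosen degenerate law indeed lies in $\mathcal{Q}_{t+1}(\o^{t})$ under Assumption \ref{Abin}; since the assumption only constrains the interval endpoints and not the interior, a one-sided or point-mass choice remains admissible, so the counterexample goes through.
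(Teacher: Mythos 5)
Your proposal is correct and matches the paper's proof in both structure and substance: for $NA(\mathcal{Q}^{T})$ you, exactly like the paper, note that $S_{t}(\o^{t})(d_{t}(\o^{t})-1)<0<S_{t}(\o^{t})(U_{t}(\o^{t})-1)$ both lie in $D^{t+1}(\o^{t})$ under Assumption \ref{Abin}, hence $0\in \mbox{Ri}\left(\mbox{Conv}(D^{t+1})\right)(\o^{t})$ for \emph{every} $\o^{t}$, and you conclude via Theorem \ref{ARBequival}. For the failure of $sNA(\mathcal{Q}^{T})$ the paper chooses the admissible configuration $u_{t}(\o^{t})<1$ and the non-degenerate kernel $r_{t}\delta_{a_{t}}+(1-r_{t})\delta_{d_{t}}$ with $a_{t}\in[u_{t},1)$, so that the conditional support of $\Delta S_{t+1}$ lies strictly below $0$, whereas you use degenerate weights ($\p_{t}=0$ or $\Pi_{t}=1$, both permitted by Assumption \ref{Abin}) to push the support strictly above $0$; both are legitimate instances, both conclude through Proposition \ref{singleP} (and both gloss the same mild measurable-selection point in writing down the kernel), so this is only a cosmetic variation on the paper's argument.
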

\begin{proof} 
 It is clear that for all $0\leq t \leq T-1$, all $\o^{t} \in \O^{t}$, 
$$\mbox{Conv} \left(D^{t+1}\right)(\o^{t})=[S_{t}(\o^{t})(d_{t}(\o^{t})-1),S_{t}(\o^{t})(U_{t}(\o^{t})-1)].$$ 
So the $NA(\mathcal{Q}^{T})$ condition holds true  as $0 \in \mbox{Ri}\left(\mbox{Conv}\left(D^{t+1}\right)\right)(\o^{t})$ for all $\o^{t} \in \O^{t}$ (see Theorem \ref{ARBequival}).
Under  Assumption \ref{Abin}, one may have that $u_{t}(\o^{t})<1$ for all $\o^{t} \in \O^{t}$, $0 \leq t \leq T-1$ and find some $a_{t}(\o^{t}) \in [u_{t}(\o^{t}), 1)$. For all $0 \leq t \leq T-1$ and  $\o^{t} \in \O^{t}$, let 
$$q_{t+1}( Y_{t+1} \in \cdot,\o^{t}):= r_{t}(\o^{t}) \delta_{a_{t}(\o^{t})}(\cdot)+  \left(1-r_{t}(\o^{t}) \right)\delta_{d_{t}(\o^{t})}(\cdot),$$ 
where $r_{t}(\o^{t}) \in [\p_{t}(\o^{t}),\Pi_{t}(\o^{t})]$.  Set  $Q:=Q_{1} \otimes q_{2}  \otimes \cdots \otimes q_{T} \in \mathcal{Q}^{T}$.   As  
$$\mbox{Conv}\left(D_{Q}^{t+1}\right)(\o^{t})=\left[S_{t}(\o^{t})(d_{t}(\o^{t})-1), S_{t}(\o^{t})(a_{t}(\o^{t})-1)\right],$$  $0 \notin \mbox{Conv}\left(D_{Q}^{t+1}\right)(\o^{t})$ for all $\o^{t} \in \O^{t}$ and Proposition \ref{singleP} implies that $NA({Q})$  and thus $sNA(\mathcal{Q}^{T})$  fail.
\end{proof}

We now provide some explicit expressions for  $\varepsilon_{t}$, $\beta_{t}$ and $\kappa_{t}$ of \eqref{vamakiARB} and \eqref{valakiARB} and exhibit a  candidate for the measure $P^*$  of Theorem \ref{PPstar}. 
\begin{lemma}
Assume that  Assumption \ref{Abin} holds true. For all $0\leq t \leq T-1$, all $\o^{t} \in \O^{t}$ let 
\begin{align*}
\bar \p_{t}(\o^{t}) &  :=\frac{\p_{t}(\o^{t})+\Pi_{t}(\o^{t})}{2} \in (0,1)\\
\frac{\varepsilon_{t}(\o^{t})}{2}& =\beta_{t}(\o^{t}):= \frac{S_{t}(\o{^t})}{N}\min\left(\frac{U_{t}(\o^{t})-1}{2},\frac{1-d_{t}(\o^{t})}{2}\right)>0,\\
\kappa_{t}(\o^{t})& : =\frac1M\min\left(\bar \p_{t}(\o^{t}) ,1-\bar \p_{t}(\o^{t}) \right)>0,\\
 a_t^+ (\o^{t})&: =U_{t}(\o^{t})>1, \;  \; \;\; b_t^{+}(\o^{t}) : =\min\left(D_{t}(\o^{t}),  \frac{d_{t}(\o^{t})+1}{2}\right)<1,  \\
a_t^{-}(\o^{t}) & : = \max \left(u_{t}(\o^{t}),  \frac{U_{t}(\o^{t})+1}{2}\right) >1, \;  \; \;\;b_t^{-}(\o^{t}) : =d_{t}(\o^{t})<1,\\
r_{t+1}^{\pm}(\cdot,\o^t) & : = \bar \p_{t}(\o^{t}) \delta_{a_t^{\pm}(\o^t)}(\cdot) + (1-\bar \p_{t}(\o^{t})) \delta _{b_t^{\pm}(\o^t)}(\cdot) \in \mathcal{B}_{t+1}(\o^{t}),\\
r_{t+1}^*(\cdot ,\o^{t})  &  := \frac{1}{2} \left(r_{t+1}^{+}(\cdot,\o^t)+ r_{t+1}^{-}(\cdot,\o^t)\right) \in 
\mathcal{B}_{t+1}(\o^{t}), \;\; p_{t+1}^{*}(Y_{t+1}\in \cdot,\o^t)  := r_{t+1}^*(\cdot ,\o^{t}) \in \mathcal{Q}_{t+1}(\o^{t})
\end{align*}
where  $N>1$ and $M> 1$ are fixed and allows to get sharper bound for $\e_{t}(\o^{t}),\;\beta_{t}(\o^{t})$ and $\kappa_{t}(\o^{t})$. Then  
\begin{align}
\label{quelstar} 
p_{t+1}^{*}\left(\pm \Delta S_{t+1}(\o^{t},\cdot) <   -\beta_{t}(\o^{t}),\o^t \right) 
\geq \kappa_{t}(\o^{t}),
\end{align}
and \eqref{valakiARB} is satisfied;  \eqref{vamakiARB} also holds true. \\
Moreover,  for $P^{*}:= P_{0}^*\otimes p^{*}_{1} \cdots \otimes p^{*}_{T} \in \mathcal{Q}^{T},$ $0 \in \mbox{Ri} \left({\mbox{Conv}}(D_{P^*}^{t+1})\right)(\o^{t})$ and  $\mbox{Aff}\left(D_{P^*}^{t+1}\right)(\o^{t})=\mbox{Aff} \left(D^{t+1}\right)(\o^{t})=\mathbb{R}$  for all $\o^t \in \Omega^t$.\\
Finally, assume that for some $0 \leq t \leq T-1$ and some $\o^{t} \in \O^{t}$,  $u_{t}(\o^{t})<U_{t}(\o^{t})$ or  $ d_{t}(\o^{t}) < D_{t}(\o^{t}).$ Then the set $\mathcal{Q}_{t+1}(\o^{t})$ is  not dominated and one can construct sets $\mathcal{Q}^{T}$ which are not dominated.
\end{lemma}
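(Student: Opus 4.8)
The plan is to reduce every assertion to an explicit computation on the law of $Y_{t+1}$, using $\Delta S_{t+1}(\o^{t},\cdot)=S_{t}(\o^{t})(Y_{t+1}-1)$ and the fact that under $p^*_{t+1}(\cdot,\o^t)$ the variable $Y_{t+1}$ charges the four points $a_t^{\pm}(\o^t),b_t^{\pm}(\o^t)$ with $p^*_{t+1}(Y_{t+1}=a_t^{\pm})=\tfrac12\bar\p_t$ and $p^*_{t+1}(Y_{t+1}=b_t^{\pm})=\tfrac12(1-\bar\p_t)$. First I would prove \eqref{quelstar}. For the ``$-$'' sign, $-\Delta S_{t+1}<-\beta_t$ reads $Y_{t+1}>1+\beta_t/S_t$; since $\beta_t/S_t=\tfrac1N\min(\tfrac{U_t-1}{2},\tfrac{1-d_t}{2})<\tfrac{U_t-1}{2}$ and both $a_t^+=U_t$ and $a_t^-\ge\tfrac{U_t+1}{2}$ exceed $1+\beta_t/S_t$, the event captures the two ``up'' atoms and so has mass $\bar\p_t\ge\kappa_t$ (as $M>1$). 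The ``$+$'' sign is symmetric: $\Delta S_{t+1}<-\beta_t$ reads $Y_{t+1}<1-\beta_t/S_t$, and both $b_t^-=d_t$ and $b_t^+\le\tfrac{d_t+1}{2}$ lie below $1-\beta_t/S_t$, giving mass $1-\bar\p_t\ge\kappa_t$.

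Second, since $d=1$ and, by Lemma \ref{lembin}, $\mbox{Conv}(D^{t+1})(\o^t)=[S_t(d_t-1),S_t(U_t-1)]$ is a nondegenerate interval containing $0$, we have $\mbox{Aff}(D^{t+1})(\o^t)=\mathbb{R}$, so any $h\neq0$ reduces by scaling to $h=\pm1$; taking $p_h=p^*_{t+1}(\cdot,\o^t)$ and reading off the appropriate sign in \eqref{quelstar} yields \eqref{valakiARB}. For \eqref{vamakiARB} I would verify the inclusion $(-\varepsilon_t,\varepsilon_t)\subset[S_t(d_t-1),S_t(U_t-1)]$, which holds because $\varepsilon_t=\tfrac{S_t}{N}\min(U_t-1,1-d_t)$ with $N>1$. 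Finally, the support of $\Delta S_{t+1}$ under $p^*_{t+1}$ consists of $S_t(a_t^{\pm}-1)$ and $S_t(b_t^{\pm}-1)$; since $b_t^-=d_t$ is the smallest of the four values and $a_t^+=U_t$ the largest (using $d_t\le b_t^+$ and $a_t^-\le U_t$, which follow from Assumption \ref{Abin}), one gets $\mbox{Conv}(D_{P^*}^{t+1})(\o^t)=[S_t(d_t-1),S_t(U_t-1)]=\mbox{Conv}(D^{t+1})(\o^t)$, whence $0\in\mbox{Ri}(\mbox{Conv}(D_{P^*}^{t+1}))(\o^t)$ and $\mbox{Aff}(D_{P^*}^{t+1})(\o^t)=\mathbb{R}$. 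That $P^*\in\mathcal{Q}^T$ follows since each $r^*_{t+1}$ is a convex combination of elements of $\mathcal{B}_{t+1}(\o^t)$ and depends Borel-measurably on $\o^t$.

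For the non-domination claim I would invoke Proposition \ref{DOM} and thus reduce to showing that the one-period set $\mathcal{Q}_{t+1}(\o^t)$ is not dominated whenever $u_t<U_t$ or $d_t<D_t$. Suppose $u_t<U_t$ (the other case is symmetric, using $\p_t<1$ instead of $\Pi_t>0$). For each $u\in(u_t,U_t)$ pick $\omega_u\in\O_{t+1}$ with $Y_{t+1}(\omega_u)=u$ (possible since $Y_{t+1}(\O_{t+1})=(0,\infty)$) together with a fixed $\omega_d\in\O_{t+1}$ satisfying $Y_{t+1}(\omega_d)=d_t$, and set $q^u:=\Pi_t\delta_{\omega_u}+(1-\Pi_t)\delta_{\omega_d}\in\widetilde{\mathcal{Q}}_{t+1}(\o^t)\subset\mathcal{Q}_{t+1}(\o^t)$, which is legitimate because $\Pi_t\in[\p_t,\Pi_t]$ and $\Pi_t>0$. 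The singletons $\{\omega_u\}$ are pairwise disjoint and $q^u(\{\omega_u\})=\Pi_t>0$, so any dominating probability $\mu$ would have to assign positive mass to uncountably many disjoint sets, which is impossible; hence $\mathcal{Q}_{t+1}(\o^t)$ is not dominated. Choosing coefficients with $u_t<U_t$ on a set of positive measure and applying Proposition \ref{DOM} then produces a non-dominated $\mathcal{Q}^T$.

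I expect the main obstacle to be this non-domination step: one must exhibit an uncountable family inside $\mathcal{Q}_{t+1}(\o^t)$ together with disjoint sets each carrying positive mass, and the delicate point is to realise the prescribed two-point laws as genuine measures on $\O_{t+1}$ through distinct preimages under $Y_{t+1}$ — the shared ``down'' atom $\omega_d$ is harmless, since only disjointness of the charged ``up'' singletons is used. The measurability needed for $P^*\in\mathcal{Q}^T$ is the only other technical point, and it is handled by the Borel-measurability of $\bar\p_t,a_t^{\pm},b_t^{\pm}$.
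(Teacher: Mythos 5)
Your proposal is correct and follows essentially the same route as the paper: the same four-atom computation behind \eqref{quelstar} (which the paper organises through the component measures $q^{\pm}_{t+1}$ in \eqref{HH1}--\eqref{HH2}), the same reduction to $h=\pm 1$ for \eqref{valakiARB} with $p_h=p^*_{t+1}(\cdot,\o^t)$, the same identification of the extreme atoms $b_t^-=d_t$ and $a_t^+=U_t$ for the claims on $D_{P^*}^{t+1}$, and the same uncountable-family-of-atoms argument combined with Proposition \ref{DOM} for non-domination. The only (harmless) deviations are that you check \eqref{vamakiARB} by direct inclusion rather than invoking Theorem \ref{ARBequival}, and that by collecting \emph{both} up (resp.\ down) atoms of $p^*_{t+1}$ you get the full mass $\bar{\pi}_t$ (resp.\ $1-\bar{\pi}_t$) instead of the factor $\tfrac12$ that the displayed bounds \eqref{HH1}--\eqref{HH2} alone would yield --- which is precisely what makes the constant $\kappa_t$ work with merely $M>1$.
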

\begin{remark}
Note that $P^*$ is not unique. 
The (Borel) measurability of $\varepsilon_{t}, \beta_{t}$ and $\kappa_{t}$  are clear. Similarly they will inherit  any integrability conditions  imposed on $S_{t}$, $\p_{t}$, $\Pi_{t}$,  $d_{t}$, $D_{t},$ $u_{t}$ and $U_{t}$. For instance  if  they 
belong to $\mathcal{W}_{t}$ 
for all $1 \leq t \leq T$  so do $\varepsilon_{t}, \beta_{t}$ and $\kappa_{t}$. 
\end{remark}
\begin{proof}
Fix some $0 \leq t \leq T-1$, $\o^{t} \in \O^{t}$. 
Let $q_{t+1}^{\pm}(Y_{t+1}\in \cdot,\o^t) := r_{t+1}^{\pm}(\cdot ,\o^{t}) \in \mathcal{Q}_{t+1}(\o^{t})$. Then 
\begin{small}
\begin{align}
  \label{HH1}
q_{t+1}^{+}\left(\Delta S_{t+1}(\o^{t},\cdot) <   -\beta_{t}(\o^{t}), \o^t \right)
&\geq q_{t+1}^{+}\left(Y_{t+1}(\cdot)  <  \frac{d_{t}(\o^{t})+1}{2}, \o^t\right)
  \geq 1-\bar \pi_t(\o^t) \geq \kappa_{t}(\o^{t})\\
    \label{HH2}
q_{t+1}^{-}\left(\Delta S_{t+1}(\o^{t},\cdot) >   \beta_{t}(\o^{t}), \o^t \right) & \geq  q_{t+1}^{-}\left(Y_{t+1}(\cdot)  > \frac{U_{t}(\o^{t})+1}{2},\o^t\right)  \geq \bar \pi_t(\o^t) \geq \kappa_{t}(\o^{t})
\end{align}
\end{small}
and \eqref{quelstar} follows  while \eqref{vamakiARB} follows from Theorem \ref{ARBequival}.\\
As $p_{t+1}^*\in \mathcal{SK}_{t+1}$,  $P^{*} \in \mathcal{Q}^{T}$. 
From \eqref{quelstar},  the quantitative no-arbitrage  \eqref{valakiARB} holds true for all $\o^{t} \in \O^{t}$ with $p_{h}=p_{t+1}^{*}(\cdot, \o^{t})$ for all possible strategy $h$.  Therefore the $NA(P^*)$ condition holds true (see Remark \ref{singleP2}). 
Theorem \ref{ARBequival} 
implies also  that $0 \in \mbox{Ri} \left({\mbox{Conv}}(D_{P^*}^{t+1})\right)(\o^{t})$. Moreover $\mbox{Aff}\left(D_{P^*}^{t+1}\right)(\o^{t})=\mbox{Aff} \left(D^{t+1}\right)(\o^{t})=\mathbb{R}$  for all $\o^{t}$. \\
For the last item, assume that  for some $0 \leq t \leq T-1$ and some $\o^{t} \in \O^{t}$,  $u_{t}(\o^{t})<U_{t}(\o^{t})$ and that the set $\mathcal{Q}_{t+1}(\o^{t})$ is dominated by some measure $\widehat p$. 
For  $x \in(0,\infty)$ let 
$A_{x}:=\{Y_{t+1}^{-1}(\{x\})\}\neq \emptyset$ as $Y_{t+1}(\O^{t})=(0,\infty)$. 
Fix  $x(\o^{t}) \in(\min(1,u_{t}(\o^{t})),U_{t}(\o^{t}))$ and choose $a(\o^{t}) \in A_{x(\o^{t})}$ and 
$b(\o^{t}) \in A_{d_t(\o^{t})}$. Let    $r_x(.,\o^t):= \Pi_{t}(\o^{t}) \delta_{a(\o^{t})} + (1-\Pi_{t}(\o^{t}) ) \delta_{b(\o^{t})} \in \mathcal{B}_{t+1}(\o^{t})$ and $p_x(Y_{t+1} \in \cdot,\o^t):= r_x(.,\o^t) \in  \mathcal{Q}_{t+1}(\o^{t}).$
As  $r_x(\{a(\o^{t})\},\o^t)=\Pi_{t}(\o^{t})>0$, $\widehat p (\{a(\o^{t})\})>0$, which leads to an uncountable number of atoms for $\widehat p$.\\ 
 Then,  Proposition \ref{DOM} allows to build examples of  sets $\mathcal{Q}^{T}$ which are not dominated.
\end{proof}

\subsection{ Discretized $d$-dimensional diffusion}
We provide now an example   for  the discretized dynamics of a multi-dimensional diffusion process in the spirit of \citep[Example 8.2]{cr11}.\\
Fix a period $T \geq 1$ and $  n \geq d$. Denote by 
 $M_{n}$  the set of real-valued matrix with $n$ rows and $n$ columns. Choose some constant $Y_0 \in \mathbb{R}^{n}$  and  let   $Y_{t+1}$  be  defined by the following difference equation for all $0 \leq t \leq T-1$, $(\o^{t},\o_{t+1}) \in \O^{t} \times \O_{t+1}$
\begin{align}
\label{Yt}
Y_{t+1}(\o^{t},\o_{t+1})-Y_{t}(\o^{t})=\mu_{t+1}\left(Y_{t}(\o^{t}),\o^{t},\o_{t+1}\right) + \nu_{t+1}\left(Y_{t}(\o^{t}),\o^{t}\right) Z_{t+1}(\o^{t},\o_{t+1})
\end{align}
where $\mu_{t+1}: \mathbb{R}^{n} \times \O^{t} \times \O_{t+1} \to \mathbb{R}^{n}$, $\nu_{t+1}:  \mathbb{R}^{n} \times \O^{t} \to M_{n}$, $Z_{t+1}: \O^{t}\times \O_{t+1} \to \mathbb{R}^{n}$ are assumed to be Borel-measurable.  \\
Two cases will be studied: $S^{i}_{t}=Y^{i}_{t}$ and  $S^{i}_{t}=e^{Y^{i}_{t}}$ for all $1 \leq i \leq d$. In a uni-prior setting if the law of $Z_{t+1}$ is assumed to be normal, this corresponds to the popular normal and lognormal dynamic for the underlying assets.  Note that  in both cases if $d<n$ we may think that $Y^{i}_{t}$ for $i>d$ represents some non-traded assets or the evolution of some economic factors that will influence the market.\\
Assume that some $P^{0}\in \mathfrak{P}(\O^{T}) $ is given with  fixed disintegration $P^{0}:=P^{0}_{1}\otimes p_{2}^{0}\otimes \cdots \otimes p_{T}^{0}$, where  $p^{0}_{t+1} \in \mathcal{SK}_{t+1}$ for all $0 \leq t \leq T-1$: $P^0$ could be  an initial guess or estimate for the prior.    For all $0 \leq t \leq T-1$,  let $r_{t}$ and $q_{t}$ be  functions from $\O^t$ to $(0,\infty)$: $r_{t}$ will be the bound on the drift while $q_{t}$ guarantees that the diffusion is non-degenerated (in  dimension one it is  a lower bound on the volatility). 
We make the following assumptions on the dynamic of $Y$.
\begin{assumption}
\label{assdiff}
For all $0 \leq t \leq T-1$, $r_{t}$ is $\mathcal{B}(\O^{t})$-measurable. For all 
 $\o^{t} \in \O^{t}, \; x \in \mathbb{R}^{n}$, 
\begin{itemize}
\item  $\nu_{t+1}(x,\o^{t}) \in M_{n}^{{q_{t}(\o^{t})}}$  where 
$M^{\delta}_{n}:=\left\{M \in M_{n},\; \forall \, h \in \mathbb{R}^{n},\;h^{t}MM^{t}h  \geq {\delta} h^{t}h \right\}$ for $\delta>0$.
\item  $Z_{t+1}(\o^{t},\cdot)$ and $\mu_{t+1}(\o^{t},\cdot)$ are independent under $p^{0}_{t+1}(\cdot,\o^{t})$.
\item $p^{0}_{t+1}\left(\mu_{t+1}(Y_{t}(\o^{t}),\o^{t},\cdot) \in [-r_{t}(\o^{t}),r_{t}(\o^{t})]^{n},\o^{t}\right)=1$
\item  $D_{Z_{t+1}}^{t+1}(\o^{t})=\mathbb{R}^{n}$, where $D_{Z_{t+1}}^{t+1}(\o^{t})$ is the support of $Z_{t+1}(\o^{t},\cdot)$ under $p_{t+1}^{0}(\cdot,\o^{t}),$ see \eqref{DefPDARB1}. 
\end{itemize}
\end{assumption}
The model  uncertainty  on the laws  of   $\mu_{t+1}$ and   $Z_{t+1}$ is given by the folowing sets.
\begin{align}
\label{defQ1}
\mathcal{Q}^{1}_{t+1}(\o^{t})&:=\left\{p \in \mathfrak{P}(\O_{t+1}),\;  p\left(\mu_{t+1}(Y_{t}(\o^{t}),\o^{t},\cdot)\in [-r_{t}(\o^{t}),r_{t}(\o^{t})]^{n}\right)=1\right\},\\
\label{defQ2}
\mathcal{Q}^{2}_{t+1}(\o^{t})&:=\left\{p \in \mathfrak{P}(\O_{t+1}),\; F_{t}(p,\o^{t}) = 0\right\},\\
\nonumber
\mathcal{Q}_{t+1}(\o^{t})&:= \mathcal{Q}^{1}_{t+1}(\o^{t}) \bigcap \mathcal{Q}^{2}_{t+1}(\o^{t}),
\end{align}
where for some   $k \geq 1$,  $F_{t}: \mathfrak{P}(\O_{t+1}) \times \O^{t} \to \mathbb{R}^{k}$ is a Borel-measurable function such that $F_{t}(p_{t+1}^{0}(\cdot,\o^{t}),\o^{t})=0$ 
   for all $0 \leq t \leq T-1$, $\o^{t} \in \O^{t}$. 
By assumption  $p_{t+1}^{0}(\cdot,\o^{t}) \in \mathcal{Q}_{t+1}(\o^{t})$ for all $\o^{t} \in \O^{t}$ and thus $P^0 \in \mathcal{Q}^{T}$. Note  that for a given $p \in \mathcal{Q}_{t+1}(\o^{t})$ the law of $Z_{t+1}(\o^{t},\cdot)$ and $\mu_{t+1}(\o^{t},\cdot)$ under $p$ are not necessarily independent.

The financial interpretation is the following. The set $\mathcal{Q}^{1}_{t+1}(\o^{t})$ allows  the drift of the diffusion to be not only stochastic but with an unknown  distribution. It  is only assumed to be bounded. 
If $F_{t}(p,\o^{t})= 1_{\mbox{dist}_{t}\left(p\;,\;p^{0}_{t+1}(\cdot,\o^{t})\right) \leq b_{t}(\o^{t})}-1$ with $b_{t}(\o^t)>0$ and 
$\mbox{dist}_{t}$ some kind of distance function between probability measures, the set $\mathcal{Q}^{2}_{t+1}(\o^{t})$ contains models which are close enough from $p_{t+1}^{0}(\cdot,\o^{t})$.  This could happen if  the  physical measure is not known but estimated from data at each step. A popular choice for the $\mbox{dist}_{t}$ function is the Wasserstein distance. But one may also choose for the  coordinate $i$    of $F(p,\o^{t})$ (with $1\leq i \leq k$) the difference between the moments of order $i$ of  $Z_{t+1}(\o^{t},\cdot)$ under $p$ and under $p_{t+1}^{0}(\cdot,\o^{t})$ 
and incorporate all the models $p$ such that the moments of $Z_{t+1}(\o^{t},\cdot)$ under $p$ are  equals to the ones of $Z_{t+1}(\o^{t},\cdot)$ under $p^0_{t+1}(\cdot,\o^{t})$ up to order $k$.\\

\begin{lemma}  
Under Assumption \ref{assdiff}, Assumptions   \ref{SassARB} and \ref{QanalyticARB} are satisfied. 
\end{lemma}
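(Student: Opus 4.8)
The plan is to treat the two assumptions separately: Assumption \ref{SassARB} follows from a routine induction, while the substance of the lemma lies in verifying the three requirements (non-emptiness, convexity, analytic graph) in Assumption \ref{QanalyticARB}, the last of which carries the main difficulty.

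First I would establish adaptedness. Since $Y_0$ is constant and $\mu_{t+1},\nu_{t+1},Z_{t+1}$ are Borel-measurable by hypothesis, an induction on $t$ based on \eqref{Yt} shows that each $Y_t$ is $\mathcal{B}(\O^t)$-measurable: assuming $Y_t$ is $\mathcal{B}(\O^t)$-measurable, the maps $\o^{t+1}\mapsto \mu_{t+1}(Y_t(\o^t),\o^t,\o_{t+1})$ and $\o^{t+1}\mapsto \nu_{t+1}(Y_t(\o^t),\o^t)Z_{t+1}(\o^t,\o_{t+1})$ are $\mathcal{B}(\O^{t+1})$-measurable as compositions of Borel maps, hence so is their sum $Y_{t+1}$. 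As $x\mapsto x$ and $x\mapsto e^{x}$ are continuous, in either case $S_t=(Y^i_t)_{1\le i\le d}$ or $S_t=(e^{Y^i_t})_{1\le i\le d}$ is $\mathcal{B}(\O^t)$-measurable, which is Assumption \ref{SassARB}. Note that only the Borel-measurability of $\mu_{t+1},\nu_{t+1},Z_{t+1},r_t$ enters here; the non-degeneracy, independence and support conditions of Assumption \ref{assdiff} play no role in this lemma.

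Turning to Assumption \ref{QanalyticARB}, non-emptiness is immediate from Assumption \ref{assdiff}: the relations $p^0_{t+1}(\mu_{t+1}(Y_t(\o^t),\o^t,\cdot)\in[-r_t(\o^t),r_t(\o^t)]^n,\o^t)=1$ and $F_t(p^0_{t+1}(\cdot,\o^t),\o^t)=0$ show that $p^0_{t+1}(\cdot,\o^t)$ belongs to $\mathcal{Q}^1_{t+1}(\o^t)\cap\mathcal{Q}^2_{t+1}(\o^t)=\mathcal{Q}_{t+1}(\o^t)$. For convexity, $\mathcal{Q}^1_{t+1}(\o^t)$ is convex because $\{p:p(A)=1\}$ is stable under mixtures for the fixed (i.e. $p$-independent) set $A=\{\o_{t+1}:\mu_{t+1}(Y_t(\o^t),\o^t,\o_{t+1})\in[-r_t(\o^t),r_t(\o^t)]^n\}$; the set $\mathcal{Q}^2_{t+1}(\o^t)$ is convex as soon as $F_t(\cdot,\o^t)$ has a convex zero-level set, which is the one point where a structural property of $F_t$ beyond Borel-measurability is used and which holds for all the instances discussed above (affine moment-matching constraints, or balls for a distance such as $W_2$ whose square is jointly convex along linear interpolations). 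Being an intersection of two convex sets, $\mathcal{Q}_{t+1}(\o^t)$ is then convex.

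The analyticity of $\mbox{graph}(\mathcal{Q}_{t+1})$ is the step I expect to be the real obstacle, and I would obtain it by showing that both $\mbox{graph}(\mathcal{Q}^1_{t+1})$ and $\mbox{graph}(\mathcal{Q}^2_{t+1})$ are in fact Borel, hence analytic by \eqref{analyticsetARB}, so that their intersection $\mbox{graph}(\mathcal{Q}_{t+1})$ is analytic. Since $F_t$ is Borel-measurable on $\mathfrak{P}(\O_{t+1})\times\O^t$, the set $\mbox{graph}(\mathcal{Q}^2_{t+1})=\{(\o^t,p):F_t(p,\o^t)=0\}$ is the preimage of $\{0\}$ and is Borel. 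For $\mathcal{Q}^1_{t+1}$, I would first check that $G:=\{(\o^t,\o_{t+1})\in\O^{t+1}:\mu_{t+1}(Y_t(\o^t),\o^t,\o_{t+1})\in[-r_t(\o^t),r_t(\o^t)]^n\}$ is a Borel subset of $\O^{t+1}$, using that $\mu_{t+1},Y_t$ and $r_t$ are Borel. The delicate part is to pass from the jointly Borel indicator $1_G$ to the integrated quantity: by \citep[Proposition 7.29 p144]{BS} the map $(\o^t,p)\mapsto\int_{\O_{t+1}}1_G(\o^t,\o_{t+1})\,p(d\o_{t+1})=p(G_{\o^t})$ is Borel-measurable on $\O^t\times\mathfrak{P}(\O_{t+1})$, exactly as this proposition was invoked in the proof of Lemma \ref{Dmeasurability}. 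Consequently $\mbox{graph}(\mathcal{Q}^1_{t+1})=\{(\o^t,p):p(G_{\o^t})=1\}$ is Borel, and combining the three points yields Assumption \ref{QanalyticARB}.
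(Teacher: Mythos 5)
Your proof is correct and takes essentially the same route as the paper: Assumption \ref{SassARB} from the Borel-measurability of $\mu_{t+1},\nu_{t+1},Z_{t+1}$ (hence of $Y_{t+1}$), non-emptiness via $p^0_{t+1}(\cdot,\o^t)$, convexity of $\mathcal{Q}^1_{t+1}(\o^t)$ from stability of $\{p:p(A)=1\}$ under mixtures, and Borel (hence analytic) graphs for $\mathcal{Q}^1_{t+1}$ and $\mathcal{Q}^2_{t+1}$ using \citep[Proposition 7.29 p144]{BS} exactly as the paper does. The one minor divergence concerns the convexity of $\mathcal{Q}^2_{t+1}$: you resolve it by restricting to $F_t(\cdot,\o^t)$ with convex zero-level sets (verified for the examples), whereas the paper, after the same observation, additionally offers the fallback of replacing $\mathcal{Q}^2_{t+1}$ by its convex hull, whose graph is shown to be analytic as in the proof of Lemma \ref{cesconv} --- both treatments acknowledge the same caveat, so this does not affect correctness.
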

\begin{proof}
Assumption \ref{SassARB} follows from the 
Borel measurability of $\mu_{t+1}$,  $\nu_{t+1}$, $Z_{t+1}$ and thus of $Y_{t+1}$.
As the function $(\o^{t},p) \to p\left(\mu_{t+1}(Y_{t}(\o^{t}),\o^{t},\cdot)\in [-r_{t}(\o^{t}),r_{t}(\o^{t})]^{n}\right)$ is Borel-measurable (see  \citep[Proposition 7.29 p144]{BS}), $\mbox{graph} \left(\mathcal{Q}^{1}_{t+1}\right)$ is analytic. The Borel-measurability of $F_{t}$ implies that $\mbox{graph} \left(\mathcal{Q}^{2}_{t+1}\right)$ is an analytic set and  so is $\mbox{graph} \left(\mathcal{Q}_{t+1}\right)$. It is clear that $\mathcal{Q}^{1}_{t+1}$ is convex valued. If  
$F_{t}(\cdot,\o^{t})$  is convex 
 for all $\o^{t} \in \O^{t}$,
 then $\mathcal{Q}^{2}_{t+1}$ is convex valued. Else one may consider the convex hull of $\mathcal{Q}^{2}_{t+1}$ whose analyticity can be established as in the proof of Lemma \ref{cesconv}.  Assumption \ref{QanalyticARB}  is proved. 
 \end{proof}\\
 Now we give explicit values for 
$\beta_{t}$ and $\kappa_{t}$ in \eqref{valakiARB}   with $p_{h}=p_{t+1}^{0}(\cdot,\o^{t})$ and prove $NA(\mathcal{Q}^{T})$. 
 \begin{lemma}
 Assume that Assumption \ref{assdiff} is satisfied and that $S^{i}_{t}=Y^{i}_{t}$ for all $1 \leq i \leq d$ and all $1 \leq t \leq T$. Then  $D^{t+1}(\o^{t})=\mathbb{R}^d$  for all $\o^t \in \Omega^t$ and $1 \leq t \leq T-1$ and $NA(\mathcal{Q}^{T})$ condition holds true. Let 
 \begin{align}
\label{bof}
\kappa_{t}(\o^{t}):=  \min_{k \in K} \left(p^{0}_{t+1}\left(G_{k}(\o^{t}),\o^{t}\right)\right)>0  \; \mbox{and} \;
\beta_{t}(\o^{t}):=\frac{\mbox{ln 2}}{\sqrt{n}}>0,\end{align}
where  $K$ is the (finite)  set of functions from   $\{1,\cdots, d\}$  to $\{-1,1\}$ and  for some $k \in K$  
\begin{align}
\label{gk}
G_{k}(\o^{t}):=\left\{k(i)  \Delta Y^{i}_{t+1}(\o^{t},\cdot) < -\mbox{ln 2}, \; 1 \leq i \leq d \right\}.
\end{align}
Then, for all $h \in \mathbb{R}^d$ with $|h|=1$
\begin{align}
\label{diffaoa}
p_{t+1}^{0}\left(h \Delta S_{t+1}(\o^{t},\cdot) < - \beta_{t}(\o^{t}),\o^{t}\right) \geq \kappa_{t}.
\end{align}
\end{lemma}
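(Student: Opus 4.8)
The plan is to reduce all three claims to a single support computation under the reference kernel $p^0_{t+1}(\cdot,\o^t)$, and then read off the conclusions. Fix $0\le t\le T-1$ and $\o^t\in\O^t$, and abbreviate $\nu:=\nu_{t+1}(Y_t(\o^t),\o^t)$, $\mu:=\mu_{t+1}(Y_t(\o^t),\o^t,\cdot)$ and $Z:=Z_{t+1}(\o^t,\cdot)$, so that $\Delta S_{t+1}(\o^t,\cdot)$ consists of the first $d$ coordinates of $\Delta Y_{t+1}(\o^t,\cdot)=\mu+\nu Z$. First I would observe that $\nu\nu^t\succeq q_t(\o^t)I$ with $q_t(\o^t)>0$ forces $\nu$ to be invertible, so $z\mapsto\nu z$ is a homeomorphism of $\mathbb{R}^n$; since $Z$ has full support $\mathbb{R}^n$ under $p^0_{t+1}(\cdot,\o^t)$ (last item of Assumption \ref{assdiff}), $\nu Z$ again has support $\mathbb{R}^n$. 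Because $\mu$ and $Z$, hence $\mu$ and $\nu Z$, are independent under $p^0_{t+1}(\cdot,\o^t)$, the support of the sum is the closure of the Minkowski sum of the supports; as $\mbox{supp}(\mu)$ is a non-empty subset of $[-r_t(\o^t),r_t(\o^t)]^n$, this Minkowski sum already equals $\mathbb{R}^n$. Projecting onto the first $d$ coordinates (the pre-image of any non-empty open $U\subset\mathbb{R}^d$ is non-empty and open, hence of positive mass) yields $\mbox{supp}(\Delta S_{t+1}(\o^t,\cdot))=\mathbb{R}^d$, i.e. $D^{t+1}_{P^0}(\o^t)=\mathbb{R}^d$; combined with $P^0\in\mathcal{Q}^T$ and \eqref{DvsE} this gives $D^{t+1}(\o^t)=\mathbb{R}^d$ for every $\o^t$.

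The no-arbitrage claim is then immediate: since $D^{t+1}(\o^t)=\mathbb{R}^d$ we have $\mbox{Conv}(D^{t+1})(\o^t)=\mathbb{R}^d$ and $0\in\mathbb{R}^d=\mbox{Ri}(\mbox{Conv}(D^{t+1}))(\o^t)$ for every $\o^t$. Thus the geometric no-arbitrage condition of Definition \ref{NAGarb} holds with $\O^t_{gNA}=\O^t$, and Theorem \ref{ARBequival} gives $NA(\mathcal{Q}^T)$.

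Finally I would establish $\kappa_t(\o^t)>0$ together with \eqref{diffaoa}. For each $k\in K$ the event $G_k(\o^t)$ equals $\{(\Delta Y^1_{t+1},\dots,\Delta Y^d_{t+1})\in O_k\}$, where $O_k:=\prod_{i=1}^d I_i$ with $I_i=(-\infty,-\ln 2)$ when $k(i)=1$ and $I_i=(\ln 2,\infty)$ when $k(i)=-1$; this is a non-empty open subset of $\mathbb{R}^d$, so the full-support property above forces $p^0_{t+1}(G_k(\o^t),\o^t)>0$, and since $K$ is finite $\kappa_t(\o^t)>0$. Now fix $h\in\mathbb{R}^d$ with $|h|=1$ and take $k(i):=\mbox{sign}(h^i)$ (with $k(i)=1$ when $h^i=0$). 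On $G_k(\o^t)$ each term satisfies $h^i\Delta Y^i_{t+1}\le -|h^i|\ln 2$, strictly whenever $h^i\ne 0$, so that $h\Delta S_{t+1}(\o^t,\cdot)<-\ln 2\,\|h\|_1\le -\ln 2\le -\tfrac{\ln 2}{\sqrt n}=-\beta_t(\o^t)$, using $\|h\|_1\ge\|h\|_2=|h|=1$ and $n\ge 1$. Hence $G_k(\o^t)\subset\{h\Delta S_{t+1}(\o^t,\cdot)<-\beta_t(\o^t)\}$, and $p^0_{t+1}(h\Delta S_{t+1}(\o^t,\cdot)<-\beta_t(\o^t),\o^t)\ge p^0_{t+1}(G_k(\o^t),\o^t)\ge\kappa_t(\o^t)$, which is \eqref{diffaoa}.

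The main obstacle is the first paragraph: one must correctly invoke the standard but slightly delicate fact that the support of a sum of independent random vectors is the closure of the Minkowski sum of their supports, and use the invertibility of $\nu$ to transport the full support of $Z$ through the linear map and then through the coordinate projection. Once $D^{t+1}=\mathbb{R}^d$ is secured, the no-arbitrage conclusion via Theorem \ref{ARBequival} and the explicit bound \eqref{diffaoa} are elementary; note in particular that the stated $\beta_t=\frac{\ln 2}{\sqrt n}$ is comfortably conservative, the argument in fact yielding the sharper threshold $\ln 2$.
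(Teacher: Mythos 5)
Your proof is correct and follows essentially the same route as the paper's: full support of $\Delta Y_{t+1}$ under $p^{0}_{t+1}(\cdot,\o^{t})$ via invertibility of $\nu_{t+1}$ (forced by $\nu\nu^{t}\succeq q_{t}(\o^{t})I$) and independence of the drift, projection onto the first $d$ coordinates to get $D^{t+1}(\o^{t})=\mathbb{R}^{d}$ through \eqref{DvsE}, the geometric condition together with Theorem \ref{ARBequival} for $NA(\mathcal{Q}^{T})$, and the sign-vector event $G_{k^*}$ for \eqref{diffaoa}. The only cosmetic differences are that you package the sum step as the standard Minkowski-sum-of-supports fact for independent vectors, which the paper instead proves directly by the convolution identity $p^{0}\left(X+R\in B(x_{0},\varepsilon)\right)=\int p^{0}\left(X\in B(x_{0}-u,\varepsilon)\right)p^{0}_{R}(du)>0$, and that your estimate $\|h\|_{1}\geq |h|=1$ yields the sharper threshold $-\ln 2$, whereas the paper isolates a coordinate $i^{*}$ with $|h_{i^{*}}|\geq 1/\sqrt{n}$ and obtains exactly $-\beta_{t}(\o^{t})=-\ln 2/\sqrt{n}$; both establish the stated bound.
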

\begin{proof}
First, we  show that   for all $\o^{t} \in \O^{t},$ $D^{t+1}(\o^{t})=\mathbb{R}^d.$ To do that we prove that 
 \begin{align}
 \label{PY}
 D^{t+1}_{Y}(\o^{t}):= \bigcap\left\{ A \subset \mathbb{R}^{n},\; \mbox{closed}, \; p\left(\Delta Y_{t+1}(\o^{t},\cdot) \in A,\o^{t}\right)=1 \; \forall p \in \mathcal{Q}_{t+1}(\o^{t})\right\}= \mathbb{R}^{n}.
 \end{align}
Let $D_{Y,P_0}^{t+1}(\o^{t})$ be the support of $Y(\o^{t},\cdot)$ under $p_{t+1}^{0}(\cdot,\o^{t}),$ see \eqref{DefPDARB1}.
Using  \eqref{DvsE} $D^{t+1}_{Y,P_0}(\o^{t}) \subset D_Y^{t+1}(\o^{t})$ and it  is enough to prove that   $D^{t+1}_{Y,P_0}(\o^{t})=\mathbb{R}^n. $
Fix some $\o^{t} \in \O^{t}$. For ease of reading,  we  adopt the following notations. Let $\Delta Y(\cdot)= \Delta Y_{t+1}(\o^{t},\cdot)$, $R(\cdot)=\mu_{t+1}(Y_{t}(\o^{t}),\o^{t},\cdot)$, $X(\cdot)= \Delta Y(\cdot) -R(\cdot)$, $M=\nu_{t+1}(Y_{t}(\o^{t}),\o^{t})$, $Z(\cdot)=Z_{t+1}(\o^{t},\cdot)$ and $p^0(\cdot)=p^{0}_{t+1}(\cdot,\o^{t})$. As   $X(\cdot)=MZ(\cdot)$ (see \eqref{Yt}) and    $Z$ and $R$ are independent under $p^0$,  $X$ and $R$ are also independent under $p^0$.\\
Fix some $x_0 \in \mathbb{R}^{n}$, $\varepsilon>0$. By assumption $M$ is an invertible matrix:  There exists some $y_0 \in \mathbb{R}^{n}$, $\alpha>0$, such that $B(y_0,\alpha) \subset M^{-1}\left(B(x_0,\varepsilon)\right)$\footnote{$M^{-1}\left(B(x_0,\varepsilon)\right)$ is open in $\mathbb{R}^{n}$ and is not empty because $M$ is a bijective function on $\mathbb{R}^{n}$.}. The forth item of Assumption \ref{assdiff} together with Lemma \ref{supportcar} imply that\footnote{With the notation 
$p^0_{R}(A)=p^0(R \in A)$ for all $A \in \Bc(\mathbb{R}^{n})$.} 
\begin{align*}
p^0\left( X(\cdot) \in B(x_0,\varepsilon)\right) &=  p^0\left( Z(\cdot) \in M^{-1}\left(B(x_0,\varepsilon) \right)\right) \geq p^0\left( Z(\cdot) \in B(y_0,\alpha)\right)>0\\
p^0\left( \Delta Y(\cdot) \in B(x_0,\varepsilon)\right) & = p^0\left( X(\cdot)+R(\cdot)  \in B(x_0,\varepsilon)\right)
 =  \int_{\mathbb{R}} p^0\left( X(\cdot) \in B(x_0-u,\varepsilon)\right) p^0_{R}(du)>0,
 \end{align*}
as $X$ and $R$ are independent under $p^0$.
Lemma \ref{supportcar} implies that  the supports of $X $ and of $\Delta Y$ under $p^0$ are equal to  $\mathbb{R}^{n}$.

For all $0 \leq t \leq T-1$, $\o^{t} \in \O^{t}$, $D^{t+1}(\o^{t})=\mathbb{R}^{d}$ and   $0 \in \mbox{Ri} \left(\mbox{Aff}\left({D}^{t+1}\right)(\omega^{t})\right)$.  Theorem \ref{ARBequival} implies that the $NA(\mathcal{Q}^{T})$ condition is verified.\\
Fix now some $\o^{t} \in \O^{t}$ and  $h \in \mathbb{R}^{d}$ with $|h|=1$.
First, $D_{Y,P_0}^{t+1}(\o^{t})=\mathbb{R}^{n}$ implies that for all  $k \in K$, $\o^{t} \in \O^{t}$  \begin{align}
\label{PoG}
p_{t+1}^0\left(G_{k}(\o^{t}),\o^{t}\right)=p^{0}_{t+1}(\Delta Y_{t+1}(\o^{t},\cdot) \in \mathcal{O}_{h},\o^{t})>0,
\end{align}
where $\mathcal{O}_{h}:=\{z \in \mathbb{R}^{n},\; k(i)z_{i} < -\mbox{ln 2},\; \forall \, 1 \leq i \leq d\}$ is an open set of $\mathbb{R}^{n}$.
Set  $k^*(i):=\mbox{sign}(h_{i})$ for all $1 \leq i \leq d$, then $k^* \in K$. Let $\o_{t+1} \in G_{k^*}(\o^{t})$ as  \eqref{PoG} implies that $G_{k^*}(\o^{t})$ is not empty.  For all $1 \leq i \leq d$,  
$$h_{i} \Delta S^{i}_{t+1}(\o^{t},\o_{t+1})= |h_{i}| k^{*}(i) \Delta Y^{i}_{t+1}(\o^{t},\o_{t+1}) \leq -\mbox{ln 2} |h_{i}|  \leq 0.$$
As $|h|=1$ there exists $1 \leq i^* \leq d$ such that  $  \frac{1}{\sqrt{n}} \leq \frac{1}{\sqrt{d}}\leq |h_{i^*}| \leq 1$ and 
\begin{align*}
h \Delta S_{t+1}(\o^{t},\o_{t+1}) < - \frac{\mbox{ln 2}}{\sqrt{n}}+ \sum_{i \neq i^*}h_{i} \Delta Y^{i}_{t+1}(\o^{t},\o_{t+1}) \leq - \frac{\mbox{ln 2}}{\sqrt{n}}.
\end{align*}
Therefore 
$p_{t+1}^{0}\left(h \Delta S_{t+1}(\o^{t},\cdot) < - {\mbox{ln 2}}/{\sqrt{n}},\o^{t}\right) \geq \min_{k \in K} \left(p_{t+1}^{0}\left(G_{k}(\o^{t}),\o^{t}\right)\right)$.  
Recalling \eqref{PoG}, \eqref{diffaoa} is satisfied.  
\end{proof}\\
We now treat the log-normal case. 
\begin{lemma}
 Assume that Assumption \ref{assdiff} is satisfied and that $S^{i}_{t}=e^{Y^{i}_{t}}$  for all $1 \leq i \leq d$ and all $1 \leq t \leq T$. Then $D^{t+1}(\o^{t})=\mathbb{R}^d$  for all $\o^t \in \Omega^t$ and $1 \leq t \leq T-1$ and $NA(\mathcal{Q}^{T})$ condition holds true. Let 
 \begin{align}
\label{bof2}
\kappa_{t}(\o^{t}):=&\min_{k \in K} \left(p_{t+1}^0\left(G_{k}(\o^{t}),\o^{t}\right)\right) >0 \; \; \;
\beta_{t}(\o^{t}):=\frac{1}{2}\min\left({1}, \frac{\min_{1 \leq i\leq d}S^{i}_{t}(\o^{t})}{\sqrt{n}}\right)>0,
\end{align}
recall   \eqref{gk} for the definition of $G_{k}(\o^{t}).$ 
Then, for all $h \in \mathbb{R}^{d}$ with $|h|=1$
\begin{eqnarray}
\label{diffaoa2}
p_{t+1}^{0}\left(h \Delta S_{t+1}(\o^{t},\cdot) < - \beta_{t}(\o^{t}),\o^{t}\right) \geq \kappa_{t}.
\end{eqnarray}
\end{lemma}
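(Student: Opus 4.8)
The plan is to reduce, via Theorem \ref{ARBequival}, to the local (geometric and quantitative) no-arbitrage conditions, so that everything hinges on (i) identifying the support $D^{t+1}$ of $\Delta S_{t+1}$ and (ii) establishing the quantitative estimate \eqref{diffaoa2}. The full support of the $Y$-increments is obtained exactly as in the additive case treated just above, since that argument concerns $Y$ alone: writing $\Delta Y_{t+1}(\omega^t,\cdot)=R+MZ$ with $M=\nu_{t+1}(Y_t(\omega^t),\omega^t)$ invertible (first item of Assumption \ref{assdiff}) and $R,Z$ independent under $p^{0}_{t+1}$ with $Z$ of full support (fourth item), Lemma \ref{supportcar} yields that $\Delta Y_{t+1}(\omega^t,\cdot)$ has support $\mathbb{R}^n$ under $p^{0}_{t+1}(\cdot,\omega^t)$, hence $D^{t+1}_{Y}(\omega^t)=\mathbb{R}^n$ by the inclusion in \eqref{DvsE}.

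Passing to $\Delta S_{t+1}$ through the coordinatewise continuous map $y\mapsto S^{i}_{t}(\omega^t)(e^{y}-1)$, which is a homeomorphism of $\mathbb{R}$ onto $(-S^{i}_{t}(\omega^t),\infty)$, transports full support to $D^{t+1}(\omega^t)=\prod_{i=1}^{d}[-S^{i}_{t}(\omega^t),\infty)$; in particular $\mbox{Aff}(D^{t+1})(\omega^t)=\mathbb{R}^d$, which is the full-dimensionality needed below. Because $S^{i}_{t}(\omega^t)=e^{Y^{i}_{t}(\omega^t)}>0$, the origin lies in the interior of this product of half-lines, i.e. $0\in\mbox{Ri}\left(\mbox{Conv}(D^{t+1})\right)(\omega^t)$ for every $\omega^t$. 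Thus the geometric no-arbitrage of Definition \ref{NAGarb} holds and Theorem \ref{ARBequival} gives $NA(\mathcal{Q}^{T})$.

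For \eqref{diffaoa2}, fix $\omega^t$ and $h$ with $|h|=1$ and set $k^{*}(i):=\mbox{sign}(h_i)$ (with $\mbox{sign}(0):=1$), so $k^{*}\in K$. The event $G_{k^{*}}(\omega^t)$ of \eqref{gk} equals $\{\Delta Y_{t+1}(\omega^t,\cdot)\in\mathcal{O}\}$ for the non-empty open set $\mathcal{O}=\{z:\,k^{*}(i)z_i<-\ln 2,\ 1\le i\le d\}$, so the full support forces $p^{0}_{t+1}(G_{k^{*}}(\omega^t),\omega^t)>0$; the same holds for every $k\in K$, whence $\kappa_t(\omega^t)>0$. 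On $G_{k^{*}}$ I analyse each coordinate: when $h_i>0$, $\Delta Y^i<-\ln 2$ gives $e^{\Delta Y^i}<\frac{1}{2}$ and $\Delta S^i<-\frac{1}{2}S^{i}_{t}$; when $h_i<0$, $\Delta Y^i>\ln 2$ gives $\Delta S^i>S^{i}_{t}$; hence $h_i\Delta S^i<-\frac{1}{2}|h_i|S^{i}_{t}\le 0$ in both cases (and $=0$ when $h_i=0$). Keeping only the dominant coordinate $i^{*}$, for which $|h_{i^{*}}|\ge 1/\sqrt{d}\ge 1/\sqrt{n}$, and discarding the remaining non-positive terms gives, for every $\omega_{t+1}\in G_{k^{*}}(\omega^t)$,
\begin{align*}
h\,\Delta S_{t+1}(\omega^{t},\omega_{t+1}) < -\frac{1}{2}\,|h_{i^{*}}|\,S^{i^{*}}_{t}(\omega^{t}) \le -\frac{1}{2}\,\frac{\min_{1\le i\le d} S^{i}_{t}(\omega^{t})}{\sqrt{n}} \le -\beta_{t}(\omega^{t}),
\end{align*}
the last step using $\beta_t=\frac{1}{2}\min\left(1,\min_i S^{i}_{t}/\sqrt{n}\right)\le\frac{1}{2}\min_i S^{i}_{t}/\sqrt{n}$. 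Therefore $G_{k^{*}}(\omega^t)\subset\{h\,\Delta S_{t+1}(\omega^t,\cdot)<-\beta_t(\omega^t)\}$ and \eqref{diffaoa2} follows from $p^{0}_{t+1}(G_{k^{*}}(\omega^t),\omega^t)\ge\kappa_t(\omega^t)$.

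It remains to check that the constants are admissible for Definition \ref{NAQarb}: $\kappa_t\in(0,1)$ because the $2^{d}\ge 2$ events $G_k$ are pairwise disjoint and each of positive probability, while the outer $\min(1,\cdot)$ in $\beta_t$ is inserted precisely to guarantee $\beta_t\le\frac{1}{2}<1$ without weakening the estimate, the argument having in fact produced the sharper bound $\frac{1}{2}\min_i S^{i}_{t}/\sqrt{n}$. The main obstacle, and the only genuine departure from the additive case, is the asymmetry of the exponential: $\Delta S^i=S^{i}_{t}(e^{\Delta Y^i}-1)$ is bounded below by $-S^{i}_{t}$ but unbounded above, so a down-move and an up-move of $\Delta Y^i$ produce per-coordinate bounds of different magnitude. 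The coordinatewise bookkeeping and the uniform choice of $\beta_t$ must therefore be arranged so that the weaker of the two per-coordinate bounds still controls $h\,\Delta S$ from above, which is exactly what the factor $\frac{1}{2}$ and the normalisation by $\sqrt{n}$ accomplish.
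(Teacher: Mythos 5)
Your quantitative estimate \eqref{diffaoa2} is, in substance, the paper's own argument: the same choice $k^{*}(i)=\mbox{sign}(h_{i})$, the same per-coordinate bounds (a loss of at least $\tfrac{1}{2}|h_{i}|S^{i}_{t}$ on up-signed coordinates and $|h_{i}|S^{i}_{t}$ on down-signed ones, exactly \eqref{hipo}), the same dominant coordinate with $|h_{i^{*}}|\geq 1/\sqrt{d}\geq 1/\sqrt{n}$, and the common prior $p_{h}=p^{0}_{t+1}(\cdot,\omega^{t})$ on the event $G_{k^{*}}(\omega^{t})$ of \eqref{gk}, whose positive probability comes from the full support of $\Delta Y_{t+1}$ as in \eqref{PoG}. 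Your extra observation that the $2^{d}\geq 2$ events $G_{k}$ are pairwise disjoint, so that $\kappa_{t}<1$ as required by Definition \ref{NAQarb}, is a check the paper leaves implicit.

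Where you genuinely depart from the paper is the support computation, and there you are right and the paper is not. The paper asserts (both in the statement and in its proof) that $D^{t+1}(\omega^{t})=\mathbb{R}^{d}$, arguing that $p^{0}\left(\Delta S_{t+1}(\omega^{t},\cdot)\in O\right)=p^{0}\left(\Delta Y_{t+1}(\omega^{t},\cdot)\in F^{-1}_{\omega^{t}}(O)\right)>0$ for every open $O\subset\mathbb{R}^{d}$ because $F^{-1}_{\omega^{t}}(O)$ is open; but $F_{\omega^{t}}$ is not surjective, its range being $\prod_{i=1}^{d}(-S^{i}_{t}(\omega^{t}),\infty)$, so $F^{-1}_{\omega^{t}}(O)$ may be empty. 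Indeed $\Delta S^{i}_{t+1}>-S^{i}_{t}(\omega^{t})$ holds pointwise, hence under every prior, so $D^{t+1}(\omega^{t})\subset\prod_{i=1}^{d}[-S^{i}_{t}(\omega^{t}),\infty)$ and the literal claim $D^{t+1}(\omega^{t})=\mathbb{R}^{d}$ is false. Your identification $D^{t+1}(\omega^{t})=\prod_{i=1}^{d}[-S^{i}_{t}(\omega^{t}),\infty)$ is the correct one — noting only that the homeomorphism/transport argument together with Lemma \ref{supportcar} gives the $p^{0}$-support, hence the inclusion ``$\supset$'' via \eqref{DvsE}, while ``$\subset$'' is the pointwise bound just mentioned — and it delivers exactly what the lemma actually needs: $\mbox{Aff}\left(D^{t+1}\right)(\omega^{t})=\mathbb{R}^{d}$ and $0\in\mbox{Ri}\left(\mbox{Conv}(D^{t+1})\right)(\omega^{t})$ because each $S^{i}_{t}(\omega^{t})>0$, so Definition \ref{NAGarb} and Theorem \ref{ARBequival} yield $NA(\mathcal{Q}^{T})$. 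In short, you have (silently) proved a corrected version of the first assertion rather than the stated one; that is the right mathematics, but you should flag explicitly that the claim $D^{t+1}(\omega^{t})=\mathbb{R}^{d}$ must be amended in the exponential case, rather than leaving your conclusion in apparent contradiction with the statement being proved.
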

\begin{proof}
Let $0 \leq t \leq T-1$ and fix  $\o^{t} \in \O^{t}$. Using  \eqref{DvsE} $D^{t+1}_{P_0}(\o^{t}) \subset D^{t+1}(\o^{t})$ and it  is enough to prove that   $D^{t+1}_{P_0}(\o^{t})=\mathbb{R}^d.$ This will follow from Lemma \ref{supportcar} if for any  open set $O$ of $\mathbb{R}^d,$  
$p^0\left( \Delta S_{t+1}(\cdot,\o^{t}) \in O,\o^{t}\right) >0.$
Fix an open set $O$ of $\mathbb{R}^d$ and let 
$F_{\o^{t}}: \mathbb{R}^{n} \to \mathbb{R}^{d}$ be  defined by $F_{\o^{t}}(x_1,\cdots,x_{n})=(e^{Y_{t}^{1}(\o^{t})}(e^{x_1}-1),\cdots,e^{Y_{t}^{d}(\o^{t})}(e^{x_d}-1))$. As $F_{\o^{t}}$ is continuous 
$F^{-1}_{\o^{t}} (O)$ is an open set of $\mathbb{R}^n$. Then
 \begin{align*}
 &p^0\left( e^{Y_{t+1}(\cdot,\o^{t})}- e^{Y_{t}(\o^{t})}  \in O,\o^{t}\right)\\
 &=p^0\left( \left(e^{Y^{1}_{t}(\o^{t})}  \left(e^{\Delta Y^{1}_{t+1}(\cdot,\o^{t})}-1\right),\cdots,  e^{Y^{d}_{t}(\o^{t})}  \left(e^{\Delta Y^{d}_{t+1}(\cdot,\o^{t})}-1\right)\right) \in O,\o^{t}\right)\\
& =p^0\left( \Delta Y_{t+1}(\cdot,\o^{t}) \in F^{-1}_{\o^{t}} (O),\o^{t} \right)>0,
 \end{align*}
using \eqref{PY} and Lemma \ref{supportcar} again. 
Thus for all $0 \leq t \leq T-1$, $\o^{t} \in \O^{t}$, $D^{t+1}(\o^{t})=\mathbb{R}^{d}$ and   $0 \in \mbox{Ri} \left(\mbox{Aff}\left({D}^{t+1}\right)(\omega^{t})\right)$.  Theorem \ref{ARBequival} implies that the $NA(\mathcal{Q}^{T})$ condition is verified.\\
Fix  a $\o^{t} \in \O^{t}$,  $h \in \mathbb{R}^{d}$ with $|h|=1$. Then
\begin{align}
\label{hisum}
h \Delta S_{t+1}(\o^{t},\o_{t+1})&
=\sum_{i=1}^{d} h_{i} S^{i}_{t}(\o^{t})\left( e^{ \Delta Y^{i}_{t+1}(\o^{t},\o_{t+1})}-1\right).
\end{align}
Let $k^* \in K$ as in the proof of the preceding lemma  
and let $\o_{t+1} \in G_{k^*}(\o^{t})$. First, for all $1 \leq i \leq d$, 
\begin{align}
\label{hipo}
h_{i} S^{i}_{t}(\o^{t}) \left( e^{\Delta Y^{i}_{t+1}(\o^{t},\o_{t+1})}-1\right) < 
\begin{cases}
  -\frac{|h_i|S^{i}_{t}(\o^{t})}{2} \; \mbox{if $k^*({i})=1$}\\
  - |h_{i}|{S^{i}_{t}(\o^{t})} \mbox{ if $k^*({i})=-1$}
\end{cases}
\leq 0.
\end{align}
As $|h|=1$ there is a component  $h_{i^*}$ such that $ \frac{1}{\sqrt{n}} \leq \frac{1}{\sqrt{d}} \leq |h_{i^*}| \leq 1$ and as $S^{i^*}_{t}(\o^{t}) > 0$,     \eqref{hisum} implies that
$$h \Delta S_{t+1}(\o^{t},\o_{t+1}) < -  \frac{S^{i^*}_{t}(\o^{t})}{2 \sqrt{n}} +\sum_{i \neq i^*} h_{i} S^{i}_{t}(\o^{t})\left( e^{ \Delta Y^{i}_{t+1}(\o^{t},\o_{t+1})}-1\right) \leq  -\frac{\min_{1 \leq i\leq d}S^{i}_{t}(\o^{t})}{2 \sqrt{n}}.$$
So, 
$$p_{t+1}^{0}\left(h \Delta S_{t+1}(\o^{t},\cdot) <   -\frac{\min_{1 \leq i\leq d}S^{i}_{t}(\o^{t})}{2 \sqrt{n}},\o^{t}\right) \geq \min_{k \in K} p_{t+1}^{0}\left(G_{k}(\o^{t}),\o^{t}\right),$$
and using  \eqref{PoG},  \eqref{diffaoa2} is satisfied. \\
\end{proof}
\begin{remark}
Note that in both cases  ($S^{i}_{t}=Y^{i}_{t}$ and  $S^{i}_{t}=e^{Y^{i}_{t}}$),  we can choose $P^*=P^0$ in Theorem \ref{PPstar}. 
\end{remark}

We now give a one dimension illustration of the previous setting where $\mathcal{Q}^T$ is not dominated. 
Take $n=d=1$ and $\O_{t}:=\O$ for some Polish space $\O$.  Let $Z$ be  some  real-valued random variable defined on  $\O$  and $p_0 \in \mathfrak{P}(\O)$ be such that  under $p_0$,  $Z$ is normally distributed with mean $0$ and standard deviation $1$. Set $P^0:=p_0\otimes \cdots \otimes p_0$ and $Z_{t+1}(\o^{t},\o_{t+1}):=Z(\o_{t+1})$ for all $0 \leq t \leq T-1$ and $\o^t \in \Omega^t$.  Define $F: \mathfrak{P}(\O) \to \mathbb{R}^{2}$ by $F(p):=\left(E_{p}(Z), E_{p}\left(Z-E_{p}(Z)\right)^2-1\right)$ and $F(\o^{t},\o_{t+1}):=F(\o_{t+1})$ for all $0 \leq t \leq T-1$ and $\o^t \in \Omega^t$. Finally,  set  $\mathcal{Q}_{t+1}(\o^{t}):=\{p \in \mathfrak{P}(\O), F(p)=0\}=:\mathcal{Q}$ for all $0 \leq t \leq T-1$, $\o^{t} \in \O^{t}$. For each $\o^{t}$, the law of the driving process $Z$ for the next period  is  centered with variance $1$ but not necessarily normally distributed.  \\
Assumption \ref{assdiff}  on the dynamic of $Y$  are verified if we choose $Y_0:=1$ and for all $0 \leq t \leq T-1$, $x \in \mathbb{R}$,   $(\o^{t},\o_{t+1}) \in  \O^{t} \times \O$ 
$$\mu_{t+1}(x,\o^{t},\o_{t+1}):= r_{t}(\o^{t}):=r, \;\;\; \nu_{t+1}(x,\o^{t}):=\sigma, \;\;\;q_{t}(\o^{t}):=\sigma^{2},$$   for some  $r \in \mathbb{R}$ and $\sigma >0$ fixed. \\
As $\Delta Y_{t}= r + \sigma Z$ and $Z$ is normally distributed with mean $0$ and standard deviation $1$ under $p_0$, \eqref{bof} (or \eqref{bof2}) implies that   $$\kappa_{t}=\kappa=\min\left( \Phi_{}\left(-\frac{\mbox{ln 2}+r}{\sigma}\right), 1-\Phi_{}\left(\frac{\mbox{ln 2}-r}{\sigma}\right)\right) $$ where $\Phi_{}$ is the cumulative distribution function  of some normal law with mean $0$ and standard deviation $1$. We have already seen that $\beta_{t}(\o^{t})= \beta={(\mbox{ln 2}})/{\sqrt{n}}$  when $S_{t}={Y_{t}}$.  In the other case, $S_{t}(\o^{t})=\mbox{exp}\left({Y_{t}(\o^{t})}\right)= \mbox{exp}\left(1+  r t + \sigma \sum_{i=1}^{t} Z(\o_{i})\right)$  and $\beta_{t}(\o^{t}) =(1/2)\min\left(1, S_{t}(\o^{t})\right)$ (see \eqref{bof2}).\\
 Finally, the set $\mathcal{Q}^{T}$ is not dominated. Indeed, we show that $\mathcal{Q}$ is not dominated and conclude  using Proposition \ref{DOM}.   Assume that there is some $\widehat{p} \in \mathfrak{P}(\O)$ which dominates $\mathcal{Q}$. For $x \neq 0$, let $q_{x} \in \mathfrak{P}(\O)$ such that $$q_{x}(Z=x)=\frac{1}{2x^{2}},  \, \, q(Z=-x)=\frac{1}{2x^{2}},  \,\, q(Z=0)=1-\frac{1}{x^{2}}.$$  Then $q_{x} \in \mathcal{Q}$ and  $\{x \in \mathbb{R},\; \widehat{p}(\{Z=x\})>0\}= \mathbb{R} \backslash\{0\}$, a contradiction.
\section{Proofs}
\label{Apen}

The first  section presents the  one-period  version of our  problems  with deterministic initial data. We will study  the different notions of arbitrage and their equivalence (see Proposition  \ref{Arbeqone}). We also prove Proposition \ref{bay} that will be used in the proof of Theorem \ref{PPstar}.
In the second  section  the multi-period  results are proved relying on the one-period results together with   measurable selections technics. Finally, the third section presents 
the proof of Proposition \ref{DOM}. 
\subsection{One-period model}
\label{OneOne}
 Let $(\overline{\Omega}, \Gc)$ be a measured space, $\mathfrak{P}(\overline{\Omega})$ the set of all probability measures  defined on $\mathcal{G}$  and $\mathcal{Q}$  a non-empty convex subset of $\mathfrak{P}(\overline{\Omega})$. For $P \in \mathcal{Q}$ fixed,  $E_{P}$ denotes the expectation under $P$. Let $Y$ be a $\Gc$-measurable $\mathbb{R}^{d}$-valued random variable.\\
The   following sets  are the pendant in the one-period case of the ones introduced in Definition \ref{DefDARB}. Let $P \in \mathcal{Q}$  \begin{align}
{E}(P)&:=\bigcap  \left\{ A \subset \mathbb{R}^{d},\; \mbox{closed}, \; P\left(Y(.) \in A\right) =1\right\}, \\
{D}&:=\bigcap\left\{ A \subset \mathbb{R}^{d},\; \mbox{closed},\; P \left(Y(\cdot) \in A \right)=1, \; \forall P \in \mathcal{Q} \right\}.
\end{align}

\noindent The next lemma will be used in the proof of Proposition \ref{alphatmesARB}. 

\begin{lemma}
\label{util}
Let $C$ be a convex set of $\mathbb{R}^{d}$ and fix some $\varepsilon>0$. Then  $B(0,\varepsilon) \cap \mbox{Aff} (C) \subset \overline{C}$  if and only if  $B(0,\varepsilon) \cap \mbox{Aff} (C) \subset {C}$.
\end{lemma}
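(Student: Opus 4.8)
The plan is to note first that the nontrivial implication is the forward one: since $C\subset\overline C$, the inclusion $B(0,\varepsilon)\cap\mbox{Aff}(C)\subset C$ trivially yields $B(0,\varepsilon)\cap\mbox{Aff}(C)\subset\overline C$, so only the converse requires work. I would also dispose of the degenerate case $C=\emptyset$ immediately (both inclusions then hold vacuously, as $B(0,\varepsilon)\cap\mbox{Aff}(C)=\emptyset$), so that in what follows $C$ may be assumed nonempty and the standard convex-analysis facts apply.

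The main tools will be two elementary facts. First, for an arbitrary set $C$ one has $\mbox{Aff}(\overline C)=\mbox{Aff}(C)$: indeed $\mbox{Aff}(C)$ is an affine subspace, hence closed, and since it contains $C$ it contains $\overline C$, giving $\mbox{Aff}(\overline C)\subset\mbox{Aff}(C)$; the reverse inclusion is clear from $C\subset\overline C$. Second, for a nonempty convex $C$ one has $\mbox{Ri}(\overline C)=\mbox{Ri}(C)$ (see \citep[Theorem 6.3]{cvx}), together with $\mbox{Ri}(C)\subset C$ by the very definition of the relative interior.

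With these in hand, assume $B(0,\varepsilon)\cap\mbox{Aff}(C)\subset\overline C$ and fix an arbitrary $z\in B(0,\varepsilon)\cap\mbox{Aff}(C)$. The key observation is that $B(0,\varepsilon)\cap\mbox{Aff}(C)$ is relatively open in $\mbox{Aff}(C)$, being the trace of the open ball $B(0,\varepsilon)$ on the affine subspace $\mbox{Aff}(C)$; hence there is some $\delta>0$ with $B(z,\delta)\cap\mbox{Aff}(C)\subset B(0,\varepsilon)\cap\mbox{Aff}(C)\subset\overline C$. Using $\mbox{Aff}(C)=\mbox{Aff}(\overline C)$, this reads $B(z,\delta)\cap\mbox{Aff}(\overline C)\subset\overline C$, and since moreover $z\in\overline C$ this is precisely the statement $z\in\mbox{Ri}(\overline C)$. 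Consequently $z\in\mbox{Ri}(\overline C)=\mbox{Ri}(C)\subset C$, and as $z$ was arbitrary we obtain $B(0,\varepsilon)\cap\mbox{Aff}(C)\subset C$, completing the equivalence.

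I do not expect any real obstacle here. The only subtlety worth flagging is that the relative interior must be read with respect to the affine hull of the set in question, which is exactly why the identity $\mbox{Aff}(\overline C)=\mbox{Aff}(C)$ is needed: it allows one to transport the relatively open neighbourhood of $z$ from $\mbox{Aff}(C)$ to $\mbox{Aff}(\overline C)$ before invoking $\mbox{Ri}(\overline C)=\mbox{Ri}(C)$.
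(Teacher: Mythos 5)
Your proof is correct and follows essentially the same route as the paper's: fix $z\in B(0,\varepsilon)\cap\mbox{Aff}(C)$, use that $|z|<\varepsilon$ to find a relatively open ball $B(z,\delta)\cap\mbox{Aff}(C)$ inside $\overline C$, and conclude $z\in\mbox{Ri}(\overline C)=\mbox{Ri}(C)\subset C$ via \citep[Theorem 6.3 p46]{cvx}. The only difference is that you make explicit two points the paper leaves tacit (the identity $\mbox{Aff}(\overline C)=\mbox{Aff}(C)$ and the empty-set case), which is harmless.
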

\begin{proof}
The reverse implication is trivial. Assume that $B(0,\varepsilon) \cap \mbox{Aff} (C) \subset \overline{C}$ and  let $x \in B(0,\varepsilon) \cap \mbox{Aff} (C)$. As $|x|<\varepsilon$, there exists some $\delta>0$ such that $B(x,\delta)   \cap \mbox{Aff} (C) \subset  B(0,\varepsilon) \cap \mbox{Aff} (C) \subset  \overline{C}.$ Hence $x \in \mbox{Ri}(\overline{C})=\mbox{Ri}(C) \subset C$ (see \citep[Theorem 6.3 p46]{cvx}).
\end{proof}\\

This lemma allows an easy characterisation of the support and was used several time in the paper. 
\begin{lemma}
\label{supportcar}
 Let $h \in \mathbb{R}^{d}$ and   $P \in \mathfrak{P}(\overline{\O})$ be fixed. Then,  $h \in {E}(P)$ if and only if for all $n \geq 1$, $P\left( Y(\cdot) \in B\left(h,{1}/{n}\right)\right)>0$. Similarly,  $h \in {D}$ if and only if for all $n \geq 1$, there exists some $P^{n} \in \mathcal{Q}$, such that  $P^{n}\left( Y(\cdot) \in B\left(h,{1}/{n}\right)\right)>0$.
\end{lemma}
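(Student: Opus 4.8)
The plan is to recognise $E(P)$ as the topological support of the law of $Y$ under $P$, and $D$ as the support of the whole family $\{P\circ Y^{-1}:P\in\mathcal{Q}\}$, and then to invoke the standard neighbourhood characterisation of a support, which is available because $\mathbb{R}^{d}$ is second countable. The two equivalences have the same logical skeleton, so I would prove the first in full and then indicate the verbatim modifications for the second. Throughout I will use that, by second countability (the one-period counterpart of Remark \ref{supppp}, via \citep[Theorem 12.14]{Hitch}), one has $P(Y(\cdot)\in E(P))=1$ for every $P$, and consequently $P(Y(\cdot)\in D)=1$ for every $P\in\mathcal{Q}$, since $E(P)\subset D$ and $D$ is closed.

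For the first equivalence I would argue by contraposition in both directions. Suppose $h\in E(P)$ but, for contradiction, $P(Y(\cdot)\in B(h,1/n_{0}))=0$ for some $n_{0}\geq 1$. Then $A:=\mathbb{R}^{d}\setminus B(h,1/n_{0})$ is closed and satisfies $P(Y(\cdot)\in A)=1$, so the defining intersection forces $E(P)\subset A$; but $h\in B(h,1/n_{0})$ gives $h\notin A$, hence $h\notin E(P)$, a contradiction. Conversely, suppose $P(Y(\cdot)\in B(h,1/n))>0$ for all $n$ but $h\notin E(P)$. Since $E(P)$ is closed there is some $\varepsilon>0$ with $B(h,\varepsilon)\cap E(P)=\emptyset$; choosing $n$ with $1/n<\varepsilon$ and using $P(Y(\cdot)\in E(P))=1$ yields $P(Y(\cdot)\in B(h,1/n))\leq P(Y(\cdot)\notin E(P))=0$, again a contradiction.

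For the second equivalence the argument is identical, with $E(P)$ replaced by $D$ and the single measure replaced by the family. If $h\in D$ but $P(Y(\cdot)\in B(h,1/n_{0}))=0$ for \emph{every} $P\in\mathcal{Q}$ and some $n_{0}$, then $A:=\mathbb{R}^{d}\setminus B(h,1/n_{0})$ is a closed set of full measure under every $P\in\mathcal{Q}$, forcing $D\subset A$ and thus $h\notin D$. Conversely, if $h\notin D$ pick $\varepsilon>0$ with $B(h,\varepsilon)\cap D=\emptyset$ and $n$ with $1/n<\varepsilon$; then for every $P\in\mathcal{Q}$ one has $P(Y(\cdot)\in B(h,1/n))\leq P(Y(\cdot)\notin D)=0$, so no $P^{n}$ with the required property can exist.

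The only substantive point, rather than a genuine obstacle, is the justification of the two full-measure facts $P(Y(\cdot)\in E(P))=1$ and $P(Y(\cdot)\in D)=1$, which drive both reverse implications; these rest on the second countability of $\mathbb{R}^{d}$ and are the one-period analogues of Remark \ref{supppp}. Everything else is elementary manipulation of open balls and closed complements, so I expect no real difficulty.
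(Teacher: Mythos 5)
Your proof is correct, but it takes a different route from the paper's on the substantive half of each equivalence. The paper's proof is purely definitional and two lines long: $h\notin E(P)$ holds if and only if there exists a closed set $A$ with $P(Y(\cdot)\in A)=1$ and $h\notin A$, i.e.\ if and only if $h$ admits an \emph{open} neighbourhood $O$ (the complement of $A$) with $P(Y(\cdot)\in O)=0$; since the balls $B(h,1/n)$ form a neighbourhood base at $h$, the ball characterisation drops out immediately, and likewise for $D$ with ``for all $P\in\mathcal{Q}$'' inserted. In particular the paper never uses the full-measure facts $P\left(Y(\cdot)\in E(P)\right)=1$ and $P\left(Y(\cdot)\in D\right)=1$, so its argument needs no second countability and would work in any metric space. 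Your reverse implications instead route through exactly those facts (correctly justified via \citep[Theorem 12.14]{Hitch}, and via $E(P)\subset D$ for $P\in\mathcal{Q}$ in the second case, the one-period analogue of Remark \ref{supppp}); this is valid here because $\mathbb{R}^{d}$ is second countable, but it imports a strictly stronger ingredient than the lemma requires. What your approach buys is a uniform template (``the support carries full mass, so anything disjoint from it is null'') that you reuse verbatim for $E(P)$ and $D$; what the paper's buys is brevity and the observation that this particular characterisation — a point is outside the support if and only if it has a null open neighbourhood — is a tautology of the definition, independent of whether the support itself has full measure. Your forward directions coincide with the paper's.
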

\begin{proof}
Fix some $h \in \mathbb{R}^{d}$. By definition $h \notin  {E}(P)$ if and only if there exists an open set $O \subset \mathbb{R}^{d}$ such that $h \in O$ and  $P( Y(\cdot)\in O)=0$ and the first item follows. 
Similarly, $h \notin  {D}$ if and only if there exists an open set $O \subset \mathbb{R}^{d}$ such that $h \in O$ and  $P( Y(\cdot) \in O)=0$ for all $P \in \mathcal{Q}$ and the second item follows. 
\end{proof}\\

Now, we introduce the  definitions  of no-arbitrage in this one period setting. The first one is the one-period pendant of the $NA(\mathcal{Q}^{T})$ condition while the two others are the pendant of Definitions \ref{NAGarb} and \ref{NAQarb}.
\begin{definition}
\label{AOAone}
The one-period  no-arbitrage condition holds true if   $h  Y(\cdot) \geq 0$ $\mathcal{Q}$-q.s. for some  $h \in \mathbb{R}^{d}$ implies that $h Y(\cdot)=0$ $\mathcal{Q}$-q.s.
\end{definition}

\begin{definition}
\label{NAGone}
The one-period  geometric no-arbitrage condition holds true if $0 \in \mbox{Ri}\left(\mbox{Conv}(D)\right).$ This is equivalent to $0 \in \mbox{Conv}(D)$ and  there exists some $\varepsilon>0$ such that
$B(0,\varepsilon) \cap { \mbox{Aff}(D)}\subset {\mbox{Conv}}(D).$
\end{definition}

\begin{definition}
\label{NAQone}
The one-period  quantitative no-arbitrage condition holds true if there exists some constants $\beta, \kappa \in (0,1]$ 
such that for all  $h \in  \mbox{Aff}(D)$, $h \neq 0$ there exists $P_{h} \in \mathcal{Q}$ satisfying
\begin{align}
\label{alphaone}
P_{h}( h Y(\cdot) < -\beta |h|) \geq  \kappa.
\end{align}
\end{definition}

\begin{remark}
\label{RIo}
We recall that if $0 \notin \mbox{Ri}\left({\mbox{Conv}}(D)\right)$ there exists some $h^* \in \mbox{Aff}(D)$, $h^* \neq 0$ such that $h^* Y(\cdot) \geq 0$ $\mathcal{Q}$-q.s. This is a classical exercise relying on separation arguments in $\mathbb{R}^{d}$, see  \citep[Theorems 11.1, 11.3 p97]{cvx} or \citep[Proposition A.1]{fs}. 
\end{remark}
Proposition \ref{Arbeqone}  establishes that these three preceding conditions are actually equivalent. 

\begin{proposition}
\label{Arbeqone}
Definitions \ref{AOAone}, \ref{NAGone} and \ref{NAQone} are equivalent. Moreover, one can choose $\beta={\varepsilon}/{2}$ in  \eqref{alphaone}  where $\varepsilon>0$ is such that
$B(0,\varepsilon) \cap { \mbox{Aff}(D)}\subset {\mbox{Conv}}(D)$ in Definition \ref{NAGone}. \end{proposition}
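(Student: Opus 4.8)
The plan is to prove the cycle of implications: the geometric condition (Definition \ref{NAGone}) implies the quantitative one (Definition \ref{NAQone}), which implies the one-period no-arbitrage (Definition \ref{AOAone}), which in turn implies the geometric one. Throughout I use that $Y \in D$ $\mathcal{Q}$-q.s. (by second countability of $\mathbb{R}^d$, as in Remark \ref{supppp}), so that inequalities like $hY \ge 0$ q.s. can be tested on $D$, together with the characterisation of $D$ in Lemma \ref{supportcar}.

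For the implication gNA $\Rightarrow$ qNA, assume $0 \in \mbox{Ri}(\mbox{Conv}(D))$; then $0 \in \mbox{Aff}(D)$, so $L := \mbox{Aff}(D)$ is a linear subspace, and there is $\varepsilon>0$ with $B(0,\varepsilon) \cap L \subset \mbox{Conv}(D)$. Set $\beta = \varepsilon/2$. By homogeneity it suffices to treat unit vectors $h \in L$. For such $h$ the point $-\frac{3\varepsilon}{4} h$ lies in $B(0,\varepsilon) \cap L \subset \mbox{Conv}(D)$, so writing it as a finite convex combination $\sum_i \lambda_i p_i$ of points $p_i \in D$ and averaging the linear functional $y \mapsto hy$ produces some $p^* \in D$ with $h p^* \le -\frac{3\varepsilon}{4} < -\beta$. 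Since $\{y : hy < -\beta\}$ is open and contains $p^*$, Lemma \ref{supportcar} yields some $P \in \mathcal{Q}$ charging a small ball around $p^*$ inside this set, whence $\sup_{P\in\mathcal{Q}} P(hY < -\beta) > 0$. The delicate point is to extract a single $\kappa$ valid for all $h$: I would set $g(h) := \sup_{P\in\mathcal{Q}} P(hY < -\beta)$, observe that $h \mapsto P(hY < -\beta)$ is lower semicontinuous (Fatou applied to the open condition, using $1_{\{hY<-\beta\}} \le \liminf_n 1_{\{h_nY<-\beta\}}$ when $h_n \to h$), so that $g$ is lsc as a supremum of lsc functions, and conclude that $g$ attains a strictly positive minimum $\kappa_0$ on the compact unit sphere of $L$. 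Taking $\kappa = \kappa_0/2 \in (0,1]$ then gives, for each unit $h$, a $P_h \in \mathcal{Q}$ realising \eqref{alphaone} with $\beta = \varepsilon/2$.

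For qNA $\Rightarrow$ NA, I would first show that qNA forces $0 \in \mbox{Aff}(D)$: if not, let $h_0$ be the orthogonal projection of $0$ onto the affine set $\mbox{Aff}(D)$; then $h_0 \ne 0$ and $h_0$ is orthogonal to the direction space of $\mbox{Aff}(D)$, so $h_0 y = |h_0|^2 > 0$ for every $y \in \mbox{Aff}(D)$, giving $h_0 Y = |h_0|^2 > 0$ q.s. No measure can then satisfy $P_{h_0}(h_0 Y < -\beta|h_0|) \ge \kappa>0$, contradicting qNA. With $L = \mbox{Aff}(D)$ now a subspace, suppose $hY \ge 0$ q.s. and decompose $h = h^L + h^\perp$ with $h^L \in L$, $h^\perp \in L^\perp$; since $Y \in D \subset L$ q.s., $h^\perp Y = 0$ q.s., so $h^L Y \ge 0$ q.s. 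If $h^L \ne 0$, qNA applied to $h^L \in L$ gives $P_{h^L}(h^L Y < 0) \ge \kappa > 0$, contradicting $h^L Y \ge 0$ q.s.; hence $h^L = 0$ and $hY = 0$ q.s.

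Finally, for NA $\Rightarrow$ gNA I argue by contraposition. If $0 \notin \mbox{Ri}(\mbox{Conv}(D))$, Remark \ref{RIo} provides $h^* \in \mbox{Aff}(D)$, $h^* \ne 0$, with $h^* Y \ge 0$ q.s. Were $h^* Y = 0$ q.s., every $P \in \mathcal{Q}$ would give $D \subset \{y : h^* y = 0\}$, hence $\mbox{Aff}(D) \subset \{y : h^* y = 0\}$; since $h^* \in \mbox{Aff}(D)$ this forces $|h^*|^2 = 0$, a contradiction. Thus $h^* Y > 0$ with positive probability under some prior, so NA fails. The main obstacle in the whole argument is the extraction of a uniform $\kappa$ in the first implication; the compactness and lower-semicontinuity argument above is what resolves it, while the remaining steps become elementary once $0 \in \mbox{Aff}(D)$ has been secured.
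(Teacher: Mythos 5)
Your proof is correct, and it is organised along a genuinely different route than the paper's. The paper first proves that Definition \ref{AOAone} implies both other conditions: the key intermediate fact \eqref{NPome2ARB} (that $hY(\cdot)\geq 0$ $\mathcal{Q}$-q.s.\ with $h\in\mbox{Aff}(D)$ forces $h=0$) is derived via $D^{\perp}=\left(\mbox{Aff}(D)\right)^{\perp}$, and the uniform $\kappa$ is extracted by contradiction through the sets $A_{n}$ of \eqref{defAn}: if every $A_n$ were non-empty, a convergent subsequence of unit vectors $h_n$ plus Fatou would yield $h^*$ with $h^*Y(\cdot)\geq 0$ q.s., contradicting \eqref{NPome2ARB}; the refinement $\beta=\varepsilon/2$ then requires a separate Step 4 re-running the same device with threshold $\varepsilon/2$. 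You instead close a single cycle, Definition \ref{NAGone} $\Rightarrow$ Definition \ref{NAQone} $\Rightarrow$ Definition \ref{AOAone} $\Rightarrow$ Definition \ref{NAGone}, in which the choice $\beta=\varepsilon/2$ falls out of the first implication for free. Your uniformisation rests on the same two pillars (compactness of the unit sphere of $\mbox{Aff}(D)$ and Fatou's lemma) but is packaged positively rather than by contradiction: $h\mapsto \sup_{P\in\mathcal{Q}}P(hY<-\beta)$ is lower semicontinuous and strictly positive, the positivity being obtained geometrically since any convex representation of $-\tfrac{3\varepsilon}{4}h\in{\mbox{Conv}}(D)$ must contain a point $p^*\in D$ with $hp^*\leq -\tfrac{3\varepsilon}{4}$, after which Lemma \ref{supportcar} applies; taking $\kappa=\kappa_0/2$ correctly absorbs the fact that the supremum over $\mathcal{Q}$ need not be attained. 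In effect you turned the paper's footnote remark, that one can set $\kappa=\inf_{h\in\scriptsize{\mbox{Aff}}(D),\,|h|=1}\sup_{P\in\mathcal{Q}}P(hY(\cdot)<-\varepsilon/2)$, into the proof itself. What the paper's $A_n$-formulation buys is reusability: the identical device recurs in the multi-period measurability argument (see the sets $A_n^{P^*}$ in the proof of Proposition \ref{finally}); what yours buys is economy and transparency (three implications instead of four steps, with the ``moreover'' integrated), and your preliminary in the step from Definition \ref{NAQone} to Definition \ref{AOAone} — projecting $0$ onto $\mbox{Aff}(D)$ to force $0\in\mbox{Aff}(D)$ before decomposing $h=h^{L}+h^{\perp}$ — is a correct and necessary transplant of the projection trick the paper uses in its own Step 3. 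Two cosmetic points only: when $\mbox{Aff}(D)=\{0\}$ your unit sphere is empty and the quantitative condition holds vacuously (the paper disposes of this case by noting $n_0=1$), so that degenerate case should be stated; and since Definition \ref{NAQone} asks $\beta\in(0,1]$, if $\varepsilon>2$ one should first shrink $\varepsilon$, which preserves the inclusion $B(0,\varepsilon)\cap\mbox{Aff}(D)\subset{\mbox{Conv}}(D)$. Neither affects the validity of your argument.
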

\begin{proof}
{\it Step 1 :  Definition \ref{AOAone} implies  Definitions  \ref{NAGone} and \ref{NAQone}.} \\
First we show by contradiction that for all $h \in \mbox{Aff}(D)$
\begin{align} \label{NPome2ARB}
hY(\cdot) \geq 0 \; \mathcal{Q}\mbox{-q.s.} \Rightarrow h = 0.
\end{align}
Assume that there exists some $h \in \mbox{Aff}(D)$, $h \neq 0$ such that $hY(\cdot) \geq 0 \; \mathcal{Q}$-q.s.  Definition \ref{AOAone} implies that $ hY(\cdot) = 0 \; \mathcal{Q}\mbox{-q.s.}$ and\footnote{${X}^{\perp}$ stands for the orthogonal space of some set $X.$}  
$$h \in \{h \in \mathbb{R}^{d}, hy=0 \; \mbox{for all $y \in {D}$}\}= {D}^{\perp}=\left( \mbox{Aff}(D) \right)^{\perp},$$
see for instance \citep[Proof of Lemma 2.6]{Nutz}.
This implies that $h \in  \mbox{Aff}(D) \cap \left( \mbox{Aff}(D) \right)^{\perp} \subset \{0\}$, a contradiction. \\
Now we show  that  Definition \ref{NAGone} holds true. If $0 \notin \mbox{Ri}\left( {\mbox{Conv}}(D)\right)$,  Remark \ref{RIo} 
implies that  there exists some $h^* \in \mbox{Aff}(D)$, $h^* \neq 0$ such that $h^* Y (\cdot) \geq 0$ $\mathcal{Q}$-q.s. which contradicts  \eqref{NPome2ARB}. 
  Then, we prove  that  Definition \ref{NAQone} holds also true. For all $n\geq 1,$ let
\begin{align}
\label{defAn}
A_{n}:=\left\{ h \in  \mbox{Aff}(D),\; |h|=1,\; P\left(hY(\cdot) < -\frac{1}{n}\right) < \frac{1}{n}  \; \forall P \in \mathcal{Q}\right\} \; \;   n_{0}:=\inf\{n \geq 1, A_{n}=\emptyset\}\end{align}
 with the convention that $\inf \emptyset=+\infty$. We have seen that Definition  \ref{NAGone} holds  true:  $0 \in \mbox{Ri}\left( {\mbox{Conv}}(D)\right)  \subset \mbox{Aff}(D)$ and $\mbox{Aff}(D)$ is a vector space.
 If $\mbox{Aff}(D) =\{0\}$, then $n_{0}=1 <\infty$.   Assume now that $\mbox{Aff}(D) \neq \{0\}$. We prove by
contradiction that  $n_{0}<\infty$.
Assume that $n_{0}=\infty$. For all $ n\geq 1$, there exists some $h_{n} \in A_n.$ 
By passing to a sub-sequence we can assume that $h_{n}$ tends to  some $h^{*}\in \mbox{Aff}(D)$ with $|h^{*}|=1$. Let $B_{n}:= \left\{ h_{n}Y(\cdot) <   -{1}/{n} \right\}$. Then
  $\{h^{*}Y(\cdot)<0 \} \subset \liminf_{n} B_{n}$ and Fatou's Lemma implies that for any  $P \in \mathcal{Q}$
\begin{align*}
P\left(h^*Y(\cdot) < 0\right) &
  \leq   \int_{\overline{\Omega}}
1_{\liminf_{n} B_{n}}(\o) P(d\o) \leq  \liminf_{n} \int_{\overline{\Omega}}
1_{B_n}(\o) P(d\o)=0.
\end{align*}
So $h^*Y(\cdot) \geq 0$ $\mathcal{Q}$-q.s. and  \eqref{NPome2ARB} implies that $h^{*}=0$
which contradicts $|h^{*}|=1$. Thus $n_{0}<\infty$ and we can set  $\beta=\kappa= {1}/{n_{0}}.$
It is clear that $\beta, \kappa \in (0,1]$ and by definition of $A_{n_{0}}$, \eqref{alphaone} holds true. \\

{\it Step 2 :  Definition \ref{NAQone} implies Definition \ref{NAGone}.} \\  
Else,  Remark \ref{RIo} implies that there exists  some $h^* \in \mbox{Aff}(D)$, $h^* \neq 0$ such that $h^* Y(\cdot) \geq 0$  $\mathcal{Q}$-q.s.: A contradiction with \eqref{alphaone}.\\ 
 
{\it Step 3 :  Definition \ref{NAGone} implies Definition \ref{AOAone}.} \\  Fix some $h \in \mathbb{R}^{d}$ such that $h Y(\cdot) \geq 0$ $\mathcal{Q}$-q.s.  Let  $p(h)$ be the orthogonal projection of $h$ on $\mbox{Aff}(D)$ (recall that  $\mbox{Aff}(D)$ is a vector space since    $0 \in \mbox{Ri}({\mbox{Conv}}(D)) \subset \mbox{Aff}(D)$). Assume for a moment  that $p(h)=0$. Remark \ref{supppp} shows that   $P( \{Y(\cdot) \in {D}\}) =1$ for all $P \in \mathcal{Q},$ $hY(\cdot)=p(h) Y(\cdot)=0$ $\mathcal{Q}$-q.s. and  Definition \ref{AOAone} is verified. \\ 
Next we show  that $hy \geq 0$ for all $y \in {D}$
and by convex combinations  for all $y \in {\mbox{Conv}}(D)$. Indeed if there exists $y_0 \in D$ such that $h y_0<0$, then there exists some $\delta>0$ such that $hy<0$ for all $y \in B(y_0,\delta)$.  But Lemma \ref{supportcar} implies the existence of some $P \in \mathcal{Q}$ such that $P(Y(\cdot) \in B(y_0, \delta))>0$, a contradiction. Now, if $p(h) \neq 0$,  as $0 \in \mbox{Ri}({\mbox{Conv}}(D))$, there exists some $\varepsilon>0$ such that $B(0,\varepsilon) \cap \mbox{Aff}(D) \subset  {\mbox{Conv}}(D),$ $-\varepsilon {p(h)}/{|p(h)|} \in {\mbox{Conv}}(D)$ and  $$-\varepsilon \frac{p(h)}{|p(h)|}h= -\varepsilon \frac{p(h)}{|p(h)|}p(h)<0,$$ a contradiction. \\

{\it Step 4: 
If  $B(0,\varepsilon) \cap \mbox{Aff}(D) \subset  {\mbox{Conv}}(D)$  one can choose $\beta={\varepsilon}/{2}$ in \eqref{alphaone}.}\\ This is similar to the proof of Definition \ref{NAGone} implies Definition \ref{NAQone}. The set  $A_{n}$ is modified by setting  $$A_{n}:=\left\{h \in \mbox{Aff}(D),\; |h|=1, P\left(h Y(\cdot)<-\frac{\varepsilon}{2}\right)< \frac{1}{n},\; \forall P \in \mathcal{Q}\right\}.$$  
The same arguments as before apply and 
if $n_0=\infty$ there exists  some $h^{*}\in \mbox{Aff}(D)$, $|h^{*}|=1$ such that $ h^{*} Y \geq -{\varepsilon}/{2}$ $\mathcal{Q}^{T}$-q.s.  We also get that $h^*y \geq -{\varepsilon}/{2}$ for all $y \in {\mbox{Conv}}(D)$.  Choosing $y=-(2/3)\varepsilon h^* \in B(0,\varepsilon) \cap \mbox{Aff}(D) \subset  {\mbox{Conv}}(D)$, we obtain  a contradiction. So, \eqref{alphaone} holds true with $\beta=\varepsilon/{2}$ and $\kappa={1}/{n_0}$\footnote{The same argument shows that one can set $\kappa=\inf_{h \in  \scriptsize{\mbox{Aff}}(D),\; |h|=1}\ \sup_{P \in \mathcal{Q}} P(h Y(\cdot) < -\frac{\varepsilon}{2})>0$  illustrating why  the  measurability of $\kappa$ cannot be directly obtained, see Remark \ref{alphabeta}.}.    \end{proof}\\

\noindent The next proposition follows from \citep[Lemma 2.2]{Bay17} and 
will be used in the proof of Theorem \ref{PPstar}. 
\begin{proposition}
\label{bay}
Assume that the one-period no-arbitrage condition (see Definition \ref{AOAone}) holds true. Then there exists some $P^* \in \mathcal{Q}$ such that $0 \in \mbox{Ri} \left({\mbox{Conv}}(E(P^*))\right)$  and $\mbox{Aff}(E(P^*))=\mbox{Aff}(D)$.
\end{proposition}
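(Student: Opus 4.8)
The plan is to reduce the statement to a finite family of points of the global support $D$ that already traps $0$ in the relative interior of their convex hull, and then to realise all of these points, up to an arbitrarily small error, as support points of a single prior obtained by a finite convex combination. First, the one-period no-arbitrage condition together with Proposition \ref{Arbeqone} (equivalence with Definition \ref{NAGone}) gives $0\in\mbox{Ri}(\mbox{Conv}(D))$; in particular $0\in\mbox{Aff}(D)$, so $V:=\mbox{Aff}(D)$ is a linear subspace of $\mathbb{R}^{d}$, say of dimension $k$. By definition of the relative interior there is $\varepsilon>0$ with $B(0,\varepsilon)\cap V\subseteq\mbox{Conv}(D)$. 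Fixing an orthonormal basis $e_{1},\dots,e_{k}$ of $V$ and some $0<\rho_{0}<\varepsilon$, each of the $2k$ points $\pm\rho_{0}e_{j}$ lies in $\mbox{Conv}(D)$, hence is a finite convex combination of points of $D$ by Carath\'eodory's theorem (see \citep{cvx}). Collecting all the points of $D$ so obtained yields a finite family $y_{1},\dots,y_{m}\in D$ whose convex hull contains the cross-polytope $\mbox{Conv}(\{\pm\rho_{0}e_{j}\})$, and therefore $B(0,\rho)\cap V\subseteq\mbox{Conv}(\{y_{1},\dots,y_{m}\})$ for $\rho:=\rho_{0}/\sqrt{k}$.

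Next I would produce the prior. For each $1\leq i\leq m$ and each $n\geq 1$, since $y_{i}\in D$, Lemma \ref{supportcar} provides some $P_{i}^{n}\in\mathcal{Q}$ with $P_{i}^{n}(Y\in B(y_{i},1/n))>0$. Using that $\mathcal{Q}$ is convex, set $P^{n}:=\tfrac{1}{m}\sum_{i=1}^{m}P_{i}^{n}\in\mathcal{Q}$, so that $P^{n}(Y\in B(y_{i},1/n))>0$ for every $i$. As $E(P^{n})$ is the smallest closed set of full $P^{n}$-measure, each open ball $B(y_{i},1/n)$ of positive measure meets $E(P^{n})$, and I may pick $z_{i}^{n}\in E(P^{n})$ with $|z_{i}^{n}-y_{i}|<1/n$. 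Moreover $P^{n}(Y\in D)=1$ forces $E(P^{n})\subseteq D\subseteq V$, so all the $z_{i}^{n}$ lie in $V$.

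The crux is to show that these perturbed points still capture a relative neighbourhood of $0$; this is the only non-bookkeeping step, since moving the vertices generically moves the representation of any fixed interior point. I claim $B(0,\rho/2)\cap V\subseteq\mbox{Conv}(\{z_{1}^{n},\dots,z_{m}^{n}\})$ as soon as $n>2/\rho$. Indeed, suppose some $x\in B(0,\rho/2)\cap V$ is not in this (compact, convex) hull; separating in $V$ yields a unit vector $u\in V$ with $u\cdot x>u\cdot z_{i}^{n}$ for all $i$, whence $u\cdot z_{i}^{n}<u\cdot x\leq\rho/2$ and $u\cdot y_{i}=u\cdot z_{i}^{n}+u\cdot(y_{i}-z_{i}^{n})<\rho/2+1/n$ for all $i$. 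Consequently $u\cdot y\leq\rho/2+1/n$ for every $y\in\mbox{Conv}(\{y_{i}\})\supseteq B(0,\rho)\cap V$, but $\sup\{u\cdot y:\ y\in B(0,\rho)\cap V\}=\rho$ because $u\in V$ is unit; hence $\rho\leq\rho/2+1/n$, i.e. $n\leq 2/\rho$, a contradiction.

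Finally I would fix any $n>2/\rho$ and set $P^{*}:=P^{n}\in\mathcal{Q}$. Since $\{z_{i}^{n}\}\subseteq E(P^{*})\subseteq V$ and $B(0,\rho/2)\cap V\subseteq\mbox{Conv}(\{z_{i}^{n}\})\subseteq\mbox{Conv}(E(P^{*}))$, the set $\mbox{Conv}(E(P^{*}))$ contains a full-dimensional relative ball of $V$ around $0$; its affine hull is therefore $V$, giving $\mbox{Aff}(E(P^{*}))=V=\mbox{Aff}(D)$, and $0$ lies in its relative interior, i.e. $0\in\mbox{Ri}(\mbox{Conv}(E(P^{*})))$, which is exactly the assertion. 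As indicated, the main obstacle is the separation step; the remaining ingredients (Carath\'eodory, Lemma \ref{supportcar}, and convexity of $\mathcal{Q}$) are routine.
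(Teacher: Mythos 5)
Your proof is correct, but it takes a genuinely different route from the paper's. The paper disposes of Proposition \ref{bay} in two lines: it cites \citep[Lemma 2.2]{Bay17} for the existence of some $P^*\in\mathcal{Q}$ with $NA(P^*)$ and $\mbox{Aff}(E(P^*))=\mbox{Aff}(D)$ (a result whose proof, as the paper notes, also relies on the convexity of $\mathcal{Q}$), and then invokes the uni-prior geometric characterisation of Proposition \ref{singleP} (with $T=1$) to upgrade $NA(P^*)$ to $0\in\mbox{Ri}\left(\mbox{Conv}(E(P^*))\right)$. You instead give a self-contained construction: Carath\'eodory extracts finitely many points $y_1,\dots,y_m\in D$ whose convex hull traps a relative ball $B(0,\rho)\cap\mbox{Aff}(D)$; Lemma \ref{supportcar} plus the convexity of $\mathcal{Q}$ produce a single finite mixture $P^{n}$ charging all the balls $B(y_i,1/n)$, hence admitting support points $z_i^{n}$ within $1/n$ of the $y_i$; and your quantitative separation argument shows that perturbing the vertices by $1/n<\rho/2$ cannot destroy the relative ball $B(0,\rho/2)\cap\mbox{Aff}(D)$. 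That stability step is the real content of your argument, and your handling of it is sound (the separating unit vector $u\in V$, the bound $\rho\le\rho/2+1/n$), as are the auxiliary facts you use: $E(P^{n})\subseteq D\subseteq V$ because $P(Y(\cdot)\in D)=1$ for every $P\in\mathcal{Q}$ (this is exactly what the paper uses via Remark \ref{supppp} in Step 3 of the proof of Proposition \ref{Arbeqone}), and $E(P^{n})$ meets every open ball of positive $P^{n}$-measure since $P^{n}(Y(\cdot)\in E(P^{n}))=1$. As for what each approach buys: the paper's route is shorter and gets $NA(P^*)$ for free from the imported lemma, at the price of an external reference; yours is elementary and explicit (one finite convex combination, a computable radius $\rho/2$ for the relative ball), stays entirely inside the paper's own toolbox, and in effect reproves \citep[Lemma 2.2]{Bay17} rather than citing it. One cosmetic caveat: the degenerate case $\mbox{Aff}(D)=\{0\}$ (so $k=0$ and your family of points is empty) should be dispatched separately, but it is trivial — there $D=\{0\}$, and any $P\in\mathcal{Q}$ works since $\emptyset\neq E(P)\subseteq D$ forces $E(P)=\{0\}$.
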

\begin{proof} \citep[Lemma 2.2]{Bay17} gives the existence of some $P^* \in \mathcal{Q}$ such that $NA(P^*)$  holds true and $\mbox{Aff}(E(P^*))=\mbox{Aff}(D)$. Note that the proof of  \citep[Lemma 2.2]{Bay17} relies on the convexity of $\mathcal{Q}$. Now Proposition \ref{singleP} (for $T=1$) shows that $0 \in \mbox{Ri} \left({\mbox{Conv}}(E(P^*))\right)$. 
\end{proof}\\

\subsection{Multi-period model}
\label{multi}
First we define the distance of a point $x \in \mathbb{R}^d$ to a set $F \subset \mathbb{R}^d$ by $d(x,F):=\inf \{|x-f|,\; f \in F\}$  and the Hausdorff distance between two sets  $F,G \subset \mathbb{R}^d$ by 
$d(F,G)=\sup_{x \in \mathbb{R}^d}|d(x,F)-d(x,G)|.$ 
\subsubsection{Proof of  Theorem \ref{ARBequival}}
\label{seoneARB}
\begin{proof}
Fix some $ 0 \leq t \leq T-1$ and  $\o^{t} \in \O^{t}.$ We say that the $ NA(\mathcal{Q}_{t+1}(\o^{t}))$ condition holds true if $h \Delta S_{t+1}(\o^{t},\cdot) \geq 0$ $\mathcal{Q}_{t+1}(\o^{t})$-q.s. for some $h \in \mathbb{R}^{d}$ implies that $h \Delta S_{t+1}(\o^{t},\cdot) = 0$ $\mathcal{Q}_{t+1}(\o^{t})$-q.s.  Proposition \ref{Arbeqone} implies that  the $NA(\mathcal{Q}_{t+1}(\o^{t}))$ condition  is equivalent to \eqref{vamakiARB} and \eqref{valakiARB} for any $\o^{t} \in \O^{t}$.
Then Theorem \ref{bnlocal} shows that Definition  \ref{NAQTARB} is equivalent to the fact that
\begin{align}
\label{OTNA}
\O^{t}_{NA}=\{ \o^{t} \in \O^{t},\; \;NA(\mathcal{Q}_{t+1}(\o^{t})) \mbox{ holds true}\}
\end{align}
 is a $\mathcal{Q}^{t}$-full measure set and belongs to $\mathcal{B}_{c}(\O^{t})$ for all $0 \leq t \leq T-1$. Thus,  for all $0 \leq t \leq T-1,$ one may choose $\O^{t}_{NA}= \O^{t}_{qNA}=\O^{t}_{gNA}.$  Furthermore,  Proposition \ref{Arbeqone} shows that one can take $\beta_{t}(\o^{t})={\varepsilon_{t}}(\o^{t})/{2}$ for $\o^{t} \in  \O^{t}_{NA}$.
\end{proof}
\subsubsection{Proof of  Proposition \ref{alphatmesARB}}
\label{proofmesepsi}

\noindent{\it Proof of Proposition \ref{alphatmesARB}}\\
\begin{proof}
Fix some $0\leq t\leq T-1$. We set  $\Gamma^{t+1}(\o^{t})=\emptyset$ for $\o^{t} \notin \O_{NA}^{t}$ and  for all $\o^{t}  \in \O_{NA}^{t}$
\begin{align*}
\Gamma^{t+1}(\o^{t})&:=\left\{ \varepsilon \in \mathbb{Q},\; \varepsilon>0,\; B(0,\varepsilon) \cap \mbox{Aff}\left(D^{t+1}\right)(\o^{t}) \subset {\mbox{Conv}}\left({D}^{t+1}\right)(\o^{t})\right\}
\\ &=\left\{ \varepsilon \in \mathbb{Q},\; \varepsilon>0,\; B(0,\varepsilon) \cap \mbox{Aff}\left(D^{t+1}\right)(\o^{t}) \subset  \overline{\mbox{Conv}}\left({D}^{t+1}\right)(\o^{t})\right\},
\end{align*}
where the equality comes from  Lemma \ref{util}.
 Assume for a moment that  $\mbox{graph} \; \Gamma^{t+1} \in  \mathcal{B}_{c}(\O^{t}) \otimes \mathcal{B}(\mathbb{R}^{d})$ has been  proved. The Aumann Theorem implies the existence of  a $\ \mathcal{B}_{c}(\O^{t})$-measurable selector ${\varepsilon}_{t}: \{ \Gamma^{t+1} \neq \emptyset \}  \to \mathbb{R}$ such that ${\varepsilon}_{t}(\o^t) \in \Gamma^{t+1}(\o^t)$ for every $\o^t \in  \{ \Gamma^{t+1} \neq \emptyset \}$. Now,  Theorem \ref{ARBequival} and \eqref{vamakiARB} imply that $\O^{t}_{NA} = \left\{ \Gamma^{t+1}(\o^{t}) \neq \emptyset \right\}$ (recall that  $\Gamma^{t+1}(\o^{t})=\emptyset$ outside $\O_{NA}^{t}$).
Setting $\varepsilon_{t}=1$ outside  $\O^{t}_{NA},$ $\varepsilon_{t}$ is $\mathcal{B}_{c}(\O^{t})$-measurable  and Proposition \ref{alphatmesARB} is proved as we can choose  $\beta_{t}={\varepsilon_{t}}/2$ (see Theorem \ref{ARBequival}). \\
It remains to  show   that $\mbox{graph} \; \Gamma^{t+1} \in  \mathcal{B}_{c}(\O^{t}) \otimes \mathcal{B}(\mathbb{R}^{d})$.  For all $\varepsilon>0$, $\varepsilon \in \mathbb{Q}$, let
$$A_{\varepsilon}:=\left\{ \o^{t} \in \O_{NA}^{t},\;  B(0,\varepsilon) \cap \mbox{Aff}\left(D^{t+1}\right)(\o^{t}) \subset \overline{\mbox{Conv}}\left({D}^{t+1}\right)(\o^{t})\right\}.$$
As $\mbox{graph} \; \Gamma^{t+1} =\bigcup_{\varepsilon \in \mathbb{Q},\; \varepsilon>0} A_{\varepsilon} \times \{\varepsilon\},$
it is enough to prove that $A_{\varepsilon} \in  \mathcal{B}_{c}(\O^{t})$.  Let  $h: \mathbb{R}^d \times \O^t$ be defined by 
$$h(x,\o^t) :=d\left(x,B(0,\varepsilon) \cap \mbox{Aff}\left(D^{t+1}\right)(\o^{t}) \right)- d\left(x,\overline{\mbox{Conv}}\left({D}^{t+1}\right)(\o^{t}) \right).$$
Then \citep[Theorem 18.5 p595]{Hitch} and Lemma \ref{Dmeasurability} show that  for all $ x\in \mathbb{R}^d$  $h(x,\cdot)$ is $\mathcal{B}_c(\O^{t})$-measurable and  that $h(\cdot,\o^t)$ is continuous for all $\o^t \in \Omega^t$. 
As $ \overline{\mbox{Conv}}\left({D}^{t+1}\right)(\o^{t})$ is  closed-valued, 
$$A_{\varepsilon}=\left\{ \o^{t} \in \O_{NA}^{t},\;  h(x,\o^t)\geq 0, \, \forall x \in \mathbb{R}^d\right\} =
\cap_{q \in \mathbb{Q}^d}\left\{ \o^{t} \in \O_{NA}^{t},\;  h(q,\o^t)\geq 0\right\} \in\mathcal{B}_c(\O^{t}).$$

\end{proof}

\subsubsection{Proof of Theorem \ref{PPstar}}
\label{setwoARB2}
As mentioned in Remark \ref{OW}, our proof uses similar ideas as the one   used  in the proof of  \citep[Theorem 3.1]{OW18} and relies crucially  on the measurability and convexity of Graph($\mathcal{Q}_{t+1}(\o^t))$  (see Assumption \ref{QanalyticARB}). \\
\begin{proof}\\
{\it Reverse implication.}\\
Fix some $0 \leq t \leq T-1$ and $\o^{t} \in \O^{t}_{NA}.$   
As $P^* \in \mathcal{Q}^{T},$ Remark \ref{DomainInc} implies that 
${D}_{P^{*}}^{t+1}(\o^{t})  \subset  {D}^{t+1}(\o^{t}).$ 
As $\mbox{Aff}\left(D^{t+1}\right)(\o^{t})=\mbox{Aff}\left(D_{P^*}^{t+1}\right)(\o^{t})$ and $0 \in \mbox{Ri} \left({\mbox{Conv}}(D_{P^*}^{t+1})\right)(\o^{t})$, there exists some $\varepsilon>0$ such that
   \begin{align*}
 B(0,\varepsilon) \bigcap \mbox{Aff}\left(D^{t+1}\right)(\o^{t}) &=B(0,\varepsilon) \bigcap \mbox{Aff}\left(D_{P^*}^{t+1}\right)(\o^{t})
   \subset  {\mbox{Conv}}(D_{P^*}^{t+1})(\o^{t}) \subset {\mbox{Conv}}(D^{t+1})(\o^{t})
  \end{align*}
 and NA$(\mathcal{Q}^{T})$ follows from Theorem \ref{ARBequival}.\\

{\it Direct implication.}\\
For all $0 \leq t \leq T-1$, let $\mathcal{E}_{t+1}: \O^{t} \twoheadrightarrow \mathfrak{P}(\O_{t+1})$ be defined by $\mathcal{E}_{t+1}(\o^{t})=\emptyset$ if $\o^{t} \in \O^{t}_{NA}$ and if $\o^{t} \in \O^{t}_{NA}$ 
\begin{align*}\mathcal{E}_{t+1}(\o^{t}):=\{p  \in \mathcal{Q}_{t+1}(\o^{t}),\;  0 \in \mbox{Ri} \left({\mbox{Conv}}(E^{t+1})\right)(\o^{t},p) \mbox{ and } 
&  \mbox{Aff}\left(E^{t+1}\right)(\o^{t},p)=\mbox{Aff}\left(D^{t+1}\right)(\o^{t}) \}.  \end{align*}
Theorem \ref{bnlocal} and Proposition \ref{bay} show that $\O^{t}_{NA}=\{ \mathcal{E}_{t+1} \neq \emptyset\}$. Assume for a moment that we have proved the existence 
 of  $p_{t+1}^{*}\in \mathcal{SK}_{t+1}$  such that  $p_{t+1}^{*}(\cdot,\o^{t}) \in \mathcal{E}_{t+1}(\o^{t})$ for all $\o^{t} \in \O^{t}_{NA} $.   Let 
$P^*:=p^*_1 \otimes \cdots \otimes p^*_{T}.$
Then,  $P^* \in \mathcal{Q}^{T}$ (see \eqref{QstarARB}), \eqref{DvsE} implies that $$\mbox{Aff}\left(D_{P^*}^{t+1}\right)(\o^{t})=\mbox{Aff}\left(E^{t+1}\right) (\o^{t},p^*_{t+1}(\cdot,\o^{t})) \subset \mbox{Aff} \left(D^{t+1}\right)(\o^{t})
=\mbox{Aff}\left(E^{t+1}\right)(\o^{t},p^*_{t+1}(\cdot,\o^{t}))$$ and 
$0 \in \mbox{Ri} \left({\mbox{Conv}}(D_{P^*}^{t+1})\right)(\o^{t})$  for all  $\o^t \in \O_{NA}^{t}$.
 \\
So it remains to prove the existence of $p^*_{t+1}$. Fix some  $0 \leq t \leq T-1$ and let
\begin{align*}
B&:=\{(\o^t,p) \in \O^{t} \times \mathfrak{P}(\O_{t+1}),\;  \mbox{Ri} \left(\overline{\mbox{Conv}}(E^{t+1}) \right)(\o^{t},p)\cap \{0\}\neq \emptyset\},\\
C&:=\{(\o^t,p)\in \O^{t} \times \mathfrak{P}(\O_{t+1}),\; \mbox{Aff}\left(E^{t+1}\right)(\o^{t},p)=\mbox{Aff}\left(D^{t+1}\right)(\o^{t})  \}.
\end{align*}
\citep[Lemma 5.6]{Artstein} and  Lemma \ref{Dmeasurability} show that $\mbox{Ri} \left(\overline{\mbox{Conv}}(E^{t+1})\right)$ is $\mathcal{B}(\O^{t}) \otimes \mathcal{B}( \mathfrak{P}(\O_{t+1}))$-measurable and $B\in \mathcal{B}(\O^{t}) \otimes \mathcal{B}( \mathfrak{P}(\O_{t+1}))$ follows. 
Let $h$ be defined by 
$$h(\o^t,p):=d\left(\mbox{Aff}\left(E^{t+1}\right)(\o^{t},p),\mbox{Aff}\left(D^{t+1}\right)(\o^{t})\right).$$
Note that $C=\{h^{-1}(0)  \}$. Then \citep[Theorem 18.5 p595]{Hitch} and Lemma \ref{Dmeasurability} show that  $x \in \mathbb{R}^d$ 
$(\o^t,p) \to d\left(x,\mbox{Aff}\left(E^{t+1}\right)(\o^{t},p)\right)$ is $\mathcal{B}(\O^{t}) \otimes \mathcal{B}( \mathfrak{P}(\O_{t+1}))$ measurable and $\o^t \to d\left(x,\mbox{Aff}\left(D^{t+1}\right)(\o^{t})\right)$ is $\mathcal{B}_c(\O^{t})$ measurable. 
They also show \\
$x \to |d\left(x,\mbox{Aff}\left(E^{t+1}\right)(\o^{t},p)\right)-d\left(x,\mbox{Aff}\left(D^{t+1}\right)(\o^{t})\right)|$ is continuous. Thus 
\begin{align}
h(\o^t,p)= 
 \sup_{x \in \mathbb{Q}^d}|d\left(x,\mbox{Aff}\left(E^{t+1}\right)(\o^{t},p)\right)-d\left(x,\mbox{Aff}\left(D^{t+1}\right)(\o^{t})\right)|
 \end{align}
 and $h$ is  $\mathcal{B}_c(\O^{t}) \otimes \mathcal{B}( \mathfrak{P}(\O_{t+1}))$ measurable. It follows that $C \in 
\mathcal{B}_c(\O^{t}) \otimes \mathcal{B}( \mathfrak{P}(\O_{t+1})). $ 
\citep[Theorem 6.3 p46]{cvx}, Assumption \ref{QanalyticARB} 
   and Lemma \ref{easy} show that 
$$ \mbox{graph}\left(\mathcal{E}_{t+1}\right)=\mbox{graph}\left(\mathcal{Q}_{t+1}\right) \cap B \cap C  \in  \mathfrak{A}\left(\mathcal{B}_{c}(\O^{t}) \otimes  \mathfrak{P}(\O_{t+1})\right),
$$
where for some Polish space $X$ and  some paving $\mathcal{J}$ (i.e.  a non-empty collection of subsets of $X$ containing the empty set),  $\mathfrak{A}(\mathcal{J})$ denotes the set of all nuclei of Suslin Scheme on $\mathcal{J}$ (see \citep[Definition 7.15 p157]{BS}).
Now  \citep[Lemma 4.11]{BN} (which relies on \citep{Lee78})  gives the existence of 
  $p_{t+1}^{*}\in \mathcal{SK}_{t+1}$  such that  $p_{t+1}^{*}(\cdot,\o^{t}) \in \mathcal{E}_{t+1}(\o^{t})$ for all $\o^{t} \in \O^{t}_{NA}   = \{\mathcal{E}_{t+1} \neq \emptyset\}$. The proof is complete.
  \end{proof}\\
   
\noindent The following  lemma was used in the previous proof.

\begin{lemma}
\label{easy}
Let $X,Y$ be two Polish spaces. Let $\Gamma_{1} \in \mathcal{A}(X\times Y)$ and $\Gamma_{2} \in \mathcal{B}_{c}(X) \otimes \mathcal{B}(Y)$. Then $\Gamma_{1} \cap \Gamma_{2} \in  \mathfrak{A}\left(\mathcal{B}_{c}(X) \otimes \mathcal{B}(Y)\right)$.
\end{lemma}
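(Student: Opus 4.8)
The plan is to reduce everything to three elementary stability properties of the Suslin operation $\mathfrak{A}$, namely: (i) $\mathcal{J}\subseteq\mathfrak{A}(\mathcal{J})$ for every paving $\mathcal{J}$ (take the constant Suslin scheme); (ii) $\mathfrak{A}$ is monotone, i.e. $\mathcal{J}_{1}\subseteq\mathcal{J}_{2}$ implies $\mathfrak{A}(\mathcal{J}_{1})\subseteq\mathfrak{A}(\mathcal{J}_{2})$; and (iii) $\mathfrak{A}(\mathcal{J})$ is stable under countable intersections for an arbitrary paving $\mathcal{J}$ (see \citep{BS}). I stress that (iii) requires no stability of $\mathcal{J}$ under finite intersections: the desired intersection is produced along the branches of a suitably interleaved Suslin scheme, each constituent set of which remains a single element of $\mathcal{J}$, so the nucleus operation itself performs the intersection.

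First I would treat $\Gamma_{2}$. Since $\mathcal{B}_{c}(X)\otimes\mathcal{B}(Y)$ is a paving containing $\Gamma_{2}$, property (i) gives at once $\Gamma_{2}\in\mathfrak{A}\left(\mathcal{B}_{c}(X)\otimes\mathcal{B}(Y)\right)$. Next, the key point is to place $\Gamma_{1}$ in the same family. As $\Gamma_{1}$ is analytic in the Polish space $X\times Y$, it is the nucleus of a Suslin scheme whose constituent sets are closed subsets of $X\times Y$. Since $X$ and $Y$ are second countable, $\mathcal{B}(X\times Y)=\mathcal{B}(X)\otimes\mathcal{B}(Y)$, so every such closed set lies in $\mathcal{B}(X)\otimes\mathcal{B}(Y)\subseteq\mathcal{B}_{c}(X)\otimes\mathcal{B}(Y)$. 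Hence the very same scheme is a Suslin scheme on the paving $\mathcal{B}_{c}(X)\otimes\mathcal{B}(Y)$ and $\Gamma_{1}\in\mathfrak{A}\left(\mathcal{B}_{c}(X)\otimes\mathcal{B}(Y)\right)$; equivalently, one combines (ii) with the identification $\mathcal{A}(X\times Y)=\mathfrak{A}\left(\mathcal{B}(X\times Y)\right)$.

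Finally, applying (iii) to the two sets just produced gives $\Gamma_{1}\cap\Gamma_{2}\in\mathfrak{A}\left(\mathcal{B}_{c}(X)\otimes\mathcal{B}(Y)\right)$, which is exactly the claim. The only genuinely delicate ingredient is (iii) in its full generality: one must be sure that $\mathfrak{A}$ of an arbitrary paving — in particular of the product $\sigma$-algebra $\mathcal{B}_{c}(X)\otimes\mathcal{B}(Y)$, which need not itself be closed under the Suslin operation — is nonetheless stable under countable intersections, and that the characterization of analytic sets as nuclei of closed-valued (equivalently Borel-valued) Suslin schemes is available. Both facts are classical, and once they are in hand the argument is purely formal, with no computation required.
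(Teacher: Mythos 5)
Your proof is correct and takes essentially the same route as the paper's: both arguments place $\Gamma_{2}$ in $\mathfrak{A}\left(\mathcal{B}_{c}(X) \otimes \mathcal{B}(Y)\right)$ as an element of the paving, place $\Gamma_{1}$ there via the identification $\mathcal{A}(X\times Y)=\mathfrak{A}\left(\mathcal{B}(X)\otimes \mathcal{B}(Y)\right) \subset \mathfrak{A}\left(\mathcal{B}_{c}(X) \otimes \mathcal{B}(Y)\right)$, and conclude by the stability of $\mathfrak{A}$ of an arbitrary paving under countable intersections (the paper cites \citep[Propositions 7.35 p158 and 7.41 p166]{BS} for exactly these two facts). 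Your additional remarks — the interleaved Suslin scheme behind intersection stability and the equality $\mathcal{B}(X\times Y)=\mathcal{B}(X)\otimes\mathcal{B}(Y)$ for second countable spaces — merely unpack those cited propositions.
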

\begin{proof}
 \citep[Proposition 7.35 p158, Proposition 7.41 p166]{BS} imply that   $\Gamma_{1} \in \mathcal{A}(X\times Y)=\mathfrak{A}(\mathcal{B}(X)\otimes \mathcal{B}(Y)) \subset \mathfrak{A}\left(\mathcal{B}_{c}(X) \otimes \mathcal{B}(Y)\right)$ and   $\Gamma_{2} \in \mathcal{B}_{c}(X) \otimes \mathcal{B}(Y) \subset \mathfrak{A}\left(\mathcal{B}_{c}(X) \otimes \mathcal{B}(Y)\right)$ and thus  $\Gamma_{1} \cap \Gamma_{2} \in  \mathfrak{A}\left(\mathcal{B}_{c}(X) \otimes \mathcal{B}(Y)\right)$.
\end{proof}\\

\subsubsection{Proof of Theorem \ref{TheoS}}
\label{setwoARB}
\begin{proof} 
{\it Step 1: Reverse implication.}\\
Lemma \ref{exex} implies that the $NA(\mathcal{P}^{T})$ condition  holds true and  Lemma  \ref{PRpolar} shows that the $NA(\mathcal{Q}^{T})$ is satisfied.\\

{\it Step 2: Direct implication.}\\
Theorem \ref{PPstar} implies that there exists  some  $P^*\in \mathcal{Q}^{T}$ with the fixed disintegration  $P^*:=P_1^{*}\otimes p_{2}^{*}\otimes \cdots \otimes p_{T}^{*}$  such that   $\mbox{Aff}\left(D_{P^*}^{t+1}\right)(\o^{t})=\mbox{Aff} \left(D^{t+1}\right)(\o^{t})$ and $0 \in \mbox{Ri} \left({\mbox{Conv}}\left(D_{P^*}^{t+1}\right)\right)(\o^{t})$
  for all  $\o^t  \in \O^{t}_{NA}$ and all $0 \leq t \leq T-1$. 
The direct implication holds true if  {\it i), ii)} and {\it iii)} below are proved\footnote{Note that $i$) and $ii)$   are  true if we only  assume that $P^{*} \in \mathcal{Q}^{T}$.}. \\
\indent {\it i) $\mathcal{P}^{t} \subset \mathcal{Q}^{t}$  for all $1 \leq t \leq T$}.\\
\noindent  This follows by induction from    \eqref{PstarARB}, $p^{*}_{t+1}(\cdot,\o^{t}) \in \mathcal{Q}_{t+1}(\o^{t})$ and  the convexity of $\mathcal{Q}_{t+1}(\o^{t})$.\\
 \indent  {\it ii)  $\mathcal{Q}^{t}$ and $\mathcal{P}^{t}$ have the same polar-sets  for all $1 \leq t \leq T$}.\\
Fix some $1 \leq t \leq T$.   As  $\mathcal{P}^{t} \subset \mathcal{Q}^{t}$,  it is clear that  a $\mathcal{Q}^{t}$-polar set is also a  $\mathcal{P}^{t}$-polar set.
To establish the other inclusion,  we  prove by induction  that  for all $1 \leq t \leq T$  and all $Q^{t} \in \mathcal{Q}^{t}$, there exists some $(\l_{1}^t,  \cdots, \l_{2t}^t) \in (0,1]^{2t}$ such that $\sum_{i=1}^{2t} \l_{i}^t=1$ and some  $\left(R_{i}^t\right)_{3 \leq i \leq 2t} \subset \mathcal{Q}^{t}$ (if $ t \geq 2$)  such that 
\begin{align}
\label{reccu}
Q^{t} << P^{t}:= \l_{1}^t P^{*t}+ \l_{2}^t Q^{t} + \sum_{i=3}^{2t} \l_{i}^t R_{i}^t \in \mathcal{P}^{t}.
\end{align} 
For $t=1$, let $Q_{1} \in \mathcal{Q}^{1}$ and $P_{1}:= (P_{1}^*+Q_{1})/2$. Then, $Q_{1} << P_{1}$ and $P_{1}\in \mathcal{P}^{1},$ see \eqref{PstarARB}. Now assume that the property is true for some  $t \geq1.$ Let 
$Q^{t+1}\in \mathcal{Q}^{t+1}$ with the fixed disintegration $Q^{t+1}:=Q^{t} \otimes q_{t+1}$ where 
$Q^{t}\in \mathcal{Q}^{t}$ and   $q_{t+1}(\cdot,\o^t) \in \mathcal{Q}_{t+1}(\o^t)$ for all $\o^t \in \Omega^t$. Then, there exists some $\left(R_{i}^t\right)_{3 \leq i \leq 2t} \subset \mathcal{Q}^{t}$, $(\l_{1}^t,  \cdots, \l_{2t}^t)  \in (0,1]^{2t}$  such that  \eqref{reccu} holds true. Let 
\begin{eqnarray*}
P^{t+1} &:= & P^{t} \otimes \frac{1}{2}(p_{t+1}^{*}+q_{t+1})  \\
 R_{i}^{t+1} &:=& R_{i}^t\otimes \frac{1}{2}(p^{*}_{t+1}+q_{t+1})  
 \quad \quad 
\l_{i}^{t+1}:= \l_{i}^t  \quad \quad  \forall 3\leq i\leq 2t \\
 R_{2t+1}^{t+1}&:= &Q^{t} \otimes p^{*}_{t+1}  \quad \quad R_{2t+2}^{t+1}:=P^{*t} \otimes q_{t+1}\\
 \l_{1}^{t+1}&:= &\frac{\l_{1}^t}{2} \quad \quad  \l_{2}^{t+1}:=\frac{\l_{2}^t}{2} \quad \quad \l_{2t+1}^{t+1}:= \frac{\l_{2}^t}{2} \quad \quad  \l_{2t+2}^{t+1}:=\frac{\l_{1}^t}{2}. 
\end{eqnarray*}
Then  $
P^{t+1}  \in \mathcal{P}^{t+1}$ (see  \eqref{PstarARB}),  $(R_{i}^{t+1})_{3 \leq i \leq 2(t+1)} \subset  \mathcal{Q}^{t+1},$ 
 $\sum_{i=1}^{2(t+1)} \l_{i}^{t+1}=1$ and 
\begin{align*}
 P^{t+1}
 = \l_{1}^{t+1}P^{*t+1}+ \l_{2}^{t+1} Q^{t+1} + \sum_{i=3}^{2(t+1)} \l_{i}^{t+1} R_{i}^{t+1}.
 \end{align*}   
As $Q^{t+1} << P^{t+1},$ the induction is proven.\\
 
 \indent {\it iii) The $sNA(\mathcal{P}^{T})$ condition holds true}. \\
  Fix  some $P \in \mathcal{P}^{T} \subset  \mathcal{Q}^{T}$, some $0 \leq t \leq T-1$ and $\o^{t} \in \O_{NA}^{t}$. We  establish that    $0 \in  \mbox{Ri} \left({\mbox{Conv}}\left(D_{P}^{t+1}\right)\right)(\o^{t})$. Then  $P^{t}\left(\O^{t}_{NA}\right)=1$  and Proposition  \ref{singleP} shows that $NA(P)$ holds true and $iii)$ follows. 
Remark \ref{DomainInc} and \eqref{PstarARB} imply  that 
${D}_{P^{*}}^{t+1}(\o^{t}) \subset {D}_{P}^{t+1}(\o^{t}) \subset  {D}^{t+1}(\o^{t})$. Thus, $0 \in {\mbox{Conv}}(D_{P^*}^{t+1})(\o^{t}) \subset {\mbox{Conv}}(D_{P}^{t+1})(\o^{t})$. We  have that    
$$\mbox{Aff} \left(D^{t+1}\right)(\o^{t}) = \mbox{Aff}\left(D_{P^*}^{t+1}\right)(\o^{t}) \subset \mbox{Aff} \left(D_P^{t+1}\right)(\o^{t}) \subset \mbox{Aff} \left(D^{t+1}\right)(\o^{t}).$$ 
As  $0 \in \mbox{Ri} \left({\mbox{Conv}}(D_{P^*}^{t+1})\right)(\o^{t})$, there exists some $\varepsilon>0$ such that
\begin{small}
   \begin{align*}
 B(0,\varepsilon) \bigcap \mbox{Aff}\left(D_{P}^{t+1}\right)(\o^{t}) &=B(0,\varepsilon) \bigcap \mbox{Aff}\left(D_{P^*}^{t+1}\right)(\o^{t})
   \subset  {\mbox{Conv}}(D_{P^*}^{t+1})(\o^{t}) \subset {\mbox{Conv}}(D_{P}^{t+1})(\o^{t}).
  \end{align*}
  \end{small}
  
\end{proof}


\subsubsection{Proof of Proposition \ref{finally}}
\label{setwoARBe}
\begin{proof}
Theorem \ref{PPstar} implies that there exists  some $P^*\in \mathcal{Q}^{T}$ with the fixed disintegration  $P^*=P_1^{*}\otimes p_{2}^{*}\otimes \cdots \otimes p_{T}^{*}$ such   that    $\mbox{Aff} \left(D_{P^*}^{t+1}\right)(\o^{t})=\mbox{Aff} \left(D^{t+1}\right)(\o^{t})$  and $0 \in \mbox{Ri} \left({\mbox{Conv}}(D_{P^*}^{t+1})\right)(\o^{t} )$ for all $\o^{t} \in \O^{t}_{NA}$ and all $0 \leq t \leq T-1$.
To   find a $\mathcal{B}_{c}(\O^{t})$-measurable version of $\beta_{t}$ and ${\kappa}_{t}$ in \eqref{valakiARB} we follow the same idea as in \citep[Proposition 3.7]{BCR18}. Fix some $0 \leq t \leq T-1$. Set $n_t(\o^{t}):=\inf\{n \geq 1, \;A^{P^*}_{n}(\o^{t}) =\emptyset\}$ where for  all $n \geq 1$  $A^{P^*}_{n}(\o^{t})=\emptyset$ if $\o^{t} \notin \O^{t}_{NA}$ and if $\o^{t} \in \O^{t}_{NA}$,
\begin{align}
\label{ANP}
A^{P^*}_{n}(\o^{t}):=\left\{h \in \mbox{Aff} \left(D_{P^*}^{t+1}\right)(\o^{t}),\; |h|=1, \, p_{t+1}^{*}\left(h \Delta S_{t+1}(\o^{t},\cdot) < -\frac{1}{n}, \o^{t}\right) < \frac{1}{n}\right\}.
\end{align}
For all $\o^{t} \in \O^{t}$, as in the proof of Proposition \ref{Arbeqone}, $n_{t}(\o^{t})<\infty$ and one may set  $ \kappa_{t}(\o^{t})={\beta}_{t}(\o^{t}):={1}/{n_t(\o^{t})} \in (0, 1)$.
 Then, by definition of $A^{P^*}_{n}$,  \eqref{valakiARB} is  true with $P_{h}(\cdot)=p^{*}_{t+1}(\cdot,\o^{t})   \in \mathcal{Q}_{t+1}(\o^{t})$ since   $\mbox{Aff} \left(D_{P^*}^{t+1}\right)(\o^{t})=\mbox{Aff} \left(D^{t+1}\right)(\o^{t})$  for all $\o^{t} \in \O^{t}_{NA}$. 
\\
To prove that  $ \kappa_{t}={\beta}_{t}$ is $\mathcal{B}_{c}(\O^{t})$-measurable,    we show that  $\{A^{P^*}_{n} \neq \emptyset\} \in \mathcal{B}_{c}(\O^{t})$  since for all $k \geq 1$, 
$$\{n_{t} \geq k\} = \O^{t}_{NA} \cap \left( \bigcap_{1 \leq j \leq k-1} \{A^{P^*}_{j}=\emptyset\}\right).$$
Fix some $n \geq 1.$   As $p_{t+1}^{*}$ is only universally-measurable,  we use Lemma \ref{AAA} to prove that  $\{A^{P^*}_{n} \neq \emptyset\} \in \mathcal{B}_{c}(\O^{t})$. 
Fix  $P \in \mathfrak{P}(\O^{t})$.    First,   applying \citep[Lemma 7.28 p173]{BS}, there exists $p_{t+1}^{P}$ a Borel-measurable stochastic kernel on $\O_{t+1}$ given $\O^{t}$ and  $\O^{t}_{P} \in \mathcal{B}(\O^{t})$ such that $P(\O^{t}_{P})=1$ and  $p_{t+1}^{P}(\cdot,\o^{t})=p_{t+1}^{*}(\cdot,\o^{t})$ for all $\o^{t} \in \O^{t}_{P}$.
  Set $A^{P}_{n}$ as in \eqref{ANP} replacing $p^{*}_{t+1}$ with $p^{P}_{t+1}$ if $\o^{t} \in \O^{t}_{NA}$ (and  $A^{P}_{n}(\o^{t})=\emptyset$ if $\o^{t} \notin \O^{t}_{NA}$). 
Then 
$$\{A^{P^{*}}_{n} \neq \emptyset\} \cap \O^{t}_{P}=\{A^{P}_{n} \neq \emptyset\} \cap \O^{t}_{P}$$  and it remains to establish that 
$\{A^{P}_{n} \neq \emptyset\} \in \mathcal{B}_{c}(\O^{t})$.  
Remark that $$\mbox{graph}\left(A^{P}_{n}\right)= \mbox{graph}\left(\mbox{Aff}\left(D_{P^*}^{t+1}\right)\right) \bigcap \left\{ (\o^{t},h),\; |h|=1,\; p_{t+1}^{P}\left(h \Delta S_{t+1}(\o^{t},\cdot) < -\frac{1}{n},\o^{t}\right) < \frac{1}{n}\right\}.$$
 Lemma \ref{Dmeasurability} implies that $\mbox{graph}\left(\mbox{Aff}\left(D_{P^*}^{t+1}\right)\right) \in \mathcal{B}_{c}(\O^{t}) \otimes \mathcal{B}(\mathbb{R}^{d}).$  As $(\o^{t},h,\o_{t+1}) \to h \Delta S_{t+1}(\o^{t},\o_{t+1})$ and $p_{t+1}^{P}$ are Borel-measurable, \citep[Proposition 7.29 p144]{BS} implies that $ (\o^{t},h) \to p_{t+1}^{P}\left(h \Delta S_{t+1}(\o^{t},\cdot) < -{1}/{n},\o^{t}\right)$ is  $\mathcal{B}(\O^{t}) \otimes \mathcal{B}(\mathbb{R}^{d})$-measurable. Thus, applying the Projection Theorem,   
$ \mbox{Proj}_{\O^{t}}\left(\mbox{graph}\left(A^{P}_{n}\right)\right)=\{A^{P}_{n} \neq \emptyset\} \in \mathcal{B}_{c}(\O^{t})$ and the proof is complete.
\end{proof}
 \begin{lemma}
\label{AAA}
Let $X$ be a Polish space. Let $A \subset X$. Assume that for all $P \in \mathfrak{P}(X)$ there exists  some $A_{P} \in \mathcal{B}_{c}(X)$ and some  $P$-full measure set $X_{P} \in   \mathcal{B}(X)$ such that $A \cap X_{P}= A_{P} \cap X_{P}$. Then $A \in \mathcal{B}_{c}(X)$.
\end{lemma}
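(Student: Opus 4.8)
The plan is to unwind the definition of the universal $\sigma$-algebra and verify membership one probability measure at a time. Recall that $\mathcal{B}_{c}(X)=\bigcap_{P \in \mathfrak{P}(X)} \mathcal{B}_{P}(X)$, where $\mathcal{B}_{P}(X)$ denotes the $P$-completion of the Borel $\sigma$-algebra $\mathcal{B}(X)$; a set belongs to $\mathcal{B}_{P}(X)$ precisely when it differs from a Borel set by a subset of a $P$-null Borel set, and in particular every subset of a $P$-null Borel set lies in $\mathcal{B}_{P}(X)$. Hence, to prove $A \in \mathcal{B}_{c}(X)$ it suffices to fix an arbitrary $P \in \mathfrak{P}(X)$ and show that $A \in \mathcal{B}_{P}(X)$.

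First I would fix $P \in \mathfrak{P}(X)$ and take the associated $A_{P} \in \mathcal{B}_{c}(X)$ and $P$-full measure set $X_{P} \in \mathcal{B}(X)$ provided by the hypothesis, so that $A \cap X_{P}=A_{P} \cap X_{P}$. Since $\mathcal{B}_{c}(X) \subset \mathcal{B}_{P}(X)$ by definition of the universal $\sigma$-algebra, we have $A_{P} \in \mathcal{B}_{P}(X)$; and because $X_{P} \in \mathcal{B}(X) \subset \mathcal{B}_{P}(X)$, the intersection $A_{P} \cap X_{P}=A \cap X_{P}$ lies in $\mathcal{B}_{P}(X)$.

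The next step is the decomposition $A=(A \cap X_{P}) \cup (A \cap X_{P}^{c})$. The first set is in $\mathcal{B}_{P}(X)$ by the previous step. For the second, observe that $X_{P}^{c}=X \setminus X_{P} \in \mathcal{B}(X)$ with $P(X_{P}^{c})=0$, so $A \cap X_{P}^{c}$ is a subset of a $P$-null Borel set and therefore belongs to $\mathcal{B}_{P}(X)$ as a null set of the completion. Consequently $A=(A \cap X_{P}) \cup (A \cap X_{P}^{c}) \in \mathcal{B}_{P}(X)$. Since $P \in \mathfrak{P}(X)$ was arbitrary, $A \in \bigcap_{P} \mathcal{B}_{P}(X)=\mathcal{B}_{c}(X)$, which is the claim.

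This argument is entirely elementary, and I do not expect a genuine obstacle: the only point requiring care is the correct invocation of the fact that the $P$-completion contains every subset of a $P$-null Borel set, which is exactly what allows us to ignore the behaviour of $A$ off the full-measure set $X_{P}$ and reduce to the already-measurable set $A_{P}$. The structure mirrors the standard proof that a set locally agreeing (up to null sets) with measurable sets is itself measurable.
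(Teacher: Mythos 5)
Your proof is correct and follows essentially the same route as the paper's: fix $P$, use $\mathcal{B}_{c}(X)\subset\mathcal{B}_{P}(X)$ to get $A\cap X_{P}=A_{P}\cap X_{P}\in\mathcal{B}_{P}(X)$, absorb $A\cap(X\setminus X_{P})$ as a subset of a $P$-null Borel set, and conclude by intersecting over all $P$. The only cosmetic difference is that the paper spells out the completion explicitly, writing $A_{P}\cap X_{P}=\tilde{A}_{P}\cup N_{P}$ with $\tilde{A}_{P}$ Borel and $N_{P}$ negligible, whereas you invoke the $\sigma$-algebra structure of $\mathcal{B}_{P}(X)$ directly.
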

\begin{proof}
Fix some $P \in  \mathfrak{P}(X).$ We show that $A \in \mathcal{B}_{P}(X),$ the completion of $\mathcal{B}(X)$ with respect to $P.$ As this is true for all $P \in \mathfrak{P}(X)$, $A \in \mathcal{B}_{c}(X)$ will follow.\\
There exists $A_{P} \in \mathcal{B}_{c}(X)$ and  $X_{P} \in \mathcal{B}(X)$ such that $P(X_P)=1$ and  $A \cap X_P =A_{P} \cap X_P $. As  $A_{P} \cap X_P \in \mathcal{B}_{c}(\O^{t})  \subset \mathcal{B}_{P}(X)$ there exists a $P$-negligible set $N_P$ and $\tilde{A}_{P} \in \mathcal{B}(X)$ such that $A_{P} \cap X_P= \tilde{A}_{P} \cup N_{P}$. Now, let $M_P:=A \cap \left( X \backslash{X_P} \right) \subset X \backslash{X_P}$. As $X \backslash{X_P} \in \mathcal{B}(X)$ and  $P(X \backslash{X_P})=0$, $M_{P}$ is  a $P$-negligible set and  
$$A= \left(A \cap X_{P} \right) \cup \left(A \cap \left(X \backslash{X_P}\right)\right)=\tilde{A}_{P} \cup N_{P} \cup M_{P} \in \mathcal{B}_{P}(X).$$
\end{proof}
\subsubsection{Proof of Proposition \ref{CasD}}
\label{setwoARBef}
\begin{proof}
Lemma \ref{exex} implies that $NA(\widehat{P})$ and $NA(\mathcal{Q}^{T})$  are equivalent. 
Fix some disintegration of $\widehat{P} \in \mathcal{Q}^{T},$  $\widehat{P}:=\widehat{P}_{1}\otimes \widehat{p}_{2} \otimes \cdots \otimes \widehat{p}_{T}$ and some $1 \leq t \leq T$.   As $\widehat{P}^{t}$ dominates $\mathcal{Q}^{t}$ 
 Proposition \ref{singleP} implies  that 
$$0 \in \mbox{Ri}\left(\mbox{Conv}\left(D_{\widehat{P}}^{t+1}\right)\right)(\cdot) \; \; \mathcal{Q}^{t}\mbox{-q.s.}$$
Lemma \ref{Selection} below provides  a $\mathcal{Q}^{t}$-full measure set $\O^{t} \backslash{\O^{t}_{nd}}$ such that  $\widehat{p}_{t+1}(\cdot,\o^{t})$ dominates $\mathcal{Q}_{t+1}(\o^{t})$ for all $\o^{t} \in\O^{t} \backslash{\O^{t}_{nd}}$. 
Thus ${D}^{t+1}(\o^{t}) \subset {D}_{\widehat{P}}^{t+1}(\o^{t})$ and the equality follows from  \eqref{DvsE} as $\widehat{P} \in \mathcal{Q}^{T}.$ 
\end{proof}

\subsection{Proof of Proposition \ref{DOM}}
 \label{PRDOM}
 The proof of Proposition \ref{DOM}  follows directly from    Lemma \ref{Selection}. Indeed assume that the set $\mathcal{Q}^{T}$ is  dominated. As $\O^{t}_{N} \subset \O^t_{nd}$, $\O^{t}_{N}$ is a $\mathcal{Q}^{t}$-polar set which contradicts  $\widetilde{P}^t(\O^{t}_{N})>0$.\\  
 
The proof of Lemma \ref{Selection} is fairly technical and needs the introduction of the Wijsman topology as well as Lemma 
\ref{LK}. Note that  the reverse implication in Proposition \ref{DOM}  seems intuitive but  raises   challenging technical issues.\\  
Let $(X,d)$ be a Polish space  and $\mathcal{F}$ be  the set of non-empty closed subsets of $X$.  The Wijsman topology on $\mathcal{F}$ denoted by $\mathcal{T}_{W}$ is such that 
$$ F_{n} \underset{n\rightarrow +\infty}{\overset{\tau_w}{\longrightarrow}} F \iff d(x,F_{n})  \underset{n\rightarrow +\infty} \longrightarrow d(x,F) \; \mbox{for all $x \in X$,}$$
where $d(x,F):=\inf \{d(x,f),\; f \in F\}$.  
Note  that $\mathcal{F}$  endowed with $\mathcal{T}_{W}$ is a Polish space (see \citep{BE91}).  
\begin{lemma}
\label{LK}
  The function $(F,x) \in \mathcal{F} \times X \to 1_{F}(x)$ is $\mathcal{B}(\mathcal{F})\otimes \mathcal{B}(X)$-measurable.
\end{lemma}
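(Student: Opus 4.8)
The plan is to reduce the statement to the joint measurability of the distance functional and then to exploit its separate regularity properties. Since every $F \in \mathcal{F}$ is closed, one has $x \in F$ if and only if $d(x,F)=0$, so that
\begin{align*}
1_{F}(x) = 1_{\{0\}}\bigl(d(x,F)\bigr).
\end{align*}
Hence it suffices to show that the map $g : \mathcal{F} \times X \to [0,+\infty)$ defined by $g(F,x):=d(x,F)$ is $\mathcal{B}(\mathcal{F}) \otimes \mathcal{B}(X)$-measurable. Indeed, $\{0\}$ is a Borel subset of $\mathbb{R}$, so $g^{-1}(\{0\}) \in \mathcal{B}(\mathcal{F}) \otimes \mathcal{B}(X)$, and $(F,x) \mapsto 1_{F}(x)$ is then the indicator of this product-measurable set.

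First I would record the two separate regularity properties of $g$. For each fixed $x \in X$, the map $F \mapsto d(x,F)$ is continuous on $(\mathcal{F},\mathcal{T}_{W})$: this is precisely the defining property of the Wijsman topology, which is the coarsest topology rendering all these distance functionals continuous; in particular $F \mapsto d(x,F)$ is $\mathcal{B}(\mathcal{F})$-measurable. For each fixed $F \in \mathcal{F}$, the map $x \mapsto d(x,F)$ is $1$-Lipschitz, hence continuous. The next step is to upgrade this separate regularity to joint measurability, using the separability of $X$. Since $X$ is Polish, for each $n \geq 1$ I would write $X$ as a disjoint union of at most countably many nonempty Borel sets $\bigl(B_{n,k}\bigr)_{k \geq 1}$, each of diameter at most $1/n$, and select a point $x_{n,k} \in B_{n,k}$. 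I then define the approximations
\begin{align*}
g_{n}(F,x) := \sum_{k \geq 1} 1_{B_{n,k}}(x)\, d(x_{n,k},F).
\end{align*}
Each summand is a product of a $\mathcal{B}(X)$-measurable function of $x$ and a $\mathcal{B}(\mathcal{F})$-measurable function of $F$, hence is $\mathcal{B}(\mathcal{F}) \otimes \mathcal{B}(X)$-measurable; therefore so is the countable sum $g_{n}$.

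Finally I would check that $g_{n} \to g$ pointwise, which yields the measurability of $g$ as a pointwise limit of measurable maps. Fix $(F,x)$ and let $k(n)$ be the unique index with $x \in B_{n,k(n)}$, so that $d(x,x_{n,k(n)}) \leq 1/n$. The $1$-Lipschitz property of $x' \mapsto d(x',F)$ gives
\begin{align*}
\bigl| g_{n}(F,x) - g(F,x) \bigr| = \bigl| d(x_{n,k(n)},F) - d(x,F) \bigr| \leq d(x_{n,k(n)},x) \leq \tfrac{1}{n},
\end{align*}
which tends to $0$ as $n \to \infty$. This establishes the pointwise convergence and completes the argument. The only genuine obstacle is this last passage, namely converting the separate Wijsman-continuity in $F$ and Lipschitz continuity in $x$ into joint product-measurability; this is the standard Carathéodory-function mechanism, and everything else is a formal reduction.
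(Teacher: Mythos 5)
Your proposal is correct and follows essentially the same route as the paper: both reduce the claim to the joint measurability of $(F,x) \mapsto d(x,F)$ via the equivalence $x \in F \iff d(x,F)=0$ for closed $F$, and both rest on Wijsman-continuity in $F$ together with (Lipschitz) continuity in $x$. The only difference is that where the paper cites a standard result on joint measurability of Carath\'eodory functions (\citep[Lemma 4.51 p153]{Hitch}), you prove that mechanism by hand through the discretization $g_{n}(F,x) = \sum_{k} 1_{B_{n,k}}(x)\, d(x_{n,k},F)$ — which is precisely the standard proof of the cited lemma.
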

\begin{proof}
The function $d:(x,F)  \in    X  \times \mathcal{F} \to d(x,F)$ is separately continuous. 
Indeed for all fixed $x \in X$, $d(x,\cdot)$ is continuous  by definition of $\mathcal{T}_{W}$ and  \citep[Theorem 3.16]{Hitch} implies that   $ d(\cdot,F)$ is  continuous for all fixed $F \in \mathcal{F}$. Using  \citep[Lemma 4.51 p153]{Hitch} $d$ is  $\mathcal{B}(X)\otimes \mathcal{B}(\mathcal{F}) $-measurable. We conclude since  
$x \in F$ if and only if $d(x,F)=0$.
\end{proof}\\

 \noindent 
 \begin{lemma}
 \label{Selection}
Assume that Assumption \ref{QanalyticARB} holds true and that $\mathcal{Q}^{T}$ is dominated by $\widehat{P} \in \mathfrak{P}(\O^{T})$ with the fix disintegration $\widehat{P}:=\widehat{P}_{0}\otimes \widehat{p}_{1}  \otimes \cdots \otimes \widehat{p}_{T}$ where $\hat p_{t} \in  \mathcal{SK}_{t}$ for all $1 \leq t \leq T$. Then    $$\O^{t}_{nd}:=\left\{\o^{t} \in \O^{t},\;  \mbox{$\mathcal{Q}_{t+1}(\o^{t})$ is not dominated by $\widehat{p}_{t+1}(\cdot,\o^{t})$}\right\} \in \mathcal{B}_{c}(\O^{t})$$ and is a $\mathcal{Q}^{t}$-polar set for all $0 \leq t \leq T-1$.
\end{lemma}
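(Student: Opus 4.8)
The plan is to characterise local non-domination through a single closed ``witness'' set, to turn the resulting existential statements into a measurable projection via the Wijsman topology, and then to prove polarity by a selection-and-pasting argument contradicting the global domination of $\mathcal{Q}^{T}$ by $\widehat{P}$.

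First I would record the elementary reduction: for two probability measures $\mu,p$ on the Polish space $\Omega_{t+1}$ one has $p\not\ll\mu$ if and only if there exists a \emph{closed} set $F\subset\Omega_{t+1}$ with $\mu(F)=0$ and $p(F)>0$. The non-trivial direction uses inner regularity of $p$: any Borel witness $A$ with $\mu(A)=0<p(A)$ contains a closed subset $F$ with $p(F)>0$, and then $\mu(F)\le\mu(A)=0$. Consequently, letting $\mathcal{F}$ denote the set of non-empty closed subsets of $\Omega_{t+1}$ endowed with the Wijsman topology (so that $\mathcal{F}$ is Polish),
\begin{align*}
\Omega^{t}_{nd}=\mbox{Proj}_{\Omega^{t}}\,G,\qquad G:=\bigl\{(\omega^{t},p,F):\ p\in\mathcal{Q}_{t+1}(\omega^{t}),\ \widehat{p}_{t+1}(F,\omega^{t})=0,\ p(F)>0\bigr\}.
\end{align*}

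Next I would establish $\Omega^{t}_{nd}\in\mathcal{B}_{c}(\Omega^{t})$. Lemma \ref{LK} gives that $(F,x)\mapsto 1_{F}(x)$ is $\mathcal{B}(\mathcal{F})\otimes\mathcal{B}(\Omega_{t+1})$-measurable; integrating against $p$ and against the (universally measurable) kernel $\widehat{p}_{t+1}$ shows, via \citep[Proposition 7.29 p144]{BS} and its universally measurable analogue, that $(p,F)\mapsto p(F)$ is Borel and $(\omega^{t},F)\mapsto\widehat{p}_{t+1}(F,\omega^{t})$ is $\mathcal{B}_{c}(\Omega^{t})\otimes\mathcal{B}(\mathcal{F})$-measurable. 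Hence $G=\Gamma_{1}\cap\Gamma_{2}$ with $\Gamma_{1}:=\mbox{graph}(\mathcal{Q}_{t+1})\times\mathcal{F}$ analytic (Assumption \ref{QanalyticARB}) and $\Gamma_{2}:=\{(\omega^{t},p,F):\widehat{p}_{t+1}(F,\omega^{t})=0,\ p(F)>0\}\in\mathcal{B}_{c}(\Omega^{t})\otimes\mathcal{B}(\mathfrak{P}(\Omega_{t+1})\times\mathcal{F})$. Lemma \ref{easy} then yields $G\in\mathfrak{A}\bigl(\mathcal{B}_{c}(\Omega^{t})\otimes\mathcal{B}(\mathfrak{P}(\Omega_{t+1})\times\mathcal{F})\bigr)$, and the Projection Theorem together with \eqref{analyticsetARB} gives $\Omega^{t}_{nd}=\mbox{Proj}_{\Omega^{t}}G\in\mathcal{B}_{c}(\Omega^{t})$.

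Finally, for polarity I would argue by contradiction: suppose $\bar{Q}^{t}\in\mathcal{Q}^{t}$ satisfies $\bar{Q}^{t}(\Omega^{t}_{nd})>0$. Applying the selection theorem \citep[Lemma 4.11]{BN} to $G$ produces a kernel $\tilde{q}_{t+1}\in\mathcal{SK}_{t+1}$ and a universally measurable map $\omega^{t}\mapsto F(\omega^{t})\in\mathcal{F}$ with $\tilde{q}_{t+1}(\cdot,\omega^{t})\in\mathcal{Q}_{t+1}(\omega^{t})$, $\widehat{p}_{t+1}(F(\omega^{t}),\omega^{t})=0$ and $\tilde{q}_{t+1}(F(\omega^{t}),\omega^{t})>0$ on $\Omega^{t}_{nd}$ (completed off $\Omega^{t}_{nd}$ by $\widehat{p}_{t+1}$). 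Writing $\bar{Q}^{t}=\bar{Q}_{1}\otimes\cdots\otimes\bar{q}_{t}$ and using that $\widehat{P}\in\mathcal{Q}^{T}$ forces $\widehat{p}_{s+1}(\cdot,\omega^{s})\in\mathcal{Q}_{s+1}(\omega^{s})$ (see \eqref{QstarARB}), the measure
\begin{align*}
Q^{*}:=\bar{Q}_{1}\otimes\cdots\otimes\bar{q}_{t}\otimes\tilde{q}_{t+1}\otimes\widehat{p}_{t+2}\otimes\cdots\otimes\widehat{p}_{T}
\end{align*}
belongs to $\mathcal{Q}^{T}$. Setting $\mathcal{A}:=\{\omega^{T}:\omega^{t}\in\Omega^{t}_{nd},\ \omega_{t+1}\in F(\omega^{t})\}$, which is $\mathcal{B}_{c}(\Omega^{t+1})$-measurable by Lemma \ref{LK}, Fubini gives $\widehat{P}(\mathcal{A})=\int_{\Omega^{t}_{nd}}\widehat{p}_{t+1}(F(\omega^{t}),\omega^{t})\,\widehat{P}^{t}(d\omega^{t})=0$ while $Q^{*}(\mathcal{A})=\int_{\Omega^{t}_{nd}}\tilde{q}_{t+1}(F(\omega^{t}),\omega^{t})\,\bar{Q}^{t}(d\omega^{t})>0$, contradicting $Q^{*}\ll\widehat{P}$. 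Hence $\Omega^{t}_{nd}$ is $\mathcal{Q}^{t}$-polar. The main obstacle is the measurability step: both the offending local prior $p$ and the closed witness set $F$ are existentially quantified, and it is precisely the Wijsman--Polish structure of $\mathcal{F}$ (Lemma \ref{LK}) that lets one absorb the quantifier over $F$ into a projection and reduce everything to analyticity and the Suslin-scheme closure of Lemma \ref{easy}.
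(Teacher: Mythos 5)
Your overall architecture (closed witness sets via inner regularity of measures on a Polish space, the Wijsman topology to make the space of witnesses Polish, a projection argument for measurability, then selection-and-pasting for polarity) is the same as the paper's, but there is one genuine gap, in the polarity step: you justify $Q^{*}\in\mathcal{Q}^{T}$ by claiming that ``$\widehat{P}\in\mathcal{Q}^{T}$ forces $\widehat{p}_{s+1}(\cdot,\omega^{s})\in\mathcal{Q}_{s+1}(\omega^{s})$''. The hypothesis of Lemma \ref{Selection} is only that $\widehat{P}\in\mathfrak{P}(\Omega^{T})$ dominates $\mathcal{Q}^{T}$; membership of $\widehat{P}$ in $\mathcal{Q}^{T}$ is \emph{not} assumed (that stronger situation is the separate setting of Proposition \ref{CasD}), and the distinction matters because Lemma \ref{Selection} feeds into Proposition \ref{DOM}, where domination must be excluded against an \emph{arbitrary} candidate dominating measure. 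Without $\widehat{P}\in\mathcal{Q}^{T}$ nothing guarantees that the padding kernels $\widehat{p}_{t+2},\dots,\widehat{p}_{T}$ are selectors of $\mathcal{Q}_{t+2},\dots,\mathcal{Q}_{T}$, nor that your completion of $\tilde{q}_{t+1}$ by $\widehat{p}_{t+1}$ off $\Omega^{t}_{nd}$ lands in $\mathcal{Q}_{t+1}(\omega^{t})$; hence your $Q^{*}$ need not belong to $\mathcal{Q}^{T}$, and with that the intended contradiction with $Q^{*}\ll\widehat{P}$ evaporates. The repair is immediate and is what the paper does: off $\Omega^{t}_{nd}$ and at the later dates, paste in universally measurable selectors of $\mathcal{Q}_{s+1}$ (which exist by Jankov--von Neumann under Assumption \ref{QanalyticARB}); in the paper the kernels for dates $t+2,\dots,T$ are taken from the disintegration of the measure $\overline{P}\in\mathcal{Q}^{T}$ that charges $\Omega^{t}_{nd}$, whose kernels are selectors by \eqref{QstarARB}.

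Aside from this, your measurability step takes a genuinely different and arguably leaner route than the paper's: you keep the merely universally measurable kernel $\widehat{p}_{t+1}$ throughout and control $G=\Gamma_{1}\cap\Gamma_{2}$ as a nucleus of a Suslin scheme via Lemma \ref{easy}, whereas the paper replaces $\widehat{p}_{t+1}$, measure by measure, with Borel modifications $p^{R}_{t+1}$ (\citep[Lemma 7.28 p174]{BS}), obtains an \emph{analytic} graph to which Jankov--von Neumann applies directly, and then patches the resulting family of sets together with Lemma \ref{AAA}. Your route buys exact identities ($\widehat{p}_{t+1}(F(\omega^{t}),\omega^{t})=0$ on all of $\Omega^{t}_{nd}$, with no exceptional sets to track), which also streamlines the Fubini computation at the end. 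However, one citation in it is wrong as stated: the Projection Theorem recalled in the paper applies to sets in $\mathcal{T}\otimes\mathcal{B}(Y)$, and \eqref{analyticsetARB} concerns analytic sets, so neither covers the projection of a set in $\mathfrak{A}\left(\mathcal{B}_{c}(\Omega^{t})\otimes\mathcal{B}(\mathfrak{P}(\Omega_{t+1})\times\mathcal{F})\right)$; you need the Leese-type result \citep[Lemma 4.11]{BN}, which you invoke anyway for the selection and which gives the universal measurability of $\mbox{Proj}_{\Omega^{t}}G$ at the same time. With the padding corrected and that citation fixed, your proof is sound.
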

\begin{proof}
Fix some $0 \leq t \leq T-1$. We proceed in two steps. \\
{\it Step 1:   $\O^{t}_{nd} \in \mathcal{B}_{c}(\O^{t}).$}\\
To prove Step $1$, we   use  Lemma \ref{AAA} and fix  $R \in \mathfrak{P}(\O^{t})$. 
Applying \citep[Lemma 7.28 p174]{BS},  there exists $p^{R}_{t+1}$ a Borel-measurable stochastic kernel on $\O_{t+1}$ given $\O^{t}$  and a $R$-full-measure set $\O^{t}_{R} \in \mathcal{B}(\O^{t})$ such that \begin{align}
\label{Pr}
 p^{R}_{t+1}(\cdot,\o^{t})=\widehat{p}_{t+1}(\cdot,\o^{t}) \; \mbox{ for all $\o^{t} \in \O^{t}_{R}$.}
\end{align} 
Let $\mathcal{F}_{t+1}$ be the set of non-empty and  closed subsets of  $\O_{t+1}$ 
and let $N^{R}_{t}: \Omega^t \twoheadrightarrow  \mathfrak{P}(\O_{t+1})\times \mathcal{F}_{t+1}$ be  defined for all $\o^{t} \in \O^{t}$ by
\begin{align}
\label{NRs}
N^{R}_{t}(\o^{t}):=\left\{(q,F) \in \mathfrak{P}(\O_{t+1})\times \mathcal{F}_{t+1}, q \in \mathcal{Q}_{t+1}(\o^{t}),\; p^{R}_{t+1}\left(F,\o^{t}\right)=0,\; q\left(F\right)>0\right\}.
\end{align}
We first claim  that \begin{align}
\label{Ond}
\O^t_{nd} \cap \O^{t}_{R} = \{N^{R}_{t} \neq \emptyset\} \cap \O^{t}_{R}.
\end{align}
Let $\o^{t} \in \O^t_{nd} \cap \O^{t}_{R}$. As $\mathcal{Q}_{t+1}(\o^{t})$ is not dominated by $\widehat{p}_{t+1}(\cdot,\o^{t})=p^{R}_{t+1}(\cdot,\o^{t})$, there exists some $q \in \mathcal{Q}_{t+1}(\o^{t})$ and some $A \in \mathcal{B}(\O^{t})$ such that $p^{R}_{t+1}(A,\o^{t})=0$ and  $q(A)>0$. As  $q \in \mathfrak{P}(\O_{t+1})$ is inner-regular (see \citep[Definition 12.2 p435, Theorem 12.7  p438, Lemma 12.3 p435]{Hitch}), 
there exists some $F \in \mathcal{F}_{t+1}$, $F \subset A$ such that $q(F)>0$ and 
$(q,F) \in N_{t}^{R}(\o^{t})$ follows. The reverse inclusion is clear.\\ 
Thus Lemma \ref{AAA} applies and Step 1 is completed if 
$\{N^{R}_{t} \neq \emptyset\}  =  \mbox{Proj}_{\O^{t}} \left(\mbox{graph}\left( N^{R}_{t} \right) \right) \in \mathcal{B}_{c}(\O^{t}).$ 
This will follows from  Jankov-von Neumann Theorem (see  \citep[Proposition 7.49 p182]{BS}) if 
\begin{align}
\label{horor1}
\mbox{graph}(N^{R}_{t}) \in  \mathcal{A}\left(\O^{t} \times \mathfrak{P}(\O_{t+1}) \times \mathcal{F}_{t+1} \right).
\end{align}
This follows from $\mbox{graph}(N^{R}_{t}) =A \cap B \cap C$ where   \begin{align*}
A&:=  \mbox{graph}(\mathcal{Q}_{t+1}) \times \mathcal{F}_{t+1}  \in \mathcal{A}\left(\O^{t} \times \mathfrak{P}(\O_{t+1}) \times \mathcal{F}_{t+1} \right), \\
B&:=\{(\o^{t},q,F),\; p^{R}_{t+1}(F,\o^{t})=0\} \in \mathcal{B}(\O^{t})\otimes \mathcal{B}(\mathfrak{P}(\O_{t+1}) \otimes \mathcal{B}(\mathcal{F}_{t+1}),\\
C&:=\{(\o^{t},q,F),\; q(F)>0\} \in  \mathcal{B}(\O^{t})\otimes \mathcal{B}(\mathfrak{P}(\O_{t+1}) \otimes \mathcal{B}(\mathcal{F}_{t+1}),
\end{align*}
see  Assumption \ref{QanalyticARB} for the  measurability of $A$. For $B$ and $C$,  Lemma \ref{LK} together with  \citep[Proposition 7.29 p144]{BS} imply that $(\o^{t},q,F) \to p^{R}_{t+1}(F,\o^{t})$ and 
$(\o^{t},q,F) \to q(F)$ are  Borel-measurables (recall that  $p_{t+1}^{R}(d\o_{t+1}|\o^{t},q,F)=p^{R}_{t+1}(d\o_{t+1},\o^{t})$ 
and $q(d\o_{t+1}|\o^{t},q,F)=q(d\o_{t+1})$ are Borel-measurable stochastic kernels).\\

{\it Step 2: $\O^{t}_{nd}$ is a $\mathcal{Q}^{t}$-polar set.}\\
We proceed by contradiction and assume that there exists  some  $\overline{P} \in \mathcal{Q}^{T}$ such that  $\overline{P}^t(\O^{t}_{nd})>0$.
We choose $R=\widehat{P}^t$ in \eqref{Pr} and  \eqref{NRs} and we denote by 
$$\O^t_{nd1}:=\O^t_{nd} \cap \O^{t}_{\widehat{P}^t}= \{N^{\widehat{P}^t}_{t} \neq \emptyset\} \cap \O^{t}_{\widehat{P}^t} \in \mathcal{B}_{c}(\O^{t}),$$ see \eqref{Ond} and Step 1.  
The Jankov-von Neumann Theorem and \eqref{horor1} also give the existence  of  ${q}^{\widehat{P}}_{t+1}$ a universally-measurable  stochastic kernel on $\O_{t+1}$ given $\O^{t}$  and a universally measurable  function ${F}^{\widehat{P}}_{t+1}: \O^{t} \to \mathcal{F}_{t+1}$ such that  $({q}^{\widehat{P}}_{t+1}(\cdot,\o^{t}), F^{\widehat{P}}_{t+1}(\o^{t})) \in N^{\widehat{P}^t}_{t}(\o^{t})$  for all $\o^{t} \in \O^{t}_{nd1}$. For $\o^{t} \notin \  \O^{t}_{nd1}$ we set $F^{\widehat{P}}_{t+1}(\o^{t})=\emptyset$ and ${q}^{\widehat{P}}_{t+1}(\cdot,\o^{t})=q_{t+1}(\cdot,\o^{t})$  where $q_{t+1}$ is a given universally-measurable selector of $\mathcal{Q}_{t+1}$. \\
Note that as  $\widehat{P}^{t}$ dominates $\mathcal{Q}^{t}$, $1=\widehat{P}^t( \O^{t}_{\widehat{P}^t})=\overline{P}^t( \O^{t}_{\widehat{P}^t})$ and  $\overline{P}^t( \O^t_{nd1})>0$.\\
We now   build some $\widehat{Q} \in \mathcal{Q}^{T}$, $E \in \mathcal{B}_{c}(\O^{t+1})$ such that $\widehat{P}^{t+1}(E)=0$ but $\widehat{Q}^{t+1}(E)>0$ 
which contradicts the fact that $\widehat{P}$ dominates $\mathcal{Q}^{T}$.  
  Let   \begin{align*}
\widehat{Q}&:=\overline{P}^{t} \otimes {q}^{\widehat{P}}_{t+1} \otimes \overline{p}_{t+2} \otimes \cdots \otimes \overline{p}_{T} \in \mathcal{Q}^{T},\\
E&:=\left\{(\o^{t}, \o_{t+1}) \in \O^{t}\times \O_{t+1},\; \o^{t} \in \O^{t}_{nd1}, \; \o_{t+1} \in F^{\widehat{P}}_{t+1}(\o^{t}) \right\}=\varphi^{-1}(\{1\}) \cap \left(\O^{t}_{nd1} \times \O_{t+1}\right),\\
\varphi(\o^t,\o_{t+1})& :=1_{ F^{\widehat{P}}_{t+1}(\o^{t})}(\o_{t+1}).
\end{align*}
Lemma \ref{LK} implies that  $(F,\o_{t+1}) \to 1_{F}(\o_{t+1})$ is $\mathcal{B}(\mathcal{F}_{t+1})\otimes \mathcal{B}(\O_{t+1})$-measurable and as $(\o^{t},\o_{t+1}) \to ( F^{\widehat{P}}_{t+1}(\o^{t}), \o_{t+1})$ is $\mathcal{B}_{c}(\O^{t+1})$-measurable,   $\varphi$ 
is $\mathcal{B}_{c}(\O^{t+1})$-measurable by composition. Thus $E$ belong to $\mathcal{B}_{c}(\O^{t+1})$.
Let $\left(E\right)_{\o^{t}}:=\{ \o_{t+1} \in \O_{t+1},\; (\o^{t},\o_{t+1}) \in E\}$,  then 
\begin{align*}
\widehat{P}^{t+1}(E)=\int_{\O_{nd1}^{t}} \widehat{p}_{t+1}\left(\left(E\right)_{\o^{t}},\o^{t}\right) \widehat{P}^{t}(d\o^{t})
&=\int_{\O_{nd1}^{t}} \widehat{p}_{t+1}\left(F^{\widehat{P}}_{t+1}(\o^{t}),\o^{t}\right) \widehat{P}^{t}(d\o^{t})=0
\end{align*}
where we have used that for $\o^{t} \notin \O_{nd1}^{t}$ $\left(E\right)_{\o^{t}}=\emptyset$ and for $\o^{t} \in \O^{t}_{nd1}$  $\left(E\right)_{\o^{t}}=F^{\widehat{P}}_{t+1}(\o^{t})$ and that $ \widehat{p}_{t+1}\left(F^{\widehat{P}}_{t+1}(\o^{t}),\o^{t}\right)=p^{\widehat{P}^t}_{t+1}\left(F^{\widehat{P}}_{t+1}(\o^{t}),\o^{t}\right)  =0$. But 
\begin{align*}
\widehat{Q}^{t+1}(E)& = \int_{\O_{nd1}^{t}} {q}^{\widehat{P}}_{t+1}\left(\left(E\right)_{\o^{t}},\o^{t}\right) \overline{P}^{t}(d\o^{t})
=\int_{\O_{nd1}^{t}} {q}^{\widehat{P}}_{t+1}\left(F^{\widehat{P}}_{t+1}(\o^{t}),\o^{t}\right)\overline{P}^{t}(d\o^{t})>0
\end{align*}
since $\overline{P}^t\left(\O^{t}_{nd1}\right)>0$ and $ {q}^{\widehat{P}}_{t+1}\left(F^{\widehat{P}}_{t+1}(\o^{t}),\o^{t}\right)>0$ for all $\o^{t} \in \O^{t}_{nd1}$. This concludes the proof.
\end{proof}\\

\bibliography{biblioRomain}

\end{document}